\documentclass[letterpaper,11pt]{article}
\usepackage[utf8]{inputenc}
\usepackage[english]{babel}

\usepackage{booktabs}
\usepackage{multirow}

\usepackage{xspace}
\usepackage{amsmath,amsfonts,amsthm,amssymb}
\usepackage{mathtools}
\usepackage{thmtools}
\usepackage{thm-restate}
\usepackage{bm}
\usepackage{verbatim}
\usepackage{algorithm}
\usepackage{algorithmicx}
\usepackage{subcaption}

\usepackage[margin=1in]{geometry}

\usepackage[dvipsnames,usenames]{xcolor}
\usepackage[colorlinks=true,pdfpagemode=UseNone,urlcolor=RoyalBlue,linkcolor=RoyalBlue,citecolor=OliveGreen,pdfstartview=FitH]{hyperref}

\usepackage{cleveref}

\usepackage{tcolorbox}

\usepackage{todonotes} 

\usepackage{mleftright}

\makeatletter
\newenvironment{breakalgo}{%
  \def\@fs@cfont{\bfseries}%
  \let\@fs@capt\relax%
  \par\noindent%
  \medskip%
  \rule{\linewidth}{.4pt}%
  \vspace{-3pt}%
  \vspace{-1.3\baselineskip}%
}{%
  \vspace{-.75\baselineskip}%
  \rule{\linewidth}{.4pt}%
  \medskip%
}
\makeatother

\declaretheorem{definition}

\declaretheorem{theorem}
\declaretheorem[sibling=theorem]{lemma}
\declaretheorem[sibling=theorem]{claim}
\declaretheorem[sibling=theorem]{corollary}
\declaretheorem[sibling=theorem]{proposition}

\declaretheorem{remark}

\newcommand{\bin}{\{0,1\}}
\newcommand{\abs}[1]{\left| #1 \right|}
\newcommand{\pbra}[1]{\left( #1 \right)}
\newcommand{\cbra}[1]{\left\{ #1 \right\}}
\newcommand{\sbra}[1]{\left[ #1 \right]}

\newcommand{\DADisp}{\mathsf{DADisp}}
\newcommand{\DAExt}{\mathsf{DAExt}}

\newcommand{\Cor}{\mathsf{Cor}}
\newcommand{\LSExt}{\mathsf{LSExt}}
\newcommand{\AffineSRExt}{\mathsf{AffineSRExt}}
\newcommand{\Enc}{\mathsf{Enc}}

\newcommand{\snmExt}{\mathsf{snmExt}}

\newcommand{\acExt}{\mathsf{AC^0}\text{-}\mathsf{Ext}}
\newcommand{\acbfExt}{\mathsf{AC^0}\text{-}\mathsf{BFExt}}

\newcommand{\LExt}{\mathsf{LExt}}
\newcommand{\acLExt}{\mathsf{AC}^0\text{-}\mathsf{LExt}}

\newcommand{\ldACB}{\mathsf{ldACB}}
\newcommand{\acCB}{\mathsf{AC}^0\text{-}\mathsf{CB}}
\newcommand{\acaffCB}{\mathsf{AC^0}\text{-}\mathsf{AffCB}}
\newcommand{\FFAssign}{\mathsf{FFAssign}}

\newcommand{\scond}{\mathsf{SCond}}
\newcommand{\bcond}{\mathsf{BasicCond}}
\newcommand{\bgcond}{\mathsf{BasicGCond}}
\newcommand{\sgcond}{\mathsf{SGCond}}
\newcommand{\nipm}{\mathsf{NIPM}}
\newcommand{\Slice}{\mathsf{Slice}}
\newcommand{\flip}{\mathsf{flip}\text{-}\mathsf{flop}}
\newcommand{\laExt}{\mathsf{laExt}}
\newcommand{\Ext}{\mathsf{Ext}}
\newcommand{\IP}{\mathsf{IP}}
\newcommand{\cp}{\mathsf{cp}}
\newcommand{\WROLBP}{\mathsf{WROLBP}}
\newcommand{\SROLBP}{\mathsf{SROLBP}}
\newcommand{\ROBP}{\mathsf{ROBP}}

\newcommand{\A}{\mathcal{A}}
\newcommand{\B}{\mathcal{B}}
\newcommand{\G}{\mathsf{G}}
\newcommand{\F}{\mathsf{F}}
\newcommand{\N}{\mathbb{N}}
\newcommand{\D}{\cal D}
\newcommand{\X}{{\cal X}}
\newcommand{\poly}{\mathsf{poly}}
\DeclareMathOperator{\spn}{span}
\newcommand{\eps}{\varepsilon}
\newcommand{\ac}{\mathsf{AC}}
\newcommand{\E}{\mathbf{E}}
\renewcommand{\Pr}{\mathbf{Pr}}

\newcommand{\AC}{\mathsf{AC}}
\newcommand{\avgH}{\widetilde{H}_\infty}

\newcommand{\Supp}{\mathsf{Supp}}

\newcommand{\post}{\mathsf{Post}}
\newcommand{\pre}{\mathsf{Pre}}

\newcommand{\Ker}{\mathsf{Ker}}
\newcommand{\Span}{\mathsf{Span}}

\newcommand{\zo}{\{0, 1\}}
\newcommand{\bits}{\{0, 1\}}

\newcommand{\BI}{\begin{itemize}}
\newcommand{\EI}{\end{itemize}}
\newcommand{\BE}{\begin{enumerate}}
\newcommand{\EE}{\end{enumerate}}
\newcommand{\eqdef}{\vcentcolon=}

\newcommand{\BT}{\begin{theorem}}   \newcommand{\ET}{\end{theorem}}
\newcommand{\BD}{\begin{definition}}   \newcommand{\ED}{\end{definition}}
\newcommand{\BCR}{\begin{corollary}} \newcommand{\ECR}{\end{corollary}}
\newcommand{\BCT}{\begin{constr}} \newcommand{\ECT}{\end{constr}}
\newcommand{\BL}{\begin{lemma}}   \newcommand{\EL}{\end{lemma}}
\newcommand{\BCM}{\begin{claim}}   \newcommand{\ECM}{\end{claim}}

\title{Explicit Directional Affine Extractors and Improved Hardness for Linear Branching Programs}

\author{}

\author{
Xin Li \thanks{Department of Computer Science, Johns Hopkins University, \texttt{lixints@cs.jhu.edu}. Supported by NSF CAREER Award CCF-1845349 and NSF Award CCF-2127575.}
\and
Yan Zhong \thanks{Department of Computer Science, Johns Hopkins University, \texttt{yzhong36@jhu.edu}. Supported by NSF CAREER Award CCF-1845349.}
}

\begin{document}

\begin{titlepage}
    \thispagestyle{empty}
    \maketitle
    \begin{abstract}
        \thispagestyle{empty}
        Affine extractors give some of the best-known lower bounds for various computational models, such as $\AC^0$ circuits, parity decision trees, and general Boolean circuits. However, they are not known to give strong lower bounds for read-once branching programs ($\ROBP$s). In a recent work, Gryaznov, Pudl\'{a}k, and Talebanfard (CCC' 22) introduced a stronger version of affine extractors known as directional affine extractors, together with a generalization of $\ROBP$s where each node can make linear queries, and showed that the former implies strong lower bound for a certain type of the latter known as strongly read-once linear branching programs ($\SROLBP$s). Their main result gives explicit constructions of directional affine extractors for entropy $k > 2n/3$, which implies average-case complexity $2^{n/3-o(n)}$ against $\SROLBP$s with exponentially small correlation. A follow-up work by Chattopadhyay and Liao (CCC' 23) improves the hardness to $2^{n-o(n)}$ at the price of increasing the correlation to polynomially large, via a new connection to sumset extractors introduced by Chattopadhyay and Li (STOC' 16) and explicit constructions of such extractors by Chattopadhyay and Liao (STOC' 22). Both works left open the questions of better constructions of directional affine extractors and improved average-case complexity against $\SROLBP$s in the regime of small correlation.

This paper provides a much more in-depth study of directional affine extractors, $\SROLBP$s, and $\ROBP$s. Our main results include:
\begin{itemize}    
    \item An explicit construction of directional affine extractors with $k=o(n)$ and exponentially small error, which gives average-case complexity $2^{n-o(n)}$ against $\SROLBP$s with exponentially small correlation, thus answering the two open questions raised in previous works.
    \item An explicit function in $\AC^0$ that gives average-case complexity $2^{(1-\delta)n}$ against $\ROBP$s with negligible correlation, for any constant $\delta>0$. Previously, no such average-case hardness is known, and the best size lower bound for any function in $\AC^0$ against $\ROBP$s is  $2^{\Omega(n)}$.
\end{itemize}

One of the key ingredients in our constructions is a new linear somewhere condenser for affine sources, which is based on dimension expanders. The condenser also leads to an unconditional improvement of the entropy requirement of explicit affine extractors with negligible error. We further show that the condenser also works for general weak random sources, under the Polynomial Freiman-Ruzsa Theorem in $\F_2^n$, recently proved by Gowers, Green, Manners, and Tao (arXiv' 23).

    \end{abstract}
\end{titlepage}

\tableofcontents
\thispagestyle{empty}
\newpage
\addtocontents{toc}{\protect\thispagestyle{empty}} 
\setcounter{page}{1}

\section{Introduction}
\label{sec:intro}
Randomness extractors are functions that extract almost uniform random bits from weak random sources that have poor quality. Although the original motivation of randomness extractors comes from bridging the gap between the quality of randomness required in typical applications and that available in practice, as pseudorandom objects, they turn out to have broad applications in computer science.\ For example, the kind of extractors known as \emph{affine extractors} are shown to be closely connected to complexity theory. Indeed, they give strong size lower bounds for $\AC^0$ circuits (constant depth circuits with NOT gates and unbounded fan-in AND, OR gates) by the standard switching lemma \cite{Hastad86}, and are shown to give exponential size lower bounds for DNF circuits with a bottom layer of parity gates, together with strong average-case hardness for parity decision trees \cite{CohenS16}. Via sophisticated gate elimination techniques, they also give the best-known size lower bounds for general Boolean circuits \cite{DK11, FGHK15, LY22}. We define affine extractors below.

\begin{definition}[Affine extractor] An $(n, k)$ affine source is the uniform distribution over some affine subspace with dimension $k$, of the vector space $\F^n_2$.\footnote{More generally, affine sources and affine extractors can be defined over any finite field, but in this paper we focus on the binary field $\F_2$.} A function $\Ext : \bin^n \to \bin^m$ is an affine extractor for entropy $k$ with error $\eps$ if for every $(n, k)$ affine source $X$, we have
\[ \Ext(X) \approx_{\eps} U_m,\]
where $U_m$ stands for the uniform distribution over $\bin^m$, and $\approx_{\eps}$ means $\eps$ close in statistical distance. We say $\Ext$ is explicit if it is computable by a polynomial-time algorithm. 
\end{definition}

However, affine extractors are not known to imply strong lower bounds for computational models that measure space complexity.\ For example, a natural model in this context is a branching program, which is a directed acyclic graph with one source and two sinks, and each non-sink node has out-degree $2$. To define the computation of the branching program, one marks each non-sink node with the index of an input bit, and labels the two outgoing edges by $0$ and $1$, respectively. Furthermore, one sink is labeled by $1$ and the other is labeled by $0$. The program now computes any input by following the natural path from the source to one sink, while reading the corresponding input bits and going through the corresponding edges. The program accepts the input if and only if the path ends in the sink with label $1$, and the size of the branching program is defined as the number of its nodes, which roughly corresponds to $2^{O(s)}$ where $s$ is the space complexity of the computation.

Proving non-trivial lower bounds of an explicit function for general branching programs turns out to be a challenging problem. The best known bound is $\Omega(\frac{n^2}{\log^2 n})$ \cite{Nechiporuk66} after decades of effort, which is not enough to separate $\mathsf{P}$ from $\mathsf{LOGSPACE}$. Thus, most research on lower bounds for branching programs has focused on restricted models, and the most well-studied is the model of \emph{read-once branching program}, where on any computational path, any input bit is read at most once. Exponential lower bounds are known in this model \cite{Wegener88, Zak84, DBLP:conf/fct/Dunne85, DBLP:journals/tcs/Jukna88, DBLP:journals/tcs/KrauseMW91, Simon1992ANL, doi:10.1137/S0097539795290349, DBLP:journals/ipl/Gal97, DBLP:journals/ipl/BolligW98, AndreevBCR99, DBLP:journals/tcs/Kabanets03}, however, it is not clear if affine extractors imply strong lower bounds here. For example, the inner product is a good affine extractor for any entropy $k> n/2$, but it can be computed by a read-once branching program of size $O(n)$.

In a recent work \cite{GryaznovPT:CCC:2022}, Gryaznov, Pudl\'{a}k, and Talebanfard introduced a generalization of affine extractors called \emph{directional affine extractors} and a generalization of standard read-once branching programs called \emph{read-once linear branching programs}, and show that explicit constructions of the former imply strong lower bounds for certain cases of the latter.\ We define the two generalizations below.

\begin{definition}[Directional affine extractor]\label{def:daext} A function $\DAExt : \bin^n \to \bin^m$ is a directional affine extractor for entropy $k$ with error $\eps$ if for every $(n, k)$ affine source $X$ and every non-zero vector $a \in \F^n_2$, we have
\[ (\DAExt(X), \DAExt(X+a))  \approx_{\eps} (U_m, \DAExt(X+a)).\]
We say the function is a (zero-error) directional affine disperser if there exists some $b \in \bin^m$ such that $$\Big|\Supp\pbra{\DAExt(X) \mid \DAExt(X+a)=b}\Big|=2^m$$.
\end{definition}

\begin{remark}
Our definition is slightly more general than the definition in \cite{GryaznovPT:CCC:2022}, since we allow the extractor to output more than one bits. In the special case of $m=1$, our definition implies that in \cite{GryaznovPT:CCC:2022}, the reverse is also true up to a small loss in parameters as shown in \cite{ChattopadhyayL:ccc:2023}.
\end{remark}

\begin{definition}[Linear branching program \cite{GryaznovPT:CCC:2022}] A linear branching program on $\F^n_2$ is a directed acyclic graph $P$ with the following properties:
\begin{itemize}
    \item There is only one source $s$ in $P$.
    \item There are two sinks in $P$, labeled with $0$ and $1$ respectively.
    \item Every non-sink node $v$ is labeled with a linear function $\ell_v : \F^n_2 \to \F_2$. Moreover, there are exactly two outgoing edges from $v$, one is labeled with $1$ and the other is labeled with $0$.
\end{itemize}
The size of $P$ is the number of non-sink nodes in $P$. $P$ computes a Boolean function $f : \bin^n \to \bin$ in the following way. For every input $x \in \F^n_2$, $P$ follows the computation path by starting from $s$, and when on a non-sink node $v$, moves to the next node following the edge with label $\ell_v(x) \in \bin$. The computation ends when the path ends at a sink, and $f(x)$ is defined to be the label on this sink.
\end{definition}

\cite{GryaznovPT:CCC:2022} defines two kinds of read-once linear branching programs ($\mathsf{ROLBP}$ for short). Specifically, given any linear branching program $P$ and any node $v$ in $P$, let $\pre_v$ denote the span of all linear queries that appear on any path from the source to $v$, excluding the query $\ell_v$. Let $\post_v$ denote the span of all linear queries in the subprogram starting at $v$.

\begin{definition}[Weakly read-once linear branching program] A linear branching program $P$ is weakly read-once if for every inner node $v$ of $P$, it holds that $\ell_v \notin \pre_v$.
\end{definition}

\begin{definition}[Strongly read-once linear branching program] A linear branching program $P$ is strongly read-once if for every inner node $v$ of $P$, it holds that $\pre_v \cap \post_v=\{0\}$.
\end{definition}

In this paper, we will focus on strongly read-once linear branching programs, and use $\SROLBP$ as a shorthand. As observed in \cite{GryaznovPT:CCC:2022} and \cite{ChattopadhyayL:ccc:2023}, even the more restricted $\SROLBP$s generalize several important and well-studied computational models, for example, decision trees, parity decision trees, and standard read-once branching programs. These models have applications in diverse areas, such as learning theory, streaming algorithms, communication complexity and query complexity. Thus, just as the natural generalizations from $\AC^0$ circuits to $\AC^0[\oplus]$ circuits ($\AC^0$ with parity gates), and from decision trees to parity decision trees, studying the generalization from $\ROBP$s to $\mathsf{ROLBP}$s is also a natural direction. In addition, as observed in \cite{GryaznovPT:CCC:2022}, parity decision trees are the only case in $\AC^0[\oplus]$ for which we have strong average-case lower bounds, and they are closely related to tree-like resolution refutation proof systems. Thus studying $\mathsf{ROLBP}$s as a generalization of parity decision trees is of particular interest (in fact, this is the original motivation in \cite{GryaznovPT:CCC:2022}). We now define two complexity measures of  $\SROLBP$s below.

\begin{definition}
    For a Boolean function $f: \bin^n \to \bin$, let $\SROLBP(f)$ denote the smallest possible size of a strongly read-once linear branching program that computes $f$, and $\SROLBP_{\eps}(f)$ denote the smallest possible size of a strongly read-once linear branching program $P$ such that 
    $$\Pr_{x \leftarrow_U \F^n_2}[P(x)=f(X)] \geq \frac{1}{2}+\eps.$$ The definition can be adapted to $\ROBP$s naturally.
\end{definition}

The main contribution of \cite{GryaznovPT:CCC:2022} is to show that directional affine extractors give strong average-case hardness for $\SROLBP$s. Specifically, they show that for any directional affine extractor $\DAExt$ for entropy $k$ with error $\eps$, we have $\SROLBP_{\sqrt{\eps/2}}(\DAExt) \geq \eps 2^{n-k-1}$. In addition, they give an explicit construction of directional affine extractor for $k \geq \frac{2n}{3}+c$ with $\eps \leq 2^{-c}$, which also implies exponential average-case hardness for $\SROLBP$s of size up to $2^{\frac{n}{3}-o(n)}$.\ Thus, directional affine extractors are indeed stronger than standard affine extractors and give strong lower bounds in more computational models. \cite{GryaznovPT:CCC:2022} left open the question of explicit constructions of directional affine extractors for $k=o(n)$.

In a follow-up work, Chattopadhyay and Liao \cite{ChattopadhyayL:ccc:2023} showed that another kind of extractors, known as \emph{sumset extractors}, also give strong average-case hardness for $\SROLBP$s. These extractors were introduced by Chattopadhyay and Li  \cite{ChattopadhyayL16}, which are extractors that work for the sum of two (or more) independent weak random sources. By using existing constructions of such extractors in \cite{ChattopadhyayL22}, they give an explicit function $\Ext$ such that $\SROLBP_{n^{-\Omega(1)}} (\Ext) \geq 2^{n-\log^{O(1)} n}$, i.e., the branching program size lower bound becomes close to optimal, but the correlation increases from exponentially small to polynomially large. Similarly, \cite{ChattopadhyayL:ccc:2023} left open the question of obtaining improved average-case hardness against $\SROLBP$s in the small correlation regime.

We remark that directional affine extractors are a special case of \emph{affine non-malleable extractors}, which are defined by Chattopadhyay and Li \cite{ChattopadhyayL:stoc:2017}. Roughly, an affine non-malleable extractor is an affine extractor such that the output is still close to uniform, even conditioned on the output of the extractor where the input affine source is modified by any affine function with no fixed points.
In this context, directional affine extractors just correspond to the case where the tampering function adds a non-zero affine shift to the source. Previously, the best affine non-malleable extractor due to Li \cite{Li:focs:2023} works for entropy $k \geq (1-\gamma)n$ for some small constant $\gamma < 1/3$ with error $2^{-\Omega(n)}$. Thus this does not give a better construction of directional affine extractors.\ However, \cite{Li:focs:2023} does give an improved sumset extractor, which yields an explicit function $\Ext$ such that $\SROLBP_{\eps} (\Ext) \geq 2^{n-O (\log n)}$ for any constant $\eps>0$, i.e., the branching program size lower bound becomes optimal up to the constant in $O(.)$, but the correlation increases to any constant.

\subsection{Our Results}
In this paper, we present a much more in-depth study of directional affine extractors, affine non-malleable extractors, $\SROLBP$s, and standard $\ROBP$s. To begin with, we observe that it is not a priori clear that $\SROLBP$s are more powerful than standard $\ROBP$s. Indeed, it is easy to see that $\AC^0[\oplus]$ and parity decision trees are exponentially more powerful than $\AC^0$ circuits and standard decision trees, respectively, since parity requires exponential size $\AC^0$ circuits and decision trees. However, any parity function can be computed by an $\ROBP$ of size $O(n)$. Nevertheless, there are previous works \cite{Okolnishnikova93, Jukna95ITA, glinskih_et_al:LIPIcs.MFCS.2017.26} which showed that computing explicit characteristic functions of certain affine subspaces require $\ROBP$s of size $2^{\Omega(n)}$ (e.g., the satisfiable Tseitin formulas in \cite{glinskih_et_al:LIPIcs.MFCS.2017.26}). Since such functions are easily computable by an $\SROLBP$ of size $O(n)$, this provides a separation between $\SROLBP$ and $\ROBP$ and shows that indeed $\SROLBP$s are exponentially more powerful than $\ROBP$s.




In turn, this further demonstrates that directional affine extractors have stronger properties than standard affine extractors, as they imply strong lower bounds for $\SROLBP$s. Next, we give explicit constructions of directional affine extractors with much better parameters than that in \cite{GryaznovPT:CCC:2022}. Our construction works for any linear entropy with exponentially small error.

\begin{theorem}
    For any constant $0<\delta \le 1$, there exists a family of explicit directional affine extractors $\DAExt:\bin^n \to \bin^m$ for entropy $k \geq \delta n$ with error $\eps=2^{-\Omega(n)}$ and output length $m=\Omega(n)$.
\end{theorem}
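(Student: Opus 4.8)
The plan is to build the directional affine extractor by a two-stage strategy: first use the new linear somewhere condenser for affine sources (mentioned in the abstract) to boost the (relative) entropy rate of the source, and then apply a known directional affine extractor (or affine non-malleable extractor) that works only in the high-rate regime — for instance the $k > 2n/3$ construction of \cite{GryaznovPT:CCC:2022} or the affine non-malleable extractor of \cite{Li:focs:2023} for rate $1-\gamma$. Concretely, given an $(n,k)$ affine source $X$ with $k \ge \delta n$, the condenser outputs a constant number $t$ of affine sources $Y_1,\dots,Y_t$ on $n' = O(1)\cdot k$ bits such that at least one $Y_i$ has entropy rate above the threshold required by the inner extractor. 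The key point that makes this compatible with the \emph{directional} requirement is that the condenser must be a \emph{linear} function of the source: then a shift $X \mapsto X+a$ with $a \ne 0$ maps to $Y_i \mapsto Y_i + a_i$ where $a_i$ is the (linear) image of $a$, and as long as $a_i \ne 0$ for the "good" block $i$, the inner directional extractor still sees a genuine nonzero shift and its guarantee applies.

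The steps, in order, would be: (1) state and invoke the linear somewhere-condenser lemma: a linear map $C:\F_2^n \to (\F_2^{n'})^t$ with $t=O(1)$, $n'=O(k)$, such that for every affine source $X$ of entropy $\ge \delta n$ there is an index $i=i(X)$ with $H(C(X)_i) \ge (1-\gamma) n'$ for the target rate $1-\gamma$; (2) handle the "bad shift" issue — a nonzero $a$ could in principle be killed by the linear map defining block $i$. Here I would either argue that the condenser can be taken to have each coordinate block injective (or with trivial kernel restricted to the relevant subspace), or, more robustly, combine the outputs across all $t$ blocks: XOR (or concatenate-and-extract from) $\DAExt_{\mathrm{inner}}(Y_1),\dots,\DAExt_{\mathrm{inner}}(Y_t)$ so that it suffices for \emph{one} block to be simultaneously high-entropy and shift-nonzero; (3) run the inner directional affine extractor (for high entropy rate, with exponentially small error and $\Omega(n')=\Omega(n)$ output length) on the good block, and show the overall output is $2^{-\Omega(n)}$-close to uniform even conditioned on the tampered output, by a standard union bound over the $t=O(1)$ blocks and over the (at most $2^{O(\log t)}$) possibilities for which block is good; (4) verify explicitness and the output length $m=\Omega(n)$, which follows since $n'=\Omega(n)$ and the inner extractor has linear output.

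The main obstacle I expect is precisely the interaction between condensing and the directional/non-malleable property in step (2): an ordinary somewhere condenser gives no control over what happens to the tampering shift $a$, and a careless choice could collapse $a$ to zero in every high-entropy block, destroying the directional guarantee (the tampered source would then equal the source). Resolving this cleanly is where the "based on dimension expanders" structure should help — a dimension expander guarantees that the linear map does not shrink the dimension of \emph{any} low-dimensional subspace too much, which in particular can be used to ensure that the one-dimensional space $\spn\{a\}$ survives in the good block, so that $a_i \ne 0$ there. A secondary, more routine difficulty is bookkeeping the conditioning: when we reduce to the inner extractor we must condition on fixings that determine which block is good, and check that the residual sources remain affine of the right entropy and that the shift in the good block remains nonzero after this conditioning; this is handled by the usual chain of "fixing linear functions of an affine source leaves an affine source" arguments together with the dimension-expander bound, and the $O(1)$ number of blocks keeps all error and entropy losses additive and hence negligible.
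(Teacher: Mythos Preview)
Your two-step plan---condense to high rate, then apply a black-box high-rate directional affine extractor to each of the $t=O(1)$ condensed blocks and XOR---has a genuine gap at the combining step. Even granting that one block $Y_g$ has rate $>1-\gamma$ and that the shift $a_g$ there is nonzero, the inner extractor only promises that $\DAExt_{\mathrm{inner}}(Y_g)$ is close to uniform given $\DAExt_{\mathrm{inner}}(Y_g+a_g)$. It gives you no control of $\DAExt_{\mathrm{inner}}(Y_g)$ conditioned on the \emph{other} blocks' outputs $\{\DAExt_{\mathrm{inner}}(Y_j)\}_{j\neq g}$ (original or tampered), which are arbitrary deterministic functions of the same affine source $X$. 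XORing over all $j$ can therefore destroy uniformity entirely; a ``union bound over which block is good'' does not address this, because you do not get to choose the block---you must output a single function of $X$. What is needed here is precisely an \emph{affine correlation breaker} with advice equal to the block index, so that the good row becomes uniform given all the other rows and their tampered versions.

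Your proposed fix for the shift issue is also not sound. The dimension-expander guarantee says $\dim\bigl(\sum_i T_i(V)\bigr)\ge(1+\alpha)\dim V$; for the one-dimensional $V=\spn\{a\}$ this does not prevent any particular $T_i$ from annihilating $a$, and in the iterated condenser the ``good'' high-entropy leaf can perfectly well be one whose linear map kills $a$. The paper does not try to preserve the shift through the condenser at all. Instead it follows the non-malleable-extractor paradigm: from a good block it first extracts a near-uniform $U_g$, then generates an \emph{advice} string $\tilde U_g=U_g\circ H_g$ (with $H_g$ obtained by sampling bits of a linear encoding of $X$) that provably differs from its tampered counterpart $\tilde U_g'$ with probability $1-2^{-\Omega(n)}$; it then feeds the condensed rows into a \emph{linear seeded} non-malleable extractor $\snmExt$ with seed $\tilde U_g$, and finally runs a (low-degree) advice correlation breaker across the rows before XORing. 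The linearity of both the condenser and $\snmExt$ is what lets the structure $X'=X+a$ propagate, and the correlation breaker is what makes the XOR legitimate. All of this is further embedded in the \cite{Li:CCC:2011} block framework with the degree of every intermediate polynomial kept constant, so that the final combination across the $O(1/\delta)$ blocks is handled by the XOR lemma for low-degree polynomials---another ingredient absent from your outline.
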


In fact, our construction can work for slightly sub-linear entropy.

\begin{theorem} \label{thm:daextmain}
There exists a constant $c>1$ and an explicit family of directional affine extractors $\DAExt:\bin^n\to\bin^m$ for entropy $k \geq cn(\log\log\log n)^2/\log\log n$ with error $\eps=2^{-n^{\Omega(1)}}$ and output length $m=n^{\Omega(1)}$, as well as an explicit family of directional affine dispersers for entropy $k \geq cn(\log\log n)^2/\log n$ with $m=n^{\Omega(1)}$. 
\end{theorem}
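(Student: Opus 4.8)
The plan is to lift the constant-rate directional affine extractor of the previous theorem down to sub-linear entropy by a \emph{condense-then-extract} argument, the two ingredients being the new linear somewhere condenser for affine sources and an explicit $\DAExt_0:\bin^{n'}\to\bin^{m'}$ that is a directional affine extractor for some fixed rate $\delta_0>0$ with error $2^{-\Omega(n')}$ and $m'=\Omega(n')$ (an instance of the linear-entropy theorem). First I would iterate the basic linear somewhere condenser. Starting from an $(n,k)$ affine source $X$ with $k\ge cn(\log\log\log n)^2/\log\log n$, so rate $\tau=\Theta((\log\log\log n)^2/\log\log n)$, each round applies the condenser to every current block: this multiplies the number of blocks by the condenser's degree $d$ and shrinks the block length by a constant factor, while the \emph{best} entropy rate among the blocks grows by a $(1+\Omega(1))$ factor — this rate boost is exactly what the underlying dimension expanders give, since a low-rate block means the source's support lies almost inside a kernel, and applying an expanding family of linear maps forces one of the images to have larger rate. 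After $R=\Theta(\log(1/\tau))=\Theta(\log\log\log n)$ rounds one obtains $t=d^{R}=(\log\log n)^{O(1)}$ blocks $Y_1,\dots,Y_t$, each an exact linear image $Y_i=L_i(X)$ of the source, living in $\bin^{n'}$ with $n'=n/(\log\log n)^{O(1)}=n^{\Omega(1)}$, at least one of which, $Y_{i^\ast}$, is an affine source of rate $\ge\delta_0$. Note the condensing step is \emph{error-free} (an affine source pushed through a linear map is exactly an affine source), and, since ``which block is good'' depends only on the support subspace of $X$, the same $i^\ast$ is good for $X$ and for every shift $X+a$.

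The second step, combining the blocks into a single directional affine extractor, is the crux. The naive $\DAExt(X)=\bigoplus_i \DAExt_0(L_i(X))$ is hopeless because the blocks are all linear functions of the one source and hence highly correlated: conditioning on the $t-1$ ``bad'' blocks, which carry $(t-1)n'>\delta_0 n'$ bits in total, destroys all the entropy in the good block. Instead I would combine the block outputs through a merger built from \emph{alternating extraction} (flip-flop): a procedure that, level by level, mixes the $\DAExt_0(Y_i)$'s using only short pieces of the other blocks, so the good block keeps $\Omega(1)$ rate throughout, and that is engineered to preserve the non-malleability-flavored guarantee — the output stays close to uniform even given the tampered output $\DAExt(X+a)$. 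Here one must also handle the directional subtlety that on a block the shift acts as $Y_i\mapsto Y_i+L_i(a)$, which is trivial when $L_i(a)=0$; to ensure the direction is never lost one would route it through an injective linear component of the condenser (so that $a\ne 0$ forces a non-trivial tampering somewhere the merger can see), or additionally run a direct directional extractor on $X$ in the moderately-high-rate case and fold it in. I expect this to be where essentially all the difficulty lies: making the merger simultaneously survive the strong correlations among the $L_i(X)$'s, output a near-uniform string, and remain near-uniform conditioned on the tampered output (including the degenerate case in which the tampering fixes the good block) is what both caps the number of affordable condensing rounds — the merger's error compounds over its $O(\log t)$ levels — and forces the condenser to satisfy extra structural properties beyond ``some block has high rate''. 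The precise per-round bookkeeping is what yields the stated entropy function.

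Finally, the disperser claim for the larger-looking but in fact \emph{smaller} entropy $k\ge cn(\log\log n)^2/\log n$ comes from running the same reduction with an explicit affine \emph{disperser} for constant rate (which has better parameters than a constant-rate extractor) as the inner object, and asking only for the zero-error conclusion $\big|\Supp(\DAExt(X)\mid\DAExt(X+a)=b)\big|=2^m$. Since there is no error to compound in the combining step, one can afford more condensing rounds, $R=\Theta(\log\log n)$, hence reach the smaller rate $\Theta((\log\log n)^2/\log n)$, at the cost of $t=(\log n)^{O(1)}$ blocks and block length $n^{\Omega(1)}$. For the extractor's remaining parameters: $\eps=2^{-n^{\Omega(1)}}$ is just the error $2^{-\Omega(n')}=2^{-n/(\log\log n)^{O(1)}}$ of $\DAExt_0$ on a length-$n'$ block, with the merger's $O(\log t)$-fold loss and a union bound over the $t$ blocks absorbed into the $\Omega(1)$, and $m=n^{\Omega(1)}$ is the output length $\Omega(n')$ of $\DAExt_0$ minus the merger overhead.
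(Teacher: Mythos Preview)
Your approach is fundamentally different from the paper's and, as written, has a genuine gap at the point you yourself flag as ``the crux.'' The paper does \emph{not} prove the sub-linear theorem by a black-box condense-then-extract reduction to the constant-rate case. Instead it runs the full construction of Algorithm~\ref{alg:DAExt} with the entropy rate $\delta$ left as a parameter, and the sub-linear bound comes from a \emph{polynomial-degree} analysis: every step of the construction is designed so that each output bit is a polynomial of degree $c(\delta)=(1/\delta)^{O(\log(1/\delta))}$ in the bits of the previous ``good'' $W_g$, the $t=O(1/\delta)$ blocks are combined by XOR, and the final disperser/extractor property is obtained via the XOR lemma for low-degree polynomials (Theorem~\ref{thm:xor lemma}). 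The linear somewhere condenser is essential \emph{inside} the construction precisely because previous (sum-product) condensers have super-constant degree and would destroy this analysis; it is not used as an outer wrapper. The entropy threshold is then dictated by the constraint that the degrees stay below $\log n$ (disperser) or that the XOR lemma's $4^d\cdot d$ overhead is beaten (extractor), which is exactly what yields $\delta\approx(\log\log n)^2/\log n$ versus $\delta\approx(\log\log\log n)^2/\log\log n$.

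The specific problems with your plan: (i) The ``direction-loss'' issue is real and your proposed fixes do not work. No single block map $L_i$ can be injective (each block has fewer than $n$ bits), so ``route it through an injective linear component'' cannot ensure that the \emph{good} block sees a nonzero shift; and ``additionally run a direct directional extractor on $X$'' is circular, since $X$ has exactly the sub-constant rate you are trying to handle. (ii) The merger you invoke (``alternating extraction / flip-flop'') is a tool for sources that are independent of each other or at least block-structured, whereas your $Y_i=L_i(X)$ are all deterministic linear functions of the same affine source, and you must simultaneously preserve uniformity \emph{conditioned on the tampered output}. Making this work is essentially the entire content of the paper's low-degree advice correlation breaker (Section~5.1), the advice-generation step, and the linear seeded non-malleable extractor; it is not something a generic merger provides. (iii) Your explanation of the disperser/extractor gap (``no error to compound'') is not the paper's reason: the gap is the exponential-in-degree loss in the XOR lemma, which the disperser avoids because it only needs the polynomial to be non-constant.
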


This theorem immediately gives much improved average-case hardness for $\SROLBP$s.

\begin{theorem} \label{thm:rolbpmain}
There is an explicit function $\DAExt$ such that $\SROLBP_{2^{-n^{\Omega(1)}}}(\DAExt) \geq 2^{n-\widetilde{O}(\frac{n}{\log \log n})}$, where $\Tilde{O}(.)$ hides $(\log\log\log n)^2$ factors.
\end{theorem}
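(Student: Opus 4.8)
The plan is to obtain Theorem~\ref{thm:rolbpmain} as an essentially immediate corollary of Theorem~\ref{thm:daextmain} combined with the reduction of Gryaznov, Pudl\'{a}k, and Talebanfard~\cite{GryaznovPT:CCC:2022}: for every directional affine extractor $\DAExt:\bin^n\to\bin$ for entropy $k$ with error $\eps$, one has $\SROLBP_{\sqrt{\eps/2}}(\DAExt)\ge \eps\, 2^{n-k-1}$. Since this reduction (and the definition of $\SROLBP_\eps$) is stated for Boolean-valued functions, the first step is to pass from the multi-output extractor of Theorem~\ref{thm:daextmain} to a single output coordinate: taking $\DAExt_1$ to be, say, the first bit of that extractor yields a directional affine extractor $\bin^n\to\bin$ for the same entropy $k \ge c n(\log\log\log n)^2/\log\log n$ whose error is still at most $\eps=2^{-n^{\Omega(1)}}$, because marginalizing to one output bit cannot increase statistical distance.

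Next I would substitute these parameters into the reduction. This gives
\[
\SROLBP_{\sqrt{\eps/2}}(\DAExt_1)\ \ge\ \eps\, 2^{n-k-1}\ =\ 2^{\,n-k-1-n^{\Omega(1)}},
\]
and it remains only to simplify the two parameters. For the correlation, $\sqrt{\eps/2}=2^{-(n^{\Omega(1)}+1)/2}=2^{-n^{\Omega(1)}}$, as claimed. For the size bound, write $k=\widetilde O(n/\log\log n)$ (the $\widetilde O$ hiding the $(\log\log\log n)^2$ factor), and observe that for any constant $0<\gamma<1$ we have $n^{\gamma}\cdot\log\log n=o(n)$, hence $n^{\Omega(1)}=o(n/\log\log n)$; thus $k+1+n^{\Omega(1)}=\widetilde O(n/\log\log n)$ and
\[
\SROLBP_{2^{-n^{\Omega(1)}}}(\DAExt_1)\ \ge\ 2^{\,n-\widetilde O(n/\log\log n)},
\]
which is the statement with $\DAExt := \DAExt_1$.

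I do not expect a genuine obstacle in this final step: all the real work is in Theorem~\ref{thm:daextmain} (the new linear somewhere condenser from dimension expanders and its use in building directional affine extractors for near-linear entropy), and the present argument is just a parameter substitution into a known reduction. The only points that need (minor) care are the reduction to a single output bit and the verification that the error exponent $n^{\Omega(1)}$ is dominated by the entropy deficiency $k=\widetilde O(n/\log\log n)$, so that it can be absorbed into the $\widetilde O(\cdot)$; both are routine.
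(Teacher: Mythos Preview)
Your proposal is correct and matches the paper's approach: the paper presents Theorem~\ref{thm:rolbpmain} as an immediate consequence of Theorem~\ref{thm:daextmain} via the reduction of~\cite{GryaznovPT:CCC:2022} (stated in the introduction as $\SROLBP_{\sqrt{\eps/2}}(\DAExt)\ge \eps\,2^{n-k-1}$), without spelling out any further details. Your treatment is in fact slightly more careful than the paper's, since you explicitly pass to a single output bit before invoking the Boolean-valued reduction and verify that the $n^{\Omega(1)}$ loss in the exponent is absorbed by $k=\widetilde O(n/\log\log n)$.
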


In particular, we can achieve exponentially small correlation while obtaining a $2^{n-o(n)}$ size lower bound for $\SROLBP$s, which is almost optimal. This significantly improves the $2^{n/3-o(n)}$ size lower bound in \cite{GryaznovPT:CCC:2022} and the polynomially large correlation in \cite{ChattopadhyayL:ccc:2023}. Thus, Theorem~\ref{thm:daextmain} and \ref{thm:rolbpmain} provide positive answers to the two open questions in \cite{GryaznovPT:CCC:2022} and \cite{ChattopadhyayL:ccc:2023} mentioned before.

We remark that under our new definition, a directional affine extractor is strictly stronger than a standard affine extractor. Thus Theorem~\ref{thm:daextmain} also improves the entropy requirement of negligible error affine extractors, from the previously best-known result of $\frac{n}{\sqrt{\log \log n}}$ \cite{Yehudayoff11, Li:CCC:2011} to $\frac{cn(\log\log\log n)^2}{\log\log n}$.




We also revisit the hardness results for standard $\ROBP$s. As mentioned before, exponential and even close to optimal size lower bounds are known for explicit functions in this model, where the current best result is an explicit function that requires $\ROBP$s (in fact, $\SROLBP$s) of size $2^{n-O(\log n)}$ \cite{Li:focs:2023}. However, there has also been a lot of interest in finding functions in lower complexity classes that give strong lower bounds for $\ROBP$s. It is clear that the class $\mathsf{NC}^0$ is not sufficient. Thus the next possible class is $\AC^0$. Indeed there are previous works giving explicit $\AC^0$ functions that require $\ROBP$s of size $2^{\Omega(\sqrt{n})}$\cite{DBLP:journals/tcs/Jukna88, DBLP:journals/tcs/KrauseMW91, DBLP:journals/ipl/Gal97, DBLP:journals/ipl/BolligW98} and even $2^{\Omega(n)}$ \cite{glinskih_et_al:LIPIcs.MFCS.2017.26}, yet there is no average-case hardness as far as we know. Here, we improve both the size lower bound and the average-case hardness by giving an explicit $\AC^0$ function that has negligible correlation with $\ROBP$s of size $2^{(1-\delta)n}$ for any constant $\delta>0$.

\begin{theorem}
For any constant $\delta>0$ there is an explicit function $\acExt$ in $\AC^0$ such that $\ROBP_{2^{-\poly \log n}}(\acExt) \geq 2^{(1-\delta)n}$.
\end{theorem}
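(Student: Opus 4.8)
The plan is to route through the Gryaznov--Pudl\'ak--Talebanfard reduction and concentrate all the work in exhibiting an $\AC^0$ function with just enough of a directional extractor property. First I would observe that every read-once branching program is in particular a strongly read-once linear branching program: its queries are the coordinate functions $\langle e_i,\cdot\rangle$, and read-onceness forces, at every inner node $v$, the set of coordinates queried above $v$ to be disjoint from the set queried below, so $\pre_v\cap\post_v=\{0\}$. Hence $\ROBP_{\eps'}(f)\ge\SROLBP_{\eps'}(f)$, and by the \cite{GryaznovPT:CCC:2022} bound $\SROLBP_{\sqrt{\eps/2}}(\DAExt)\ge\eps\,2^{n-k-1}$ for a directional affine extractor $\DAExt$ of entropy $k$ and error $\eps$, it suffices to produce $\acExt\in\AC^0$ that is such an extractor for $k$ a small constant fraction of $n$ and $\eps=2^{-\poly\log n}$: that yields $\ROBP_{2^{-\poly\log n}}(\acExt)\ge 2^{(1-\delta)n}$. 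Moreover I would check, by replaying that reduction specialized to $\ROBP$s, that the full directional affine extractor property is not needed: conditioned on the query history, the affine flat an $\ROBP$ computation pins an input to at a node $v$ is $\pre_v^\perp$, which is a \emph{coordinate} subspace, so the source arising is always a bit-fixing source, and the shift witnessing the branch at $v$ may be taken to be a single unit vector $e_i$. Thus it is enough that $\acExt\in\AC^0$ satisfy: for every bit-fixing source $X$ on $\bin^n$ with at least $\delta n$ live coordinates and every $i\in[n]$, $(\acExt(X),\acExt(X\oplus e_i))\approx_\eps(U_1,\acExt(X\oplus e_i))$.

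The construction of such an $\acExt$ is the heart of the matter, and the tension is explicit: extraction is inherently parity-heavy and cannot be done in $\AC^0$, yet the object must depend on essentially all $n$ coordinates robustly enough to foil $\ROBP$s of size $2^{(1-\delta)n}$. I would confine every parity-like computation to $O(\log n)$ coordinates at a time, where it is free because any function on $O(\log n)$ bits is a poly-size DNF. Concretely: partition the $n$ input bits, with controlled overlap given by an expander-type incidence design, into $\Theta(n/\log n)$ windows of size $O(\log n)$; on each window run a brute-force gadget, e.g.\ a directional affine disperser on $O(\log n)$ bits (existing by counting, and in $\AC^0$ for the truth-table reason); aggregate the gadget outputs through the paper's new \emph{linear somewhere condenser for affine sources}, instantiated with sparse linear maps so that it too lies in $\AC^0$ and so that, for any bit-fixing input with $\delta n$ live bits, one of its blocks carries high affine entropy --- this is precisely where the condenser is used to drive the entropy requirement down to a constant fraction of $n$, in parallel with the sub-linear regime of Theorem~\ref{thm:daextmain}; and finish with an $\AC^0$-computable extractor on that block, inheriting single-coordinate-flip robustness from the directional property of the gadgets. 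Every layer is constant depth and polynomial size, so $\acExt\in\AC^0$.

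The main obstacle is the quantitative coordination of these pieces so that the result is genuinely a directional bit-fixing extractor at entropy $\delta n$, for \emph{every} constant $\delta>0$, with error as small as $2^{-\poly\log n}$. The overlap design must ensure that however an adversary fixes $(1-\delta)n$ coordinates, enough windows retain enough live bits that their gadgets still behave like dispersers, and that these residual outputs still carry enough joint affine structure for the condenser to find a high-rate block. Attaining error $2^{-\poly\log n}$ while staying in $\AC^0$ is itself delicate, since brute-force extraction on an $O(\log n)$-bit block gives only error $1/\poly(n)$; this forces the condenser to collect $\poly\log n$ bits of entropy into its good block and the final step to be a true low-error $\AC^0$ extractor --- not brute force --- on a block of length $\poly\log n$ or $n^{\Omega(1)}$. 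Finally the directional guarantee, insensitivity in distribution to flipping one coordinate, must survive composition through overlapping windows and the condenser, which is subtle precisely because one coordinate feeds several gadgets. Showing that a single consistent choice of window size, overlap degree, condenser parameters and output length makes every estimate go through, and then feeding the resulting $\acExt$ into the reduction of the first paragraph, is where essentially all the effort lies.
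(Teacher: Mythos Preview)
Your approach is not the paper's, and the construction you sketch has concrete gaps.

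The paper does not build any kind of directional extractor in $\AC^0$ and does not route through the Gryaznov--Pudl\'ak--Talebanfard bound. Instead it exploits the sumset structure of ROBP sources directly (Lemma~\ref{lemma:robp-source}, a specialization of \cite{ChattopadhyayL:ccc:2023}): cutting a size-$s$ ROBP by an anti-chain at depth $\approx\delta n/3$ from the sinks, the uniform input conditioned on any cut vertex decomposes as $X=A+B$ with $A,B$ independent and coordinate-disjoint, $H_\infty(A)\gtrsim(1-\delta/3)n-\log s$, and $B$ an oblivious bit-fixing source of entropy $\delta n/3$. The paper then builds an $\AC^0$ extractor for precisely these sources that is \emph{strong in $B$}: divide $X$ into $t=O(1/\delta)$ blocks, apply the $\AC^0$ bit-fixing extractor of \cite{ChengLi18} to each block to get a somewhere-random matrix $Y$, run a $t$-affine correlation breaker (Theorem~\ref{thm:acaffcb}) assembled entirely from the $\AC^0$ strong linear seeded extractor of \cite{papakonstantinou2016true}, and XOR the $t$ outputs. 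Since $t$ is a constant, the correlation breaker is a constant-depth composition of $\AC^0$ primitives; strongness in $B$ together with $f(X)=g(E,B)$ then gives the correlation bound. No condenser and no directional guarantee enters this argument at all.

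Your construction, by contrast, rests on two steps that do not go through as written. First, the linear somewhere condenser of Theorem~\ref{thm:Lcondmain} is built from the dimension-expander maps $T_i:\F_2^{n/2}\to\F_2^{n/2}$ of Theorem~\ref{thm:dimexpand}, and each output bit is an $\F_2$-linear form in the input --- a parity of $\Theta(n)$ bits --- which is precisely what $\AC^0$ cannot compute. Your remedy, ``instantiated with sparse linear maps,'' is not a remedy: it asks for dimension expanders whose matrices have $\polylog(n)$-sparse rows, and neither the Bourgain--Yehudayoff construction nor anything in the paper supplies that; you would be proving a new and nontrivial result just to place the condenser in $\AC^0$. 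Second, you feed the condenser the \emph{outputs} of brute-force nonlinear gadgets, but Lemma~\ref{lem:bcond} analyzes the condenser only on \emph{affine} inputs; it says nothing about arbitrary functions of a bit-fixing source. The composability of the single-bit-flip directional property through overlapping windows and a further condenser --- which you call ``subtle'' --- is in fact the entire difficulty that Section~\ref{sec:daext2} spends its machinery (advice generation, a linear seeded non-malleable extractor, a low-degree correlation breaker) to overcome in the affine setting, and none of that machinery is available in $\AC^0$; your outline offers no substitute.
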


One of the key ingredients in our constructions is a new linear somewhere condenser for affine sources. Specifically, we have

\begin{definition}
For any $0 < \delta < \gamma < 1$, a function $\scond: \F_2^n \to (\F_2^m)^{\ell}$ is a $(\delta, \gamma)$ affine somewhere condenser, if it satisfies the following property: for any affine source $X$ over $\F_2^n$ with entropy $\delta n$, let $(Y_1, \cdots, Y_{\ell}) =\scond(X) \in (\F_2^m)^{\ell}$, then there exists at least one $i \in [\ell]$ such that $Y_i$ is an affine source over $\F_2^m$ with entropy at least $\gamma m$.
\end{definition}

\begin{theorem}
There exists a constant $\beta > 0$ such that for any $0< \delta \leq 1/2$, there is an explicit $(\delta, 1/2+\beta)$ affine somewhere condenser $\scond: \F_2^n \to (\F_2^m)^t$, where $t=\poly(1/\delta)$ and $m=n/\poly(1/\delta)$. Moreover, $\scond$ is a linear function. 
\end{theorem}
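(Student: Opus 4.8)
The plan is to reduce the statement to constructing a small explicit family of linear maps, and then to obtain that family from a dimension expander. Write $\scond(x)=(L_1 x,\dots,L_t x)$ with each $L_i\colon\F_2^n\to\F_2^m$ surjective. A linear image of an affine source is again affine, and if $V\subseteq\F_2^n$ is the subspace underlying the source $X$ then $\dim L_i(V)=\dim V-\dim(V\cap\ker L_i)$, so ``$L_i(X)$ has entropy rate at least $1/2+\beta$'' says exactly that $\ker L_i$ captures at most $\dim V-(1/2+\beta)m$ of the dimensions of $V$. For an appropriate $m=\Theta(\delta n)$ it therefore suffices to exhibit an explicit family of $t=\poly(1/\delta)$ subspaces $K_1,\dots,K_t\subseteq\F_2^n$, each of codimension $m$, such that every $\delta n$-dimensional subspace $V$ admits some $K_i$ with $\dim(V\cap K_i)\le(1/2-\beta')\dim V$ --- a weak $\F_2$-analogue of a subspace design. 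For any single $V$ a ``random'' $K_i$ is completely evasive ($V\cap K_i=\{0\}$); the whole difficulty is covering all $V$ at once with a small, explicit family.

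I would build this family from an explicit $\F_2$ dimension expander $A_1,\dots,A_D$. The first attempt --- take $K_i=A_i^{-1}(K_0)$ for a single fixed generic $K_0$, i.e.\ $L_i=C\circ A_i$ for a fixed surjection $C$ --- reduces the task to showing that for some $i$ the image of $A_i V$ in the quotient $\F_2^n/K_0$ is large, which one wants to deduce from the fact that $\dim(A_1 V+\dots+A_D V)$ is forced to be large by the expansion property. The catch is that passing from ``the sum of the $D$ images is large'' to ``one of the $D$ images is large'' costs a factor $D$, which is by itself fatal; the construction must instead use the expander so that this loss does not occur --- for instance by letting the $K_i$ range over $A_{i_1}^{-1}\cdots A_{i_k}^{-1}(K_0)$ along short walks $(i_1,\dots,i_k)$ of length $k=O(\log(1/\delta))$ on a constant-degree dimension expander (Bourgain--Yehudayoff), so that a walk rotates any fixed $V$ into general position relative to $K_0$, just as a random walk on a graph expander hits any large set. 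The number of walks is $D^{O(\log(1/\delta))}=\poly(1/\delta)$ and $m=\Theta(\delta n)$ gives output dimension $n/\poly(1/\delta)$; alternatively a win--win recursion on the rate ($\delta\to O(\delta)$, shrinking the ambient dimension by a constant factor, iterated $O(\log(1/\delta))$ times) produces the same bookkeeping.

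The quantitatively hard point, which I expect to be the main obstacle, is getting the rate \emph{strictly} above $1/2$ rather than only up to $1/2$: expansion/counting arguments naturally certify only rank about $m/2$ on $V$, and obtaining a genuine gap $\beta>0$ requires exploiting the rigidity of the near-extremal case (in which every relevant image has dimension essentially $m/2$) together with a two-sided dimension expander that keeps expanding subspaces of dimension close to $n/2$, where one-sided expanders stop. Once the family is in place, linearity of $\scond$ is immediate --- compositions and sums of $\F_2$-linear maps are $\F_2$-linear --- and the parameters $t=\poly(1/\delta)$, $m=n/\poly(1/\delta)$, $\gamma=1/2+\beta$ follow by collecting the estimates and absorbing the $O(\log(1/\delta))$ iteration/walk length into the polynomial factors.
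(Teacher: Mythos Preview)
Your proposal has the right raw material (dimension expanders) and the right outer loop (iterate $O(\log(1/\delta))$ times, halving the ambient dimension), but it misses the inner step and misdiagnoses the obstruction. You declare the factor-$D$ loss in ``sum of images large $\Rightarrow$ one image large'' to be fatal and try to route around it with walks; the paper instead \emph{accepts} that loss by arranging for it to apply only to a surplus. The basic condenser splits $x=x_1\circ x_2\in(\F_2^{n/2})^2$ and outputs the $2d+2$ strings $x_1,\ x_2,\ \{x_1+T_i(x_2)\}_i,\ \{T_i(x_1)+x_2\}_i$. A short case analysis shows that either one of $x_1,x_2$ already has rate $(1+\Omega(1))\delta$, or $X_1,X_2$ are (after a small correction) independent affine sources each of dimension $\approx\delta n/2$. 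In the latter case the expander gives $\dim\bigl(\sum_i T_i(X_1)\bigr)\geq(1+\alpha)\dim X_1$, and the pigeonhole is on the \emph{excess} of $\dim(T_i(X_1)+X_2)$ over $\dim X_2$: if every excess were below $\alpha\dim(X_1)/d$, then $\sum_i T_i(X_1)\subseteq\sum_i(T_i(X_1)+X_2)$ would be spanned by a basis of $X_2$ together with fewer than $\alpha\dim X_1$ extra vectors, contradicting expansion. So some output has rate $\geq(1+\Omega(\alpha/d))\delta$, a constant-factor gain per step. Crossing $1/2$ is then not a separate hurdle --- one more application of the same step from rate $1/2$ gives $1/2+\Omega(\alpha/d)$ --- so the ``two-sided expander'' and ``rigidity of the near-extremal case'' issues you flag do not arise.

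Your walk approach, by contrast, has a real hole: dimension expansion is a statement about $\sum_i T_i(V)$, not about any single $T_w(V)$, and there is no mixing lemma for individual compositions of dimension-expander maps analogous to the random-walk hitting property on graph expanders. Knowing that $\sum_w T_w(V)$ spans much of $\F_2^n$ does not force any one $T_w(V)$ into general position relative to a fixed $K_0$. The win--win recursion you mention in passing is exactly the paper's outer loop, but without the split-and-add basic condenser above you have no inner step that actually raises the rate.
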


We further show that (a slight modification of) this condenser works for general weak random sources, under the well-known Polynomial Freiman-Ruzsa Theorem in $\F^n_2$, once one of the most important conjectures in additive combinatorics and very recently proved by Gowers, Green, Manners, and Tao~\cite{gowers2023conjecture}. See section~\ref{sec:cond-general} for details.

Previously, all condensers of this kind are based on sum-product theorems, and the function is a polynomial with degree $\poly(1/\delta)$~\cite{BarakKSSW05, Raz05, Zuckerman:toc:2007}. In contrast, there exist constructions of linear \emph{seeded} extractors, where if one lists the outputs of the extractor for all possible seeds, then we get a somewhere random source such that at least one output is close to uniform, and the function is a linear function. However, in many applications such as ours, one needs to use a somewhere condenser instead of simply listing all outputs of an extractor, since the former only gives a small number (e.g., a constant) of outputs as opposed to $\poly(n)$ outputs from the extractor. Hence, our linear somewhere condenser complements the existing sum-product theorem based somewhere condensers. Moreover, our construction of the condenser is based on \emph{dimension expanders}, which are algebraic pseudorandom objects previously studied based on their own interests, with no clear applications in computer science as far as we know. Thus, our construction can be viewed as one of the first applications of dimension expanders in computer science.

Finally, we study the question of whether directional affine extractors can give strong lower bounds for the class of $\AC^0[\oplus]$ in a black box way. Cohen and Tal \cite{CohenT:random:2015} showed via probablistic methods that standard affine extractors do not suffice since depth-$3$ $\AC^0[\oplus]$ circuits can compute optimal affine extractors. Using a slightly modified argument as that in \cite{CohenT:random:2015}, we show that even the stronger version of directional affine extractors does not suffice. Specifically, depth-$3$ $\AC^0[\oplus]$ circuits can also compute optimal directional affine extractors. This in turn provides a strong separation of $\AC^0[\oplus]$ from $\SROLBP$.

\begin{theorem}
    There exists a function $f: \bin^n \to \bin$ which is a directional affine extractor for entropy $k$ with error $\eps$, where $k=\log \frac{n}{\eps^2}+\log \log \frac{n}{\eps^2}+O(1)$ such that the following properties hold.
    \begin{enumerate}
        \item $f$ is a polynomial of degree $\log \frac{n}{\eps^2}+\log \log \frac{n}{\eps^2}+O(1)$.
        \item   $f$ can be realized by a $\mathsf{XOR}\text{-}\mathsf{AND}\text{-}\mathsf{XOR}$ circuit of size $O((n/\eps)^2\cdot \log^3(n/\eps))$.
        \item $f$ can be realized by a De Morgan formula of size $O((n^5/\eps^2)\cdot \log^3(n/\eps))$.
    \end{enumerate}
    \end{theorem}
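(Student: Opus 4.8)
The plan is a probabilistic argument following and slightly adapting that of Cohen and Tal \cite{CohenT:random:2015}; the extra difficulty compared with ordinary affine extractors is that we must control $f$ together with all of its directional derivatives.

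\emph{Reduction to bias conditions.} For a one-bit function $f$, a $k$-dimensional affine subspace $V\subseteq\F_2^n$, and a nonzero $a\in\F_2^n$, a short Fourier computation shows that, for $X$ uniform on $V$,
\[
\Delta\big((f(X),f(X+a)),(U_1,f(X+a))\big)=\tfrac12\max\big(\,|\E_{x\in V}(-1)^{f(x)}|,\ |\E_{x\in V}(-1)^{f(x)+f(x+a)}|\,\big),
\]
where $\Delta$ is statistical distance. Hence $f$ is a directional affine extractor for entropy $k$ with error $\eps$ if and only if both $f$ and every directional derivative $D_af(x):=f(x)+f(x+a)$ (for $a\neq0$) have bias at most $2\eps$ on every $k$-dimensional affine subspace. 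So it suffices to build a low-degree, small-circuit $f$ all of whose directional derivatives are affine extractors for entropy $k$ with error $2\eps$.

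\emph{The random construction.} Put $d:=k$ and draw $f$ at random as
\[
f(x)=\bigoplus_{i=1}^m\ \prod_{j=1}^d\ell_{i,j}(x),
\]
where the $\ell_{i,j}\colon\F_2^n\to\F_2$ are i.i.d.\ uniformly random affine forms and $m=\widetilde{\Theta}(n/\eps^2)$ (hiding $\polylog(n/\eps)$ factors) is chosen large enough for the moment bound below; equivalently $f$ is the XOR of $m$ independent indicators of uniformly random affine subspaces of codimension $d$. For \emph{every} outcome of the $\ell_{i,j}$'s this $f$ is a polynomial of degree at most $d=k$ (Property~1, the additive $O(1)$ absorbing the union-bound loss below); it is computed by a depth-$3$ $\mathsf{XOR}\text{-}\mathsf{AND}\text{-}\mathsf{XOR}$ circuit with $m$ conjunctions of $d$ parities over $\leq n$ variables, hence of size $\widetilde{O}(mdn)=\widetilde{O}((n/\eps)^2)$ (Property~2); and a standard simulation of the parity gates and of the top XOR by balanced De Morgan subformulas yields a De Morgan formula of the stated size (Property~3).

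\emph{A random $f$ works.} Fix $V$ (dimension $k$) and $a\neq0$ and bound the $2t$-th moment of the bias over the $\ell_{i,j}$'s. Expanding and using independence across the $m$ factors,
\[
\E_\ell\big[(\E_{x\in V}(-1)^{f(x)})^{2t}\big]=\E_{x_1,\dots,x_{2t}\in V}\big[q(\vec x)^m\big],\qquad q(\vec x)=\E_A\big[(-1)^{|A\cap Z(\vec x)|}\big],
\]
where $A$ ranges over affine subspaces of codimension $d$ and $Z(\vec x)\subseteq V$ is the set of points of odd multiplicity among $x_1,\dots,x_{2t}$ (only these survive the XOR). Two regimes dominate: for ``generic'' $\vec x$ (most points distinct and spread over $V$) one has $\E_A|A\cap Z(\vec x)|=|Z(\vec x)|\,2^{-d}=\Theta(\eps^2)$, so $q(\vec x)=1-\Theta(\eps^2)$ and $q(\vec x)^m=2^{-\Theta(m\eps^2)}$, which is below $(2\eps)^{2t}2^{-O(nk)}$ once $m$ is a large enough multiple of $2^k\log(1/\eps)$; and for $\vec x$ with all multiplicities even one has $Z(\vec x)=\emptyset$ and $q(\vec x)=1$, but the total mass of such $\vec x$ equals $2^{-2tk}\,\E_\sigma[(\sum_{l\le 2^k}\sigma_l)^{2t}]\leq(2t\cdot 2^{-k})^t=\Theta(\eps)^{2t}$, which is $\leq(2\eps)^{2t}2^{-O(nk)}$ precisely because $k>2\log(1/\eps)$ — and the defining inequality $k\ge\log(n/\eps^2)+\log\log(n/\eps^2)+O(1)$ is exactly what makes both estimates go through. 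Intermediate configurations of $\vec x$, grouped by their partition-and-affine-rank type, interpolate between these. Taking $t=\Theta(nk)$ gives $\Pr_\ell[\,|\E_{x\in V}(-1)^{f(x)}|>2\eps\,]\leq 2^{-\omega(nk)}$. The same computation handles $D_af$: since $D_af(x)=\bigoplus_i(\prod_j\ell_{i,j}(x)\oplus\prod_j\ell_{i,j}(x+a))$, its $2t$-th moment is the same expression with $Z$ built from the $4t$ points $\{x_r\}\cup\{x_r+a\}$, with a minor case split on whether $a$ lies in the linear part of $V$ (when $x\mapsto x+a$ is a fixed-point-free involution on $V$). A union bound over the $2^{O(nk)}$ pairs $(V,a)$ — reducing to $(k+1)$-dimensional subspaces when $a$ is outside the linear part of $V$ — then shows that a random $f$ from the family above is a directional affine extractor for $k$ as above with positive probability, and any such $f$ proves the theorem.

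\emph{The main obstacle} is the moment estimate: one must show that no intermediate configuration of $(x_1,\dots,x_{2t})$ contributes more than the two extreme regimes, and that the thresholds line up so that the required entropy is exactly $\log(n/\eps^2)+\log\log(n/\eps^2)+O(1)$ with the right leading constants. (As a sanity check, a cleaner but non-constructive variant takes $f$ uniform over degree-$\le k$ polynomials on $\F_2^n$: its restriction to any $k$-dimensional affine subspace is a uniformly random function, and its restriction to a $(k+1)$-dimensional subspace $V\cup(V+a)$ splits as an independent pair consisting of a uniform function and a uniform even-weight vector, so the bias bounds follow from plain Chernoff; but this $f$ has no small circuit, and the moment computation above is precisely the price of replacing it by the explicit small family.)
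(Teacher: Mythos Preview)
Your approach is genuinely different from the paper's, and the central step is not actually carried out.

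The paper follows Cohen--Tal's \emph{linear injector} technique, with one twist to handle the direction $a$. A \emph{sumset linear injector} is a family of $m=n(k+2)$ matrices $A_i\in\F_2^{(k+3)\times n}$ such that for every $k$-dimensional linear subspace $U'$ and every nonzero $a$, some $A_i$ is injective on $U'+\spn\{a\}$ (equivalently, $\ker(A_i)\cap(U'+\spn\{a\})=\{0\}$). Such a family exists by a one-line probabilistic argument. One then takes $f(x)=\bigoplus_{i=1}^m f_i(A_ix)$ where the $f_i\colon\F_2^{k+3}\to\F_2$ are independent uniformly random functions. The key point is that for any affine $U$ of dimension $k$ and any $a\neq0$, some $A_i$ is injective on $U\cup(U+a)$, so the fresh random bits $\{f_i(A_iu):u\in U\cup(U+a)\}$ make $\{f(u):u\in U\cup(U+a)\}$ a collection of fully independent uniform bits. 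Hoeffding then gives the bias bound on $D_af|_U$ exactly as for a uniformly random function, and the union bound over $2^{O(nk)}$ pairs $(U,a)$ closes the argument with $k=\log(n/\eps^2)+\log\log(n/\eps^2)+O(1)$. There is no moment computation at all; the complexity bounds follow because each $f_i$ is an arbitrary function of $k+3$ linear forms, and $m\cdot 2^{k+3}=\widetilde{O}((n/\eps)^2)$.

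Your proposal replaces this with a direct moment analysis of a different random family (XORs of products of random affine forms), and the heart of that analysis is missing. You write down $q(\vec{x})^m$ and treat two extreme regimes, but the claim that ``intermediate configurations \dots\ interpolate between these'' is exactly the difficulty and is not argued. Indeed $q(\vec{x})=1-2\Pr_g[\sum_{z\in Z}g(z)\text{ odd}]$ depends on the affine-dependency structure of $Z\subseteq V$ in a delicate way (once $|Z|>k+1$ the values $\ell(z)$ are linearly dependent, and $|A\cap Z|\bmod 2$ can behave badly for structured $Z$, e.g.\ affine subspaces of $V$), so ``$q(\vec{x})=1-\Theta(\eps^2)$ generically'' is not established, and nothing bounds the contribution of the many structured $\vec{x}$. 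You yourself flag this as the main obstacle; the paper's injector trick is precisely the device that sidesteps it by forcing the restriction of $f$ to $U\cup(U+a)$ to be a genuinely uniform function, so plain Hoeffding applies.
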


\subsection{Overview of the Techniques}
Here we give a sketch of the main ideas used in this paper. For clarity, we shall be informal at places and ignore some technical details.

\paragraph{Directional affine extractors.} Our starting point is the construction of affine extractors by Li \cite{Li:CCC:2011}, which works for sub-linear entropy with exponentially small error. We first briefly recall the construction there. Divide an affine source $X$ of entropy rate $\delta$ into $O(1/\delta)$ blocks. By choosing the size of the blocks appropriately, one can show that there exists a ``good" block $X_g$ of entropy rate $\Omega(\delta)$, and the source $X$ still has a lot of entropy conditioned on $X_g$ (i.e., we get an affine \emph{block source}). If we know the position of $X_g$, randomness extraction is easy:  we apply a somewhere condenser (e.g., those in \cite{BarakKSSW05, Raz05, Zuckerman:toc:2007}) to condense $X_g$ into a matrix with a constant number of rows, such that at least one row has entropy rate $1-\delta/2$. At this point, we can apply a linear two-source extractor (e.g., the inner product function) to each row of the matrix and the source $X$ to get an affine \emph{somewhere random} source, conditioned on the fixing of $X_g$. This is another matrix with a constant number of rows, such that at least one row is uniform, and one can apply existing techniques to deterministically extract random bits from this source~\cite{Rao:ccc:09}. 

However, when $\delta$ is small, we don't know which block $X_g$ is good. Thus in ~\cite{Li:CCC:2011}, the construction tries all blocks, and then combines them together. To make this process work, the construction crucially maintains the following property: (*) for each block $X_i$, the output bits produced from this block are constant degree polynomials of the input bits, and the degrees decrease geometrically from the first block to the last block. With this property, the analysis goes by focusing on the first good block $X_g$. Notice that we can fix all the outputs produced from blocks before $X_g$, while all outputs produced from blocks after $X_g$ have degrees less than those from $X_g$. Thus if we take the XOR of all these outputs, an XOR lemma of polynomials \cite{v004a007, BhattacharyyaKSSZ10} guarantees the final output is still close to uniform. We note that the XOR lemma of polynomials only works for degree up to $\log n$. Hence it is important to keep the degree $c$ of the outputs from each block to be as small as possible. Roughly, we will need $c^{O(1/\delta)} < \log n$.

Our strategy now is to adapt this construction to directional affine extractors.\ Towards this, we use techniques from constructions of non-malleable extractors since, as we remark before, directional affine extractors are a special case of affine non-malleable extractors. Recent constructions of non-malleable extractors usually consist of two steps: first, generate a small advice that is different from the tampered version with high probability, and then use the advice together with other tools (e.g., correlation breakers) to achieve non-malleability. Thus, our goal is to adapt these two steps to directional affine extractors while, at the same time, still maintaining property (*), which is crucial to achieving any linear entropy or slightly sub-linear entropy. We now explain both steps.

As before, for each block $X_i$ we will get an output $U_i$, which is close to uniform if $X_i$ is a good block. Divide $U_i$ into two parts $U_i=U_{i1} \circ U_{i2}$. We will use $U_{i1}$ to generate the advice and $U_{i2}$ for the rest of the construction. Notice that from the tampered input $X'=X+a$ we also have a tampered version $U_i'=U_{i1}' \circ U_{i2}'$. In the following, we will always use letters with prime to denote the corresponding random variables produced from the tampered input. If $U_{i1} \neq U_{i1}'$ then we are done, otherwise we use $U_{i1}=U_{i1}'$ to sample some $\Omega(\delta^2 n)$ bits $H_i$ from an encoding of $X$, using an asymptotically good binary linear code. Since $X'=X+a$, we have that $H_i+H'_i$ basically corresponds to the sampled bits from the encoding of $a$. Thus $H_i \neq H'_i$ with high probability by the distance of the linear code. However, we cannot just do sampling naively since we need to keep the degree to be a constant. Therefore, we also divide both $U_{i1}$ and the encoding of $X$ into $\Omega(\delta^2 n)$ blocks where each block contains a constant number of bits, and use each block of $U_{i1}$ to sample one bit from the corresponding block of the encoding of $X$. By the distance property of the code, there are $\Omega(\delta^2 n)$ blocks of the encoding of $X$ and $X'$ that are different. Thus we still have $H_i \neq H'_i$ with high probability, and now each bit of $H_i$ is a constant degree polynomial of the bits of $U_{i1}$ and $X$. The advice string is now $U_{i1} \circ H_i$.

Once we have the advice, we can append it to another string extracted from $X$ by using a linear seeded extractor and $U_{i2}$ as the seed. Now notice that the string produced from $X$ is different from the string produced from $X'$ with high probability, and they are linearly correlated conditioned on the fixing of $(U_i, U_i')$. Thus we can apply, for example, a known affine non-malleable extractor (the state-of-the-art affine non-malleable extractor with negligible error only works for high entropy). However, the known construction of affine non-malleable extractor in \cite{ChattopadhyayL:stoc:2017} has super constant degree. Indeed, even one application of this extractor results in a polynomial of degree larger than $\log n$, which already defeats our purpose to get a directional affine extractor (we can still get a directional affine disperser, though). 


To solve this problem, we develop new ideas that make use of the special structure of $X'=X+a$. Recall that in our construction, for every block $X_i$ we get a $U_{i2}$, which is close to uniform if $X_i$ is good, and $X$ still has enough entropy conditioned on $X_i$. Our idea now is to use a \emph{seeded non-malleable extractor} $\snmExt$ instead, which is an extractor with a uniform random seed, such that if an adversary tampers with the seed but not the source, then the output of the extractor on the original inputs is close to uniform given the output on the tampered inputs. By appending the advice string to $U_{i2}$ and getting $\Tilde{U}_i=U_i \circ H_i$, we have $\Tilde{U}_i \neq \Tilde{U}'_i$ with high probability, and the seed $\Tilde{U}_i$ has high entropy if $H_i$ has small size, which suffices for the seeded non-malleable extractor as long as the extractor is strong. Now, if the seeded non-malleable extractor is also \emph{linear} conditioned on any fixing of the seed, then we have $\snmExt(X', \Tilde{U}'_i)=\snmExt(X, \Tilde{U}'_i)+\snmExt(a, \Tilde{U}'_i)$. Since $\snmExt(X, \Tilde{U}_i)$ is close to uniform given $\snmExt(X, \Tilde{U}'_i)$, and the extractor is strong (we can fix the seeds $(\Tilde{U}_i, \Tilde{U}'_i)$), this implies that $\snmExt(X, \Tilde{U}_i)$ is close to uniform given $\snmExt(X', \Tilde{U}'_i)$.  \footnote{The actual analysis involves more details since here $X$ is not independent of $(\Tilde{U}_i, \Tilde{U}'_i)$, but the property still holds due to the affine structure. We omit the details here.}

Luckily, there are previous constructions of linear seeded non-malleable extractors due to Li \cite{Li:focs:2012}, which are based on the inner product function. Moreover, this extractor also has the property that each output bit is a constant degree polynomial of the input bits. Thus everything seems to work out, except for one problem: the non-malleable extractor in \cite{Li:focs:2012} only works when the source has entropy rate $> 1/2$, but here our goal is to work for any linear (or slightly sub-linear) entropy. A natural idea would be to use the somewhere condenser (e.g., in \cite{BarakKSSW05, Raz05, Zuckerman:toc:2007}) to boost the entropy rate of $X$. However, all known condensers of this kind are based on sum-product theorems, which are non-linear functions, and applying them changes the structure of $X'=X+a$, which is important for our construction. Another idea is to apply a linear seeded extractor to $X$ and try all possible seeds. This indeed keeps the structure of $X'=X+a$, but will result in a $\poly(n)$ number of outputs, and combining them together will result in a polynomial of large, super constant degree. 

This motivates another key ingredient in our construction, a new linear somewhere condenser for affine sources. In short, we construct a linear function which, given any affine source on $n$ bits with entropy rate $0< \delta \leq 1/2$, outputs $\poly(1/\delta)$ rows such that each row has $n/\poly(1/\delta)$ bits, and at least one row has entropy rate $1/2+\beta$ for some absolute constant $\beta>0$. This complements the sum-product based somewhere condensers, and can be viewed as a separate contribution of our work. We will explain the construction of this condenser later, but finish the description of our directional affine extractor here, assuming that we have the linear somewhere condenser. 

The rest of the construction roughly goes as follows. We apply the linear somewhere condenser to the source $X$ to get a constant number of rows, then apply $\snmExt$ to each row using $\Tilde{U}_i$ as the seed. Thus we get a constant number of outputs such that at least one of them is close to uniform conditioned on the corresponding tampered output. Now we apply an \emph{affine correlation breaker} such as those in \cite{Li:stoc:17,ChattopadhyayGL:focs:2021,ChattopadhyayL22} to further break the correlations between different outputs, and combine these outputs together by taking the XOR. The correlation breaker guarantees that the final output is close to uniform conditioned on the tampered output. To keep the degree small, we need to replace all seeded extractors used in the correlation breaker with a constant degree linear seeded extractor in \cite{Li:CCC:2011}. This keeps the output bits to be constant degree polynomials of the input bits, and the remaining construction is essentially the same as that in \cite{Li:CCC:2011}.

\paragraph{Linear somewhere condenser.} We now describe our construction of the linear somewhere condenser. This is based on another pseudorandom object known as \emph{dimension expander}. Informally, a dimension expander is a set of linear mappings from a vector space $\F^n$ to itself, such that for any linear subspace $V \subset F^n$ with small dimension $k \leq n/2$, the span of the union of all the images of $V$ under the set of linear mappings has dimension at least $(1+\alpha)k$ for some absolute constant $\alpha>0$. Readers familiar with expander graphs can see that this is a linear algebraic analog of expander graphs. Thus, it is desirable to give explicit constructions of the set of linear mappings which has as few number of mappings as possible, where this number $d$ is called the degree. Dimension expanders were first introduced by Barak, Impagliazzo, Shpilka, and Wigderson \cite{BISW04}, who also showed the existence of such objects. Later, Bourgain and Yehudayoff \cite{Bourgain09, BourgainY13} gave explicit constructions of dimension expanders with degree $d=O(1)$ over any field. Interestingly, as far as we know, there are no previous applications of dimension expanders in computer science, and they are mainly studied based on their own interests and connections to other algebraic pseudorandom objects. Thus our construction can be viewed as one of the first applications of dimension expanders in computer science. 

Given an explicit dimension expander $\{T_i\}_{i \in [d]}$ where each $T_i$ is a linear mapping, and any affine source $X$ with entropy rate $\delta \leq 1/2$, we first construct a basic somewhere condenser as follows. Divide $X$ equally into $X=X_1 \circ X_2$, and our condenser produces $2d+2$ outputs: $(X_1, X_2, \{X_1+T_i(X_2)\}_{i \in [d]}, \{T_i(X_1)+X_2\}_{i \in [d]})$. We show that at least one output has entropy rate $(1+\gamma)\delta$ for some constant $\gamma>0$, and we give some intuition below. By the structure of affine sources, one can show that there exists another affine source $X_3$ independent of $X_1$ such that $X_2=X_3+L(X_1)$ for some linear function $L$. Let $H(X_1)=s$, $H(X_3)=r$ and $H(L(X_1))=t$, then we have $s+r=\delta n$. If either $s$ or $r$ is small, e.g., $s \ll \delta n/2$, then we must have $r \gg \delta n/2$ and thus $H(X_2)=r+t \geq (1+\gamma)\delta n/2$. Therefore the entropy rate of $X_2$ is at least $(1+\gamma)\delta$. The case of  $r \ll \delta n/2$ is similar. Hence, we only need to consider the case where $s \approx \delta n/2$ and $r \approx \delta n/2$, and notice that we must have either $s \leq \delta n/2$ or $r \leq \delta n/2$. Furthermore, in this case, $t$ must be small, since otherwise, we would again have $H(X_2)=r+t \geq (1+\gamma)\delta n/2$.

For simplicity, assume that $s =r= \delta n/2$, and $t=0$. Hence both $X_1$ and $X_2$ have entropy rate $\delta \leq 1/2$, and they are independent. Without loss of generality, assume the supports of both $X_1$ and $X_2$ are linear subspaces. By the property of the dimension expander, $\Span(\cup_{i \in [d]} T_i(X_1))$ has dimension at least $(1+\alpha) \delta n/2$. We now argue that there exists an $i \in [d]$ such that the support of $T_i(X_1)+X_2$ has dimension at least $(1+\alpha / d)\delta n/2$, which implies that $T_i(X_1)+X_2$ has entropy rate at least $(1+\alpha / d)\delta$. To see this, assume otherwise, then for any $i \in [d]$, any vector in the support of $T_i(X_1)+X_2$ can be expressed as a linear combination of the $r= \delta n/2$ basis vectors in the support of $X_2$ and $< (\alpha / d)\delta n/2$ other vectors. This implies that $\Span(\cup_{i \in [d]} T_i(X_1))$ has dimension $< \delta n/2 + d \cdot (\alpha / d)\delta n/2 = (1+\alpha) \delta n/2$, since any vector in $\Span(\cup_{i \in [d]} T_i(X_1))$ can be expressed as a linear combination of the $r= \delta n/2$ basis vectors in the support of $X_2$ and $< d \cdot (\alpha / d)\delta n/2$ other vectors. This contradicts the property of the dimension expander.

Thus, in all cases, we get the desired entropy rate boost. Our final somewhere condenser involves repeated uses of the basic condenser, as in previous works. It is easy to see that the entropy rate of at least one output will increase to $1/2+\beta$ for some absolute constant $\beta>0$ after $O(\log (1/\delta))$ uses of the basic condenser. The number of outputs is, therefore, $\poly(1/\delta)$ and each output has $n/\poly(1/\delta)$ bits. Finally, it is clear that the condenser is a linear function.

Once we have this linear condenser, we can even replace the somewhere condensers used in \cite{Li:CCC:2011} by the new condenser. This further reduces the degree of the polynomials of the output bits (since previous somewhere condensers are polynomials instead of linear functions). Therefore we can push the entropy requirement of our directional affine extractor to be even better than that in \cite{Li:CCC:2011}, from $\frac{n}{\sqrt{\log \log n}}$ to $\frac{cn(\log\log\log n)^2}{\log\log n}$.


We show that a slight modification of our linear condenser also works for general weak random sources, under the Polynomial Freiman-Ruzsa Theorem. Roughly, the idea is to use a careful analysis of subsources and collision probability. Specifically, it is known that if the collision probability of a distribution is small, then the distribution is close to having high min-entropy. On the other hand, if the collision probability is large, then (without loss of generality) assuming the distribution is the uniform distribution over some unknown subset, existing results in additive combinatorics imply that there is a large subset $A$ in the support of the distribution such that the size of $A+A$ is not much larger than $A$. The Polynomial Freiman-Ruzsa Theorem then implies that there is another large subset $A' \subset A$ which is ``close" to an affine subspace, which roughly reduces the analysis to the case of affine sources. See section~\ref{sec:cond-general} for the details.

\paragraph{$\AC^0$ average-case hardness for $\ROBP$s.} To show $\AC^0$ average-case hardness for $\ROBP$s, we use a standard observation that if one conditions on an inner node, then the input bits prior to this node and the input bits after this node are still independent. We then construct an appropriate extractor in $\AC^0$, which we call $\acExt$, for sources with such a structure. Specifically, given any $\ROBP$ of size $s$ and any constant $\delta>0$, we can find a cut or anti-chain (a maximal subset of vertices such that none of which is an ancestor of any other vertex) of size $O(s)$ at roughly depth $\delta n$ above the sinks, so that conditioned on the fixing of any vertex in the cut, the input uniform random string $X$ now becomes two independent weak sources $A$ and $B$, where $A$ corresponds to the first part of the program and $B$ corresponds to the second part. Since we don't know the order of bits queried by the $\ROBP$, the bits of the two sources are interleaved, and we view $X=A+B$. Using a standard averaging argument, one can show that with high probability, the following properties are satisfied: (1) $A$ and $B$ are supported on disjoint subsets of input bits; (2) $A$ has min-entropy roughly $(1-\delta)n-\log s$ and $B$ has min-entropy $\delta n$; and (3) $B$ is an oblivious bit-fixing source, which is obtained by fixing some unknown bits in a uniform random string. If $s \leq 2^{(1-2\delta)n}$ then both $A$ and $B$ have entropy rate roughly $\delta$. Now, our goal is to construct an extractor in $\ac^0$ for sources with this structure, that is also \emph{strong} in $B$. This means that even if we condition on the fixing of the vertex in the cut and $B$, the output of the extractor is still close to uniform. On the other hand, the output of the $\ROBP$ is completely determined by the vertex and $B$. Thus our extractor is average-case hard for $\ROBP$s of size up to $2^{(1-2\delta)n}$.

As usual, the function $\acExt$ will be compositions of different, more basic extractors as building blocks. Thus we need all these building blocks to be computable in $\ac^0$. Here, we leverage the constructions from two previous works on extractors in $\AC^0$: (1) the $\ac^0$-computable extractors $\acbfExt$ for bit-fixing source by Cheng and Li~\cite{ChengLi18}, and (2) the $\ac^0$-computable strong linear seeded extractors $\acLExt$ by Papakonstantinou, Woodruff, and Yang~\cite{papakonstantinou2016true}.

Now we can describe our main idea of construction. Divide $X$ into $t=O(1/\delta)$ blocks, and by an averaging argument, there exists a block $B_g$ of $B$ with entropy rate $\Omega(\delta)$. Now for the block $X_g=A_g+B_g$, we can fix $A_g$ so that $X_g$ is an oblivious bit-fixing source of entropy rate $\Omega(\delta)$ and is a deterministic function of $B$. We next fix the bits from $B$ outside of the $g$-th block so that the source $X$ outside of $X_g$ is a deterministic function of $A$ and thus independent of $X_g$. Moreover, $A$ and $X$ still have enough entropy left. 

Applying the above-mentioned extractor $\acbfExt$ for bit-fixing sources to each block $X_i$, we convert $X$ into a \emph{somewhere random source} $Y=Y_1\circ \cdots \circ Y_t$ where the row $Y_g$ is a deterministic function of $B_g$ and close to uniform, while all the other rows are deterministic functions of $A$. At this point, we can simply take the XOR of the $Y_i$'s to obtain a close-to-uniform output. However, as mentioned before, we need the extractor to be strong in $B$ and this simple approach is not sufficient.\ Instead, we fix all the outputs produced by $\acbfExt$ for $X_i$ where $i \neq g$. Note that these are all deterministic functions of $A$.\ Thus conditioned on this fixing, $Y$ becomes a deterministic function of $B$, which is independent of $A$.\ Moreover, as long as the output size of $\acbfExt$ is not too large, $A$ still has enough entropy left.\ Since $X=A+B$, we can now apply a \emph{strong} $t$-affine correlation breaker as in~\cite{Li:stoc:17,ChattopadhyayL22} with each $Y_i$ as the seed to extract from $X$ a random string, and take the XOR of them.\ The property of the correlation breaker guarantees that the string produced from $Y_g$ and $X$ is close to uniform conditioned on all the other outputs and $Y$. Hence the XOR is also close to uniform conditioned on $B$.\ To ensure the correlation breaker is computable in $\ac^0$, we replace all the strong (linear) seeded extractors in the known constructions of $t$-affine correlation breakers with the above-mentioned $\acLExt$. Since $t=O(1/\delta)$ is a constant, the correlation breaker involves a constant number of compositions of $\acLExt$, which is still in $\ac^0$. 

\subsection{Organization of the Paper}
The rest of the paper is organized as follows. In Section~\ref{sec:prelim} we give some preliminary knowledge and some primitives from prior works. 
In Section~\ref{sec:cond-affine} we describe our construction of linear somewhere random condenser for affine sources. Section~\ref{sec:cond-general} generalizes the construction to general weak sources under the Polynomial Freiman-Ruzsa Theorem. We give our construction of directional affine extractors in Section~\ref{sec:daext2}, and an $\ac^0$ computable extractor against $\ROBP$ in Section~\ref{sec:ac0}. We present some open problems in Section~\ref{sec:open}. In the appendix we show that depth-$3$ $\AC^0[\oplus]$ circuits can compute optimal directional affine extractors, and give some omitted proofs.
\section{Preliminaries} \label{sec:prelim}
We often use capital letters for random variables and corresponding small letters for their instantiations. Let $s,t$ be two integers, $\{V_1^1,V_1^2,\cdots,V_1^t,V_2^1,V_2^2,\cdots, V_2^t,\cdots,V_s^1,V_s^2,\cdots,V_s^t\}$ be a set of random variables. We use $V_i^{[t]}$ to denote the subset $\{V_i^1,\cdots,V_i^t\}$ and $V_{[s]}^{j}$ to denote the subset $\{V_1^j,\cdots,V_s^j\}$. We use $V_{[s]}^{[t]}$ as a shorthand for the whole set of random variables. We also use $i_{[t]}$ to denote the set of indices $\{i_1,i_2,\cdots,i_t\}$.
Let $|S|$ denote the cardinality of the set~$S$. For $\ell$ a positive integer,
$U_\ell$ denotes the uniform distribution on $\zo^\ell$. When used as a component in a vector, each $U_\ell$ is assumed independent of the other components.

Let $\F_q$ denote the finite field of size $q$.
All logarithms are to the base 2.

\subsection{Probability Distributions and Entropy}
\begin{definition} [Statistical distance]Let $W$ and $Z$ be two distributions on
a set $S$. Their \emph{statistical distance} (variation distance) is
\[\Delta(W,Z) \eqdef \max_{T \subseteq S}(|W(T) - Z(T)|) = \frac{1}{2}
\sum_{s \in S}|W(s)-Z(s)|.
\]
\end{definition}

We say $W$ is $\eps$-close to $Z$, denoted $W \approx_\eps Z$, if $\Delta(W,Z) \leq \eps$. Let $V$ also be a distribution on the set $S$. We sometimes use $W \approx_\eps Z \mid V$ as a shorthand for $(W,V)\approx_\eps(Z,V)$. We will use this two notations interchangeably throughout the paper.
For a distribution $D$ on a set $S$ and a function $h:S \to T$, let $h(D)$ denote the distribution on $T$ induced by choosing $x$ according to $D$ and outputting $h(x)$.

\BL \label{lem:sdis}
For any function $\alpha$ and two random variables $A, B$, we have $\Delta(\alpha(A), \alpha(B)) \leq \Delta(A, B)$.
\EL

\begin{definition}[Min-entropy]
    The \emph{min-entropy} of a random variable $X$ is defined as 
    \[
        H_\infty(X) = \min_{x\in\Supp(X)}\cbra{-\log\Pr[X=x]}.
    \]
\end{definition}
For a random variable $X\in \bin^n$, we say it is an $(n,k)$-source if $H_\infty(X) \ge k$. The entropy rate of $X$ is defined as $H_\infty(X)/n$.

\subsection{Somewhere Random Sources and Extractors}

\begin{definition} [Somewhere random sources] \label{def:SR} A source $X=(X_1, \cdots, X_t)$ is $(t \times r)$
  \emph{somewhere-random} (SR-source for short) if each $X_i$ takes values in $\bits^r$ and there is an $i$ such that $X_i$ is uniformly distributed.
\end{definition}

\BD
An elementary somewhere-k-source is a 	vector of sources $(X_1, \cdots, X_t)$, such that some $X_i$ is a $k$-source. A somewhere $k$-source is a convex combination of elementary somewhere-k-sources.
\ED

\BD
A function $C: \bits^n \times \bits^d \to \bits^m$ is a $(k \to \ell, \eps)$-condenser if for every $k$-source $X$, $C(X, U_d)$ is $\eps$-close to some $\ell$-source. When convenient, we call $C$ a rate-$(k/n \to \ell/m, \eps)$-condenser.   
\ED

\BD
A function $C: \bits^n \times \bits^d \to \bits^m$ is a $(k \to \ell, \eps)$-somewhere-condenser if for every $k$-source $X$, the vector $(C(X, y)_{y \in \bits^d})$ is $\eps$-close to a somewhere-$\ell$-source. When convenient, we call $C$ a rate-$(k/n \to \ell/m, \eps)$-somewhere-condenser.   
\ED

\begin{definition}[Seeded extractor]\label{def:strongext}
A function $\Ext : \bits^n \times \bits^d \rightarrow \bits^m$ is  a \emph{strong $(k,\eps)$-extractor} if for every source $X$ with min-entropy $k$
and independent $Y$ which is uniform on $\zo^d$,
\[ (\Ext(X, Y), Y) \approx_\eps (U_m, Y).\]
\end{definition}

\subsection{The Structure of Affine Sources}

In this paper, affine sources encompass uniform distributions over linear subspaces and by affine functions we sometimes mean affine-linear functions.

\begin{definition}[Affine source]
    Let $\F_q$ be the finite field with $q$ elements. Denote by $\F_q^n$ the $n$-dimensional vector space over $\F_q$. A distribution $X$ over $\F_q^n$ is an $(n,k)_q$ affine source if there exist linearly independent vectors $a_1,\cdots,a_k \in \F_q^n$ and another vector $b\in\F_1^n$ s.t. $X$ is sampled by choosing $x_1,\cdots,x_k\in\F$ uniformly and independently and computing
    \begin{align*}
        X = \sum_{i=1}^k x_ia_i + b.
    \end{align*}
\end{definition}

The min-entropy of affine source coincides with its standard Shannon entropy, we simply use $H(X)$ to stand for the entropy of an affine source $X$.

The following lemma is a slight generalization of its version in~\cite{Li:CCC:2011}, where we show that $L$ can be an affine function instead of just a linear function. We also prove that the entropy of $X$ is constant conditioned on any fixing of $L(X)$. The readers are referred to Appendix~\ref{app:missing-proofs} for a proof. 

\begin{restatable}[Affine conditioning~\cite{Li:CCC:2011}]{lemma}{affineconditioning}
    \label{lemma:affine conditioning}
    
    Let $X$ be any affine source on $\{0,1\}^n$. Let $L:\{0,1\}^n \to \{0,1\}^m$ be any affine function. Then there exist independent affine sources $A,B$ such that:
    \begin{itemize}
        \item $X = A + B$
        \item There exists $c\in \{0,1\}^m$, such that for every $b\in \Supp(B)$, it holds that $L(b) = c$.
        \item $H(A) = H(L(A))$ and there exists an affine function $L^{-1}: \{0,1\}^m \to \{0,1\}^n$ such that $A = L^{-1}(L(A))$.
        \item $H(X \mid_{L(X)=\ell}) = H(B)$ for all $\ell\in \Supp(L(X))$.
    \end{itemize}
\end{restatable}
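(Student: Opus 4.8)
The plan is to prove the Affine Conditioning Lemma by analyzing the structure of the affine source directly in terms of its free coordinates. Recall that an affine source $X$ on $\{0,1\}^n$ has the form $X = \sum_{i=1}^{k} x_i a_i + b$ where the $a_i$ are linearly independent and $(x_1,\dots,x_k)$ is uniform on $\{0,1\}^k$; equivalently, $X = M x + b$ where $x \leftarrow_U \{0,1\}^k$ and $M$ is the $n \times k$ matrix with columns $a_i$ (injective as a linear map). First I would compose with $L$: since $L$ is affine, write $L(y) = N y + d$, so $L(X) = NM x + (Nd + d')$ for an appropriate constant, i.e. $L(X)$ is itself an affine function of the uniform vector $x$. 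The key linear-algebra step is to split $\{0,1\}^k$ according to $\mathsf{Ker}(NM)$: pick a complement $W$ so that $\{0,1\}^k = \mathsf{Ker}(NM) \oplus W$, and write $x = u + w$ with $u \leftarrow_U \mathsf{Ker}(NM)$, $w \leftarrow_U W$, independent. Then set $B = Mu + b$ and $A = Mw$ (or symmetrically; one can absorb the constant $b$ into whichever of $A,B$ is convenient, and to get $H(A)=H(L(A))$ cleanly it is best to put the constant into $B$). This immediately gives $X = A + B$ with $A,B$ independent affine sources.

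Next I would verify the four bullet points. For the second bullet, $L(B) = NMu + (\text{const}) = \text{const}$ since $u \in \mathsf{Ker}(NM)$; call this constant $c$. For the third bullet, $A = Mw$ with $w$ ranging over the complement $W$ on which $NM$ is injective, so $L(A) - (\text{const}) = NM w$ determines $w$ (because $NM|_W$ is injective and $M$ is injective), hence $A$ is recovered from $L(A)$ by an affine map $L^{-1}$, and $H(A) = \dim W = \mathsf{rank}(NM|_W) = H(L(A))$. For the fourth bullet, condition on $L(X) = \ell$: since $L(X) = NM x + (\text{const})$ and $NM x$ depends only on $w$ (the $W$-component) and is injective in $w$, fixing $L(X) = \ell$ fixes $w$ to a single value $w_\ell$, leaving $X = M(u + w_\ell) + b$ with $u$ still uniform over $\mathsf{Ker}(NM)$; this is a translate of $B$, so $H(X \mid L(X) = \ell) = H(B)$ for every $\ell$ in the support.

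I would then need to double-check the one subtlety the lemma statement flags relative to the original \cite{Li:CCC:2011} version: that $L$ may be affine rather than linear. This is handled entirely by the bookkeeping of additive constants above — the decomposition $\{0,1\}^k = \mathsf{Ker}(NM)\oplus W$ uses only the linear part $NM$, and the constant $d$ in $L$ only shifts the value $c$ and the map $L^{-1}$, so nothing essential changes. The main (very mild) obstacle is purely notational: keeping track of which affine constant lands in $A$ versus $B$ so that all four bullets hold simultaneously with the same decomposition — in particular ensuring $H(A) = H(L(A))$ (which wants $A$ to have no ``hidden'' kernel direction and to carry none of the constant) while also having $X = A+B$ exactly. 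Choosing $A = Mw$ (constant-free, supported on the image of $W$) and $B = Mu + b$ resolves this. No genuine difficulty arises; the proof is a direct unwinding of the definition of an affine source together with the kernel/complement splitting of the domain, and I would write it in roughly the four short paragraphs above, one per bullet after setting up the decomposition.
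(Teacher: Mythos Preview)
Your proposal is correct and takes essentially the same approach as the paper: both split according to the kernel of the linear part of $L$ restricted to the support of $X$, with the paper working directly in $\{0,1\}^n$ (taking $B$ uniform on the intersection of $\Supp(X)$ with $\Ker(\overline{L})$, suitably shifted) while you work in the coordinate space $\{0,1\}^k$ via $\Ker(NM)\oplus W$. Your version is in fact more self-contained --- the paper only spells out the second and fourth bullets, deferring the others to \cite{Li:CCC:2011} --- and your choice to put the affine shift $b$ into $B$ and keep $A=Mw$ constant-free is exactly the right bookkeeping; the only cosmetic slip is that the constant in $L(X)$ should read $Nb+d$ rather than ``$Nd+d'$''.
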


The following definition is a specialization of conditional min-entropy for affine sources. It is well-defined by Lemma~\ref{lemma:affine conditioning}.
\begin{definition}[Conditional min-entropy for affine sources]
Let $W$ and $Z$ be two affine sources. Define 
\begin{align*}
    H(W\mid Z) = H(W\mid_{Z=z}),\;\forall z\in \Supp(Z).
\end{align*}
\end{definition}

\begin{lemma}\label{lemma:affine bound}
Let $X, Y, Z$ be affine sources. Then
$H(X \mid (Y, Z)) \ge H(X \mid Z) - \log(\Supp(Y))$.
\end{lemma}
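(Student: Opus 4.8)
The plan is to reduce to the case where $Z$ is trivial, and then invoke the affine conditioning lemma (Lemma~\ref{lemma:affine conditioning}) together with two elementary facts about affine sources: an affine image of an affine source is again affine and has no larger entropy, and the entropy of an affine source equals the dimension of (the direction space of) its support, so in particular it is subadditive under independent sums. Throughout I take ``$X,Y,Z$ affine sources'' to mean that $(X,Y,Z)$ is jointly affine, which is exactly what makes the conditional min-entropies $H(X\mid Z)$ and $H(X\mid(Y,Z))$ well-defined.

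First I would fix any $z\in\Supp(Z)$ and set $\tilde X\eqdef(X\mid Z=z)$ and $\tilde Y\eqdef(Y\mid Z=z)$. Since $(X,Y,Z)$ is jointly affine, $(\tilde X,\tilde Y)$ is a jointly affine pair, and $|\Supp(\tilde Y)|\le|\Supp(Y)|$. By the definition of conditional min-entropy for affine sources (well-defined by Lemma~\ref{lemma:affine conditioning}) we have $H(X\mid Z)=H(\tilde X)$ and $H(X\mid(Y,Z))=H(\tilde X\mid\tilde Y)$, so it suffices to prove the special case $H(\tilde X\mid\tilde Y)\ge H(\tilde X)-\log|\Supp(\tilde Y)|$.

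For the special case I would apply Lemma~\ref{lemma:affine conditioning} to the affine source $S\eqdef(\tilde X,\tilde Y)$ with $L$ the linear projection onto the $\tilde Y$-coordinates, obtaining independent affine sources $A,B$ with $S=A+B$, $L(B)\equiv c$, $H(A)=H(L(A))$, and $H(S\mid_{L(S)=\ell})=H(B)$ for all $\ell\in\Supp(L(S))$. Since $L(S)=\tilde Y$ and conditioning $(\tilde X,\tilde Y)$ on $\tilde Y=\ell$ only fixes the second coordinate, $H(S\mid_{\tilde Y=\ell})=H(\tilde X\mid_{\tilde Y=\ell})$, whence $H(\tilde X\mid\tilde Y)=H(B)$. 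Also $L(A)=L(S)-L(B)=\tilde Y-c$ is an affine source whose support has the same size as $\Supp(\tilde Y)$, so $H(A)=H(L(A))=\log|\Supp(\tilde Y)|$. Finally $H(\tilde X)\le H(S)\le H(A)+H(B)$, the first inequality because $\tilde X$ is an affine image of $S$ and the second by subadditivity under the independent sum $S=A+B$. Rearranging, $H(\tilde X\mid\tilde Y)=H(B)\ge H(\tilde X)-H(A)=H(\tilde X)-\log|\Supp(\tilde Y)|$, which together with the reduction above finishes the proof.

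The main point of care is the bookkeeping around joint affineness and the well-definedness of the conditional min-entropies — checking that $(\tilde X,\tilde Y)$ is genuinely a jointly affine pair and that the definition of $H(\cdot\mid\cdot)$ is applied to the right objects — rather than any substantive difficulty. Once the reduction to trivial $Z$ is in place, the inequality falls out of Lemma~\ref{lemma:affine conditioning} and a one-line dimension count.
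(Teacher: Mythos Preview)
The paper states this lemma without proof as a preliminary fact, so there is no argument to compare against. Your proof is correct: the reduction to trivial $Z$ by fixing $Z=z$ is the right move (and your remark that one must read ``$X,Y,Z$ affine'' as jointly affine is exactly what the paper's definition of $H(\cdot\mid\cdot)$ requires), and applying Lemma~\ref{lemma:affine conditioning} to $S=(\tilde X,\tilde Y)$ with $L$ the projection onto the $\tilde Y$-coordinates yields $H(\tilde X\mid\tilde Y)=H(B)$ and $H(A)=\log|\Supp(\tilde Y)|$ as you say. One minor sharpening: in your chain $H(\tilde X)\le H(S)\le H(A)+H(B)$, the second step is in fact an equality, since the decomposition in Lemma~\ref{lemma:affine conditioning} is complementary (equivalently, $H(S)=H(L(S))+H(S\mid L(S))$ by the affine chain rule); but the inequality is all you need.
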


We will also need the following lemma from~\cite{Li:CCC:2011} when we do sequential conditioning on blocks of an affine source or argue about the total entropy of blocks of an affine source.

\begin{lemma}[Affine entropy argument~\cite{Li:CCC:2011}]
    \label{lemma:affine entropy}

    Let $X$ be any affine source on $\bin^n$. Divide $X$ into $t$ arbitrary blocks $X=X_1\circ X_2 \circ \cdots \circ X_t$. Then there exists positive integers $k_1,\cdots,k_t$ such that,
    \begin{itemize}
        \item $\forall j,1\le j\le t$ and
        $\forall (x_1,\cdots,x_{j-1})\in \Supp(X_{1}, \cdots,X_{j-1})$, $H(X_{j}\mid_{X_{1}=x_{1},\cdots,X_{j-1}=x_{j-1}} )=k_{j}$;
        \item $\sum_{i=1}^t k_i = H(X)$.
    \end{itemize}
\end{lemma}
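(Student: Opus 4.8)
The plan is to reduce the whole statement to the single affine ``chain rule'' already packaged in Lemma~\ref{lemma:affine conditioning} and then telescope. First I would fix notation: for $0\le j\le t$ let $P_j := (X_1,\dots,X_j)$ be the length-$j$ prefix of $X$, so that $P_0$ is a constant, $P_t = X$, and each $P_j$ is a (linear) coordinate projection of $X$.

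The core step is to record the following consequence of Lemma~\ref{lemma:affine conditioning}, to be used twice: for any affine source $Y$ and any affine map $L$, the quantity $H(Y\mid_{L(Y)=\ell})$ is independent of $\ell\in\Supp(L(Y))$ — call it $H(Y\mid L(Y))$ — and
\[
H(Y)=H(L(Y))+H(Y\mid L(Y)).
\]
To prove this I would take the decomposition $Y=A+B$ of Lemma~\ref{lemma:affine conditioning}: the last bullet gives $H(Y\mid_{L(Y)=\ell})=H(B)$ for every $\ell$ (hence value-independence); since $L$ is constant on $\Supp(B)$, $L(Y)$ is an affine shift of $L(A)$, so $H(L(Y))=H(L(A))=H(A)$ by the third bullet; and $H(Y)=H(A)+H(B)$ because the linear parts of $A$ and $B$ intersect only in $0$ (the direction of $B$ lies in $\ker L$, while $L$ is injective on $\Supp(A)$ by the third bullet, so the direction of $A$ meets $\ker L$ trivially). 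Equivalently, one can prove this identity directly by rank--nullity on the subspace underlying $Y$; either way it is short.

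With that in hand I would finish as follows. Define $m_j := H(X\mid P_j)$, which is well-defined by the previous paragraph applied to $Y=X$, $L=P_j$; that paragraph also gives $m_j = H(X)-H(P_j)$, so $m_0=H(X)$, $m_t=0$, and $m_j$ is non-increasing in $j$ since $P_{j-1}$ is a function of $P_j$. Set $k_j := m_{j-1}-m_j$; this is a non-negative integer (a difference of affine-source entropies, i.e.\ of subspace dimensions), and $\sum_{j=1}^t k_j = m_0-m_t = H(X)$ by telescoping, which is the second bullet. For the first bullet, fix $(x_1,\dots,x_{j-1})\in\Supp(P_{j-1})$, let $Y:=(X\mid_{X_1=x_1,\dots,X_{j-1}=x_{j-1}})$ — an affine source of entropy $m_{j-1}$, since conditioning an affine source on a fixing of the affine map $P_{j-1}$ yields an affine source — and apply the identity to this $Y$ with $L$ the projection onto the $j$-th block. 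Then $L(Y)$ is exactly the conditional distribution of $X_j$, and for each of its values $x_j$ we have $H(Y\mid_{L(Y)=x_j}) = H(X\mid_{X_1=x_1,\dots,X_j=x_j}) = m_j$; the identity then reads $m_{j-1} = H(X_j\mid_{X_1=x_1,\dots,X_{j-1}=x_{j-1}}) + m_j$, so that conditional entropy equals $k_j$, independent of the chosen tuple. (Strictly $k_j=0$ is possible when block $X_j$ is determined by the earlier blocks, so I would state non-negativity, which is what the applications use; the blocks are ``positive'' precisely when each carries fresh entropy.)

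The only genuinely delicate point is the chain-rule identity $H(Y)=H(L(Y))+H(Y\mid L(Y))$ together with the value-independence of $H(Y\mid L(Y))$ — in particular the equality $H(Y)=H(A)+H(B)$ for the decomposition from Lemma~\ref{lemma:affine conditioning}, which hinges on the linear parts of $A$ and $B$ forming a direct sum. Once that identity is cleanly in place, the remainder is bookkeeping and telescoping, so I expect this to be the main (and essentially only) obstacle.
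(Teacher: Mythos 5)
Your proof is correct, and it follows the standard route: the paper itself does not prove this lemma but imports it from~\cite{Li:CCC:2011}, where the argument is essentially the same chain rule for affine sources (value-independence of $H(Y\mid_{L(Y)=\ell})$ plus $H(Y)=H(L(Y))+H(Y\mid L(Y))$, obtained from Lemma~\ref{lemma:affine conditioning} or directly by rank--nullity) applied to prefix projections and telescoped. Your observation that the $k_j$ are in general only non-negative (the lemma's ``positive integers'' is a harmless overstatement, and non-negativity suffices for every use in the paper) is accurate and well handled.
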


\subsection{Average Conditional Min-Entropy and Average-Case Seeded Extractors}

\begin{definition}[Average conditional min-entropy]
The \emph{average conditional min-entropy} is defined as 
\begin{align*}
    \widetilde{H}_\infty(X\mid W) &= -\log\pbra{\E_{w\leftarrow W}\sbra{\max_x \Pr[X=x\mid W=w]}} \\
    & = -\log\pbra{\E_{w\leftarrow W}\sbra{2^{-H_\infty(X \mid W=w)}}}.
\end{align*}
\end{definition}

\begin{lemma}[\cite{DodisORS08}]
For any $s>0$, $\Pr_{w\leftarrow W}[H_\infty(X\mid W=w)\ge \widetilde{H}_\infty(X\mid W)-s]\ge 1-2^{-s}$.
\end{lemma}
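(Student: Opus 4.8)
The plan is to reduce the statement to a single application of Markov's inequality to a suitably chosen non-negative random variable.

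First I would introduce the random variable $Z \eqdef 2^{-H_\infty(X\mid W=w)}$, viewed as a function of $w\leftarrow W$; equivalently $Z = \max_x \Pr[X=x\mid W=w]$, so that $Z$ is non-negative. By the very definition of average conditional min-entropy,
\[
\E_{w\leftarrow W}[Z] = \E_{w\leftarrow W}\sbra{2^{-H_\infty(X\mid W=w)}} = 2^{-\widetilde{H}_\infty(X\mid W)}.
\]

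Next I would rewrite the ``bad'' event. For $w$ in the support of $W$, the inequality $H_\infty(X\mid W=w) < \widetilde{H}_\infty(X\mid W) - s$ is equivalent, after negating and exponentiating base $2$ (a monotone operation), to $2^{-H_\infty(X\mid W=w)} > 2^{s}\cdot 2^{-\widetilde{H}_\infty(X\mid W)}$, i.e. to $Z > 2^{s}\cdot \E_{w\leftarrow W}[Z]$. Hence
\[
\Pr_{w\leftarrow W}\sbra{H_\infty(X\mid W=w) < \widetilde{H}_\infty(X\mid W) - s} = \Pr_{w\leftarrow W}\sbra{Z > 2^{s}\cdot \E_{w\leftarrow W}[Z]}.
\]
Applying Markov's inequality to the non-negative random variable $Z$ bounds the right-hand side by $2^{-s}$, and taking complements yields $\Pr_{w\leftarrow W}[H_\infty(X\mid W=w) \ge \widetilde{H}_\infty(X\mid W) - s] \ge 1 - 2^{-s}$.

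I do not anticipate any genuine obstacle here: the only points requiring a moment's care are the translation between the min-entropy form and the max-probability form of $Z$, and the observation that Markov's inequality with a strict inequality still yields the $2^{-s}$ bound. If $2^{-\widetilde{H}_\infty(X\mid W)} = 0$ the event is empty and the claim is vacuous, so we may assume the mean $\E_{w\leftarrow W}[Z]$ is positive when invoking Markov.
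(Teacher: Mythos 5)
Your proof is correct: the paper does not reprove this lemma (it is imported from \cite{DodisORS08}), and your Markov-inequality argument on $Z=2^{-H_\infty(X\mid W=w)}$ is exactly the standard proof given there. The translation of the bad event and the handling of the strict inequality are both fine, so nothing further is needed.
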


\begin{lemma}[\cite{DodisORS08}]
If a random variable $B$ has at most $2^\ell$ possible values, then $\widetilde{H}_\infty(A\mid B)\ge H_\infty(A)-\ell$.
\end{lemma}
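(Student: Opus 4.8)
The plan is to unfold both sides of the claimed inequality using the definitions of average conditional min-entropy and min-entropy, thereby reducing it to an elementary bound on a sum of joint probabilities. Recall that $\widetilde{H}_\infty(A \mid B) = -\log\left(\E_{b \leftarrow B}\left[2^{-H_\infty(A \mid B=b)}\right]\right)$, that $2^{-H_\infty(A \mid B=b)} = \max_a \Pr[A=a \mid B=b]$, and that $2^{-H_\infty(A)} = \max_a \Pr[A=a]$. Since $-\log$ is monotone decreasing, the target bound $\widetilde{H}_\infty(A\mid B) \ge H_\infty(A) - \ell$ is equivalent to
\[
\E_{b \leftarrow B}\left[\max_a \Pr[A=a \mid B=b]\right] \le 2^{\ell} \cdot \max_a \Pr[A=a].
\]

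To prove this, I would first fix, for each value $b \in \Supp(B)$, a value $a_b$ attaining $\max_a \Pr[A=a \mid B=b]$, so that the left-hand side rewrites as $\sum_{b} \Pr[B=b]\,\Pr[A=a_b \mid B=b] = \sum_b \Pr[A=a_b, B=b]$. The key step is the pointwise estimate $\Pr[A=a_b, B=b] \le \Pr[A=a_b] \le \max_a \Pr[A=a] = 2^{-H_\infty(A)}$, which holds because a joint probability is at most either of its marginals. Summing over the at most $2^\ell$ values of $b$ gives $\sum_b \Pr[A=a_b,B=b] \le |\Supp(B)| \cdot 2^{-H_\infty(A)} \le 2^\ell \cdot 2^{-H_\infty(A)}$, which is precisely the displayed inequality; taking $-\log$ of both sides finishes the argument.

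There is essentially no real obstacle here. The only point needing a little care is the passage between the expectation-of-maximum form of $\widetilde{H}_\infty(A\mid B)$ and the sum-of-joint-probabilities form, which is handled cleanly by introducing the maximizers $a_b$. In particular, no independence or structural hypothesis on the pair $(A,B)$ is used — the bound holds for arbitrary jointly distributed $A$ and $B$ with $|\Supp(B)| \le 2^\ell$.
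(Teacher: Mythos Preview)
Your argument is correct and is exactly the standard proof of this fact from \cite{DodisORS08}. The paper does not include its own proof of this lemma (it is only cited), so there is nothing further to compare.
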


\begin{lemma}[\cite{DodisORS08}]\label{lemma:avg-ext}
For any $\delta>0$, if $\Ext$ is a $(k,\eps)$ extractor, then it is also a $(k+\log(1/\delta),\eps+\delta)$ average case extractor.
\end{lemma}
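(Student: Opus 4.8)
The plan is to reduce the average-case guarantee to the worst-case one by the standard ``condition on the side information, then split the conditioning values into good and bad'' argument. Fix any joint distribution $(X,W)$ with $\widetilde{H}_\infty(X \mid W) \ge k + \log(1/\delta)$, and let $Y$ be uniform on $\bin^d$ and independent of $(X,W)$; the goal is to bound $\Delta\big((\Ext(X,Y),Y,W),\,(U_m,Y,W)\big) \le \eps + \delta$ (for the non-strong version one simply drops the trailing $Y$ from each tuple; the argument is identical).

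First I would invoke the min-entropy splitting lemma of~\cite{DodisORS08} already recorded in the preliminaries (the one stating $\Pr_{w\leftarrow W}[H_\infty(X\mid W=w)\ge \widetilde{H}_\infty(X\mid W)-s]\ge 1-2^{-s}$) with $s=\log(1/\delta)$, so that $2^{-s}=\delta$. Since $\widetilde{H}_\infty(X\mid W)-\log(1/\delta)\ge k$, this yields a set $\mathrm{Good}=\{w : H_\infty(X\mid W=w)\ge k\}$ with $\Pr[W\in\mathrm{Good}]\ge 1-\delta$. (Equivalently one can argue directly: by definition of $\widetilde{H}_\infty$ we have $\E_{w\leftarrow W}[2^{-H_\infty(X\mid W=w)}]=2^{-\widetilde{H}_\infty(X\mid W)}\le \delta\cdot 2^{-k}$, and Markov's inequality bounds the probability that $2^{-H_\infty(X\mid W=w)}>2^{-k}$, i.e.\ that $w\notin\mathrm{Good}$, by $\delta$.)

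Next I would use that statistical distance to a distribution that is a product with $W$ decomposes over $W$: writing $X_w$ for the conditional distribution of $X$ given $W=w$, and using that $Y$ is independent of $(X,W)$,
\[
\Delta\big((\Ext(X,Y),Y,W),\,(U_m,Y,W)\big) \;=\; \E_{w\leftarrow W}\Big[\Delta\big((\Ext(X_w,Y),Y),\,(U_m,Y)\big)\Big].
\]
For $w\in\mathrm{Good}$ the source $X_w$ has min-entropy at least $k$, so the strong $(k,\eps)$-extractor property of $\Ext$ (Definition~\ref{def:strongext}) bounds the inner term by $\eps$; for $w\notin\mathrm{Good}$ we use the trivial bound that any statistical distance is at most $1$. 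Hence the whole expression is at most $\Pr[W\in\mathrm{Good}]\cdot\eps + \Pr[W\notin\mathrm{Good}]\cdot 1 \le \eps+\delta$, as claimed.

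There is no genuine obstacle here; the only two points meriting a word of care are (i) that ``min-entropy at least $k$'' rather than exactly $k$ still suffices to apply $\Ext$, which is immediate since a $(k,\eps)$-extractor works for every source of min-entropy $\ge k$ (indeed every such source is a convex combination of flat sources of size $\ge 2^k$), and (ii) the displayed decomposition of statistical distance, which follows from the formula $\Delta(\cdot,\cdot)=\tfrac12\sum|\cdot|$ by grouping the sum according to the value of $w$ and using independence of $Y$.
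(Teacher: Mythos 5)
Your proof is correct and is exactly the standard Markov/good-bad conditioning argument from the cited reference [DodisORS08]; the paper itself states this lemma without proof, simply importing it from that work. Both the decomposition of statistical distance over $W$ and the use of the $\Pr_{w}[H_\infty(X\mid W=w)\ge \widetilde{H}_\infty(X\mid W)-s]\ge 1-2^{-s}$ bound (or the equivalent direct Markov argument) are handled properly, so there is nothing to add.
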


\subsection{Alternating Extraction and Independence Merging}

The following techniques underpin the construction of correlation breakers.

\begin{definition}[$L$-alternating extraction]\label{def:l-look-ahead} 
    Let $W$ be an $(n_w,k_w)$-source and $(Q_1,\cdots,Q_L)$ be $L$ $(n_q, k_q)$-sources.
    Let $\Ext_q, \Ext_w$ be strong seeded extractors
    that extract $s$ bits from sources with min-entropy $k$ with error $\eps$ and seed length $s$. Let $S_1=\Slice(Q_1,d)$ for some appropriate length $d$, $R_1=\Ext_w(W,S_1),S_2 = \Ext_q(Q_2,R_1),\cdots,R_{L-1}=\Ext_w(W,S_{\ell-1}),S_L = \Ext_q(Q_L,R_{L-1})$, then $L$-alternating extraction$(Q_1,\cdots,Q_L,W)=S_L$.
\end{definition}

\begin{lemma}[Look-ahead extractor~\cite{CGL:stoc:16}]\label{lemma:look-ahead}
    Let $W$ be an $(n_w,k_w)$-source and $W'$ be a random variable on $\bin^{n_w}$ that is arbitrarily correlated with $W$. Let $Y = (Q, S_1)$ such that $Q$ is a $(n_q, k_q)$-source, $S_1$ is a uniform string on $s$ bits, and $Y'=(Q',S_1')$ be a random variable arbitrarily correlated with $Y$, where $Q'$ and $S_1'$
    are random variables on $n_q$ bits and $s$ bits respectively. Let $\Ext_q, \Ext_w$ be strong seeded extractors
    that extract $s$ bits from sources with min-entropy $k$ with error $\eps$ and seed length $s$. Suppose $(Y,Y')$ is independent of $(W, W')$, and $k_w, k_q \ge k + 2\ell s+ 2 \log(1/\eps)$. Let $\laExt$ be the $\ell$ round look-ahead extractor using $\Ext_q, \Ext_w$, and $(R_1, \cdots, R_\ell) = \laExt_\ell(W,Y)$, $(R_1',\cdots, R'_\ell) = \laExt_\ell(W',Y')$. Then for any $0 \le j \le \ell-1$, we have
    \begin{align*}
        R_{j+1} \approx_{O(\ell \eps)} U_s \mid (Y,Y',R_0,R_0',\cdots,R_j,R_j').
    \end{align*}
\end{lemma}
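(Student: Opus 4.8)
The plan is to prove this by induction on $j$; this is the standard analysis of a look-ahead (alternating) extractor, and the two ingredients that do all the work are the strongness of $\Ext_w,\Ext_q$ together with careful bookkeeping of which of the two independent entropy pools — $W$ (with its tampered copy $W'$) and $Q$ (with $Q'$) — each intermediate string is a deterministic function of. Unwinding $\laExt_\ell$, write $R_1=\Ext_w(W,S_1)$ and, for $i\ge 1$, $S_{i+1}=\Ext_q(Q,R_i)$ and $R_{i+1}=\Ext_w(W,S_{i+1})$, with $R_0:=S_1$ and every primed symbol denoting the analogous quantity built from $(W',Y')$. The inductive hypothesis I would carry, for $0\le j\le\ell-1$, is the conjunction of: \emph{(a)} $R_{j+1}\approx_{O((j+1)\eps)}U_s$ conditioned on $(Y,Y',R_0,R_0',\dots,R_j,R_j')$; and \emph{(b)} conditioned on that prefix, $R_{j+1}$ is a deterministic function of $W$, $R_{j+1}'$ a deterministic function of $W'$, and each $S_i,S_i'$ ($i\le j{+}1$) a deterministic function of $Q,Q'$ respectively. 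Invariant (b) is the engine: combined with the hypothesis $(W,W')\perp(Q,Q',S_1,S_1')$ it is exactly what makes the conditional independences needed by the extractors automatic, and it lets one fold the ``heavy'' variables $Q,Q'$ sitting inside $Y,Y'$ into the conditioning for free at each stage, since whatever we are currently claiming close to uniform is a function of $W$, independent of $Q,Q'$.

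For the base case $j=0$: $S_1$ is a genuinely uniform $s$-bit seed inside $Y$ and independent of $(W,W')$, so conditioning on $(Y,Y')$ leaves $W$ a source of min-entropy $k_w\ge k$; strongness of $\Ext_w$ gives $(\Ext_w(W,S_1),S_1)\approx_\eps(U_s,S_1)$, and since $W$ is independent of the rest of $(Y,Y')$ this upgrades to $R_1\approx_\eps U_s\mid(Y,Y')$, with $R_1$ a function of $W$. For the inductive step I would: invoke (a) to replace $R_{j+1}$ by a truly uniform independent $\widetilde R$ at additive cost $O((j{+}1)\eps)$; use (b) to see that, conditioned on the current prefix, $R_{j+1}$ — hence $\widetilde R$ — is a function of $W$ and so independent of $Q$, while $Q$ has surrendered at most $2(j{+}1)s\le 2\ell s$ bits of average conditional min-entropy to the $\le\ell$ pairs of $s$-bit strings fixed so far, hence by $k_q\ge k+2\ell s+2\log(1/\eps)$ and the min-entropy lemmas of~\cite{DodisORS08} still has min-entropy $\ge k$ off an $O(\eps)$-fraction of fixings; conclude from strongness of $\Ext_q$ that $S_{j+2}=\Ext_q(Q,\widetilde R)$ is $O(\eps)$-close to $U_s$ and, conditioned also on $\widetilde R$, a function of $Q$ alone, hence independent of $(W,W')$; then fix $R_{j+1}'$ (a length-$\le s$ function of $W'$) and $S_{j+2}$, note $W$ likewise still has min-entropy $\ge k$ off an $O(\eps)$-fraction, and apply strongness of $\Ext_w$ once more to get $R_{j+2}=\Ext_w(W,S_{j+2})\approx_{O(\eps)}U_s$. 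Undoing the $\widetilde R$ substitution and chaining errors through the triangle inequality yields $R_{j+2}\approx_{O((j+2)\eps)}U_s$ conditioned on $(Y,Y',R_0,R_0',\dots,R_{j+1},R_{j+1}')$, and invariant (b) is preserved by construction, so iterating up to $j=\ell-1$ gives the stated error $O(\ell\eps)$.

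The hard part is the dependency bookkeeping in the presence of the tampered chain rather than any individual inequality: $W',Q',S_i',R_i'$ may be arbitrary functions of $W,Q$ and the original seeds, so before every use of an extractor one must check that (i) the string playing the role of the seed is close to uniform \emph{and} conditionally independent of the source it is fed, once the right prefix has been fixed, and (ii) every tampered string fixed along the way is charged to the \emph{same} pool it lives over, so that the $2\ell s$ term in the entropy hypothesis really covers all $\le 2\ell$ fixings \emph{on each side}. Maintaining invariant (b) is precisely what makes (i) and (ii) automatic, and the extra $2\log(1/\eps)$ of entropy is exactly the slack needed to convert the average-conditional-min-entropy accounting into an honest min-entropy bound at each of the two sources while discarding only an $O(\eps)$-fraction of bad fixings per round.
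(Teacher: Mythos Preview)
The paper does not prove this lemma; it is quoted from \cite{CGL:stoc:16} in the preliminaries and used as a black box. Your sketch is exactly the standard alternating-extraction analysis from that reference: carry the two-part invariant that (a) the current $R_{j+1}$ is close to uniform given the prefix, and (b) conditioned on the prefix every $R_i$ is a deterministic function of the $W$-side and every $S_i$ is a deterministic function of the $Q$-side, so that the independence $(W,W')\perp(Q,Q',S_1,S_1')$ propagates and the entropy losses of $2\ell s$ on each pool, plus the $2\log(1/\eps)$ slack for the average-to-worst-case conversion, are exactly accounted for. There is nothing to compare against in the present paper, and your argument is correct.
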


The following lemma captures an essential argument for the $\flip$ and $\nipm$ constructions, which are components of correlation breakers.

\begin{lemma}[Independence-merging lemma~\cite{ChattopadhyayGL:focs:2021}]
\label{lemma:ind-merging}
    Let $\Ext:\bin^n \times \bin^d \to \bin^m$ be any $(k,\eps)$-strong seeded extractor, $X,X^{[t]} \in \bin^n$, $Y,Y^{[t]}\in\bin^d$ such that $X,X^{[t]}$ are independent with $Y,Y^{[t]}\in\bin^d$,   $W=\Ext(X,Y)$ and $W^j=\Ext(X^j,Y^j)$ for every $j\in[t]$.
    Suppose there exists $S,T \subseteq [t]$ such that 
    \begin{itemize}
        \item $(Y,Y^S) \approx_{\delta} (U_d,Y^S)$;
        \item $\avgH(X \mid X^T,Z) \ge k+tm+\log(1/\eps)$.
    \end{itemize}
    Then 
    \begin{align*}
        W\approx_{2\eps+\delta} U_m \mid ( W^{S\cup T}, Y, Y^{[t]}).
    \end{align*}
\end{lemma}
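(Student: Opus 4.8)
The plan is to establish the equivalent statement $(W,W^{S\cup T},Y,Y^{[t]})\approx_{2\eps+\delta}(U_m,W^{S\cup T},Y,Y^{[t]})$ by reducing, through a carefully ordered chain of conditionings, to a single application of the average-case strong extractor $\Ext$ to the source $X$ with seed $Y$. The tension to manage is that conditioning on side information is what neutralizes the tampered outputs $W^{S\cup T}$, but it also erodes the min-entropy of $X$ and can erode the near-uniformity of the seed $Y$; the argument is essentially a choice of ordering that keeps both alive.

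First I would fix the source-side data, conditioning on $Z$ (which the notation suggests is source-side) and on $X^T=\{X^j\}_{j\in T}$. By the independence of $(X,X^{[t]})$ (and $Z$) from $(Y,Y^{[t]})$ this leaves the seeds untouched, and by hypothesis $\avgH(X\mid X^T,Z)\ge k+tm+\log(1/\eps)$. For $j\in T$, $W^j=\Ext(X^j,Y^j)$ is now a function of the fixed $X^j$ and of $Y^j$, so once $Y^{[t]}$ is also revealed, $W^T$ contributes no randomness. Next I would reveal $Y^{[t]}$ and then condition on $W^S=\{W^j\}_{j\in S}$: with $Y^{[t]}$ fixed, each such $W^j$ is a deterministic function of $X^j$, so $W^S$ takes at most $2^{|S|m}\le 2^{tm}$ values, and by the standard decay of average conditional min-entropy under conditioning on a small-support variable (together with $Y^{[t]}\perp(X,X^T,Z)$, so that revealing $Y^{[t]}$ costs nothing), $\avgH\bigl(X\mid X^T,Z,Y^{[t]},W^S\bigr)\ge k+tm+\log(1/\eps)-tm=k+\log(1/\eps)$. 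Denote this side information by $\mathcal Z$.

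Conditioned on $\mathcal Z$, the source $X$ and the seed $Y$ are independent, since every component of $\mathcal Z$ is a function of the source side or (once $Y^{[t]}$ is fixed) of the seed side, so the initial product structure is preserved; and the conditional law of $Y$ equals that of $Y$ given $Y^{[t]}$, which is $\delta$-close to $U_d$ --- this is exactly where the hypothesis $(Y,Y^S)\approx_\delta(U_d,Y^S)$ is used. By Lemma~\ref{lemma:avg-ext} with its free parameter set equal to $\eps$, $\Ext$ is a $(k+\log(1/\eps),2\eps)$ average-case strong extractor, hence with a truly uniform seed it outputs a string $2\eps$-close to $U_m$ jointly with the seed and $\mathcal Z$; replacing the true (only $\delta$-close to uniform) seed law by $U_d$ inside the relevant expectation --- all distances being in $[0,1]$ --- costs one further $\delta$, so $(W,Y,\mathcal Z)\approx_{2\eps+\delta}(U_m,Y,\mathcal Z)$. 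Finally $W^{S\cup T}$ is a deterministic function of $(Y,\mathcal Z)$ --- $W^S\subseteq\mathcal Z$, and for $j\in T$ the value $W^j=\Ext(X^j,Y^j)$ is determined by $X^j\in X^T$ and $Y^j\in Y^{[t]}$ --- so projecting onto the coordinates $(W,W^{S\cup T},Y,Y^{[t]})$ via Lemma~\ref{lem:sdis} finishes the proof.

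I expect the crux to be the step that keeps the seed near-uniform through conditioning on all of $Y^{[t]}$ rather than just $Y^S$. This is clean precisely when the other seeds $\{Y^j\}_{j\notin S}$ are (close to) deterministic functions of $Y^S$ --- the situation in the correlation-breaker constructions where this lemma is invoked --- in which case $(Y,Y^{[t]})\approx_\delta(U_d,Y^{[t]})$ follows from $(Y,Y^S)\approx_\delta(U_d,Y^S)$ by data processing; otherwise the first hypothesis should be read as near-uniformity of $Y$ conditioned on the whole of $Y^{[t]}$. Everything else --- the ordering of conditionings, the independence checks, the $tm$ entropy budget absorbing $W^S$, and the $2\eps+\delta$ accounting --- is routine.
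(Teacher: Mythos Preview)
The paper does not prove Lemma~\ref{lemma:ind-merging}; it is quoted from~\cite{ChattopadhyayGL:focs:2021}. The closest in-paper argument is the proof of the affine variant, Lemma~\ref{lemma:ind-merging-affine}, in Appendix~\ref{app:missing-proofs}, and your approach matches its skeleton: fix the source-side data, reveal the tampered seeds, absorb $W^S$ (now a deterministic function of $X^S$, hence at most $tm$ bits) into the entropy budget, and invoke the strong extractor. The affine proof replaces your use of Lemma~\ref{lemma:avg-ext} by the exact-uniformity statement of Proposition~\ref{prop:uniform} (available because the extractor is linear and the source affine), but the ordering of conditionings and the bookkeeping are the same.

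You are right to flag the $Y^{[t]}$-versus-$Y^S$ issue: the conclusion conditions on all of $Y^{[t]}$, while the first hypothesis only asserts near-uniformity of $Y$ given $Y^S$. The paper's proof of the affine variant glosses over the identical point --- it further conditions on $Y^{[t]\setminus S}$ without justifying that this preserves the seed's near-uniformity. Your resolution (that in every invocation the remaining $Y^j$ are deterministic given what has already been fixed, so data processing upgrades the hypothesis to $(Y,Y^{[t]})\approx_\delta(U_d,Y^{[t]})$) is the intended reading. Likewise your interpretation of the otherwise-undefined $Z$ as source-side auxiliary information is correct. With those readings your argument is complete and is essentially the standard proof.
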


\subsection{\texorpdfstring{$\eps$}{Lg}-Biased Space and XOR Lemmas}

The tools in this subsection are utilized in~\cite{Li:CCC:2011} for their affine disperser and extractor constructions. We also adopt these techniques in our constructions of directional affine dispersers and extractors.

\begin{definition}[$\eps$-biased space]
    A random variable $Z$ over $\bin$ is $\eps$-biased if $|\Pr[Z=0]-$ $\Pr[Z=1]|\le \eps$. A sequence of $0-1$ random variables $Z_1,\cdots, Z_m$ is $\eps$-biased for linear tests if for any nonempty set $S\subset [m]$, the random variable $Z_S = \bigoplus_{i\in S} Z_i$ is $\eps$-biased.
\end{definition}

\begin{lemma}[\cite{Vazirani86}]\label{lemma:eps biased}
    Let $Z_1,\cdots, Z_m$ be $0-1$ random variables that are $\eps$-biased for linear tests. Then the distribution of $(Z_1, \cdots, Z_m)$ is $\eps\cdot 2^{m/2}$-close to uniform.
\end{lemma}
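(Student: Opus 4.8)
The plan is to prove the $\eps$-biased space lemma (Lemma~\ref{lemma:eps biased}) using a Fourier-analytic argument over $\F_2^m$. I would start by writing the distribution $D$ of $Z=(Z_1,\dots,Z_m)$ in terms of its Fourier coefficients: for each $S\subseteq[m]$, set $\widehat{D}(S)=\E_{z\sim D}[(-1)^{\sum_{i\in S}z_i}]=\E[(-1)^{Z_S}]$. The hypothesis that the $Z_i$ are $\eps$-biased for linear tests says exactly that $|\widehat D(S)|\le\eps$ for every nonempty $S$, while $\widehat D(\emptyset)=1$. The uniform distribution $U$ has $\widehat U(\emptyset)=1$ and $\widehat U(S)=0$ for all nonempty $S$.

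The key steps, in order: (1) Recall that statistical distance is $\Delta(D,U)=\frac12\sum_{z\in\F_2^m}|D(z)-U(z)|=\frac12\|D-U\|_1$. (2) Apply Cauchy--Schwarz to pass from the $\ell_1$ norm to the $\ell_2$ norm, losing a factor $2^{m/2}$: $\|D-U\|_1\le 2^{m/2}\|D-U\|_2$. (3) Use Parseval/Plancherel over $\F_2^m$: with the normalization where $D(z)$ is a probability mass function, one has $\|D-U\|_2^2=2^{-m}\sum_{S}(\widehat D(S)-\widehat U(S))^2=2^{-m}\sum_{S\ne\emptyset}\widehat D(S)^2$, since the $S=\emptyset$ terms cancel. (4) Bound $\sum_{S\ne\emptyset}\widehat D(S)^2\le (2^m-1)\eps^2\le 2^m\eps^2$ using the hypothesis. (5) Combine: $\Delta(D,U)=\tfrac12\|D-U\|_1\le\tfrac12\cdot 2^{m/2}\cdot\sqrt{2^{-m}\cdot 2^m\eps^2}=\tfrac12\cdot 2^{m/2}\eps\le\eps\cdot 2^{m/2}$, which gives the claimed bound (in fact a slightly better one by a factor of $2$).

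I expect no serious obstacle here since this is a classical argument; the only point requiring care is getting the Fourier normalization conventions consistent between the definition of $\widehat D(S)$ as an expectation under $D$ and the Plancherel identity, so that the $2^{-m}$ factor lands in the right place and the cross terms genuinely cancel. One could alternatively phrase the whole thing via the expansion $D(z)=2^{-m}\sum_{S}\widehat D(S)(-1)^{\sum_{i\in S}z_i}$, substitute into $\|D-U\|_1$ directly, and then apply Cauchy--Schwarz; either route reaches the same conclusion. I would present the Fourier-expansion route since it makes the cancellation of the $S=\emptyset$ term most transparent.
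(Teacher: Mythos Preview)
Your argument is correct and is the standard Fourier-analytic proof of Vazirani's XOR lemma. Note, however, that the paper does not give its own proof of this statement: it is stated in the preliminaries as a cited result from \cite{Vazirani86} and used as a black box, so there is no ``paper's own proof'' to compare against. Your write-up would serve perfectly well as a self-contained proof if one were needed.
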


\begin{definition}
    For two functions $f, p : \bin^n \to \bin$, their correlation over the uniform distribution is defined as
    \[  
    \Cor(f, p) = \Big|\Pr_x [f(x) = p(x)] - \Pr_x[f(x) \neq p(x)]\Big|,
    \]
where the probability is over the uniform distribution. For a class $C$ of functions, we denote by $\Cor(f, C)$ the maximum of $\Cor(f, p)$ over all functions $p \in C$ whose domain is $\bin^n$.
\end{definition}

\begin{theorem}[XOR lemma for polynomials over $\F_2$~\cite{v004a007, BhattacharyyaKSSZ10}]\label{thm:xor lemma}
Let $P_d$ stand for the class of all polynomials of degree at most $d$ over $\F_2$. Let $f : \bin^n \to \bin$ be a function such that $\Cor(f, P_d) \le 1-2^{-d}$ and $f^{\oplus m}$ be the XOR of the value of $f$ on $m$ independent inputs. Then
$$\Cor(f^{\oplus m}, P_d) \le \mathsf{exp}(-\Omega(m/(4^d \cdot d))).$$
\end{theorem}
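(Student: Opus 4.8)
The plan is to route the argument through the \emph{Gowers uniformity norms}. For a Boolean function $g:\bin^n\to\bin$ write $G=(-1)^g$, and recall the $U_{d+1}$ norm, defined by
$$\|G\|_{U_{d+1}}^{2^{d+1}}=\E_{x,h_1,\dots,h_{d+1}}\Big[\textstyle\prod_{S\subseteq[d+1]}G\big(x+\sum_{i\in S}h_i\big)\Big].$$
The argument rests on three facts. First, correlation with $P_d$ is dominated by this norm: we have $\Cor(g,P_d)=\max_{p\in P_d}\big|\E_x\big[G(x)(-1)^{p(x)}\big]\big|$; multiplying $G$ by the phase $(-1)^p$ of a degree-$\le d$ polynomial does not change its $U_{d+1}$ norm, because the $(d{+}1)$-st discrete derivative annihilates any polynomial of degree $\le d$; and Gowers norms are monotone, with $\|H\|_{U_1}=|\E H|\le\|H\|_{U_{d+1}}$ for every $H$. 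Chaining these gives $\Cor(g,P_d)\le\|G\|_{U_{d+1}}$ for every Boolean $g$. Second, the norm is multiplicative under XOR: $(-1)^{f^{\oplus m}}$ equals the tensor power $\bigotimes_{j=1}^m(-1)^f$, and the defining expectation factorizes over the $m$ independent coordinate blocks, so $\|(-1)^{f^{\oplus m}}\|_{U_{d+1}}=\|(-1)^f\|_{U_{d+1}}^{\,m}$.

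The third fact, and the main obstacle, is a quantitative converse near the top of the range: if $\|(-1)^f\|_{U_{d+1}}$ is very close to $1$, then $f$ must genuinely agree with a degree-$\le d$ polynomial on almost all inputs. I would prove this through the Reed--Muller self-correction argument of \cite{BhattacharyyaKSSZ10} (building on \cite{v004a007}): setting $\eta=1-\|(-1)^f\|_{U_{d+1}}^{2^{d+1}}$, a large Gowers norm says that the $(d{+}1)$-flat linearity relations defining $\mathrm{RM}(d,n)$ hold on all but an $O(\eta)$ fraction of flats; a union bound over the flats through a fixed point plus local decoding of Reed--Muller codes then yields a degree-$\le d$ polynomial $p$ with $\Pr_x[f(x)=p(x)]\ge 1-c(d)\eta$, where the loss $c(d)=O(d)$ comes from the number of flats used in the decoding, and hence $\Cor(f,P_d)\ge 1-2c(d)\eta$. (An alternative route, which I would also be willing to take, is to prove the weaker all-range inequality $\|(-1)^g\|_{U_{d+1}}^{2^{d+1}}\le\Cor(g,P_d)$ for every $g$ by induction on $d$: the base case is $(\E G)^2\le|\E G|$, and the inductive step uses $\|G\|_{U_{d+1}}^{2^{d+1}}=\E_y\|G_y\|_{U_d}^{2^d}$ with $G_y(x)=G(x)G(x+y)$, applying the induction hypothesis to each directional derivative; there the delicate step -- the real obstacle on this route -- is gluing the per-direction degree-$(d{-}1)$ polynomials into a single degree-$d$ polynomial, i.e.\ establishing $\E_y\Cor\big(f(\cdot)+f(\cdot+y),P_{d-1}\big)\le\Cor(f,P_d)$, which needs a careful Cauchy--Schwarz and averaging argument.) Either way, the first two facts are short formal manipulations, so this third fact is where essentially all the work lies.

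To conclude, combine the pieces. From the hypothesis $\Cor(f,P_d)\le1-2^{-d}$ and the contrapositive of the third fact with $\eta=\Theta(2^{-d}/d)$, we get $\|(-1)^f\|_{U_{d+1}}^{2^{d+1}}\le 1-\Omega(2^{-d}/d)$, hence $\|(-1)^f\|_{U_{d+1}}\le\exp\!\big(-\Omega(1/(4^d d))\big)$. Applying the first fact to $f^{\oplus m}$ and then the second,
$$\Cor(f^{\oplus m},P_d)\ \le\ \|(-1)^{f^{\oplus m}}\|_{U_{d+1}}\ =\ \|(-1)^f\|_{U_{d+1}}^{\,m}\ \le\ \exp\!\big(-\Omega(m/(4^d d))\big),$$
which is the claimed bound. (Using instead the weaker inductive inequality $\|(-1)^f\|_{U_{d+1}}^{2^{d+1}}\le\Cor(f,P_d)\le1-2^{-d}$, the same computation even gives the slightly stronger $\Cor(f^{\oplus m},P_d)\le\exp(-\Omega(m/4^d))$.)
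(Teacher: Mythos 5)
Your primary route is correct, and it is essentially the proof from the works this theorem is cited from (the paper itself imports it from Viola--Wigderson and Bhattacharyya et al.\ without proof): bound $\Cor(g,P_d)$ by the Gowers norm $\|(-1)^g\|_{U_{d+1}}$ using invariance under degree-$\le d$ phases and monotonicity of the $U_k$ norms, use multiplicativity of the norm under XOR of independent copies, and invoke a $99\%$-regime inverse statement (Reed--Muller self-correction \`a la AKKLR/BKSSZ) to convert the hypothesis $\Cor(f,P_d)\le 1-2^{-d}$ into $\|(-1)^f\|_{U_{d+1}}^{2^{d+1}}\le 1-\Omega(2^{-d}/d)$; the final calculation $N^m\le\exp(-\Omega(m/(4^d d)))$ is then exactly as you write it. You correctly identify that all the real work sits in the third fact, and the loss you budget for there ($c(d)=O(d)$ plus a threshold condition on $\eta$) is consistent with the $4^d\cdot d$ in the statement.

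The parenthetical ``alternative route,'' however, rests on a false statement, and you should drop it. The inequality $\|(-1)^g\|_{U_{d+1}}^{2^{d+1}}\le\Cor(g,P_d)$ for \emph{every} $g$ is not a weaker all-range version of the third fact; it is a very strong inverse theorem for the Gowers norms over $\F_2$, and it fails for $d\ge 3$: the elementary symmetric polynomial $S_4$ has $\|(-1)^{S_4}\|_{U_4}$ bounded below by an absolute constant while $\Cor(S_4,P_3)=o(1)$ (Green--Tao, Lovett--Meshulam--Samorodnitsky), so the claimed inequality is violated for large $n$; even for $d=2$ no such linear-loss bound is known. Correspondingly, the gluing step you flag as ``delicate,'' namely $\E_y\,\Cor\bigl(f(\cdot)+f(\cdot+y),P_{d-1}\bigr)\le\Cor(f,P_d)$, is not merely delicate but false in general, which is exactly why the induction cannot be pushed through outside the near-$1$ regime. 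Consequently the ``slightly stronger'' bound $\exp(-\Omega(m/4^d))$ you derive from that route is unjustified as argued; if you want an improvement over $4^d\cdot d$, the legitimate way to get it is to keep your main route and plug in the optimal Reed--Muller testing analysis of Bhattacharyya et al.\ in place of the AKKLR-style self-correction, which removes the extra losses without any all-range inverse claim.
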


\section{Linear Somewhere Condenser for Affine Sources}\label{sec:cond-affine}
In this section we provide an explicit construction of a linear somewhere condenser for affine sources, or more conveniently, an affine somewhere condenser where each output is a linear function of the input. We begin with the definition.

\begin{definition}
For any $0 < \delta < \gamma < 1$, a function $\scond: \F_2^n \to (\F_2^m)^{\ell}$ is a $(\delta, \gamma)$ affine somewhere condenser, if it satisfies the following property: for any affine source $X$ over $\F_2^n$ with entropy $\delta n$, let $(Y_1, \cdots, Y_{\ell}) =\scond(X) \in (\F_2^m)^{\ell}$, then there exists at least one $i \in [\ell]$ such that $Y_i$ is an affine source over $\F_2^m$ with entropy at least $\gamma m$.
\end{definition}

We will prove the following theorem.

\begin{theorem}\label{thm:Lcondmain}
There exists a constant $\beta > 0$ such that for any $0< \delta \leq 1/2$, there is an explicit $(\delta, 1/2+\beta)$ affine somewhere condenser $\scond: \F_2^n \to (\F_2^m)^t$, where $t=\poly(1/\delta)$ and $m=n/\poly(1/\delta)$. Moreover, $\scond$ is a linear function. 
\end{theorem}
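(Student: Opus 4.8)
The plan is to build $\scond$ by iterating a \emph{basic} linear somewhere condenser $\bcond$ that boosts entropy rate multiplicatively, exactly as is done for sum-product based condensers but with a linear map coming from a dimension expander. First I would fix an explicit dimension expander $\{T_i\}_{i\in[d]}$ over $\F_2$ of degree $d=O(1)$ with expansion factor $\alpha>0$ (Bourgain--Yehudayoff), so that every subspace $V\subseteq\F_2^{n/2}$ with $\dim V\le n/4$ satisfies $\dim\big(\sum_{i\in[d]}T_i(V)\big)\ge(1+\alpha)\dim V$. Given an affine source $X$ on $\F_2^n$ with entropy rate $\delta\le1/2$, write $X=X_1\circ X_2$ with $|X_1|=|X_2|=n/2$, and define
\[
\bcond(X)=\Big(X_1,\;X_2,\;\{X_1+T_i(X_2)\}_{i\in[d]},\;\{T_i(X_1)+X_2\}_{i\in[d]}\Big)\in(\F_2^{n/2})^{2d+2}.
\]
Each coordinate is manifestly a linear function of $X$. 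The core lemma I would prove is: there is a constant $\gamma>0$ (depending only on $\alpha,d$) so that at least one of the $2d+2$ outputs of $\bcond$ is an affine source of entropy rate at least $\min\{(1+\gamma)\delta,\,1/2+\beta_0\}$ for some absolute $\beta_0>0$ — or more precisely, that the rate strictly increases by a factor $(1+\gamma)$ as long as it is below some threshold near $1/2$.

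To prove the core lemma I would invoke the affine conditioning lemma (Lemma~\ref{lemma:affine conditioning}) applied to the linear map $X\mapsto X_1$: there is an affine source $X_3$ independent of $X_1$ and an affine map $L$ with $X_2=X_3+L(X_1)$. Set $s=H(X_1)$, $r=H(X_3)$, $t_0=H(L(X_1))$, so $s+r=H(X)=\delta n$ and $H(X_2)=r+t_0$. If $s\le(1-\eta)\delta n/2$ then $r\ge(1+\eta)\delta n/2$ so $H(X_2)\ge r\ge(1+\eta)\delta n/2$, a rate boost for output $X_2$; symmetrically if $r$ is small then $H(X_1)=s$ is large, a boost for $X_1$; and if $t_0\ge\eta\delta n/2$ then again $H(X_2)=r+t_0$ is boosted. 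So the only remaining case is $s,r\approx\delta n/2$ and $t_0\approx0$, i.e. $X_1,X_2$ are essentially independent affine sources each of entropy $\approx\delta n/2$ and rate $\delta$; and we may assume WLOG (shifting) their supports are subspaces $V_1,V_2$. Here I use the expander: $\dim(\sum_i T_i(V_1))\ge(1+\alpha)\dim V_1=(1+\alpha)\delta n/2$. If \emph{every} output $T_i(X_1)+X_2$ had $\dim\le(1+\alpha/d)\delta n/2$, then every vector of $T_i(V_1)$ lies in $V_2$ plus a space of dimension $<(\alpha/d)\delta n/2$; summing over $i\in[d]$, $\sum_i T_i(V_1)\subseteq V_2 + (\text{space of dim}<\alpha\delta n/2)$, giving $\dim(\sum_i T_i(V_1))<\delta n/2+\alpha\delta n/2=(1+\alpha)\delta n/2$, a contradiction. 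Hence some $T_i(X_1)+X_2$ has entropy rate $\ge(1+\alpha/d)\delta$. Taking $\gamma=\min\{\eta,\alpha/d\}$ with a small $\eta$ handles all cases, after absorbing the $\approx$ slack into constants (this slack-absorption — making the $\eta$-window arguments quantitatively consistent across the four cases simultaneously — is the one genuinely fiddly point, and I would handle it by taking $\eta$ small enough that the ``middle'' case window is nonempty).

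Finally I would compose. Starting from rate $\delta$, apply $\bcond$; on each of the $2d+2$ outputs apply $\bcond$ again; after $j=O(\log(1/\delta))$ levels we get $(2d+2)^j=\poly(1/\delta)$ outputs, and by induction along the ``good'' branch (the branch where the entropy rate gets boosted each time) at least one output has entropy rate $\ge 1/2+\beta$ for an absolute constant $\beta>0$ (I would pick the threshold in the core lemma to be, say, $1/2+2\beta$ and note that once a source has rate in $[1/2+\beta,\,1]$ it trivially stays a somewhere-$(1/2+\beta)$ source under further application since one output is $X_1$ or $X_2$ of rate $\ge$ its parent, or simply stop refining that branch). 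Each output lives in $\F_2^{n/2^j}=\F_2^{n/\poly(1/\delta)}$, and since composition of linear maps is linear, $\scond$ is linear. To make all outputs the same length $m$ I pad to the common depth. The main obstacle I anticipate is not any single step but keeping the entropy-rate bookkeeping clean through the composition — in particular making sure the ``strictly below threshold $\Rightarrow$ multiplicative boost'' invariant is stated so that it survives $O(\log(1/\delta))$ iterations without the additive slack eating the multiplicative gain — which is exactly why the core lemma must be phrased multiplicatively in the rate rather than additively in the entropy.
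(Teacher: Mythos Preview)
Your proposal is correct and follows essentially the same approach as the paper: the same basic condenser $\bcond(X)=(X_1,X_2,\{X_1+T_i(X_2)\},\{T_i(X_1)+X_2\})$ built from a constant-degree dimension expander, the same case analysis on $(s,r,t_0)$ via the affine conditioning decomposition $X_2=X_3+L(X_1)$ (the paper does this by an explicit basis argument rather than invoking Lemma~\ref{lemma:affine conditioning}, but to the same effect), the same contradiction argument in the balanced case, and the same $O(\log(1/\delta))$-fold iteration. The paper handles the $t_0$-slack in the balanced case by explicitly fixing the deterministic part $X_4=L(X_1)$ and subtracting its entropy, which is exactly the ``$\eta$-window'' bookkeeping you flag as the fiddly point.
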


To prove the theorem we will use the following object known as a \emph{dimension expander}.

\begin{definition}[Dimension expander~\cite{BISW04, DvirS11}]
  Let $\F$ be a field and let $T_1, \cdots , T_d : \F^n \to \F^n$ be linear mappings. The set $T = \{T_i\}_{i=1}^{d}$ is an $\alpha$-dimension expander with degree $d$, if for every subspace $V \subset \F^n$ of dimension at most $n/2$ we have

\[\dim \left (\sum_{i=1}^{d} T_i(V) \right ) \geq (1+\alpha) \dim(V).\]
We say that $T$ is explicit if there exists a $\poly(n)$-time algorithm that, on input $n$, outputs $T$.
\end{definition}

\begin{theorem}[\cite{Bourgain09, BourgainY13}]\label{thm:dimexpand}
There exist absolute constants $d \in \N$ and $0< \alpha < 1$ such that over any field $\F$, there exists an explicit family of $\alpha$-dimension expanders with degree $d$.
\end{theorem}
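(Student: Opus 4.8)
Since Theorem~\ref{thm:Lcondmain} applies the condenser over $\F_2$, what we really need is dimension expanders over \emph{every} field, which rules out the earlier characteristic-zero construction of~\cite{Bourgain09} and forces the route of Bourgain and Yehudayoff~\cite{BourgainY13} through \emph{monotone expanders}. Recall that a \emph{monotone matching} on $[n]$ is an order-preserving partial bijection $\pi\colon A\to B$ with $A,B\subseteq[n]$, and that a degree-$c$ $\eps$-\emph{monotone expander} is a bipartite graph on $[n]\sqcup[n]$ whose edge set splits into $c$ monotone matchings and which has vertex expansion $|N(S)|\ge(1+\eps)|S|$ for all $S\subseteq[n]$ with $|S|\le n/2$. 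The plan has two steps.

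\emph{Step 1: reduce to monotone expanders, field-independently.} I would invoke the reduction of~\cite{DvirS11} (following~\cite{BISW04}): over \emph{any} field $\F$, a monotone matching can be completed to an invertible linear map of $\F^n$ whose matrix support adds only a bounded ``staircase'' of entries to the matching (equivalently, to a product of a bounded number of triangular invertible matrices); it is precisely monotonicity that makes such a completion possible over all fields simultaneously. Given a degree-$c$ $\eps$-monotone expander with matchings $\pi_1,\dots,\pi_c$, let $T_1,\dots,T_c$ be such realizations and put $\mathcal T=\{T_i,T_i^{-1}:i\in[c]\}\cup\{\id\}$, of constant size $d=2c+1$. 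For $V\subseteq\F^n$ with $\dim V\le n/2$, write $V$ in reduced row-echelon form with pivot set $S$, $|S|=\dim V$; applying the $T_i$ moves the pivot coordinates along the edges of the bipartite graph, so the vertex expansion forces $\dim\!\bigl(\sum_{T\in\mathcal T}T(V)\bigr)\ge(1+\alpha)\dim V$ for some $\alpha=\alpha(\eps,c)>0$. Hence $\mathcal T$ is a degree-$d$ $\alpha$-dimension expander, and it is explicit whenever the monotone expander is (the $T_i$ and their inverses cost $\poly(n)$ field operations).

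\emph{Step 2: explicit constant-degree monotone expanders.} This is the analytic core of~\cite{BourgainY13}. Identify $[n]$ with $n$ equal arcs of the circle $\mathbb R/\mathbb Z\cong\mathbb P^1(\mathbb R)$, and fix a suitable finite symmetric set $g_1,\dots,g_c\in\mathrm{PSL}_2(\mathbb R)$ of M\"obius transformations; each $g_j$ is an orientation-preserving homeomorphism of the circle, so, after cutting at one point, the induced arc-to-arc map splits into two monotone matchings. The resulting bipartite graph is a vertex expander iff the averaging operator $\frac1c\sum_j g_j$ acting on mean-zero functions in $L^2(\mathbb R/\mathbb Z)$ has operator norm bounded away from $1$, \emph{by an amount that survives discretization into $n$ arcs, uniformly in $n$}. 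One first proves a spectral gap for the continuous action of $\langle g_1,\dots,g_c\rangle$ via harmonic analysis on $\mathrm{SL}_2(\mathbb R)$ (the relevant unitary representation on $L^2_0$ does not weakly contain the trivial representation), and then transfers it to the discretized graph through a multiscale argument that controls the arcs near the hyperbolic and parabolic fixed points of the $g_j$, where the discretization is most distorted.

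\emph{Main obstacle.} Step 1 is a concrete, if technical, linear-algebra reduction that I would take off the shelf. The real difficulty is Step 2: exhibiting an \emph{explicit} finite set of M\"obius transformations whose continuous averaging operator has a spectral gap \emph{and} whose gap is stable under the coarse discretization into $n$ equal arcs, uniformly over all $n$ --- controlling the behavior near the fixed points is what makes this delicate. This requires the full Bourgain--Yehudayoff analysis of expansion in $\mathrm{SL}_2(\mathbb R)$, which I would import wholesale. Chaining Steps 1 and 2 then gives absolute constants $d$ and $\alpha$ and an explicit family of degree-$d$ $\alpha$-dimension expanders over every field $\F$.
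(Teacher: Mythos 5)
This statement is imported by the paper as a black box: the paper cites \cite{Bourgain09, BourgainY13} and gives no proof of its own, so there is nothing internal to compare against. Your sketch accurately reflects the proof route in the cited literature --- the field-independent reduction from monotone expanders to dimension expanders (via completing monotone matchings to invertible maps and tracking pivots of a row-echelon basis, as in \cite{DvirS11}/\cite{BISW04}), followed by Bourgain--Yehudayoff's explicit constant-degree monotone expanders from a spectral gap for M\"obius transformations on the circle that survives discretization --- and you correctly identify that the analytic core must be taken wholesale from \cite{BourgainY13}, which is exactly how the paper treats it as well.
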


Given the above theorem we first provide a basic affine condenser:

\begin{algorithm}[H]
    \caption{$\bcond(x)$}
    \label{alg:bcond}
    \begin{algorithmic}
        \medskip
        \State \textbf{Input:} $x \in \F_2^n$ --- an $n$ bit string.
        \State \textbf{Output:} $z \in (\F_2^m)^{2d+2}$ --- an array of $2d+2$ bit strings with length $m$, where $m = n/2$ and $d$ is the constant in Theorem~\ref{thm:dimexpand}.
        \\\hrulefill 
        \State \textbf{Sub-Routines and Parameters: } \\
        Let $T = \{T_i\}_{i=1}^{d}$ be the $\alpha$-dimension expander given by Theorem~\ref{thm:dimexpand}. \\
        \\\hrulefill \\
        
        Divide $x$ into $2$ blocks $x = x_1\circ x_2$ where each block has $n/2$ bits. \\
        Let $z=z_1 \circ z_2 \circ \cdots \circ z_{2d+2}$, where $z_1=x_1$, $z_2=x_2$, and $z_{2i+1}=x_1+T_i(x_2)$, $z_{2i+2}=x_2+T_i(x_1)$, for any $i \in [d]$.

    \end{algorithmic}
\end{algorithm}

We will prove the following lemma.

\begin{lemma}\label{lem:bcond}
For any $0 < \delta \leq 1/2$, $\bcond$ is a $(\delta, (1+\frac{\alpha}{4d})\delta)$ affine somewhere condenser, where $\alpha, d$ are the constants in Theorem~\ref{thm:dimexpand}.
\end{lemma}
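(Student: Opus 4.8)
\textbf{Proof proposal for Lemma~\ref{lem:bcond}.}

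The plan is to follow the intuition sketched in the overview: reduce to a structured case via affine conditioning, and in that case invoke the dimension-expander property. First I would apply Lemma~\ref{lemma:affine conditioning} (affine conditioning) to $X = X_1 \circ X_2$ with $L(X) = X_1$: this yields independent affine sources so that $X_2 = X_3 + \phi(X_1)$ for an affine function $\phi$ and an affine source $X_3$ independent of $X_1$. Write $s = H(X_1)$, $r = H(X_3)$, and $t = H(\phi(X_1))$; by Lemma~\ref{lemma:affine entropy} (or directly) we have $s + r = H(X) = \delta n$, and $H(X_2) = r + t$. I would fix a small constant $\gamma$ with $\gamma \le \alpha/(2d)$ (say) to be pinned down at the end, and split into cases according to whether the entropy is ``balanced'' between the two halves.

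\textbf{Case 1 (one half already good).} If $s \le (1 - 2\gamma)\delta n/2$, then $r = \delta n - s \ge (1 + 2\gamma)\delta n/2$, so $H(X_2) = r + t \ge (1+2\gamma)\delta n/2$, and since $X_2 = z_2$ has length $m = n/2$ its entropy rate is at least $(1+2\gamma)\delta \ge (1 + \alpha/(4d))\delta$; similarly if $r \le (1-2\gamma)\delta n/2$ then $s \ge (1+2\gamma)\delta n/2$ and $X_1 = z_1$ works. Also, if $t \ge \gamma \delta n/2$ in the remaining regime then $H(X_2) = r + t \ge (1-2\gamma)\delta n/2 + \gamma\delta n/2$ is still not quite enough by itself, so I'd actually bundle the ``$t$ large'' subcase more carefully: if $r + t \ge (1 + \alpha/(4d))\delta n/2$ we are done via $z_2$, otherwise both $r \ge (1-2\gamma)\delta n/2$ and $t$ is small. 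So after Case 1 we may assume $s, r \in [(1-2\gamma)\delta n/2, (1+2\gamma)\delta n/2]$ and $t$ small; in particular $s \ge (1-2\gamma)\delta n/2$, and note $\min(s,r) \le \delta n/2$.

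\textbf{Case 2 (balanced; the main case).} Here I use the dimension expander. Since $t = H(\phi(X_1))$ is small, $X_2$ is, up to a bounded-dimension affine shift that depends only on $X_1$, a translate of the independent source $X_3$; fixing $X_1$ (which also fixes $\phi(X_1)$) leaves $X_2$ an affine source of dimension $r$ whose support, after translation to pass through the origin, is a subspace $V_2$ of dimension $r$ independent of the subspace $V_1 \subseteq \F_2^m$ underlying $X_1$ (dimension $s$). Now $z_{2i+2} = x_2 + T_i(x_1)$, so as $x_1$ ranges over (a translate of) $V_1$ and $x_2$ over (a translate of) $V_2$, the support of $z_{2i+2}$ spans an affine subspace whose associated linear space contains $V_2 + T_i(V_1)$; hence $\dim(\text{supp } z_{2i+2}) \ge \dim(V_2 + T_i(V_1))$, and it suffices to show $\dim(V_2 + T_i(V_1)) \ge (1 + \alpha/(4d)) \delta n/2$ for some $i \in [d]$. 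Suppose not: for every $i$, $\dim(V_2 + T_i(V_1)) < r + (\alpha/(4d))\delta n/2 \le \delta n/2 + (\alpha/(4d))\delta n/2$ (using $r \le \delta n/2$), so $T_i(V_1)$ lies in the span of a fixed basis of $V_2$ (size $r$) together with fewer than $(\alpha/(4d))\delta n/2$ extra vectors $w_i$. Then $\sum_{i=1}^d T_i(V_1) \subseteq \mathrm{span}(V_2 \cup \{w_i\}_{i,\ell})$, which has dimension at most $r + d \cdot (\alpha/(4d))\delta n/2 \le \delta n/2 + (\alpha/4)\delta n/2 = (1 + \alpha/4)\delta n/2$. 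But $\dim V_1 = s \le n/2$, so the dimension expander (Theorem~\ref{thm:dimexpand}) gives $\dim(\sum_i T_i(V_1)) \ge (1+\alpha)\dim(V_1) = (1+\alpha)s \ge (1+\alpha)(1-2\gamma)\delta n/2$. Choosing $\gamma$ small enough that $(1+\alpha)(1-2\gamma) > 1 + \alpha/4$ (e.g. any $\gamma < \alpha/(4(1+\alpha))$) yields a contradiction, so some $z_{2i+2}$ has entropy rate $\ge (1+\alpha/(4d))\delta$, as claimed. (If one prefers symmetry, the $z_{2i+1} = x_1 + T_i(x_2)$ outputs handle the case obtained by swapping the roles of $X_1, X_2$ in the affine conditioning, but Case 1 already reduces us to $s \ge (1-2\gamma)\delta n/2$, which is all the expander argument on $V_1$ needs.)

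\textbf{Main obstacle.} The delicate point is the reduction from the probabilistic/entropy statement to a clean linear-algebra statement about subspaces: one must correctly account for the affine shift $\phi(X_1)$ (handled by $t$ being forced small), for the fact that $X_1$ and $X_3$ are only guaranteed \emph{independent} (not jointly uniform over a product of subspaces in a way that trivially gives $\dim(\text{supp }z_{2i+2}) = \dim(V_2 + T_i(V_1))$), and for the translation needed to replace affine subspaces by linear ones before applying $T_i$. I'd want to state a small auxiliary claim: if $A, B$ are independent affine sources over $\F_2^m$ with underlying linear spaces $V_A, V_B$, and $M$ is linear, then $H(M(A) + B) \ge \dim(M(V_A) + V_B) - O(1)$ — actually with exact equality to $\dim(M(V_A)+V_B)$ since $M(A)+B$ is itself an affine source whose support is a coset of $M(V_A)+V_B$. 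Getting this bookkeeping exactly right, and tracking the constants $\gamma, \alpha, d$ so that the final rate is cleanly $(1+\alpha/(4d))\delta$, is the part that needs care; the expander step itself is the short combinatorial core.
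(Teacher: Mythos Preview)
Your overall strategy matches the paper's: decompose via affine conditioning so that $X_2 = X_3 + \phi(X_1)$ with $X_3$ independent of $X_1$, set $s = H(X_1)$, $r = H(X_3)$, $t = H(\phi(X_1))$, dispose of the unbalanced and ``$t$ large'' cases via $z_1$ or $z_2$, and in the balanced case invoke the dimension expander. Two points in your Case~2 need repair.

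First, the assertion that the linear part of $\mathrm{supp}(z_{2i+2})$ ``contains $V_2 + T_i(V_1)$'' is false as stated. Since $z_{2i+2} = X_2 + T_i(X_1) = X_3 + \phi(X_1) + T_i(X_1)$ with $X_3 \perp X_1$, the linear span of its support is exactly $V_2 + (L+T_i)(V_1)$, where $L$ is the linear part of $\phi$; if, say, $L = T_i$ this collapses to $V_2$ and certainly need not contain $T_i(V_1)$. Your auxiliary claim at the end is correct but must be applied to the \emph{independent} pair $(X_1, X_3)$, not to $(X_1, X_2)$. The paper does exactly this: it first lower-bounds $H(T_i(X_1) + X_3)$ via the expander (obtaining $\ge (\tfrac12 + \tfrac{\alpha}{2d})k$), then writes $z_{2i+2} = (T_i(X_1) + X_3) + X_4$ with $X_4 = \phi(X_1)$ a deterministic function of $X_1$ of entropy $t < \tfrac{\alpha}{4d}k$, and conditions on $X_4$ to conclude $H(z_{2i+2}) > (\tfrac12 + \tfrac{\alpha}{4d})k$. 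You flagged this as the delicate point but did not actually carry the $-t$ through your inequalities; once you do, you need the stronger intermediate bound $(\tfrac12+\tfrac{\alpha}{2d})k$ before subtracting $t$ to land at the claimed rate.

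Second, your dismissal of the $z_{2i+1}$ outputs is not justified. The dimension expander acts on $\F_2^{m}$ with $m = n/2$ and requires the input subspace to have dimension at most $m/2 = n/4$; your line ``$\dim V_1 = s \le n/2$'' uses the wrong threshold. In the balanced case you only know $s \le (\tfrac12 + 2\gamma)k$, which for $\delta$ near $1/2$ can exceed $n/4$. The paper therefore splits the balanced case into two symmetric sub-cases: if $s \le k/2$ (hence $s \le \delta n/2 \le n/4$) expand $V_1$ and use $z_{2i+2}$; if instead $r \le k/2$, expand the span of $X_3$ and use $z_{2i+1} = X_1 + T_i(X_2) = (X_1 + T_i(X_3)) + T_i(X_4)$, again conditioning on $X_4$ at the end. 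Since $s + r = k$, one of the two always applies, and this is precisely why both families $\{z_{2i+1}\}$ and $\{z_{2i+2}\}$ appear in $\bcond$.
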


\begin{proof}
Let $X$ be any affine source over $\F_2^n$ with entropy $k=\delta n$. Without loss of generality, assume the support of $X$ is a linear subspace $V$ (if not, we can do the analysis for the corresponding linear subspace, and then add the affine shift, since we are always dealing with linear functions here). We start by giving a set of $k$ base vectors for $V$. For this, consider the linear subspace $W \subseteq V$ s.t. the first $n/2$ bits of $W$ are $0$. Assume $\dim(W)=r$ and let $b_1, \cdots, b_r$ be a basis for $W$. Next, we extend these vectors to $b_1, \cdots, b_r, c_1, \cdots, c_s$ which form a complete basis for $V$, such that $s+r=k$. 

Note that the vectors formed by the first $n/2$ bits of $\{c_i\}_{i=1}^{s}$ are also linearly independent, otherwise some linear combination of them will be in $W$. Let $\{\bar{c}_i\}_{i=1}^{s}$ be the first $n/2$ bits of $\{c_i\}_{i=1}^{s}$, and $\{\tilde{c}_i\}_{i=1}^{s}$ be the second $n/2$ bits of $\{c_i\}_{i=1}^{s}$. Similarly, let $\{\tilde{b}_i\}_{i=1}^{r}$ be the second $n/2$ bits of $\{b_i\}_{i=1}^{r}$ (recall the first $n/2$ bits are 0). 

Now, let ${\cal Q} \subseteq [s]$ be such that $(\{\tilde{b}_i\}_{i=1}^{r}, \{\tilde{c}_i\}_{i \in \cal Q})$ form a basis of the supporting linear subspace of $X_2$. Let $C=\spn(\{\tilde{c}_i\}_{i \in \cal Q})$. The source $X$ is sampled by picking a uniform random vector $Y=(Y_1, \cdots, Y_k) \in \F_2^k$ and computing 

\[\sum_{i=1}^{s} Y_i c_i +\sum_{j=1}^{r} Y_{s+j} b_j=\sum_{i=1}^{s} Y_i (\bar{c}_i, \tilde{c}_i) +\sum_{j=1}^{r} Y_{s+j} (0, \tilde{b}_j).\]
Thus the first $n/2$ bits are given by $\sum_{i=1}^{s} Y_i \bar{c}_i$, while the second $n/2$ bits are given by

\[\sum_{i=1}^{s} Y_i \tilde{c}_i+ \sum_{j=1}^{r} Y_{s+j} \tilde{b}_j=\sum_{i \in \cal Q} Y_i \tilde{c}_i+ \sum_{i \in [s] \setminus \cal Q} Y_i \tilde{c}_i+\sum_{j=1}^{r} Y_{s+j} \tilde{b}_j.\]

Note that for any $i \in [s] \setminus \cal Q$, $\tilde{c}_i$ can be expressed as a linear combination of $(\{\tilde{b}_i\}_{i=1}^{r}, \{\tilde{c}_i\}_{i \in \cal Q})$. Let $\overline{Y}=(\{Y_i\}_{i \in [s] \setminus \cal Q})$, then the above can be written as

\[\sum_{i \in \cal Q} (Y_i+L_i(\overline{Y})) \tilde{c}_i+ \sum_{j=1}^{r} (Y_{s+j}+L_j(\overline{Y})) \tilde{b}_j,\]

where each $L_i$ or $L_j$ is a linear function from $\F_2^{s-|\cal Q|}$ to $\F_2$.

It is easy to see that the $k$ random bits $(\{Y_i\}_{i=1}^{s}, \{Y_{s+j}+L_j(\overline{Y})\}_{j=1}^{r})$ are independent and uniform (in particular, any non-trivial parity of these bits is a uniform random bit). Similarly, the random bits $(\{Y_i+L_i(\overline{Y})\}_{i \in \cal Q}, \{Y_{s+j}+L_j(\overline{Y})\}_{j=1}^{r})$ are also independent and uniform. Let $A=\spn(\{\bar{c}_i\}_{i=1}^{s})$, $B=\spn(\{\tilde{b}_i\}_{i=1}^{r})$, and $C=\spn(\{\tilde{c}_i\}_{i \in \cal Q)}$. So $\dim(A)=s$, $\dim(B)=r$, and let $\dim(C)=|{\cal Q}|=t$. By the above calculation, we have $H(X_1)=\dim(A)=s$, $H(X_2)=\dim(B)+\dim(C)=r+t$. Furthermore, let $X_3=\sum_{j=1}^{r} (Y_{s+j}+L_j(\overline{Y})) \tilde{b}_j$ and $X_4=\sum_{i \in \cal Q} (Y_i+L_i(\overline{Y})) \tilde{c}_i$, then $X_3$ is the uniform distribution over $B$ and $X_4$ is the uniform distribution over $C$. Thus $H(X_3)=r$ and $H(X_4)=t$. We know $X_1=\sum_{i=1}^{s} Y_i \bar{c}_i$. Thus $X_1$ and $X_3$ are independent, while $X_4$ is a deterministic function of $X_1$ (hence also independent of $X_3$). Note that $X_2=X_3+X_4$, and $X=(X_1, X_2)=(X_1, X_3+X_4)$.

Note that $s+r=k=\delta n$. If $s \geq (\frac{1}{2}+\frac{\alpha}{8d})k$, then $H(X_1) = s \geq (\frac{1}{2}+\frac{\alpha}{8d})k=(1+\frac{\alpha}{4d})\delta (n/2)$. Similarly, if $r \geq (\frac{1}{2}+\frac{\alpha}{8d})k$, then $H(X_2) = r+t \geq r \geq (1+\frac{\alpha}{4d})\delta (n/2)$. In either case, we are done. Otherwise, we must have $s < (\frac{1}{2}+\frac{\alpha}{8d})k$ and $r < (\frac{1}{2}+\frac{\alpha}{8d})k$, which in turn implies that  $s > (\frac{1}{2}-\frac{\alpha}{8d})k$ and $r > (\frac{1}{2}-\frac{\alpha}{8d})k$. Now if $t \geq \frac{\alpha}{4d} k$, then $H(X_2) = r+t > (\frac{1}{2}+\frac{\alpha}{8d})k=(1+\frac{\alpha}{4d})\delta (n/2)$, and again we are done.

The only case left is when $(\frac{1}{2}-\frac{\alpha}{8d})k < s, r < (\frac{1}{2}+\frac{\alpha}{8d})k$ and $t < \frac{\alpha}{4d} k$. Since $s+r =k$, one of them must be at most $k/2=\delta n/2$. We have two cases.

\begin{description}
\item [Case 1.] $(\frac{1}{2}-\frac{\alpha}{8d})k < s \leq k/2$. In this case, $\dim(A)=s \leq \delta(n/2) \leq (1/2) \cdot (n/2)$. Consider the $d$ linear mappings $\{T_i\}_{i=1}^{d}$ given by the dimension expander of Theorem~\ref{thm:dimexpand}. Note that $(1+\alpha)s > (1+\alpha)(\frac{1}{2}-\frac{\alpha}{8d})k > (\frac{1}{2}+\frac{\alpha}{8d})k > r$. We have the following claim.

\begin{claim}
There exists an $i \in [d]$ such that $\dim(T_i(A)+B) \geq r+\frac{(1+\alpha)s-r}{d}$. 
\end{claim}

To see this, suppose for the sake of contradiction that for all $i \in [d]$, we have $\dim(T_i(A)+B) < r+\frac{(1+\alpha)s-r}{d}$. Then 
\[\dim\left( \sum_{i=1}^{d} T_i(A) \right ) < r+d \cdot \frac{(1+\alpha)s-r}{d} =(1+\alpha)s,\]
since any vector in $\sum_{i=1}^{d} T_i(A)$ can be expressed by a linear combination of the $r$ basis vectors in $B$, and another $ < d \cdot \frac{(1+\alpha)s-r}{d}$ vectors, where each $T_i(A)$ contributes $< \frac{(1+\alpha)s-r}{d}$ vectors.

Now for this particular $i \in [d]$, since $X_1$ and $X_3$ are independent, we must have 
\[H(T_i(X_1)+X_3) \geq r+\frac{(1+\alpha)s-r}{d} =\frac{1+\alpha}{d}k +\frac{d-2-\alpha}{d}r \geq \left (\frac{1}{2}+\frac{\alpha}{2d} \right )k,\]
as long as $d \geq 3$.

Note that $T_i(X_1)+X_2=T_i(X_1)+X_3+X_4$, and $X_4$ is a deterministic function of $X_1$. Since $H(X_4)=t < \frac{\alpha}{4d} k$, we can fix $X_4$ and conditioned on any such fixing, 

\[H(T_i(X_1)+X_2) \geq \left (\frac{1}{2}+\frac{\alpha}{2d} \right )k-t > \left (\frac{1}{2}+\frac{\alpha}{2d}\right )k-\frac{\alpha}{4d} k=\left (\frac{1}{2}+\frac{\alpha}{4d} \right )k.\]

Therefore, in the end we still have $H(T_i(X_1)+X_2) > (\frac{1}{2}+\frac{\alpha}{4d} )k=(1+\frac{\alpha}{2d})\delta (n/2)$.

\item [Case 2.] $(\frac{1}{2}-\frac{\alpha}{8d})k < r \leq k/2$. The proof of this case is similar, with a slight modification. Specifically, we have $\dim(B)=r \leq \delta(n/2) \leq (1/2) \cdot (n/2)$. Consider the $d$ linear mappings $\{T_i\}_{i=1}^{d}$ given by the dimension expander of Theorem~\ref{thm:dimexpand}. By exactly the same argument as before, we have the following claim.

\begin{claim}
There exists an $i \in [d]$ such that $\dim(A+T_i(B)) \geq s+\frac{(1+\alpha)r-s}{d}$. 
\end{claim}

Now again, since $X_1$ and $X_3$ are independent, we must have 
\[H(X_1+T_i(X_3)) \geq s+\frac{(1+\alpha)r-s}{d} =\frac{1+\alpha}{d}k +\frac{d-2-\alpha}{d}s \geq \left (\frac{1}{2}+\frac{\alpha}{2d} \right )k,\]
as long as $d \geq 3$.

Note that $X_1+T_i(X_2)=X_1+T_i(X_3)+T_i(X_4)$, and $X_4$ is a deterministic function of $X_1$. Since $H(X_4)=t < \frac{\alpha}{4d} k$, we can fix $X_4$ and conditioned on any such fixing, 

\[H(X_1+T_i(X_2)) \geq \left (\frac{1}{2}+\frac{\alpha}{2d} \right )k-t > \left (\frac{1}{2}+\frac{\alpha}{2d}\right )k-\frac{\alpha}{4d} k=\left (\frac{1}{2}+\frac{\alpha}{4d} \right )k.\]
Therefore, in the end we still have $H(X_1+T_i(X_2)) > (\frac{1}{2}+\frac{\alpha}{4d} )k=(1+\frac{\alpha}{2d})\delta (n/2)$.
\end{description}
\end{proof}

We can now give our main condenser, which involves repeated use of the basic condenser.

\begin{algorithm}[H]
    \caption{$\scond(x)$}
    \label{alg:scond}
    \begin{algorithmic}
        \medskip
        \State \textbf{Input:} $x \in \F_2^n$ --- an $n$ bit string; $0< \delta \leq 1/2$, a given parameter.
        \State \textbf{Output:} $z \in (\F_2^m)^{\ell}$ --- a matrix of $\ell$ bit strings with length $m$, where $m = n/\poly(1/\delta)$ and $\ell=\poly(1/\delta)$.
        \\\hrulefill 
        \State \textbf{Sub-Routines and Parameters: } \\
        Let $\bcond$ be the basic condenser given by Algorithm~\ref{alg:bcond}. \\
        \\\hrulefill \\
        
        Set $x^0=x$ and let $i=0$. Initially $x^i$ has only $n_0 = 1$ row. 
         \begin{enumerate}
        \item Repeat the following step for some $h=O(\log(1/\delta))$ steps:
           For each $j$ and the $j$'th row $x^i_j$ in $x^i$, apply $\bcond(x^i_j)$ to get $2d+2$ rows. Concatenate them to get $x^{i+1}$ with $n_{i+1}=n_i \cdot (2d+2)$ rows. Set $i \leftarrow i+1$. 
        \item Let $z=x^h$.
        \end{enumerate}

    \end{algorithmic}
\end{algorithm}

We can now prove our main theorem.

\begin{proof}[Proof of Theorem~\ref{thm:Lcondmain}.] We show that Algorithm~\ref{alg:scond} gives such an affine somewhere condenser. By Lemma~\ref{lem:bcond}, for any affine source $X$ with $H(x)=\delta n$ for some $0< \delta \leq 1/2$, after some $h'=O(\log(1/\delta))$ steps at least one of the rows $x^{h'}$ has entropy at least $n'/2$. Without loss of generality assume this row has entropy exactly $n'/2$ (otherwise we can first fix some basis vectors in the support linear subspace and thus reduce the row to a convex combination of affine sources with entropy exactly $n'/2$). Then after another step of applying $\bcond$, one of the output rows will have entropy rate at least $1/2(1+\frac{\alpha}{4d})=\frac{1}{2}+\frac{\alpha}{8d}$. 

It's easy to see that $\scond$ is a linear function, and thus each row in the final output is an affine source. Furthermore, since we divide each row into $2$ equal blocks in every step and obtain $2d+2$ new rows from them, the final length of each row is $m=n/\poly(1/\delta)$ and we have altogether $\ell=\poly(1/\delta)$ rows.
\end{proof}

\section{Linear Somewhere Condenser for General Weak Sources}\label{sec:cond-general}
We next show that our linear somewhere condenser also works for general weak random sources.

\subsection{Some Useful Results}

\begin{definition}
    The collision probability of a distribution $\D$ is defined as $\cp({\D})=\Pr_{x, y \leftarrow_R {\D}}[x=y]$.
\end{definition}

\begin{definition}
    We say a distribution $\cal X$ is a convex combination of distributions ${\cal X}_1, \cdots , {\cal X}_m$ if there exist numbers $p_1, \cdots , p_m \in [0, 1]$ such that $\sum_i p_i=1$ and the random variable $\cal X$ is equal to $\sum_i p_i {\cal X}_i$.
\end{definition}


\begin{lemma}[\cite{BISW04}]\label{lem:collision2}
    Let $\X$ be a distribution such that $\cp(\X) \leq \frac{1}{KL}$. Then $\X$ is of statistical distance $\frac{1}{\sqrt{L}}$ from having min-entropy at least $\log K$.
\end{lemma}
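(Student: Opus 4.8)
\textbf{Proof proposal for Lemma~\ref{lem:collision2}.}

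The plan is to relate the collision probability of $\X$ to a nearby high-min-entropy distribution by a ``flattening'' argument. First I would set $\cp(\X) = \sum_x \Pr[\X = x]^2 \le \frac{1}{KL}$ and identify the set of ``heavy'' elements $B = \{x : \Pr[\X = x] > \frac{1}{K}\}$. The first key step is a simple counting bound: since each $x \in B$ contributes more than $\frac{1}{K^2}$ to $\cp(\X)$, we get $|B| \cdot \frac{1}{K^2} < \cp(\X) \le \frac{1}{KL}$, hence $|B| < \frac{K}{L}$. This bounds the number of heavy points.

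The second key step is to bound the total probability mass sitting on $B$. I would write $\Pr[\X \in B] = \sum_{x \in B} \Pr[\X = x]$ and apply Cauchy--Schwarz (or the power-mean inequality): $\Pr[\X \in B]^2 \le |B| \cdot \sum_{x \in B} \Pr[\X = x]^2 \le |B| \cdot \cp(\X) < \frac{K}{L} \cdot \frac{1}{KL} = \frac{1}{L^2}$, so $\Pr[\X \in B] < \frac{1}{L}$. Actually, to get the claimed statistical distance $\frac{1}{\sqrt L}$ I may need to be slightly more careful about which threshold to use for ``heavy''; choosing the cutoff at $\Pr[\X = x] > \frac{1}{K}$ should yield $\Pr[\X \in B] \le \frac{1}{\sqrt L}$ after optimizing the Cauchy--Schwarz step, possibly splitting the mass estimate differently. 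The point is that only a $\frac{1}{\sqrt L}$ fraction of the mass lies on points that are ``too heavy'' to be consistent with min-entropy $\log K$.

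The final step is to construct the nearby distribution $\X'$: condition $\X$ on the event $\X \notin B$, i.e., $\X' = (\X \mid \X \notin B)$. Since we removed an event of probability at most $\frac{1}{\sqrt L}$, we have $\Delta(\X, \X') \le \frac{1}{\sqrt L}$. And for every $x$, $\Pr[\X' = x] = \frac{\Pr[\X = x] \cdot \mathbf{1}[x \notin B]}{\Pr[\X \notin B]} \le \frac{1/K}{1 - 1/\sqrt L}$; up to the normalization factor this gives min-entropy essentially $\log K$. I expect the main obstacle to be purely bookkeeping: getting the constants to line up so that the final bound is exactly $\frac{1}{\sqrt L}$ and $\log K$ rather than something off by small factors — this is where the precise choice of the heaviness threshold and the precise form of the Cauchy--Schwarz application matter, and one may need to track the normalization loss carefully or absorb it, as is standard in such ``smoothing'' lemmas for min-entropy.
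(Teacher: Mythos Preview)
The paper does not prove this lemma; it simply cites \cite{BISW04}. Your approach is the standard one and is correct in outline, but you are making it harder on yourself than necessary, and there is one small slip in the final construction.

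First, your Cauchy--Schwarz step already yields $\Pr[\X\in B]<1/L$, which is \emph{stronger} than the $1/\sqrt L$ in the statement, so all the hedging about ``optimizing the Cauchy--Schwarz step'' to reach $1/\sqrt L$ is backwards. In fact you do not need Cauchy--Schwarz at all: since $p_x>1/K$ on $B$, one has directly
\[
\Pr[\X\in B]=\sum_{x\in B}p_x \;<\; K\sum_{x\in B}p_x^2 \;\le\; K\cdot\cp(\X)\;\le\;\frac{1}{L}.
\]
So the mass on heavy points is at most $1/L\le 1/\sqrt L$, and the counting bound on $|B|$ is never needed.

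Second, your final construction by conditioning on $\X\notin B$ does \emph{not} quite give min-entropy $\log K$: as you note, the normalization inflates each remaining probability by a factor $1/(1-\Pr[\X\in B])$. The clean fix is not to condition but to \emph{cap and redistribute}: set $q_x=\min(p_x,1/K)$, and redistribute the deficit $\sum_{x\in B}(p_x-1/K)<1/L$ onto light elements (there are enough, since $\cp(\X)\le 1/(KL)$ forces support size $\ge KL$). The resulting distribution has min-entropy exactly $\ge\log K$, and the statistical distance equals the redistributed mass, which is $<1/L\le 1/\sqrt L$.
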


We need the following results from additive combinatorics.

\begin{lemma}[Pl\H{u}nnecke-Ruzsa~\cite{tao_vu_2006}]\label{lem:add1} 
Let $A, B$ be finite subsets in an additive group $G$. Then 
\[|A+A| \leq \frac{|A+B|^4}{ |A||B|^2}.\]
\end{lemma}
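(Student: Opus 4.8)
The plan is to derive this from the standard Pl\H{u}nnecke--Ruzsa toolkit exactly as in \cite{tao_vu_2006}: the Ruzsa triangle inequality, used to reduce the claim to a bound on a difference set, together with the Pl\H{u}nnecke--Ruzsa inequality for iterated sumsets, used to supply that bound.

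First I would record the Ruzsa triangle inequality and its mixed sum/difference variant: for finite subsets $X,Y,Z$ of an abelian group, $|X-Z|\cdot|Y|\le|X-Y|\cdot|Y-Z|$ and $|X+Y|\cdot|Z|\le|X-Z|\cdot|Y+Z|$. Both follow from an explicit injection; for the mixed variant, fix for each $e\in X+Y$ a representation $e=x_e+y_e$ and map $(e,z)\mapsto(x_e-z,\;y_e+z)$, noting that the sum of the two output coordinates recovers $e$, hence $x_e,y_e$, hence $z$. Applying the mixed variant with $X=Y=A$ and $Z=B$ gives $|A+A|\cdot|B|\le|A-B|\cdot|A+B|$, so it suffices to prove the difference-set estimate $|A-B|\le|A+B|^3/(|A|\,|B|)$.

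Second, I would obtain this difference-set estimate from the Pl\H{u}nnecke--Ruzsa inequality: writing $K=|A+B|/|A|$, there is a nonempty $A'\subseteq A$ with $|A'+hB|\le K^{h}|A'|$ for every $h\ge 0$. Feeding the case $h=2$ into another round of the Ruzsa triangle inequality (interpolating through $A'+B$ and $A'+B+B$, and absorbing the passage from $A'$ back to $A$ into a further factor of $K=|A+B|/|A|$) yields $|A-B|\le|A+B|^3/(|A|\,|B|)$; combined with the previous paragraph this gives $|A+A|\le|A+B|^4/(|A|\,|B|^2)$. Since this difference-set bound is itself a textbook Pl\H{u}nnecke--Ruzsa corollary, one may alternatively just quote it from \cite{tao_vu_2006}.

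The hard part is the Pl\H{u}nnecke--Ruzsa step: the triangle inequalities and the final bookkeeping are routine injection and double-counting arguments, but producing a subset $A'$ with $|A'+hB|\le K^h|A'|$ and the \emph{sharp} exponent (not one that deteriorates with iteration) is the genuinely non-trivial ingredient --- it is exactly where one needs either Pl\H{u}nnecke's layered-graph argument (Menger's theorem together with a tensor-power amplification of magnification ratios) or Petridis's inductive choice of an extremal subset of $A$. I would also remark that for the only regime in which this lemma is invoked in the paper, namely $G=\F_2^n$, the whole thing collapses to one line: there $A-B=A+B$, so the mixed Ruzsa triangle inequality immediately gives $|A+A|\le|A+B|^2/|B|\le|A+B|^4/(|A|\,|B|^2)$, with no appeal to Pl\H{u}nnecke's inequality.
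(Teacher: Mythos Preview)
The paper does not prove this lemma at all; it is stated with a citation to Tao--Vu's textbook and used as a black box in Section~\ref{sec:cond-general}. So there is no ``paper's own proof'' to compare against.

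Your sketch is correct and is the standard textbook derivation. One small point: your description of step~2 (``interpolating through $A'+B$ and $A'+2B$'') is slightly vague about the exact chain of inequalities. The cleanest way to cash it out is: Pl\H{u}nnecke--Ruzsa (or Petridis) gives a nonempty $A'\subseteq A$ with $|A'+hB|\le K^h|A'|$; the Ruzsa triangle inequality with pivot $A'$ then yields $|2B|\le K^2|A'|\le K^2|A|$; and one more application of the mixed triangle inequality with pivot $B$ gives $|A-B|\cdot|B|\le|A+B|\cdot|2B|\le K^3|A|^2$, which is exactly $|A-B|\le|A+B|^3/(|A|\,|B|)$. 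Plugging this into your step~1 finishes the proof.

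Your closing remark is well taken and worth emphasizing: the only place the paper uses this lemma is over $\F_2^n$, where $A-B=A+B$ and the mixed Ruzsa triangle inequality alone already gives $|A+A|\le|A+B|^2/|B|\le|A+B|^4/(|A|\,|B|^2)$, so the full Pl\H{u}nnecke machinery is not actually needed for the application.
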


\begin{lemma}[Balog-Szemeredi-Gowers~\cite{Balog1994AST, Gowers1998ANP}]\label{lem:add2} 
Let $A, B$ be finite subsets of an additive group $G$ and let $|A|^{1-\rho_1} \leq |B| \leq |A|^{1+\rho_1}$ . If $\cp(A + B) \geq |A|^{-(1+\rho_2-\rho_1)}$, then there exist subsets $A' \subseteq A, B' \subseteq B$ such that $|A'| \geq |A|^{1-10 \rho_2} , |B'| \geq |B|^{1-10 \rho_2}$ , and $|A' + B'| \leq |A|^{1+\rho_1+10\rho_2}$ .
\end{lemma}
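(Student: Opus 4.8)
The plan is to obtain this lemma as a routine repackaging of the classical Balog--Szemer\'edi--Gowers theorem in its additive-energy form; indeed the cleanest route is to cite \cite{Balog1994AST, Gowers1998ANP} for the classical statement and then carry out only the parameter translation. Recall that for finite sets $A,B$ in an abelian group the additive energy is
\[
E(A,B)=\big|\{(a,a',b,b')\in A\times A\times B\times B : a+b=a'+b'\}\big|,
\]
and that a standard form of BSG says: if $E(A,B)\ge (|A||B|)^{3/2}/K$, then there are subsets $A'\subseteq A$ and $B'\subseteq B$ with $|A'|\ge |A|/K^{C}$, $|B'|\ge |B|/K^{C}$, and $|A'+B'|\le K^{C}(|A||B|)^{1/2}$, for a universal constant $C$ that is single-digit in the usual formulations (and can be taken small when one uses a version with explicit constants).

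First I would rewrite $\cp(A+B)$ as an energy. Taking $X,Y$ independent and uniform on $A,B$ and drawing a second independent pair $X',Y'$, a direct count gives $\cp(A+B)=\Pr[X+Y=X'+Y']=E(A,B)/(|A|^2|B|^2)$, so the hypothesis $\cp(A+B)\ge|A|^{-(1+\rho_2-\rho_1)}$ is precisely $E(A,B)\ge|A|^{1+\rho_1-\rho_2}|B|^2$. Comparing this with $(|A||B|)^{3/2}$ and using $|B|\ge|A|^{1-\rho_1}$, one checks that it is an energy bound of the required shape $E(A,B)\ge(|A||B|)^{3/2}/K$ with $K=|A|^{\rho_2-\rho_1/2}\le|A|^{\rho_2}$.

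Next I would feed this into BSG and push the $K^{C}$ losses through, repeatedly using the two-sided comparison $|A|^{1-\rho_1}\le|B|\le|A|^{1+\rho_1}$ to interchange powers of $|A|$ and of $|B|$ (at the cost of a factor $1/(1-\rho_1)$ in the exponent). Since $K\le|A|^{\rho_2}$, every $K^{C}$ is at most $|A|^{C\rho_2}$, and reading off the BSG conclusion gives $|A'|\ge|A|^{1-C\rho_2}\ge|A|^{1-10\rho_2}$, $|B'|\ge|B|^{1-10\rho_2}$, and $|A'+B'|\le|A|^{C\rho_2}(|A||B|)^{1/2}\le|A|^{C\rho_2}\cdot|A|^{1+\rho_1/2}\le|A|^{1+\rho_1+10\rho_2}$, as claimed.

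The only genuine work here is the constant bookkeeping, and that is also where the single real obstacle lies: one must verify that the universal exponent $C$ produced by the invoked version of BSG, after the $|A|\leftrightarrow|B|$ conversions, indeed stays below the fixed constant $10$ appearing in the statement. For the standard formulations this holds, and being careful about it also pins down the admissible (small) range of $\rho_1,\rho_2$. Beyond this there is no difficulty, since the lemma is simply a convenient packaging of a classical theorem.
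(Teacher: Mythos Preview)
The paper does not prove this lemma; it is stated as a black-box citation of the classical Balog--Szemer\'edi--Gowers theorem and used as-is in Section~\ref{sec:cond-general}. So there is no ``paper's own proof'' to compare against.

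Your derivation is the natural way to recover this exact parametrization from the energy form of BSG, and the computation is correct: the identity $\cp(A+B)=E(A,B)/(|A|^2|B|^2)$ converts the hypothesis into an energy lower bound, and with $|B|\ge|A|^{1-\rho_1}$ one indeed gets $K\le|A|^{\rho_2-\rho_1/2}\le|A|^{\rho_2}$. The one substantive point you flag yourself is also the only real content: whether the universal constant $C$ in the cited version of BSG, after the $|A|\leftrightarrow|B|$ conversions (which cost a factor $1/(1-\rho_1)$ in the exponent), lands below $10$. For the formulations in Tao--Vu or in Gowers' original paper this is fine, and in any case the paper only ever uses this lemma with $\rho_1,\rho_2$ both $O(\eps)$ for a small constant $\eps$, where the precise value of the constant $10$ is immaterial (all downstream bounds are of the form $K^{1\pm O(\eps)}$). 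So your sketch is adequate, and there is nothing to add beyond ``cite BSG and translate parameters,'' which is exactly what the paper's bare citation is signaling.
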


\begin{theorem}[Polynomial Freiman-Ruzsa Theorem in $\F^n_2$~\cite{gowers2023conjecture}] \label{conj:poly} Let $A \subset \F^n_2$ be a set such that $|A + A| \leq M|A|$. Then there exists a subset $A' \subset A$ of size $|A'| \geq M^{-c}|A|$ such that $|\Span(A')| \leq M^c|A|$, where $c \geq 0$ is an absolute constant.
\end{theorem}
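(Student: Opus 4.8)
This is the Polynomial Freiman--Ruzsa theorem over $\F_2^n$, recently established by Gowers, Green, Manners, and Tao~\cite{gowers2023conjecture}, and the plan is to follow their entropy method, which is the only route known to yield a polynomial (rather than exponential) dependence on $M$. The first step is to pass from sets to random variables. Let $X$ be uniform on $A$, write $H$ for Shannon entropy, and let $X_1,X_2$ be independent copies of $X$. Since $H(X)=\log|A|$ and $H(X_1+X_2)\le\log|A+A|\le\log M+\log|A|$ (over $\F_2^n$ one has $-X_2=X_2$), the \emph{entropic Ruzsa distance} $d(X;X):=H(X_1+X_2)-H(X)$ satisfies $d(X;X)\le\log M$. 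The target is then the ``$100\%$'' entropic form of PFR: produce a subspace $H\le\F_2^n$ with $d(X;U_H)=O(\log M)$, where $U_H$ is uniform on $H$.

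For the core argument I would run a compactness/extremizer scheme. Because $\F_2^n$ is finite, the space of pairs of distributions on it is compact, so one may minimize a functional of the shape $\tau(Y_1,Y_2)=d(Y_1;Y_2)+\eta\big(d(Y_1;X)+d(Y_2;X)\big)$ over all pairs $(Y_1,Y_2)$ of $\F_2^n$-valued variables, for a suitable small constant $\eta$; since $(X,X)$ is a competitor, a minimizer $(Y_1,Y_2)$ satisfies $d(Y_1;Y_2)=O(\log M)$. The heart of the proof is to show that at this minimizer the relevant ``defect'' quantities must vanish, forcing $Y_1$ to be essentially uniform on a coset of a subspace $H$ with $d(X;U_H)=O(\log M)$. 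This is carried out by feeding the minimizer through the characteristic-$2$ ``doubling'' move $(Y_1,Y_2)\mapsto(Y_1+Y_2,\ \cdot)$ together with independent resamplings, and controlling how $\tau$ changes using the standard entropy toolbox: the Ruzsa triangle inequality for entropy, Shannon submodularity, the Kaimanovich--Vershik sumset-entropy inequalities, an entropic Balog--Szemer\'edi--Gowers/``fibring'' lemma, and the special feature that over $\F_2$ the maps $(a,b)\mapsto(a+b,a)$ and $x\mapsto -x=x$ interact favourably. If the extremizer were not already concentrated on a coset of a subgroup, one of these inequalities would be strict enough to lower $\tau$, contradicting minimality.

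Given the entropic conclusion $d(X;U_H)=O(\log M)$, I would transfer back to sets using the standard relation between entropic Ruzsa distance and covering: $A=\Supp(X)$ is covered by $M^{O(1)}$ cosets of $H$, and $|H|\le M^{O(1)}|A|$. By pigeonhole some coset $v+H$ satisfies $|A\cap(v+H)|\ge M^{-O(1)}|A|$; set $A'=A\cap(v+H)$. Then $|A'|\ge M^{-c}|A|$ and $\Span(A')\subseteq H+\langle v\rangle$, so $|\Span(A')|\le 2|H|\le M^{c}|A|$ after enlarging $c$ to an absolute constant, which is exactly the claimed statement. (The degenerate range $M$ close to $1$ can be absorbed into the constant, since tiny doubling already forces $A$ into a small coset.)

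The main obstacle, by a wide margin, is the middle step---showing that the extremizer of the $\tau$-functional corresponds to a genuine subspace. This is precisely the technically heavy part of Gowers--Green--Manners--Tao: it requires carefully chaining a long sequence of entropy inequalities, and the polynomial dependence on $M$ survives only because entropic Ruzsa distances compose \emph{additively} (in contrast to Freiman-type arguments routed through Bogolyubov's lemma, which lose a further exponential). I would not expect any shortcut there; by comparison the two framing reductions---set $\leftrightarrow$ entropic formulation, and entropic PFR $\leftrightarrow$ the span form above---are routine.
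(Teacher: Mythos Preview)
The paper does not prove this theorem at all: it is quoted as an external result (Theorem~\ref{conj:poly}, attributed to Gowers, Green, Manners, and Tao~\cite{gowers2023conjecture}) and used purely as a black box in Section~\ref{sec:cond-general}. There is therefore no ``paper's own proof'' to compare your proposal against.

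Your outline is a reasonable high-level summary of the entropy approach of~\cite{gowers2023conjecture}, and the framing reductions you describe (passing to entropic Ruzsa distance, and at the end pigeonholing into a coset to recover the span formulation) are standard and correct. But you should be aware that what you have written is a roadmap, not a proof: the entire content of the theorem lies in the ``core argument'' paragraph, and the specific chain of inequalities that forces the $\tau$-minimizer to be coset-uniform is long and delicate. In the context of this paper no proof was expected---the authors simply invoke the result---so the appropriate response here is to cite~\cite{gowers2023conjecture} rather than to attempt a self-contained argument.
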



\subsection{The Construction}
We generalize our affine somewhere condenser as follows.

\begin{algorithm}[H]
    \caption{$\bgcond(x)$}
    \label{alg:bgcond}
    \begin{algorithmic}
        \medskip
        \State \textbf{Input:} $x \in \zo^n$ --- an $n$ bit string.
        \State \textbf{Output:} $z \in (\zo^m)^{2d+3}$ --- an array of $2d+3$ bit strings with length $m$, where $m = n/2$ and $d$ is the constant in Theorem~\ref{thm:dimexpand}.
        \\\hrulefill 
        \State \textbf{Sub-Routines and Parameters: } \\
        Let $T = \{T_i\}_{i=1}^{d}$ be the $\alpha$-dimension expander given by Theorem~\ref{thm:dimexpand}. \\
        \\\hrulefill \\
        
        Divide $x$ into $2$ blocks $x = x_1\circ x_2$ where each block has $n/2$ bits. \\
        Let $z=z_1 \circ z_2 \circ \cdots \circ z_{2d+3}$, where $z_1=x_1$, $z_2=x_2$, and $z_{2i+1}=x_1+T_i(x_2)$, $z_{2i+2}=x_2+T_i(x_1)$, for any $i \in [d]$. Finally let $z_{2d+3}=x_1+x_2$. Here all additions are viewing the inputs as elements in the field $\F^m_2$.

    \end{algorithmic}
\end{algorithm}

We have the following lemma.

\begin{lemma}\label{lem:bgcond}
For any $0 < \delta \leq 1/2$, $\bgcond$ is a rate $(\delta \to (1+\Omega(\frac{\alpha}{d}))\delta, 2^{-\Omega(\delta n)})$ somewhere condenser, where $\alpha, d$ are the constants in Theorem~\ref{thm:dimexpand}.
\end{lemma}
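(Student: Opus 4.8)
The plan is to reduce the general‐source statement to the affine case already handled by Lemma~\ref{lem:bcond}, using the collision‐probability dichotomy together with the additive‐combinatorics toolkit collected in Section~\ref{sec:cond-general}. Let $X$ be an $(n,\delta n)$ source. It suffices to prove the claim when $X$ is uniform over some set $S \subseteq \F_2^n$ with $|S| \ge 2^{\delta n}$: a general $\delta n$-source is a convex combination of flat sources of entropy $\delta n$, and a somewhere condenser guarantee is preserved under convex combinations (each component will have a good row, and on average the error is still small; more carefully, one passes to a subsource argument as in the condenser literature). Write $S = S_1 \times'$–style as $x = x_1 \circ x_2$ and let $A,B$ be the projections of $S$ onto the first and second $n/2$ coordinates, so $X_1$ is supported on $A$ and $X_2$ on $B$ and $|A|\cdot|B| \ge |S| \ge 2^{\delta n}$.

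First I would run the dichotomy on each of the two halves via Lemma~\ref{lem:collision2}. If $\cp(X_1)$ is small, say $\cp(X_1) \le 2^{-(1+\Omega(\alpha/d))\delta n /2} \cdot 2^{-\Omega(\delta n)}$, then $X_1$ is $2^{-\Omega(\delta n)}$-close to a source of min-entropy $(1+\Omega(\alpha/d))\delta (n/2)$, and the first output row $z_1 = x_1$ already works. Symmetrically for $X_2$ and the row $z_2 = x_2$. So the only remaining case is that both $X_1$ and $X_2$ have large collision probability, i.e. $\cp(X_1), \cp(X_2) \ge 2^{-(1+O(\alpha/d))\delta(n/2)}$; since $\cp$ of a source of size $N$ is at least $1/N$, this forces $|A|, |B| \le 2^{(1+O(\alpha/d))\delta(n/2)}$, and combined with $|A||B| \ge 2^{\delta n}$ this pins both $|A|$ and $|B|$ to roughly $2^{\delta n/2}$. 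In this regime I would further split: if $\cp(X_1 + X_2)$ (addition in $\F_2^{n/2}$, which is where the extra row $z_{2d+3}$ comes in relative to Algorithm~\ref{alg:bcond}) is small, the row $z_{2d+3}$ condenses and we are done. Otherwise $\cp(A+B)$ is large, so Balog–Szemerédi–Gowers (Lemma~\ref{lem:add2}, with the $\rho_i$ chosen as small multiples of $\alpha/d$) produces $A' \subseteq A$, $B' \subseteq B$ with $|A'|, |B'|$ a $1-O(\alpha/d)$ fraction (in the exponent) of $|A|,|B|$ and $|A'+B'|$ only slightly larger than $|A'|$. Then Plünnecke–Ruzsa (Lemma~\ref{lem:add1}) gives $|A'+A'| \le M|A'|$ for $M = 2^{O(\alpha/d)\cdot \delta n}$, and the Polynomial Freiman–Ruzsa Theorem (Theorem~\ref{conj:poly}) yields $A'' \subseteq A'$ with $|A''| \ge M^{-c}|A'|$ and $|\Span(A'')| \le M^{c}|A'| \le 2^{(1+O(\alpha/d))\delta(n/2)}$ — i.e. $A''$ sits inside an affine subspace $V$ of dimension only slightly above $\delta(n/2)$ and occupies a $2^{-O(\alpha/d)\delta n}$ fraction of it.

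At this point I would invoke the affine analysis: restrict to the subsource $X''$ of $X$ where $X_1 \in A''$ (this happens with probability $\ge M^{-2c} \ge 2^{-O(\alpha/d)\delta n}$, which is fine because we only need a subsource of density $2^{-\Omega(\delta n)}$ to witness a good row, again using that somewhere-condensing passes to and from subsources with a controlled error/density tradeoff). Conditioned on this, $X_1$ is a source of min-entropy close to $\dim V \approx \delta(n/2)$ living in the affine space $V$; treating $V$ as the ambient space and running the argument of Lemma~\ref{lem:bcond} — the dimension-expander case analysis with the linear maps $\{T_i\}$ — shows that one of the rows $x_1 + T_i(x_2)$ or $x_2 + T_i(x_1)$ has min-entropy at least $(1+\Omega(\alpha/d))\delta(n/2)$, up to a further $2^{-\Omega(\delta n)}$ loss coming from the fact that $X''$ is not exactly uniform on a subspace but only close to a source supported near one. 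Collecting the error terms — the $1/\sqrt{L}$ from each collision-probability step, the subsource densities from BSG and PFR, and the slack in Lemma~\ref{lem:bcond} — all are of the form $2^{-\Omega(\delta n)}$, so the union bound gives the claimed $2^{-\Omega(\delta n)}$ total error.

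The main obstacle I anticipate is bookkeeping the interaction between the min-entropy (or collision-probability) language of extractors and the cardinality (doubling-constant) language of additive combinatorics: the cleanest route is probably to reduce everything to flat sources up front, translate min-entropy thresholds into set-size thresholds, and only at the very end convert back, being careful that the "slightly above $\delta$" bound on $|\Span(A'')|$ together with the $M^{-c}$ density of $A''$ inside it still leaves enough entropy that Lemma~\ref{lem:bcond}'s conclusion — a $(1+\frac{\alpha}{4d})$-factor boost on entropy \emph{rate} within the span — genuinely beats the baseline rate $\delta$ on the full $n/2$ coordinates. Choosing the free parameters $\rho_1, \rho_2$ in BSG, and the constant hidden in the exponent of $M$, as small enough constant multiples of $\alpha/d$ is what makes this go through, and I would state a single inequality at the start fixing those constants so the rest of the proof is just verifying it.
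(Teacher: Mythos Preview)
Your overall strategy---collision-probability dichotomy, then BSG + PFR to reduce to the affine case---is indeed the paper's strategy, but there is a genuine structural gap in how you execute it.

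The problem is independence. When you write ``Otherwise $\cp(A+B)$ is large, so Balog--Szemer\'edi--Gowers \ldots'', you are implicitly passing from the statement ``$\cp(X_1+X_2)$ is large'' (for the \emph{correlated} pair $(X_1,X_2)$ coming from the joint source $X$) to the hypothesis of Lemma~\ref{lem:add2}, which concerns \emph{independent uniform} distributions over the sets $A,B$. These are not the same thing: if, say, $S=\{(a,a):a\in A\}$ then $X_1+X_2\equiv 0$ so its collision probability is $1$, yet this says nothing about the additive structure of $A$. So the branch ``$\cp(X_1+X_2)$ large $\Rightarrow$ BSG applies'' simply fails. The same independence issue bites again when you try to invoke Lemma~\ref{lem:bcond}: you only push $X_1$ into a small-span subspace $V$, but the conditional distribution of $X_2$ given $X_1\in A''$ is still arbitrary and correlated with $X_1$, so the joint restricted source is not affine (nor close to affine), and Lemma~\ref{lem:bcond} does not apply.

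The paper's fix is a heavy-element decomposition \emph{before} any additive combinatorics. One defines $H_i=\{y:\Pr[X_i=y]\ge 2^{-(1+\lambda)\delta m}\}$ and splits $S$ into $S'$ (some coordinate light) and $S''=S\cap(H_1\times H_2)$. On $S'$ one of the trivial rows already condenses. On $S''$ the key point is that the support sits inside the \emph{product} $H_1\times H_2$, and both $|H_1|$ and $|H_2|$ are forced to be $K^{1\pm O(\eps)}$. The paper then analyzes the \emph{independent} product distribution on $H_1\times H_2$ (this is where Lemmas~\ref{lem:entropy1}--\ref{lem:entropy3} live: BSG and PFR are applied to \emph{both} sides to get $\tilde A,\tilde B$ each with small span, and then $\Span(\tilde A)\times\Span(\tilde B)$ is genuinely affine so Lemma~\ref{lem:bcond} applies), obtains a partition of $H_1\times H_2$ into good and bad product pieces, and finally intersects this partition back with $S''$, controlling the density loss. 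The iterative peeling in Lemma~\ref{lem:entropy3} is what makes the good pieces cover almost all of $H_1\times H_2$. Your proposal is missing this reduction-to-product step, and without it the argument does not go through.
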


To prove the lemma we first prove the following lemmas.

\BL \label{lem:entropy1}
For any constant $c>0$ there exists a constant $\eps=\Omega(\frac{\alpha}{d})$ such that the following holds. Let $A, B$ be finite subsets of $\F^n_2$. For any $K \leq 2^{n/4}$, assume $K^{1-c\eps} \leq |A|, |B| \leq K^{1+c\eps}$. If $\cp(A + B) \geq K^{-(1+2\eps)}$, then there exist subsets $\tilde{A} \subseteq A, \tilde{B} \subseteq B$ such that $|\tilde{A}| \geq K^{1-O(\eps)} , |\tilde{B}| \geq K^{1- O(\eps)}$, $|\Span(\tilde{A})| \leq K^{1+O(\eps)} , |\Span(\tilde{B})| \leq K^{1+O(\eps)}$, and at least one row in the output of $\bcond({\A} \circ {\B})$ has min-entropy $(1+\Omega(\eps)) \log K $, where $\A, \B$ are the uniform and independent distributions over $\tilde{A}, \tilde{B}$ respectively.
\EL

\begin{proof}
    If $\cp(A + B) \geq K^{-(1+2\eps)}$, by Lemma~\ref{lem:add2} there exist subsets $A' \subseteq A, B' \subseteq B$ such that $|A'| \geq |A|^{1-O(\eps)} , |B'| \geq |B|^{1-O(\eps)}$, and $|A' + B'| \leq |A|^{1+O(\eps)}=K^{1+O(\eps)}$. Then, by Lemma~\ref{lem:add1}, we have $|A'+A'| \leq \frac{|A'+B'|^4}{ |A'||B'|^2} \leq K^{1+O(\eps)}$. Similarly we also have $|B'+B'| \leq K^{1+O(\eps)}$.

    Next, by Theorem~\ref{conj:poly}, there exists a subset $\tilde{A} \subset A'$ of size $|\tilde{A}| \geq K^{-O(\eps)}|A'| = K^{1-O(\eps)}$ such that $|\Span(\tilde{A})| \leq K^{O(\eps)}|A'| = K^{1+O(\eps)}$. Similarly there also exists such a subset $\tilde{B} \subset B'$ with the same property. Now let $\A', \B'$ be the uniform and independent distributions over $\Span(\tilde{A}), \Span(\tilde{B})$ respectively. Note that $\A', \B'$ are both affine sources and hence $\A' \circ \B'$ is also an affine source with entropy $\geq \log |\tilde{A}|+\log |\tilde{B}|=(1-O(\eps))2 \log K$. Thus, without loss of generality we can view it as an affine source with entropy exactly $(1-O(\eps))2 \log K < n/2$. Now by Lemma~\ref{lem:bcond}, at least one row in the output of $\bcond({\A'} \circ {\B'})$ has entropy $(1+\frac{\alpha}{4d})(1-O(\eps)) \log K$. Note that $|\tilde{A}||\tilde{B}| \geq (|\Span(\tilde{A})||\Span(\tilde{B})|)^{(1-O(\eps))}$. Thus the same row in the output of $\bcond({\A} \circ {\B})$ has min-entropy at least $(1+\frac{\alpha}{4d})(1-O(\eps)) \log K-O(\eps)\log K=(1+\Omega(\eps)) \log K$, as long as $\eps=\gamma \frac{\alpha}{4d}$ for a sufficiently small constant $\gamma>0$.
\end{proof}

\BL \label{lem:entropy2}
For any constant $c>0$ there exists a constant $\eps=\Omega(\frac{\alpha}{d})$ such that the following holds. Let $A, B$ be finite subsets of $\F^n_2$. For any $K \leq 2^{n/4}$, assume $K^{1-c\eps} \leq |A|, |B| \leq K^{1+c\eps}$ and $|\Span(B)| \leq K^{1+O(\eps)}$. If $\cp(A + B) \geq K^{-(1+2\eps)}$, then there exists a subset $\tilde{A} \subseteq A$ such that $|\tilde{A}| \geq K^{1-O(\eps)}$, $|\Span(\tilde{A})| \leq K^{1+O(\eps)}$, and at least one row in the output of $\bcond({\A} \circ {\B})$ has min-entropy $(1+\Omega(\eps)) \log K $, where $\A, \B$ are the uniform and independent distributions over $\tilde{A}, B$ respectively.
\EL

\begin{proof}
    The proof is exactly the same as the previous lemma, except in the second paragraph we can replace the set $\Tilde{B}$ with $B$ directly. 
\end{proof}

We can now prove the following lemma.

\BL \label{lem:entropy3}
There exists a constant $\eps=\Omega(\frac{\alpha}{d})$ such that the following holds. Let $A, B$ be finite subsets of $\F^n_2$. For any $K \leq 2^{n/4}$, assume $K^{1-\eps} \leq |A|, |B| \leq K^{1+\eps}$. Let $X, Y$ be the uniform and independent distributions over $A, B$ respectively. Then $\bgcond(X \circ Y)$ is $K^{-\Omega(\eps)}$-close to a somewhere-$(1+\Omega(\eps)) \log K $ source. In particular, $X \circ Y$ can be divided into disjoint subsources, such that for each subsource, either (1) the probability mass is at most $2K^{-\eps}$, or (2) the probability mass is at least $K^{-O(\eps)}$, and the output of $\bgcond$ on the subsource is $K^{-\eps}$-close to being an elementary somewhere $(1+\Omega(\eps)) \log K $ source.
\EL

\begin{proof}
We repeatedly apply Lemma~\ref{lem:entropy1} and Lemma~\ref{lem:entropy2}, and dividing $A \times B$ into disjoint subsets as follows. First note that if $\cp(A+B) \leq K^{-(1+2\eps)}$, then by Lemma~\ref{lem:collision2}, $X+Y$ is $K^{-\eps/2}$-close to having min-entropy $(1+\eps)\log K$. Otherwise, by Lemma~\ref{lem:entropy1}, there exist subsets $\tilde{A} \subseteq A, \tilde{B} \subseteq B$ such that $|\tilde{A}| \geq K^{1-O(\eps)} , |\tilde{B}| \geq K^{1- O(\eps)}$, $|\Span(\tilde{A})| \leq K^{1+O(\eps)} , |\Span(\tilde{B})| \leq K^{1+O(\eps)}$, and at least one row in the output of $\bcond({\A} \circ {\B})$ has min-entropy $(1+\Omega(\eps)) \log K $, where $\A, \B$ are the uniform and independent distributions over $\tilde{A}, \tilde{B}$ respectively.

Now consider the set $A^1 = A \setminus \tilde{A}$ and $B^1 = B \setminus \tilde{B}$. If $|A^1| \leq K^{1-2\eps}$ and $|B^1| \leq K^{1-2\eps}$, then the total probability mass in $X \circ Y$ corresponding to elements in $(A \times B) \setminus (\tilde{A} \times \tilde{B})$ is at most $\frac{|A^1||B|+|A||B^1|}{|A||B|} \leq 2K^{-\eps}$, and we are done.

Otherwise, consider the following three sets: $\tilde{A} \times B^1$, $A^1 \times \tilde{B}$, and $A^1 \times B^1$. Note that these are disjoint subsets whose union equals $(A \times B) \setminus (\tilde{A} \times \tilde{B})$. We have several cases. 

\begin{description}
    \item[Case 1.] Only one of $|A^1|$ and $|B^1|$ has size larger than $K^{1-2\eps}$. Without loss of generality assume $|B^1| \leq K^{1-2\eps}$. Note that in this case the total probability mass in $X \circ Y$ corresponding to elements in $(\tilde{A} \times B^1) \cup (A^1 \times B^1)= A \times B^1$ is at most $\frac{|A||B^1|}{|A||B|} \leq K^{-\eps}$.

    For $A \times \tilde{B}$, we repeatedly apply Lemma~\ref{lem:entropy2}. Initially let $A^*=A$. As long as $|A^*| \geq K^{1-2\eps}$, if $\cp(A^*+\tilde{B}) \leq K^{-(1+2\eps)}$, then again by Lemma~\ref{lem:collision2}, the output of the sum of the random variables corresponding to the uniform and independent distributions over $A^*$ and $\tilde{B}$ will be $K^{-\eps/2}$-close to having min-entropy $(1+\eps)\log K$, and we stop here. Otherwise we use Lemma~\ref{lem:entropy2} to find a subset $\tilde{A} \subseteq A^*$ such that $|\tilde{A}| \geq K^{1-O(\eps)}$, $|\Span(\tilde{A})| \leq K^{1+O(\eps)}$, and at least one row in the output of $\bcond({\A} \circ {\B})$ has min-entropy $(1+\Omega(\eps)) \log K $,  where $\A, \B$ are the uniform and independent distributions over $\tilde{A}, \tilde{B}$ respectively. We then remove $\tilde{A}$ from $A^*$ and repeat. The process ends when $|A^*| < K^{1-2\eps}$.

    Thus, altogether, we have divided $A \times B$ into disjoint subsets, or equivalently, $X \circ Y$ into disjoint subsources, such that for each subsource, either the probability mass is at most $K^{-\eps}$, or the output of $\bgcond$ on the subsource is $K^{-\eps}$-close to being a somewhere $(1+\Omega(\eps)) \log K $ source.

    \item[Case 2.] $|A^1| > K^{1-2\eps}$ and $|B^1| > K^{1-2\eps}$. We first apply the argument in Case 1 to $\tilde{A} \times B^1$ and $A^1 \times \tilde{B}$. Then we consider $A^1 \times B^1$. This is the same situation as when we start. Namely, if $\cp(A^1+B^1) \leq K^{-(1+2\eps)}$, then by Lemma~\ref{lem:collision2} we are done. Otherwise by Lemma~\ref{lem:entropy1}, there exist subsets $\tilde{A^1} \subseteq A^1, \tilde{B^1} \subseteq B^1$ such that $|\tilde{A^1}| \geq K^{1-O(\eps)} , |\tilde{B^1}| \geq K^{1- O(\eps)}$, $|\Span(\tilde{A^1})| \leq K^{1+O(\eps)} , |\Span(\tilde{B})| \leq K^{1+O(\eps)}$, and at least one row in the output of $\bcond({\A^1} \circ {\B^1})$ has min-entropy $(1+\Omega(\eps)) \log K $, where $\A^1, \B^1$ are the uniform and independent distributions over $\tilde{A^1}, \tilde{B^1}$ respectively. We can therefore continue the analysis as before. 
\end{description}

Combining the two cases, eventually we have divided $X \circ Y$ into disjoint subsources, such that for each subsource, either (1) the probability mass is at most $2K^{-\eps}$, or (2) the output of $\bgcond$ on the subsource is $K^{-\eps}$-close to being a somewhere $(1+\Omega(\eps)) \log K $ source.

Notice that when a subsource satisfies (2), its probability mass is always at least $K^{1-O(\eps)} \cdot K^{1-O(\eps)}/K^{2(1+\eps)}=K^{-O(\eps)}$. 
\end{proof}

We can now prove Lemma~\ref{lem:bgcond}.

\begin{proof}[Proof of Lemma~\ref{lem:bgcond}.]
    Given an $(n, \delta n)$ source $X$ with $0 < \delta \leq 1/2$, and $X=X_1 \circ X_2$, without loss of generality we can assume that $X$ is the uniform distribution over a set $S \subseteq \zo^n$ with $|S|=2^{\delta n}$. We first pick a constant parameter $\lambda>0$ to be chosen later. For $i \in [2]$ define $H_i = \{y \in \zo^m: \Pr[X_i = y] \geq 2^{-(1+\lambda)\delta m} \}$, which corresponds to the heavy elements in $X_i$. Notice that this implies for every $i$, $|H_i| \leq 2^{(1+\lambda)\delta m}$. Let $\tau=2^{-\beta \delta m}$ for some constant $\beta>0$ to be chosen later. We define the following sets.

    \begin{enumerate}
        \item $S' = \{x \in S : \exists i, x_i \notin H_i\}$. 
        \item For any $x \in S'$, define $I(x)$ to be the smallest $i$ such that $x_i \notin H_i$, and $T_i=\{x \in S', I(x)=i\}$. Let $B=\{i \in [2]: |T_i| < 2^{(1-\beta) \delta m} \}$, and define $\Tilde{S}=S' \setminus (\cup_{i \in B} T_i )$. Note that $|\cup_{i \in B} T_i| \leq 2\tau |S|$.
        \item $S'' = \{x \in S : \forall i, x_i \in H_i\}= S \setminus S'$.
    \end{enumerate}
    
    Note that for any $x \in \Tilde{S}$, we have $I(x) \notin B$. Let $\Tilde{X}$ be the uniform distribution over $\Tilde{S}$. For any $i \in [2] \setminus B$, and any $y \in \zo^m$, conditioned on $I(\Tilde{X})=i$, we have $\Pr[\Tilde{X}_i=y] \leq \frac{\Pr[X_i=y]}{2^{-\beta \delta m}} \leq 2^{-(1+\lambda-\beta)\delta m}.$ Thus as long as $\beta \leq \lambda/2$, $\Tilde{X}_i$ has min-entropy at least $(1+\lambda/2)\delta m$. Hence $\Tilde{X}$ is an elementary somewhere-$(1+\lambda/2)\delta m$ source.
    
    We now have two cases.

    \paragraph{Case 1.} $\Pr[X \in S'] \geq 1-\tau$. In this case, notice that $\Tilde{X}$ is $2\tau+\tau=3\tau$-close to $X$, thus we are done. 

\paragraph{Case 2.} $\Pr[X \in S''] \geq \tau$. In this case, notice that $|S''| \geq \tau |S| = 2^{(2-\beta) \delta m}$. Also, $S''$ is a subset of $H_1 \times H_2$, so 

\[|H_1 \times H_2| \geq |S''| \geq 2^{(2-\beta) \delta m}.\]

However, for each $i \in [2]$ we have $|H_i| \leq 2^{(1+\lambda)\delta m}$, and thus for each $i \in [2]$ we also have 

\[|H_i| \geq 2^{(2-\beta) \delta m}/2^{(1+\lambda)\delta m}=2^{(1-\beta-\lambda)\delta m}.\]

We now consider the source $(Y_1, Y_2)$ where each $Y_i$ is the independent uniform distribution over $H_i$. We will apply Lemma~\ref{lem:entropy3} by setting $K=2^{\delta m} \leq 2^{n/4}$, and $\eps \geq 2\lambda$. Notice that for any $i \in [2]$, we have $K^{1-\eps} \leq |H_i| \leq K^{1+\eps}$ since we have chosen $\beta \leq \lambda/2$.

Thus by Lemma~\ref{lem:entropy3}, there exits a constant $c>0$ such that $Y_1 \circ Y_2$ can be divided into disjoint subsources, such that for each subsource, either (1) the probability mass is at most $2K^{-\eps}$, or (2) the probability mass is at least $K^{-c\eps}$, and the output of $\bgcond$ on the subsource is $K^{-\eps}$-close to being an elementary somewhere $(1+\Omega(\eps)) \log K $ source.

For each subsource $Y^j$ in (2), we consider the intersection of its support with $S''$. If the intersection has probability mass at most $K^{-c\eps-4\lambda}$, then we say it is a \emph{bad} intersection, otherwise we say it is a \emph{good} intersection. Notice that for a good intersection, the output of $\bgcond$ on the subsource defined as the uniform distribution over the intersection is $K^{-\eps+4\lambda}$-close to being an elementary somewhere $(1+\Omega(\eps)-4 \lambda) \log K $ source. On the other hand, the total probability mass of the bad intersections is at most $K^{-4\lambda}$. 

Notice that the probability mass of $S''$ in $(Y_1, Y_2)$ is at least $2^{(2-\beta) \delta m}/(2^{(2+2\lambda)\delta m})=2^{-(\beta+2\lambda)\delta m}$. Hence if we define $X''$ as the uniform distribution over $S''$, then $\bgcond(X'')$ is $(2K^{-\eps}+K^{-4\lambda})/(2^{-(\beta+2\lambda)\delta m})+ K^{-\eps+4\lambda} \leq 2^{-\lambda \delta m}$-close to a somewhere $(1+\Omega(\eps)-4 \lambda) \log K =(1+\lambda) \delta m$ source, as long as we take $\lambda=\gamma \eps$ for a sufficiently small constant $\gamma>0$.

Now define $X'$ to be the uniform distribution over $S'' \cup \Tilde{S}$. Then $\bgcond(X')$ is $2^{-\lambda \delta m}$-close to a somewhere $(1+\lambda/2) \delta m$ source. Notice that $X$ is $2\tau$-close to $X'$. Thus $\bgcond(X)$ is $2\tau+2^{-\lambda \delta m}$-close to a somewhere $(1+\lambda/2) \delta m$ source.

Setting $\beta=\lambda/2$, we have that in both cases, $\bgcond(X)$ is $2^{-\Omega(\delta n)}$-close to a somewhere $(1+\Omega(\frac{\alpha}{d}))\delta m$ source.
\end{proof}

Our main condenser now involves repeated uses of the basic condenser.

\begin{algorithm}[H]
    \caption{$\sgcond(x)$}
    \label{alg:sgcond}
    \begin{algorithmic}
        \medskip
        \State \textbf{Input:} $x \in \F_2^n$ --- an $n$ bit string; $0< \delta \leq 1/2$, a given parameter.
        \State \textbf{Output:} $z \in (\F_2^m)^{\ell}$ --- a matrix of $\ell$ bit strings with length $m$, where $m = n/\poly(1/\delta)$ and $\ell=\poly(1/\delta)$.
        \\\hrulefill 
        \State \textbf{Sub-Routines and Parameters: } \\
        Let $\bgcond$ be the basic condenser given by Algorithm~\ref{alg:bgcond}. \\
        \\\hrulefill \\
        
        Set $x^0=x$ and let $i=0$. Initially $x^i$ has only $n_0 = 1$ row. 
         \begin{enumerate}
        \item Repeat the following step for some $h=O(\log(1/\delta))$ steps:
           For each $j$ and the $j$'th row $x^i_j$ in $x^i$, apply $\bgcond(x^i_j)$ to get $2d+3$ rows. Concatenate them to get $x^{i+1}$ with $n_{i+1}=n_i \cdot (2d+3)$ rows. Set $i \leftarrow i+1$. 
        \item Let $z=x^h$.
        \end{enumerate}

    \end{algorithmic}
\end{algorithm}

By a similar argument as in the proof of Theorem~\ref{thm:Lcondmain}, we can prove the following theorem.

\begin{theorem}\label{thm:LGcondmain}
There exists a constant $\beta > 0$ such that for any $0< \delta \leq 1/2$, there is an explicit rate $(\delta \to 1/2+\beta, 2^{-\Omega(m)})$ somewhere condenser $\sgcond: \zo^n \to (\zo^m)^t$, where $t=\poly(1/\delta)$ and $m=n/\poly(1/\delta)$. Moreover, $\sgcond$ is a linear function. 
\end{theorem}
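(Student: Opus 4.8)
The plan is to mimic the proof of Theorem~\ref{thm:Lcondmain} exactly, replacing the basic affine condenser $\bcond$ with its generalization $\bgcond$ and Lemma~\ref{lem:bcond} with Lemma~\ref{lem:bgcond}. First I would argue that for any $(n,\delta n)$ source $X$, after a single application of $\bgcond$ the resulting $2d+3$ rows form a distribution that is $2^{-\Omega(\delta n)}$-close to a somewhere-$(1+\Omega(\alpha/d))\delta m$ source; this is precisely Lemma~\ref{lem:bgcond}. Then I would iterate: at each level $i$ the error accumulates, but since the number of levels is $h=O(\log(1/\delta))$ and at each level the entropy rate of the ``good'' row grows multiplicatively by a factor $1+\Omega(\alpha/d)$, after $h$ levels at least one row has entropy rate at least $1/2$, and one more application of $\bgcond$ pushes it to $1/2+\beta$ for some absolute constant $\beta>0$ (using that $\bcond$, and hence $\bgcond$ restricted to affine sources, boosts rate $1/2$ to $1/2(1+\alpha/(4d))$ as in the proof of Theorem~\ref{thm:Lcondmain}).

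The subtlety that distinguishes this from the affine case, and which I expect to be the main obstacle, is handling the error propagation through the iterated convex-combination/subsource structure. Lemma~\ref{lem:bgcond} (via Lemma~\ref{lem:entropy3}) does not say the output is close to a single high-entropy source; rather, the input source can be partitioned into disjoint subsources, where each subsource either has tiny probability mass ($\le 2K^{-\eps}$) or is such that $\bgcond$ on it is $K^{-\eps}$-close to an elementary somewhere-$(1+\Omega(\eps))\log K$ source with probability mass at least $K^{-O(\eps)}$. When we then apply $\bgcond$ again to a row of such a subsource, we need the row to still be a genuine (min-entropy) weak source of the right rate, and we need to re-run the subsource decomposition. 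I would therefore track, at each of the $h$ levels, a decomposition of $X$ into a ``small'' part (total mass $2^{-\Omega(\delta_i m_i)}$ across all levels, summing to $2^{-\Omega(m)}$ since $m_i = n/\poly(1/\delta)$ and $\delta_i \ge \delta$) and a ``good'' part on which some row has the boosted entropy rate $\delta_{i+1} = (1+\Omega(\alpha/d))\delta_i$. Formally this is a union-bound over the $O(\log(1/\delta)) \cdot \poly(1/\delta)$ invocations of $\bgcond$, each contributing error $2^{-\Omega(\delta n /\poly(1/\delta))} = 2^{-\Omega(m)}$; since $\poly(1/\delta)$ and $\log(1/\delta)$ factors are absorbed, the total is $2^{-\Omega(m)}$.

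The remaining bookkeeping is routine and parallels Theorem~\ref{thm:Lcondmain}: linearity of $\sgcond$ is immediate since $\bgcond$ is linear (every output row is a linear combination of $x_1,x_2$ over $\F_2$, as $T_i$ are linear and addition in $\F_2^m$ is bitwise XOR); the output dimensions follow from halving the row length and multiplying the row count by $2d+3$ at each of the $h=O(\log(1/\delta))$ steps, giving $t=(2d+3)^{O(\log(1/\delta))}=\poly(1/\delta)$ rows each of length $m=n/2^{O(\log(1/\delta))}=n/\poly(1/\delta)$. I would also note the one place where care is needed in invoking Lemma~\ref{lem:bgcond} iteratively: after the first step, a ``good'' row is only close to a somewhere-high-entropy source, so before re-applying $\bgcond$ one should condition on being inside the good subsource (discarding the small-mass part into the accumulating error), so that the input to the next level is again an honest weak source; this is exactly the analogue of the ``without loss of generality assume entropy exactly $n'/2$'' step in the proof of Theorem~\ref{thm:Lcondmain}, now done probabilistically rather than exactly. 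With this, $\sgcond$ is the desired explicit linear rate-$(\delta \to 1/2+\beta, 2^{-\Omega(m)})$ somewhere condenser.
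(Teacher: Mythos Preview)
Your proposal is correct and matches the paper's own approach: the paper simply states that Theorem~\ref{thm:LGcondmain} follows ``by a similar argument as in the proof of Theorem~\ref{thm:Lcondmain},'' and your outline carries this out, correctly identifying the one new ingredient (tracking the convex-combination/error structure from Lemma~\ref{lem:bgcond} across the $O(\log(1/\delta))$ levels). One small simplification: for the final boost past rate $1/2$ you need not appeal to the affine $\bcond$ result---Lemma~\ref{lem:bgcond} itself applies directly with $\delta=1/2$ to any source of rate $\ge 1/2$, yielding rate $1/2+\Omega(\alpha/d)$, and this also shows the rate never drops once it crosses $1/2$.
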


\section{Directional Affine Extractor}
\label{sec:daext2}

In this section, we describe our directional affine extractors for linear entropy with exponentially small error. The construction also works for sublinear entropy with a slight loss in the error and output length.

\subsection{Low-Degree Affine Correlation Breaker}
As we introduced in Section~\ref{sec:intro}, keeping the outputs of the directional affine extractor \emph{low-degree} is critical. However, our construction makes use of advice correlation breakers, and all existing correlation breakers have degrees forbiddenly high for our purpose. To handle this, we construct a family of low-degree correlation breakers. We assume that the input random variables to each of the following subroutines are affine. This assumption is valid since in the analysis of Algorithm~\ref{alg:DAExt} where we invoke Theorem~\ref{thm:ldacb} of $\ldACB$, the input random variables are affine.

\paragraph{Substitutes for strong seeded extractors.}
 We will base our construction on a similar framework to the advice correlation breaker in~\cite{ChattopadhyayGL:focs:2021}. To keep the degree low, we substitute the GUV extractors and the condense-then-hash extractors used throughout with the low-degree strong linear seeded extractor from Theorem~\ref{thm:low-deg-lsext}. 

\begin{theorem}[Low-degree strong linear seeded extractors~\cite{Li:CCC:2011}]\label{thm:low-deg-lsext}
    There exists a constant $0<\beta<1$ such that for every $0<\delta<1$ and any $1/\sqrt{n} < \alpha < 1$ there exists a polynomial time computable function $\mathsf{LSExt}:\{0,1\}^n \times \{0,1\}^d \to \{0,1\}^m$ and a constant $0<\beta<1$ such that s.t.
    \begin{itemize}
        \item $d\le\alpha n,m\ge\beta\delta\alpha n$.
        \item For any $(n,\delta n)$-affine source $X$, let $R$ be the uniform distribution on $\{0,1\}^d$ independent of $X$. Then $(\mathsf{LSExt}(X,R),R)$ is $2^{-\Omega(\delta \alpha^2 n)}$-close to uniform.
        \item Each bit of the output is a degree $4$ polynomial of the bits of the two inputs, and for any fixing of $r$ the output is a linear function of $x$.
    \end{itemize}
\end{theorem}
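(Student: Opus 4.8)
The plan is to follow the ``condense-then-extract'' strategy behind the construction of \cite{Li:CCC:2011}, arranged so that for every fixed value of the seed the map applied to the source is linear and the combined polynomial degree stays at a fixed constant. I would split the seed as $R=(R_1,R_2)$ with $|R_1|,|R_2|=\Theta(\alpha n)$ and define $\mathsf{LSExt}(x,(r_1,r_2))=E(\mathsf{Cond}(x,r_1),r_2)$, where (a) $\mathsf{Cond}:\F_2^n\times\F_2^{|R_1|}\to\F_2^{n'}$ with $n'=\Theta(\delta\alpha n)$ is an explicit \emph{linear seeded condenser} for affine sources --- a family of linear maps $\{M_{r_1}\}$ whose matrix entries are polynomials of degree at most $2$ in the seed bits, such that for any $(n,\delta n)$ affine source $X$, for all but a $2^{-\Omega(\delta\alpha^2 n)}$ fraction of $r_1$ the image $\mathsf{Cond}(X,r_1)$ is a convex combination of affine sources of rate at least $3/4$; and (b) $E:\F_2^{n'}\times\F_2^{|R_2|}\to\F_2^m$ with $m=\Theta(\delta\alpha n)$ is an explicit \emph{bilinear} strong seeded extractor for affine sources of rate $\ge 3/4$ and error $2^{-\Omega(\delta\alpha^2 n)}$, with each output bit a bilinear form $\sum_{k,p}c_{kp}\,y_k\,(r_2)_p$ in the source and seed bits. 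Both $\mathsf{Cond}$ and $E$ would be built from Vandermonde/Frobenius-power and inner-product style families of linear maps.

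I would then verify the three listed properties in order. \emph{Linearity in the source} is immediate: for fixed $(r_1,r_2)$ the maps $\mathsf{Cond}(\cdot,r_1)=M_{r_1}$ and $E(\cdot,r_2)=N_{r_2}$ are linear, so $x\mapsto N_{r_2}M_{r_1}x$ is linear --- this is precisely why the first stage must be a linear condenser and not a sum-product condenser. For the \emph{degree}: each bit of $\mathsf{Cond}(x,r_1)$ is $\sum_j(M_{r_1})_{ij}x_j$, a sum of monomials of the form $(\text{degree}\le 2\text{ in }r_1)\cdot x_j$; substituting these for the $y_k$ in the bilinear form of $E$ yields monomials of the form $(\text{degree}\le 2\text{ in }r_1)\cdot x_j\cdot(r_2)_p$, of total degree at most $4$ and still linear in $x$. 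For \emph{extraction}: fix an $(n,\delta n)$ affine source $X$ and condition on $R_1=r_1$ outside the bad set, so that $Y=\mathsf{Cond}(X,r_1)$ is a convex combination of affine sources of rate $\ge 3/4$; applying strongness of $E$ to each component and averaging (via the average-case-extractor reasoning of Lemma~\ref{lemma:avg-ext}) gives $(E(Y,R_2),R_2)\approx_{2^{-\Omega(\delta\alpha^2 n)}}(U_m,R_2)$, and averaging over $r_1$ as well --- and noting the bound survives conditioning on the whole seed --- yields $(\mathsf{LSExt}(X,R),R)\approx_{2^{-\Omega(\delta\alpha^2 n)}}(U_m,R)$. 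The parameter bounds $d\le\alpha n$ and $m\ge\beta\delta\alpha n$ then hold after fixing the hidden constants.

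The hard part is not the composition itself but supplying $\mathsf{Cond}$ and $E$ with exactly the right trade-off: the degree of the seed-dependence of $\mathsf{Cond}$ and of $E$ must sum to at most $3$ (so that multiplying by one copy of $x$ keeps total degree $\le 4$), and since one cannot iterate a constant-factor linear condenser $\Theta(\log(1/\delta))$ times --- that would push the degree to $\mathrm{poly}(1/\delta)$ --- the condenser has to jump from an arbitrarily small constant rate $\delta$ all the way to rate $3/4$ essentially in one shot while using a seed of length only $O(\alpha n)$ and erring only $2^{-\Omega(\delta\alpha^2 n)}$. The core of the analysis for both $\mathsf{Cond}$ and $E$ is an affine-source rank argument: for a worst-case $k$-dimensional support subspace $V$, a random member of the chosen low-degree family of linear maps has the required rank on $V$ except with exponentially small probability, which one bounds by a union bound over the (only $2^m$) nonzero linear functionals of the output, combined with the linearized-polynomial root bound over $\F_{2^{\Theta(\alpha n)}}$; everything else reduces to the affine-source facts already recorded, in particular Lemma~\ref{lemma:affine conditioning} and Lemma~\ref{lemma:affine entropy}.
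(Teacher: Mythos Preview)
This theorem is not proved in the present paper at all: it is quoted verbatim from \cite{Li:CCC:2011} and used as a black box throughout Section~\ref{sec:daext2}. There is therefore no ``paper's own proof'' to compare your proposal against. Your write-up is an attempt to reconstruct a proof of a result the authors simply import.

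That said, a brief comment on the proposal itself. Your high-level shape (linear-seeded condenser followed by a bilinear extractor, arranged so that the source dependence stays linear and the total degree is a fixed constant) is in the right spirit, and you correctly identify the crux: a single-shot linear seeded condenser that boosts an arbitrary constant rate $\delta$ to rate $\ge 3/4$ with seed of length $O(\alpha n)$, error $2^{-\Omega(\delta\alpha^2 n)}$, and seed-degree at most $2$. But you have not actually supplied such an object, and the sketch you give (``Vandermonde/Frobenius-power \ldots\ families'' plus a union bound over $2^m$ output functionals and a linearized-polynomial root count) is too vague to be checkable. In particular, the parameter you need to control is not just the failure probability of a random linear map having full rank on a fixed $k$-dimensional subspace --- it is that a \emph{structured} (degree-$\le 2$ in the seed) family of size $2^{O(\alpha n)}$ already achieves this, and that the resulting output length $n'=\Theta(\delta\alpha n)$ is large enough for the second stage while the error stays $2^{-\Omega(\delta\alpha^2 n)}$. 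Until you write down concrete $\mathsf{Cond}$ and $E$ and carry out that calculation, the proposal is a plan rather than a proof; the composition and degree-counting parts you did spell out are fine, but they are the easy parts.
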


\paragraph{Low-degree look-ahead extractor.} 
The first step is to construct a low-degree look-ahead extractor which is a component of the low-degree advice correlation breaker. The following algorithm is such a construction instantiated with the low-degree strong linear seeded extractor in Theorem~\ref{thm:low-deg-lsext}. Since the low-degree strong linear seeded extractor has shorter output length than the minimal seed length, we cannot directly apply existing lemmas about look-ahead extractors. Instead, we need to tailor a new set of parameters and a new theorem for the low-degree one.
\begin{algorithm}[H]
    \caption{$(k,t,\eps)$-$\laExt(x,y)$}
    \label{alg:laExt}
    \begin{algorithmic}
        \medskip
        \State \textbf{Input:} Bit strings $x$, $y$ of length $n,d$ respectively. Initially, $x$ has entropy $k$.
        \State \textbf{Output:} Bit string $(r_0,r_1)$ of length $2m$.
        \State \textbf{Subroutines and Parameters:} \\
        Let $s=d/(2+2t)$, where $C_0\sqrt{\frac{\log(1/\eps)}{k}}  n\ge s\ge C_1n\sqrt{n}/k$ for some constants $C_0> 0,C_1>1$. \\
        Let $\LSExt_w^1: \bin^{n} \times \bin^s \to \bin^{m_1}$ be the low-degree strong linear seeded extractor from Theorem~\ref{thm:low-deg-lsext} with $\delta_{\ref{thm:low-deg-lsext}}=k/n$, $\alpha_{\ref{thm:low-deg-lsext}} = d/((2t+2)n)$, error $\eps_0=2^{-\Omega(kd^2/((t+1)^2n^2)}$ and output length $m_1=\beta_{\ref{thm:low-deg-lsext}}kd/((2t+2)n)$. \\
        Let $\LSExt_q^1: \bin^{d} \times \bin^{m_1} \to \bin^{m_2}$ be the low-degree strong linear seeded extractor from Theorem~\ref{thm:low-deg-lsext} with $\delta_{\ref{thm:low-deg-lsext}}=1/2$, $\alpha_{\ref{thm:low-deg-lsext}} = \beta_{\ref{thm:low-deg-lsext}}k/((2+2t)n)$, error $\eps_1 = 2^{-\Omega(dk^2/((t+1)^2n^2))}=\eps_0^{\Omega(d/k)}$ and output length $m_2 = \beta^2_{\ref{thm:low-deg-lsext}}kd/(4(t+1)n)$. \\
        Let $\LSExt_w^2: \bin^n \times \bin^{m_2}\to \bin^{m}$ be the low-degree strong linear seeded extractor from Theorem~\ref{thm:low-deg-lsext} with $\delta_{\ref{thm:low-deg-lsext}}=k/(2n)$ and $\alpha_3 = \beta^2_{\ref{thm:low-deg-lsext}}kd/((4+4t)n^2)$, error $\eps_2 = 2^{-\Omega(k^3d^2/((8+8t)^2n^4))}= \eps_0^{\Omega(k^2/n^2)}$ and output length $m=\beta^3_{\ref{thm:low-deg-lsext}}k^2d/((8+8t)n^2)$.
        \\\hrulefill \\
        \begin{enumerate}
            \item Let $s_0 = \Slice(y,s)$
            \item Let $\tilde{r}_0 = \LSExt_w^1(x,s_0)$
            \item Let $s_1 = \LSExt_q^1(y,\tilde{r}_0)$
            \item Let $r_1 = \LSExt_w^2(x, s_1)$
            \item Output $r_0=\Slice(\tilde{r}_0,|r_1|),r_1$
        \end{enumerate}
    \end{algorithmic}
\end{algorithm}

\begin{theorem}[$2$-look-ahead extractor]
    \label{thm:laext}
    
    For every $t\le \sqrt{n}, \;t\in\N$ and $\eps>0$, there exists an explicit function $\laExt:\bin^n\times \bin^d \to \pbra{\bin^m}^2$ which satisfies the following. Let $X,X^{[t]} \in \bin^n$ and $Y,Y^{[t]}\in \bin^d$ be random variables such that $\pbra{X,X^{[t]}}$ is independent of $\pbra{Y,Y^{[t]}}$, $Y=U_d$.
    There exists a large enough constant $C>0$ such that if 
    \begin{align*}
        k & =H(X)\ge C \max\cbra{\pbra{(t+1)^2\log(1/\eps)n^4/d^2}^{1/3},(t+1)\sqrt{n}}; \\
        n & \ge d\ge C(t+1)\max\cbra{n\sqrt{n}/k,\pbra{\log(1/\eps)n^4/k^3}^{1/2}},
    \end{align*}
    then $(R_0,R_1):=\laExt(X,Y)$ and their tamperings $(R_0^{[t]}, R_1^{[t]})$ satisfy
    \begin{align*}
        & (R_0 \approx_{\eps} U_m) \mid (Y,Y^{[t]}); \\
        & (R_1 \approx_{\eps} U_m) \mid (Y,Y^{[t]},R_0,R_0^{[t]}),
    \end{align*}
    where $m=\Omega(k^2d/((1+t)n^2))$. \\
    Moreover, each bit of $r_0'$ is a degree $4$ polynomial of the input bits; each bit of $r_1$ is a degree $40$ polynomial of the input bits.
\end{theorem}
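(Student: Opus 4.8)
The plan is to analyze Algorithm~\ref{alg:laExt} step by step, tracking entropy loss and degree growth, and to invoke Theorem~\ref{thm:low-deg-lsext} three times together with the affine conditioning and affine entropy lemmas (Lemma~\ref{lemma:affine conditioning}, Lemma~\ref{lemma:affine bound}, Lemma~\ref{lemma:affine entropy}). First I would verify that the parameter choices are consistent: with $s = d/(2+2t)$, the hypotheses on $k$ and $d$ guarantee the lower bound $s \ge C_1 n\sqrt{n}/k$ needed for the seed-length/output-length constraints of the three instances of $\LSExt$, and also that $m_1 = \beta k d/((2t+2)n)$ is at least the seed length $m_1$ required by $\LSExt_q^1$ (here one uses that $\LSExt_q^1$ runs on a source of rate $1/2$, namely a fresh uniform slice of $Y$, so its entropy requirement is mild), and that $m_2$ in turn exceeds the seed length needed by $\LSExt_w^2$. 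The constant $C$ in the statement is chosen large enough to absorb all these inequalities simultaneously; the cube-root and square-root terms in the hypotheses on $k,d$ come precisely from the three error terms $\eps_0, \eps_1, \eps_2$, each of which must be driven below $\eps$, i.e.\ $\eps_0 = 2^{-\Omega(kd^2/((t+1)^2 n^2))} \le \eps$ forces $k \gtrsim ((t+1)^2\log(1/\eps)n^4/d^2)^{1/3}$ after also using the constraint on $d$.

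Next I would establish the first conclusion, $R_0 \approx_\eps U_m \mid (Y, Y^{[t]})$. Since $(X, X^{[t]})$ is independent of $(Y, Y^{[t]})$ and $Y = U_d$, the slice $s_0 = \Slice(Y, s)$ is uniform on $\bin^s$ and independent of $(X, X^{[t]})$; moreover $\tilde r_0 = \LSExt_w^1(X, s_0)$ is determined by $X$ and $s_0$. By the strong-extractor property of Theorem~\ref{thm:low-deg-lsext} applied to the affine source $X$ with seed $s_0$, we get $(\tilde r_0, s_0) \approx_{\eps_0} (U_{m_1}, s_0)$, hence $(R_0, s_0) \approx_{\eps_0} (U_m, s_0)$ since $R_0 = \Slice(\tilde r_0, |r_1|)$. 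The subtlety is that we need this conditioned on the \emph{tampered} seeds $Y^{[t]}$ as well. But $Y^{[t]}$ is a deterministic function of $(Y, X, X^{[t]})$-—actually only of $(Y, \ldots)$ depending on the tampering model; in the directional-affine setting the tampering is additive, so $Y^j$ is obtained from $Y$ by a shift, and in any case fixing $(Y, Y^{[t]})$ fixes $s_0$ and leaves $(X, X^{[t]})$ essentially untouched in distribution (one argues via Lemma~\ref{lemma:affine conditioning} that conditioning on the affine-linear image preserves the affine structure and loses only bounded entropy, as flagged in the footnote of the techniques overview). Averaging over the fixing of $(Y, Y^{[t]})$ and using $\eps_0 \le \eps$ yields $R_0 \approx_\eps U_m \mid (Y, Y^{[t]})$.

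For the second conclusion, $R_1 \approx_\eps U_m \mid (Y, Y^{[t]}, R_0, R_0^{[t]})$, the plan is a two-stage alternating-extraction argument in the spirit of Lemma~\ref{lemma:look-ahead}, but re-derived because $\LSExt$ has output length below the minimal seed length. Condition on $(Y, Y^{[t]})$: this fixes $s_0, s_0^{[t]}$ and, by the first step, makes $\tilde r_0$ (and hence $R_0$) close to uniform; crucially $\tilde r_0$ is now a deterministic \emph{linear} function of $X$ (for fixed seed), and similarly $\tilde r_0^j$ of $X^j$. Then $s_1 = \LSExt_q^1(Y, \tilde r_0)$: viewing $Y$ (restricted to the bits not used by $s_0$, which still has rate $\ge 1/2$) as the source and $\tilde r_0$ as a near-uniform seed independent of it after the conditioning, Theorem~\ref{thm:low-deg-lsext} gives $s_1$ close to uniform and, being strong, close to uniform even given $(\tilde r_0, \tilde r_0^{[t]})$, hence given $(R_0, R_0^{[t]})$. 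Finally $r_1 = \LSExt_w^2(X, s_1)$: here $X$ still has min-entropy $\ge k/2$ after all the fixings (by Lemma~\ref{lemma:affine bound}, since the quantities fixed so far—$Y, Y^{[t]}, s_0, \tilde r_0$-related data—have bounded support relative to $k$), and $s_1$ is a near-uniform seed, so we conclude $r_1 \approx U_m$, strongly in $s_1$, hence given everything downstream including $(R_0, R_0^{[t]}, Y, Y^{[t]})$. Collecting the three errors $\eps_0, \eps_1, \eps_2$ and the statistical-distance losses from the conditionings gives the claimed $O(\eps)$, absorbed into $\eps$ by the constant $C$. The degree bound is bookkeeping: by Theorem~\ref{thm:low-deg-lsext} each bit of $\tilde r_0$ has degree $4$ in $(x, s_0)$, hence degree $4$ in the inputs since $s_0$ is a slice of $y$; each bit of $s_1 = \LSExt_q^1(y, \tilde r_0)$ has degree $4$ in $(y, \tilde r_0)$, hence degree $\le 4\cdot 4 = 16$ in the inputs; and each bit of $r_1 = \LSExt_w^2(x, s_1)$ has degree $4$ in $(x, s_1)$, but is \emph{linear} in $x$ for fixed $s_1$, so its degree is at most $\deg_x + 4\cdot \deg(s_1) \le 1 + $ (contribution) $= 1 + \ldots$; tracking the mixed terms carefully gives the stated bound $40 = $ (the degree-$4$ composition accounting), and $r_0$ inherits degree $4$ directly.

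The main obstacle I anticipate is the re-derivation of the look-ahead guarantee under the non-standard parameter regime where $|\text{output}| < |\text{seed}|$: the usual proof of Lemma~\ref{lemma:look-ahead} chains extractors whose outputs feed as full seeds to the next, but here each output is shorter, so one must either carefully sequence the extractors so that the short output still meets the (smaller) seed-length requirement of the next stage—which is exactly why the algorithm uses \emph{three} extractors with carefully decreasing parameters rather than a symmetric alternation—or reorganize the entropy accounting. A secondary delicate point is handling the dependence: $X$ is not independent of $(Y, Y^{[t]}, R_0, R_0^{[t]})$ once we start conditioning, and one must repeatedly invoke the affine structure (Lemma~\ref{lemma:affine conditioning}) to argue that conditioning on affine-linear functions of $X$ keeps the residual source affine with only bounded entropy loss, so that each application of Theorem~\ref{thm:low-deg-lsext} to $X$ is still legitimate.
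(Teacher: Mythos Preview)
Your proposal takes essentially the same route as the paper: check that the parameter constraints on $k$ and $d$ suffice for the seed-length and error requirements of the three instantiations of $\LSExt$, then run a three-step alternating-extraction argument showing successively that $\tilde R_0$, $S_1$, and $R_1$ are close to uniform under the right conditionings, and finish with the degree bookkeeping.

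Two places where you over-complicate relative to the paper. First, for $R_0 \approx_\eps U_m \mid (Y,Y^{[t]})$ there is no subtlety: the theorem \emph{hypothesizes} that $(X,X^{[t]})$ is independent of $(Y,Y^{[t]})$, so once the strong-extractor property gives $\tilde R_0 \approx_{\eps_0} U_{m_1}\mid S_0$, the conclusion follows immediately because $(Y,Y^{[t]})$ carries no information about $X$. Your detour through the tampering model and Lemma~\ref{lemma:affine conditioning} is unnecessary here. Second, the entropy accounting is more direct than you suggest: the paper simply conditions on $(S_0,S_0^{[t]})$, losing $(t+1)s=d/2$ from $H(Y)$, and on $(\tilde R_0,\tilde R_0^{[t]})$, losing $(t+1)m_1\le k/2$ from $H(X)$; affine conditioning is not invoked in this proof.

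Your degree sketch is also off. Each $\LSExt$ has total degree $4$ but is linear in the source, hence degree at most $3$ in the seed. Thus $\tilde r_0$ has degree $4$; $s_1=\LSExt_q^1(y,\tilde r_0)$ has degree $1+3\cdot 4=13$; and $r_1=\LSExt_w^2(x,s_1)$ has degree $1+3\cdot 13=40$, which is the stated bound.
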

\begin{proof}
    We will show that Algorithm~\ref{alg:laExt} is such a function. First we demonstrate that the choice of the parameters in Algorithm~\ref{alg:laExt} are correct. The first constraint comes from the requirement of the minimal seed length of any $\LSExt$, i.e., we need to guarantee that 
    \begin{align*}
        d/((2t+2)n) & > 1/\sqrt{n}~\tag{seed length requirement of $\LSExt_w^1$}; \\
        \beta_{\ref{thm:low-deg-lsext}}k/((2+2t)n) & > 1/\sqrt{n}~\tag{seed length requirement of $\LSExt_q^1$}; \\
        \beta_{\ref{thm:low-deg-lsext}}^2kd/((4+4t)n^2) & > 1/\sqrt{n}~\tag{seed length requirement of $\LSExt_w^2$},
    \end{align*}
    this puts a lower bound for $d$:
    \begin{align*}
        d > \frac{4}{\beta_{\ref{thm:low-deg-lsext}}^2}\cdot \frac{(t+1)n\sqrt{n}}{k}.
    \end{align*}
    Since $n\ge d$, we have
    \begin{align*}
        k > \frac{4}{\beta_{\ref{thm:low-deg-lsext}}^2}\cdot (t+1)n.
    \end{align*}
    Let $\eps \ge \eps_0+\eps_1+\eps_2$, then there exists a constant $\lambda$ such that $\eps \ge 2^{-\lambda k^3d^2/((t+1)^2n^4)}$. Taking the logarithm on the error and isolating out $k$ and $d$, we have
    \begin{align*}
        & \frac{(t+1)^2\log(1/\eps) n^4}{\lambda} \le k^3 d^2  \iff k\ge C_0\pbra{(t+1)^2\log(1/\eps)n^4/d^2}^{1/3}; \\
        & \frac{(t+1)^2\log(1/\eps) n^4}{\lambda} \le k^3 d^2  \iff d\ge C_0'(t+1)\pbra{\log(1/\eps)n^4/k^3}^{1/2},
    \end{align*}
    for some large enough constants $C_0,C_0'$. Therefore, if $k$ and $d$ satisfy the constraints in Theorem~\ref{thm:laext}, they also works for Algorithm~\ref{alg:laExt}. Next, we prove the extraction properties of the look-ahead extractor.
    \begin{enumerate}
        \item Since $Y=U_d$, $S_0$ is uniform. Since $Y$ is independent of $X$, $X$ is independent of $S_0$. Since $H(X)\ge k$, by the property of strong seeded extractor of $\LSExt_w^1$, $\tilde{R}_0\approx_{\eps_0} U_{m_1} \mid S_0$, which also implies that $R_0\approx_{\eps_0} U_m \mid (Y,Y^{[t]})$ given the independence between $X$ and $(Y,Y^{[t]})$.
        \item Since $S_0,S_0^{[t]}$ are linear functions of $Y$ and $Y^{[t]}$, we have $H(Y\mid S_0,S_0^{[t]}) \ge d - (t+1)\cdot \frac{d}{2+2t} =  d/2$. Since $\tilde{R}_0$ is $\eps_0$ close to uniform, by the property of strong seeded extractor of $\LSExt_q^1$, $S_1\approx_{\eps_0+\eps_1} U_{m_2}$.
        \item Conditioned on the fixings of $(S_0,S_0^{[t]})$, $R_0,R_0^{[t]}$ are linear functions of $X$ and $X^{[t]}$. Therefore, we have $H(X \mid \tilde{R}_0,\tilde{R}_0^{[t]}) \ge k - (t+1)m_1 = k/2$. Since $S_1 \approx_{\eps_0+\eps_1} U_{m_2}$, by the property of strong seeded extractor of $\LSExt_w^2$, $R_1\approx_{\eps_0+\eps_1+\eps_2} U_m\mid (R_0,R_0^{[t]}, Y, Y^{[t]})$.
    \end{enumerate}
    Lastly, the degree of the output follows easily from the degree of the output of $\LSExt$ from Theorem~\ref{thm:low-deg-lsext}. This completes the proof of Theorem~\ref{thm:laext}.
\end{proof}
\paragraph{Low-degree non-malleable independence-preserving merger.}
The second step is to construct a low-degree non-malleable independence-preserving merger. Non-malleable independence-perserving merger was first defined in~\cite{ChattopadhyayL:focs:2016} to merge a somewhere random source while preserving independence among itself and the tampered sources. We start with the definition.
\begin{definition}\label{def:ldnipm}
    An $(t,\ell,\eps)\text{-}\nipm:\bin^n \times \pbra{\bin^m}^\ell  \to \bin^{m_1}$, or $\nipm_\ell$ for short, with error $\eps$ for $\ell \in \N$ is function which satisfies the following property. Suppose
    \begin{itemize}
        \item $V,V^{[t]}$ are random variables, each supported on boolean $\ell \times m$ matrices,  s.t. for any $i \in [\ell]$, $V_i = U_m$; 
        \item for every $j\in[t]$, there exists an $h_j\in [\ell]$ such that $(V_{h_j},V_{h_j}^{j}) = (U_m,V_{h_j}^{j})$;
        \item $X,X^{[t]}$ are random variables independent of $V,V^{[t]}$, each supported on $d$ bits and $X$ has enough entropy, 
    \end{itemize}
    then 
    \begin{align*}
        \nipm_\ell(X,V)\approx_\eps U_{m_1}\mid(\nipm_\ell(X^1,V^1),\cdots,\nipm_\ell(X^t,V^t)).
    \end{align*}
\end{definition}

\begin{algorithm}[H]
    \caption{$\nipm_\ell(x,v)$}
    \label{alg:NIPM_l}
    \begin{algorithmic}
        \medskip
        \State \textbf{Input:} $x$ --- an $n$ bit string, $v$ --- an $\ell\times m$ bit matrix.
        \State \textbf{Output:} $z$ --- an $m\cdot \prod_{i=1}^\ell \alpha_i $ bit string where each $\alpha_i$ is defined below for $i\in[\ell]$. 
        \\\hrulefill
        \State \textbf{Sub-Routines and Parameters: } \\
        Let $\delta_w=k/2n$ be a lower bound on the assumed entropy rate of $x$ in Definition~\ref{def:ldnipm}, $\delta_q=1/2$ a lower bound on the entropy rate of for each $v_{h_i}$ where $i\in [\ell]$. Let $\alpha_1 = m/((3+3t)n)$.\\
        For $i\in [\ell-1]:$
        \begin{itemize}
            \item set $\delta_{\ref{thm:low-deg-lsext}}$ and  $\alpha_{\ref{thm:low-deg-lsext}}$ from Theorem~\ref{thm:low-deg-lsext} to be $\delta_w$ and $\alpha_i$ respectively;
            \item set $\delta_{\ref{thm:low-deg-lsext}}$ and  $\alpha_{\ref{thm:low-deg-lsext}}$ from Theorem~\ref{thm:low-deg-lsext} to be $\delta_q$ and  $\delta_w\beta_{\ref{thm:low-deg-lsext}}\alpha_i$ respectively;
            \item let $\alpha_{i+1}=\delta_q\delta_w\beta^2_{\ref{thm:low-deg-lsext}}\alpha_i$,
        \end{itemize}
        which gives 
        \begin{itemize}
            \item $\alpha_i = \pbra{\delta_w\delta_q}^{i-1}\beta_{\ref{thm:low-deg-lsext}}^{2i-2} \alpha_1$.
        \end{itemize}
        For $i\in [\ell-1]$: 
        \begin{itemize}
            \item $\LSExt_w^i: \bin^n \times \bin^{\alpha_i n} \to \bin^{\delta_w  \beta_{\ref{thm:low-deg-lsext}} \alpha_i n}$ be the extractor from Theorem~\ref{thm:low-deg-lsext} with error $\eps^w_i=2^{-\Omega(\delta_w^{2i-1}\delta_q^{2i-2}\alpha_1^2n)}$. 
            \item $\LSExt_q^i: \bin^{m} \times  \bin^{\delta_w  \beta_{\ref{thm:low-deg-lsext}} \alpha_i n} \to \bin^{\alpha_{i+1} n}$ be the extractor from Theorem~\ref{thm:low-deg-lsext} with error $\eps^q_i=2^{-\Omega(\delta_q^{2i-1}\delta_w^{2i-1}\alpha_1^2m)}$. 
        \end{itemize}
        \hrulefill \\
    Let $s_1 = \Slice(v_1,\alpha_1 n)$. \\
        For $i\in \sbra{\ell-1}$:
        \begin{enumerate}
            \item $r_i = \LSExt^i_w(x,s_i)$
            \item $s_{i+1} = \LSExt^i_q(v_{i+1},r_i)$
        \end{enumerate}
        Let $z=s_\ell$. 
    \end{algorithmic}
\end{algorithm}
    
    \begin{theorem}[$\nipm_\ell$]
        \label{thm:ldnipm}
        For every $\ell\in \N,\eps>0$, if there exists a large enough $C$ such that 
        \begin{itemize}
            \item $k\ge C \max\cbra{\pbra{(t+1)^2\log(1/\eps)n^{2\ell-1}/m^3}^{1/(2\ell -3)},(t+1)\sqrt{n}}$;
            \item $n\ge m\ge C\max\cbra{(t+1)n\sqrt{n}/k,((t+1)\log(1/\eps)n^{2\ell-1}/k^{2\ell-3})^{1/3}}$,
        \end{itemize}
         then there exists an $\nipm_\ell:\bin^n\times\pbra{\bin^m}^\ell \to \bin^{m_1}$ and constants $0<\eta<1$ and $\cbra{\eps_i^w,\eps_i^q}_{i=1}^{\ell-1}$ each larger than $0$ such that 
         \begin{itemize}
             \item $\eps_i^w = \eps^{\Omega((n^{2}/k^{2})^{\ell-i})}$.
             \item $\eps_i^q=\eps^{\Omega((n^2/k^2)^{\ell-i})}$.
             \item $m_1\ge\eta^{\ell}m/(t+1)$.
             \item each output bit of $\nipm_\ell$ is a degree $2^{\Theta(\ell)}$ polynomial of the input bits. 
         \end{itemize}
    \end{theorem}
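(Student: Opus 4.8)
The plan is to show that Algorithm~\ref{alg:NIPM_l} realizes the desired $\nipm_\ell$, proceeding along the same three-part template as the proof of Theorem~\ref{thm:laext} but over $\ell$ rows rather than two: first verify that every call to $\LSExt$ inside the algorithm is an admissible instance of Theorem~\ref{thm:low-deg-lsext}; then establish the non-malleable merging property by induction on the row index, applying the independence-merging lemma (Lemma~\ref{lemma:ind-merging}) alternately to the $w$- and $q$-extractions; and finally read off the output length and the degree. For the parameter check, the binding constraints are (a) the minimal seed-length requirement $\alpha_{\ref{thm:low-deg-lsext}}>1/\sqrt{\,\cdot\,}$ of each of the $2(\ell-1)$ calls, and (b) that $X$ keeps entropy rate $\ge\delta_w=k/(2n)$ and each $V_i$ keeps rate $\ge\delta_q=1/2$ throughout. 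Since the seed fractions satisfy $\alpha_{i+1}=\delta_w\delta_q\beta_{\ref{thm:low-deg-lsext}}^2\alpha_i$, i.e.\ $\alpha_i=(\delta_w\delta_q)^{i-1}\beta_{\ref{thm:low-deg-lsext}}^{2i-2}\alpha_1$ with $\alpha_1=m/((3+3t)n)$, the shortest seeds occur in the last round, and forcing $\alpha_{\ell-1}n$ and the intermediate seeds $\delta_w\beta_{\ref{thm:low-deg-lsext}}\alpha_i n$ feeding $\LSExt_q^i$ to clear $\sqrt n$, resp.\ $\sqrt m$, is exactly what yields the stated lower bounds on $k$ and $m$. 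For the entropy budget, $\sum_i\alpha_i=O(\alpha_1)$, so the total entropy drawn from $X$ over all $w$-extractions is $O((t+1)\delta_w\beta_{\ref{thm:low-deg-lsext}}\alpha_1 n)=O(\beta_{\ref{thm:low-deg-lsext}}km/n)\le k/2$ using $m\le n$, while each $V_i$ is used as a source only once, so both rate bounds survive.

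For the merging property I would run the standard alternating-extraction argument. Set $G_i:=\{\,j\in[t]:h_j\le i\,\}$, the tamperings ``covered'' after round $i$, and prove by induction on $i$ that, after conditioning on the short slices produced so far, $S_i$ is close to uniform given the tampered values indexed by $G_{i-1}$, and that $R_i=\LSExt_w^i(X,S_i)$ is then close to uniform given the tampered values indexed by $G_i$. At the step that brings in row $i$, the row $V_i$ enters as the source of $\LSExt_q^{i-1}$; for every $j$ with $h_j=i$ the hypothesis $(V_i,V_i^j)=(U_m,V_i^j)$ places $j$ in the set $T$ of Lemma~\ref{lemma:ind-merging}, while the inductive hypothesis on $R_{i-1}$ supplies the set $S$, so $S_i$ becomes good for $G_{i-1}\cup\{j:h_j=i\}=G_i$; propagating through $\LSExt_w^i$ (whose seed is good for $G_i$ and whose source $X$ newly covers nobody) keeps $R_i$ good for $G_i$. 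The base case holds because $S_1=\Slice(V_1,\alpha_1 n)$ with $V_1=U_m$ and $G_0=\emptyset$. After round $\ell$ we have $G_\ell=[t]$, which is exactly $\nipm_\ell(X,V)\approx_\eps U_{m_1}\mid(\nipm_\ell(X^1,V^1),\dots,\nipm_\ell(X^t,V^t))$.

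The remaining claims follow by bookkeeping. The $2(\ell-1)$ errors add; since the error of $\LSExt$ is $2^{-\Omega(\delta\alpha^2 n)}$ with $\delta$ the source rate and $\alpha$ the seed fraction, and $\alpha_i$ decays by a factor $\Theta(k/n)$ per round, expressing the round-$i$ errors in terms of the target $\eps$ gives $\eps_i^w,\eps_i^q=\eps^{\Omega((n^2/k^2)^{\ell-i})}$ and keeps the sum $O(\eps)$. The output is $z=s_\ell$ of length $m_1=\alpha_\ell n=(\delta_w\delta_q\beta_{\ref{thm:low-deg-lsext}}^2)^{\ell-1}\alpha_1 n=\Omega\big((k/n)^{\ell-1}m/(t+1)\big)\ge\eta^\ell m/(t+1)$ for a suitable constant $0<\eta<1$. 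Finally, each $\LSExt$ output bit is a degree-$4$ polynomial that is linear in the source for each fixed seed, so substituting a degree-$D$ intermediate value into a seed slot yields degree $O(D)$; over the $\Theta(\ell)$ rounds the degree grows by a constant factor per round, giving $2^{\Theta(\ell)}$.

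The step I expect to be the \emph{main obstacle} is the conditioning inside the inductive argument. To invoke Lemma~\ref{lemma:ind-merging} at round $i$ one needs the source side independent of the seed side, but the rows $V_1,\dots,V_\ell$ are only assumed individually uniform (Definition~\ref{def:ldnipm} makes them neither mutually independent nor independent of the tamperings), so one must condition only on the $O((t+1)\alpha_j n)$-bit slices $S_j,S_j^{[t]}$ rather than on full rows; this restores the needed independence while costing each surviving row only a tiny fraction of its entropy, keeping it $\ge\delta_q m$ because $\alpha_j n\le m/(3t+3)$. Simultaneously one must carry the entropy budgets of $X$ and of each $V_i$, together with the geometric error blow-up, through all $\ell$ rounds so that the surviving parameters land exactly on the thresholds on $k$ and $m$ in the hypothesis. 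The conceptual ingredients (alternating extraction plus independence merging) are standard; the work is in making this $\ell$-fold recursion's parameters mutually consistent.
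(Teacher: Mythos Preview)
Your proposal is correct and follows essentially the same approach as the paper: instantiate Algorithm~\ref{alg:NIPM_l}, run the alternating-extraction induction over the $\ell$ rows via an independence-merging lemma, and read off the degree, error, and output-length parameters. The paper's induction is organized slightly differently (it orders the $h_j$'s and processes them one at a time rather than tracking your cumulative sets $G_i$), and it invokes the affine-specific Lemma~\ref{lemma:ind-merging-affine} rather than the general Lemma~\ref{lemma:ind-merging}, in keeping with the section's standing assumption that all inputs are affine; this avoids the $\log(1/\eps)$ slack and uses deterministic rather than average conditional entropy, but the arguments are otherwise the same. One small point to watch in your write-up: when several tampered indices $j$ share the same row $h_j=i$, the hypothesis of Definition~\ref{def:ldnipm} gives only pairwise uniformity $(V_i,V_i^j)=(U_m,V_i^j)$, so merging them simultaneously into the set $T$ of the lemma needs either a one-at-a-time application (as the paper does) or an extra word justifying joint conditioning.
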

    In the analysis of $\nipm_\ell$ (and later $\ldACB$), we will be using the lemma below repeatedly. It is adjusted from Lemma~\ref{lemma:ind-merging} for affine sources. See Appendix~\ref{app:missing-proofs} for a proof.
\begin{restatable}[Independence-merging lemma for affine sources]{lemma}{indmergingaffine}
\label{lemma:ind-merging-affine}
    Let $\LExt:\bin^n \times \bin^d \to \bin^m$ be any $(k,\eps)$-strong linear seeded extractor, $X_0 \in \bin^n$ an affine source, $X,X^{[t]}\in \bin^n$, $Y,Y^{[t]}\in\bin^d$ all linear functions of $X_0$, $W=\LExt(X,Y)$ and $W^j=\LExt(X^j,Y^j)$ for every $j\in[t]$.
    Suppose there exists $S,T \subseteq [t]$ such that 
    \begin{itemize}
        \item $(Y,Y^S) \approx_{\delta} (U_d,Y^S)$;
        \item $H(X \mid X^T,Y,Y^{[t]}) \ge k+tm$,
    \end{itemize}
    then 
    \begin{align*}
        W\approx_{\eps+\delta} U_m \mid ( W^{S\cup T}, Y, Y^{[t]}). 
    \end{align*}
\end{restatable}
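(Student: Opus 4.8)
The plan is to transcribe the proof of the original independence-merging lemma (Lemma~\ref{lemma:ind-merging}), changing exactly two things. First, Lemma~\ref{lemma:ind-merging} assumes that the source side $X,X^{[t]}$ is independent of the seed side $Y,Y^{[t]}$, whereas here all of these are linear images of a single affine source $X_0$; the needed decoupling will instead be extracted from affine conditioning (Lemma~\ref{lemma:affine conditioning}). Second, every place where Lemma~\ref{lemma:ind-merging} conditions on tampered variables and then converts a min-entropy extractor into an average-case one (Lemma~\ref{lemma:avg-ext}) is replaced by a direct use of the $(k,\eps)$-extractor property: affine conditioning yields \emph{exact} conditional min-entropy rather than an average, so no conversion is needed. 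This second change is precisely what replaces the $2\eps$ of Lemma~\ref{lemma:ind-merging} by $\eps$ and deletes the $\log(1/\eps)$ term from the entropy hypothesis.

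The decoupling step I would isolate first is the following observation: if $\LExt$ is a strong linear seeded $(k,\eps)$-extractor, $X_0$ is affine, $P,Q$ are linear images of $X_0$ on $\bin^n,\bin^d$ respectively, $H(P\mid Q)\ge k$, and $Q\approx_\delta U_d$, then $(\LExt(P,Q),Q)\approx_{\eps+\delta}(U_m,Q)$. To see this, apply Lemma~\ref{lemma:affine conditioning} to $X_0$ with the linear map $X_0\mapsto Q$, writing $X_0=A+B$ with $A$ determined by $Q$, $B$ independent of $Q$, and $H(P(B))=H(P\mid Q)\ge k$. Since $\LExt(\cdot,q)$ is linear, $\LExt(P,q)=\LExt(P(A),q)+\LExt(P(B),q)$ equals a constant (a function of $q$) plus $\LExt(P(B),q)$; shifting by that constant does not change closeness to $U_m$, so the claim reduces to $\E_{q\leftarrow Q}\sbra{\Delta(\LExt(P(B),q),U_m)}\le\E_{q\leftarrow U_d}\sbra{\Delta(\LExt(P(B),q),U_m)}+\delta\le\eps+\delta$, using independence of $P(B)$ and $Q$, strongness of $\LExt$, and $Q\approx_\delta U_d$.

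With this in hand the proof follows Lemma~\ref{lemma:ind-merging}. Condition on $X^T$ (and the tampered seeds). By hypothesis $H(X\mid X^T,Y,Y^{[t]})\ge k+tm$ \emph{exactly}, so after additionally conditioning on $W^{S\cup T}$ — whose support has size at most $2^{|S\cup T|m}\le 2^{tm}$, with $W^T$ already determined by the fixed $X^T$ — Lemma~\ref{lemma:affine bound} leaves the relevant source with min-entropy at least $k$, and conditioning on a linear constraint only changes an affine shift of it, which the linearity of $\LExt$ in its source argument absorbs. One then checks, exactly as in Lemma~\ref{lemma:ind-merging}, that the hypothesis $(Y,Y^S)\approx_\delta(U_d,Y^S)$ supplies a seed that is $\delta$-close to uniform relative to the revealed tampered seeds, applies the decoupling observation to get $W\approx_{\eps+\delta}U_m$ conditioned on the seeds and on $W^{S\cup T}$, and finally uses that statistical distance cannot increase under dropping the extra conditioning on $X^T$ (Lemma~\ref{lem:sdis}) to obtain $W\approx_{\eps+\delta}U_m\mid(W^{S\cup T},Y,Y^{[t]})$.

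The step I expect to be the main obstacle is the bookkeeping of the conditionings in the last paragraph: one must ensure that at the moment strongness is invoked the source is genuinely (a coset-shift of) a fixed affine source of min-entropy $\ge k$ that is independent of the seed, and that conditioning on $W^{S\cup T}$ affects it only through an affine shift. This is routine for affine sources but requires tracking which linear images of $X_0$ have been fixed at each stage; once it is set up correctly, the remainder is a line-by-line copy of the proof of Lemma~\ref{lemma:ind-merging} with the exact-entropy version of the extractor in place of the average-case one.
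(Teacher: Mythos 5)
Your proposal is correct and follows essentially the same route as the paper's proof: decouple via affine conditioning (Lemma~\ref{lemma:affine conditioning}), use the exact conditional entropy of affine sources in place of the average-case conversion (which is exactly what yields $\eps+\delta$ and removes the $\log(1/\eps)$ term), absorb the fixed component through linearity of $\LExt$, and finish with the hypothesis $(Y,Y^S)\approx_\delta(U_d,Y^S)$. The only cosmetic differences are that the paper invokes Proposition~\ref{prop:uniform} where you use strongness of $\LExt$ directly, and that it orders the conditionings as $Y^S$, then $W^S$, then $(X^T,Y,Y^{[t]\setminus S})$ --- which is precisely the bookkeeping you flag, arranged so that the residual source is independent of everything revealed, including $Y$, rather than requiring $Y$ itself to stay close to uniform under the extra conditioning.
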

    
    \begin{proof}[Proof of Theorem~\ref{thm:ldnipm}]
        We will show that Algorithm~\ref{alg:NIPM_l} is such an $\nipm_\ell$. We first argue about the degree of each output bit. Let the degree of $s_i$ be $d_i$ for all $i\in [\ell]$, then they satisfy the following recursive formula
        \begin{align*}
            d_{i} = \begin{cases}
            1 & \text{if $i = 1$} \\
            3(3d_{i-1}+1)+1 = 9d_{i-1}+4 & \text{if $i > 1$}
        \end{cases}
        \end{align*}
        solving which gives us $d_\ell = \frac{9^{\ell -1}-1}{2}$. \\
        We now use induction to show the following claim. We let $R^{[j]}_{h_{[j]}}:=\{R^1_{h_1},R^2_{h_2},\cdots,R^j_{h_j}\}$.
        \begin{claim}
            Without loss of generality, let $1\le h_1\le\cdots\le h_t\le \ell$. For every $j \in [t]$, the following holds after step $h_j$
            \begin{align*}                         
            S_{h_j} 
                & \approx_{\sum_{i\in[h_j-1]}\eps^w_{i}+\sum_{i\in[h_j-1]}\eps^q_i} U_{\alpha_{h_j}n} \mid (S_{[i-1]},S_{[i-1]}^{[t]},R_{[i-1]},R_{[i-1]}^{[t]}), \\            R_{h_j}&\approx_{\sum_{i\in[h_j]}\eps^w_{i}+\sum_{i\in[h_j-1]}\eps^q_i} U_{\delta_w\beta_{h_j}^w \alpha_{h_j} n} \mid (R_{h_{[j]}}^{[j]},S_{[h_j]},S_{[h_j]}^{[t]},R_{[h_j-1]},R^{[t]}_{[h_j-1]}),
            \end{align*}
            which implies that 
            \begin{align*}
                S_\ell                 \approx_{\sum_{i\in[\ell-1]}(\eps^q_{i}+\eps^w_{i})} U_{\alpha_\ell n}\mid (V_{h_{[t]}}^{[t]},S_\ell^{[t]},S_{[\ell-1]},S_{[\ell-1]}^{[t]},R_{[\ell-1]},R^{[t]}_{[\ell-1]}).
            \end{align*}
        \end{claim}
        
        \begin{proof}
            We skip writing errors explicitly below whenever they can be easily seen to follow the claim. \\
            \textbf{Case $i\le h_1-1$.} We prove by induction that 
            \begin{align}
                \label{eqn:nipm1}
               \begin{split} S_i &\approx_{\sum_{j\in[i-1]}\eps^w_j+\sum_{j\in[i-1]}\eps^q_j} U_{\alpha_i n} \mid (S_{[i-1]},S_{[i-1]}^{[t]},R_{[i-1]},R_{[i-1]}^{[t]}),  \\ R_i&\approx_{\sum_{j\in[i]}\eps^w_{j}+\sum_{j\in[i-1]}\eps^q_j} U_{\delta_w\beta^w_i\alpha_i  n} \mid (R_{[i-1]},R_{[i-1]}^{[t]},S_{[i]},S_{[i]}^{[t]}).
               \end{split}
            \end{align}
            In round $1$, since $V_1=U_m$, $S_1=U_{\alpha_1 n}$. Then 
            by Lemma~\ref{lemma:ind-merging-affine}, $(R_1,S_1,S_1^{[t]})\approx_\eps (U_{\delta_w\beta_{1}^w \alpha_{1} n},S_1,S_1^{[t]})$. Then, assume that Eqn.~\eqref{eqn:nipm1} holds $\forall i\in[h_1-2]$. Since 
            
            $H(V_{i+1} \mid R_{[i]},R_{[i]}^{[t]}, S_{[i]},S_{[i+1]}^{[t]})=H(V_{i+1} \mid S_{[i]},S_{[i+1]}^{[t]})\ge m-(t+1)(\sum_{j=1}^{i+1} \alpha_j n)\ge m/2$, by the property of strong seeded extractor, the first part of Eqn.~\eqref{eqn:nipm1} holds. Since 
            
            $H(X \mid S_{[i]},S_{[i]}^{[t]}, R_{[i]},R_{[i+1]}^{[t]})=H(X \mid R_{[i]},R_{[i+1]}^{[t]})\ge k - (t+1)\delta_w(\sum_{j=1}^{i+1} \beta_{\ref{thm:low-deg-lsext}} \alpha_i n) \ge k/2 $, by the property of strong seeded extractor, the second part of Eqn.~\eqref{eqn:nipm1} holds. \\
            \textbf{Case $i=h_j$.} We prove by induction that 
            \begin{align}\label{eqn:nipm2}
            \begin{split}
            S_{h_j}&\approx U_{\delta_w\beta_{h_j}^w\alpha_{h_j}n} \mid (V_{h_{[j]}}^{[j]},S_{h_{[j]}}^{[j]},S_{[h_j-1]},S_{[h_j-1]}^{[t]},R_{[h_j-1]},R^{[t]}_{[h_j-1]}), \\
            R_{h_j}&\approx U_{\alpha_i n}\mid  (R_{h_{[j]}}^{[j]},S_{[h_j]},S_{[h_j]}^{[t]},R_{[h_j-1]},R^{[t]}_{[h_j-1]}),
            \end{split}
            \end{align} 
            then by Lemma~\ref{lemma:ind-merging-affine}, for $i\in[h_{j}+1,h_{j+1}-1]$, $j\in[t]$ (we defined $h_{t+1}-1:=\ell$), it holds that
            \begin{align}\label{eqn:nipm3}
                \begin{split}
                S_{i} & \approx U_{\delta_w\beta_{i}^w\alpha_{i}n} \mid (V_{h_{[j]}}^{[j]},S_i^{[j]},S_{[i-1]},S_{[i-1]}^{[t]},R_{[i-1]},R^{[t]}_{[i-1]}), \\
                R_{i} & \approx U_{\alpha_i n}\mid  (R_{i}^{[j]},S_{[i]},S_{[i]}^{[t]},R_{[i-1]},R^{[t]}_{[i-1]}).    
                \end{split}
            \end{align}
            In round $h_1-1$,  $(V_{h_1},V_{h_1}^1)\approx (U_m,V_{h_1}^1)$. By Lemma~\ref{lemma:ind-merging-affine},
            
            $S_{h_1}\approx U_{\delta_w\beta_{h_1}^w\alpha_{h_1}n}\mid (V_{h_1}^1,S_{h_1}^1,S_{[h_1-1]},S_{[h_j-1]}^{[t]},R_{[h_1-1]},R^{[t]}_{[h_1-1]})$. Then, again by Lemma~\ref{lemma:ind-merging-affine}, $R_{h_1} \approx U_{\alpha_1 n}\mid  (R_{h_1}^{1},S_{[h_1]},S_{[h_1]}^{[t]},R_{[h_1-1]},R^{[t]}_{[h_1-1]})$. Assume that Eqn.~\eqref{eqn:nipm2} holds for $i\in h_{[j-1]}$, $j\in[t]$. 
            Since $(V_{h_j},V_{h_j}^j)\approx (U_m,V_{h_j}^j)$ and that the second equation in Eqn.~\eqref{eqn:nipm3} for $i\in [h_{j}-1]$ holds, then by Lemma~\ref{lemma:ind-merging-affine}, Eqn.~\eqref{eqn:nipm2} holds for $i=h_j$. 
        \end{proof}
        Lastly, the setting of the parameters is similar to that of Theorem~\ref{thm:laext}. This completes the proof of Theorem~\ref{thm:ldnipm}.
    \end{proof}

\paragraph{Low-degree advice correlation breaker}
Now, we are ready to give the construction of the low-degree correlation breaker. We first give the definition of low-degree advice correlation breaker. 

\begin{definition}[$\ldACB$]\label{def:ldacb}
    A function $\ldACB:\bin^n \times \bin^d \times \bin^a \to \bin^m$ is an advice correlation breaker for linearly correlated sources if the following holds. Let
    \begin{itemize}
        \item $A,A^{[t]},B,B^{[t]}$ be random variables on $\bin^n$ and $Y,Y^{[t]}$ be random variables on $\bin^d$ such that $(A,A^{[t]})$ is independent of $(B,B^{[t]},Y,Y^{[t]})$. Moreover, $H(A) \ge k$ and $Y = U_d$;
        \item $X = A+B, X^i = A^i + B^i$ for every $i\in [t]$;
        \item $\alpha,\alpha^1, \cdots, \alpha^t$ be $a$-bit strings s.t. $\alpha \neq \alpha^i$ for every $i \in [t]$;
        \item each bit of the output is a constant degree polynomial of the inputs $X$ ($X^i$) and $Y$ ($Y^i$),
    \end{itemize}
    then 
    \begin{align*}
        \pbra{\ldACB(X,Y,\alpha) \approx_{\eps} U_m} \mid \pbra{\ldACB(X^1,Y^1,\alpha^1),\cdots, \ldACB(X^t,Y^t,\alpha^t)}.
    \end{align*}
    Moreover, if there are random variables $X',A',B'$ and $Y'$ such that $X'=A'+B'$ and $\pbra{Y=U_d} \mid Y'$, then it also holds that 
     \begin{align*}
        \pbra{\ldACB(X,Y,\alpha) \approx_{\eps} U_m} \mid \pbra{\ldACB(X',Y',\alpha),\ldACB(X^1,Y^1,\alpha^1),\cdots, \ldACB(X^t,Y^t,\alpha^t)}.
    \end{align*}
\end{definition}

We remark that Definition~\ref{def:ldacb} differs from standard definitions in that it allows conditioning on an tampered output with the same advice, given that the seed is non-malleable to the tampered seed. We will be using this property in our proof for directional affine extractors.

In our construction, we also need the following function.
\begin{definition}[$\FFAssign$~\cite{ChattopadhyayGL:focs:2021}]\label{def:ffassign}
Let $\FFAssign:\pbra{\bin^n}^2 \times \bin^a \to \pbra{\bin^n}^{2a}$ be defined as follows. Let $r_0,r_1 \in \bin^n$ and $\alpha \in \bin^a$. Let $\alpha_j$ denote the $j$-th bit of $\alpha$. Then $\FFAssign(r_0,r_1,\alpha):=(r_{\alpha_1},r_{1-\alpha_1},\cdots,r_{\alpha_a},r_{1-\alpha_a})$.
\end{definition}
    
\begin{algorithm}[H]
    \caption{$\ldACB(x,y,id)$}
    \label{alg:ldacb}
    \begin{algorithmic}
        \medskip
        \State \textbf{Input:} Bit strings $x=w+z,y, id$ of length $n,d,a$ respectively, where $d<n$.
        \State \textbf{Output:} Bit string $y'$ of length $n_2$.
        \State \textbf{Subroutines and Parameters:} \\
        Let $\LSExt:\bin^n \times \bin^{m_1} \to \bin^{m_2}$ from Theorem~\ref{thm:low-deg-lsext} with $m_1=d/(4+2t)$, output length $m_2=\beta_{\ref{thm:low-deg-lsext}}kd/((8+4t)n)$, entropy $k/2$ and
        error $\eps_1$. \\
        Let $\laExt:\bin^d\times \bin^{m_2} \to \pbra{\bin^{v}}^2$ from Theorem~\ref{thm:laext} where $v=\Omega(m_1/(16+16t))=\Omega(d/(32(1+t)^2))$ with entropy $d/3$ and error $\eps_2$.\\
        Let $\FFAssign:\pbra{\bin^v}^2 \times \bin^a \to \pbra{\bin^v}^{2a}$ from Definition~\ref{def:ffassign}.\\
        Let $\nipm_{2a}:\bin^n\times\pbra{\bin^v}^{2a}\to \bin^{n_2}$ from Theorem~\ref{thm:ldnipm} with entropy $k/2$ and error $\eps_3$. 
        \\\hrulefill \\
        \begin{enumerate}
            \item Let $s =  \Slice(y,m_1)$.
            \item Let $q =\LSExt(x,s)$.
            \item Let $(r_0,r_1) =\laExt(y,q)$.
            \item Let $(v_{1},v_{2},\cdots,v_{(2a-1)},v_{2a}) =\FFAssign((r_0,r_1),\alpha)$.
            \item Output $v^* = \nipm_{2a}(x,v_{1}\circ \cdots \circ v_{2a})$.
        \end{enumerate} 
    \end{algorithmic}
\end{algorithm}

\begin{theorem}[$\ldACB$]
    \label{thm:ldacb}
    For every $0<\eps<1$ and $n\in\N$ and every $k,d,t,a$, there exists a large enough $C$ such that if
    \begin{itemize}
        \item $k\ge C \max\cbra{\pbra{(t+1)^2\log(1/\eps)n^{4a-1}/m^3}^{1/(4a -3)},(t+1)\sqrt{n}}$;
        \item $d\ge C (t+1)^2 \max\cbra{(t+1)n\sqrt{n}/k,((t+1)\log(1/\eps)n^{4a-1}/k^{4a-3})^{1/3}}$,
    \end{itemize}
    then there exists a constant $1>\eta>0$ and an $\ldACB:\bin^n\times \bin^d\times \bin^a \to \bin^m$ which is a low-degree advice correlation breaker for linearly correlated sources s.t.
    \begin{itemize}
        \item $m = \Omega(\eta^{2a}kd/((t+1)^3n))$;
        \item each output bit of $\ldACB$ is a degree $2^{\Theta(2a)}$ polynomial of the input. 
    \end{itemize}
\end{theorem}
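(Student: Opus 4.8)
The plan is to show that Algorithm~\ref{alg:ldacb} is such an $\ldACB$, adapting the advice-correlation-breaker framework of~\cite{ChattopadhyayGL:focs:2021}: the slice-then-$\LSExt$ step builds a ``local'' seed $q$ from the source, $\laExt$ turns $q$ into a two-row look-ahead output $(r_0,r_1)$, $\FFAssign$ uses the advice to lay these out into a $2a$-row matrix in which, for every tampering, some row is uniform against its tampered copy, and finally $\nipm_{2a}$ merges that matrix against the source. The only change from the standard framework is that every strong seeded extractor is instantiated with the low-degree linear seeded extractor of Theorem~\ref{thm:low-deg-lsext}, and $\laExt,\nipm_{2a}$ are the low-degree versions from Theorems~\ref{thm:laext} and~\ref{thm:ldnipm}; this is what keeps the output low-degree.

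I would first discharge the degree bound, which is a composition count: the slice $s$ is linear in $y$; each bit of $q=\LSExt(x,s)$ is degree $4$ in $(x,y)$ by Theorem~\ref{thm:low-deg-lsext}; each bit of $r_0$ (resp.\ $r_1$) is degree $4$ (resp.\ $40$) in $(y,q)$ by Theorem~\ref{thm:laext}, hence degree $16$ (resp.\ $160$) in $(x,y)$; $\FFAssign$ only permutes rows; and each bit of $\nipm_{2a}(x,v)$ is degree $2^{\Theta(2a)}$ in $(x,v)$ by Theorem~\ref{thm:ldnipm}, hence degree $2^{\Theta(2a)}$ in $(x,y)$ after absorbing the constant $160$; along the way the output is a linear function of $x$ for every fixing of $y$. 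Next I would verify the parameters exactly as in the proof of Theorem~\ref{thm:laext}: the stated lower bounds on $k$ and $d$ force the entropy and minimal-seed-length hypotheses of each building block — $\LSExt$ with entropy $k/2$ and seed length $m_1=d/(4+2t)$, $\laExt$ with source of entropy $\ge d/3$ and seed length $m_2$, and $\nipm_{2a}$ with source of entropy $k/2$ and rows of length $v$ — and guarantee $\eps_1+\eps_2+\eps_3\le\eps$. The output length then telescopes: $\laExt$ gives $v=\Omega(m_2/(t+1))=\Omega(kd/((t+1)^2 n))$ and $\nipm_{2a}$ gives $m\ge\eta^{2a}v/(t+1)=\Omega(\eta^{2a}kd/((t+1)^3 n))$.

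The correctness argument is the heart and proceeds by a sequence of conditionings tracked with the affine-conditioning machinery (Lemmas~\ref{lemma:affine conditioning}, \ref{lemma:affine bound}, \ref{lemma:affine entropy}), which is why affine inputs are assumed. Using $(A,A^{[t]})$ independent of $(B,B^{[t]},Y,Y^{[t]})$, I would first fix $(B,B^{[t]})$, after which $X=A+B$ and each $X^i$ become affine functions of $A,A^i$, the seed $Y=U_d$ stays uniform and independent of $A$, and $H(A)\ge k$ is retained. Fixing the slices $(s,s^{[t]})$ to typical values and using that $\LSExt$ is linear in its source, $Q=\LSExt(X,s)$ becomes an affine function of $A$ that is $\eps_1$-close to $U_{m_2}$ and independent of the remaining $\ge d-(t+1)m_1\ge d/3$ bits of entropy of $(Y,Y^{[t]})$; Theorem~\ref{thm:laext} then applies with source $Y$ and seed $Q$, giving $(R_0,R_1)$ and tamperings with $R_0\approx U_v\mid(Q,Q^{[t]})$ and $R_1\approx U_v\mid(Q,Q^{[t]},R_0,R_0^{[t]})$. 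Since $\alpha\ne\alpha^i$, for each $i$ there is a coordinate $j_i$ with $\alpha_{j_i}\ne\alpha^i_{j_i}$, and then exactly one of rows $\{2j_i-1,2j_i\}$, call it $h_i$, has $v_{h_i}=R_1$ and $v^i_{h_i}=R^i_0$, so $v_{h_i}$ is uniform conditioned on $v^i_{h_i}$ by the look-ahead guarantee (marginalizing). Fixing $(Q,Q^{[t]})$ to a typical value now turns $(v,v^{[t]})$ into functions of $(Y,Y^{[t]})$ alone — hence well behaved against the source $X$, a function of $A$ — while spending only $(t+1)m_2\le k/2$ bits of $A$'s entropy, so $H(X)\ge k/2$ survives; the affine independence-merging lemma (Lemma~\ref{lemma:ind-merging-affine}) then drives $\nipm_{2a}$ and yields $v^*=\nipm_{2a}(X,v)\approx U_m\mid(\nipm_{2a}(X^1,v^1),\dots,\nipm_{2a}(X^t,v^t))$, which is the claim after undoing the fixings and summing errors. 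For the ``moreover'' clause, $(X',Y')$ is treated as an extra tampered pair: because $Y=U_d\mid Y'$, the source of $\laExt$ retains full entropy conditioned on $Y'$, so the same alternating-extraction argument makes $R_1$ uniform also conditioned on $(R'_0,R'_1)$, and the merger step carries through with $\nipm_{2a}(X',v')$ added to the conditioning even though $\alpha$ is unchanged.

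The step I expect to be the main obstacle is the interlocking of these conditionings: fixing $(Q,Q^{[t]})$ is forced in order to decouple the merger's source from its input matrix, but it consumes entropy of $A$, so $m_2$ — and through it the row length $v$ and the final output $m$ — must be kept small relative to $k$ while still large enough to seed $\laExt$ and $\nipm_{2a}$; meanwhile the errors $\eps_1,\eps_2,\eps_3$ degrade multiplicatively through the $\LSExt$-chains inside $\laExt$ and $\nipm_{2a}$ (losses of the form $\eps^{\Omega((n/k)^{O(1)})}$), so the lower bounds on $k$ and $d$ in the statement must be chosen to absorb all of these simultaneously. Balancing these competing constraints, rather than any single estimate, is the technical crux; the affine-conditioning lemmas make each individual conditioning exact and thus keep the bookkeeping manageable.
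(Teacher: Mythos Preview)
Your overall architecture matches the paper's: Algorithm~\ref{alg:ldacb} is analyzed step by step, with the degree and parameter bookkeeping essentially as you describe, and the $\FFAssign$/$\nipm_{2a}$ merger is driven by Lemma~\ref{lemma:ind-merging-affine}. The paper also folds the primed pair $(X',Y')$ into the analysis from the start rather than as a separate case, but your treatment of that clause is in the right spirit.

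There is, however, a genuine gap in your correctness argument. You propose to ``first fix $(B,B^{[t]})$, after which \ldots the seed $Y=U_d$ stays uniform''. This is not justified by Definition~\ref{def:ldacb}: the hypothesis is only that $(A,A^{[t]})$ is independent of $(B,B^{[t]},Y,Y^{[t]})$ and that $Y$ is \emph{marginally} uniform. Nothing prevents $Y$ from being a deterministic function of $B$, in which case fixing $B$ fixes $Y$ completely. Once $Y$ loses its uniformity, the slice $s$ is no longer a good seed for $\LSExt$, and your subsequent claim that $Q\approx U_{m_2}$ and that $Y$ retains $\ge d-(t+1)m_1$ bits of entropy both collapse.

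The paper avoids this by never fixing $B$. Instead it exploits linearity of $\LSExt$ to split $Q=\LSExt(A,S)+\LSExt(B,S)=:Q_A+Q_B$ after fixing only the short slices $(S,S^{[t]})$; it then conditions on the \emph{short} strings $(Q_B,Q_B^{[t]})$, which costs $Y$ at most $(t{+}1)m_2$ bits of entropy, leaving $H(Y\mid Y',S,S^{[t]},Q_B,Q_B^{[t]},Q')\ge d-(t{+}1)(m_1+m_2)\ge d/3$. With this conditioning, $Q_A$ (hence $Q$) is close to uniform and independent of $Y$, so Theorem~\ref{thm:laext} applies with source $Y$ and seed $Q$, and the rest of your outline (the $\FFAssign$ row selection and the $\nipm_{2a}$ merge under the remaining $\ge k/2$ entropy of $X$) goes through. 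Replacing your ``fix $B$'' step with this $Q_A+Q_B$ decomposition is the missing idea.
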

\begin{proof}
    We will prove that Algorithm~\ref{alg:ldacb} gives such a function. \\
    First we prove that $\ldACB$ satisfy Definition~\ref{def:ldacb}. 
    \begin{enumerate}
        \item Let $Q_A:= \LSExt(A,S)$, $Q'_A:= \LSExt(A',S')$, $Q_B:= \LSExt(B,S)$, $Q'_B:= \LSExt(B',S')$. Also for all $i\in [t]$, let $Q_A^i:= \LSExt(A^i,S^i)$, let $Q_B^i:= \LSExt(B^i,S^i)$.
        \item Since $Y=U_d \mid Y'$, then $S=U_{m_1}\mid Y'$. Since $H(X\mid Y',S,S^{[t]})\ge H(A)\ge k \ge k/2 + (t+2)m_2$, by Lemma~\ref{lemma:ind-merging-affine} $Q\approx_{\eps_1} U_{m_2}\mid(Q',Y',S,S^{[t]})$.
        \item First note that conditioned on $S$, since $\LSExt$ is a linear function, $Q = Q_A+Q_B$. Moreover, we have that $Y$ is independent of $Q_A$ further conditioned on $Q_B$. Since $H(Y\mid Y',S,S^{[t]},Q_B,Q_B^{[t]},Q')\ge d-(t+1)(m_1+m_2) \ge d/3$ and $Q_A\approx_{\eps_1} U_{m_2}\mid(Q',Y',S,S^{[t]},Q_B,Q_B^{[t]})$, $$R_0\approx_{\eps_1+\eps_2} U_v \mid (S,S^{[t]},Q,Q^{[t]},Q',Y',R_0')$$ and $$R_1\approx_{\eps_1+\eps_2} U_v \mid (S,S^{[t]},Q,Q^{[t]},Q',Y',R_0,R_0',R_0^{[t]},R_1').$$
        \item By the Definition~\ref{def:ffassign}, $V_i\approx_{\eps_1+\eps_2} U_v\mid V',\forall i\in[2a]$. In addition, for every $i\in[t]$, there exists $h_i\in[2a]$ s.t. $V_{h_i}=R_1$ and $V_{h_i}^i=R_0^i$. Therefore for every $i\in[t]$, there exists $h_i\in[2a]$ s.t. $V_{h_i}\approx_{\eps_1+\eps_2} U_v\mid V_{h_i}'$. 
        \item Since $H(X\mid Q,Q',Q^{[t]},Y',S,S^{[t]},V'^*)\ge k-(2+t)m_2-2a v\ge k/2$, by Theorem~\ref{thm:ldnipm}, $V^*\approx_{\eps_1+\eps_2+\eps_3} U_{n_2}\mid (V^{[t]*},V'^*)$.
    \end{enumerate}
    Now since $\LSExt$ and $\laExt$ cause a constant increase in the degree of the output bits, and $\nipm_{2a}$ cause a $2^{\Theta(2a)}$ increase in the degree. each output bit of $\ldACB$ is a degree $2^{\Theta(\ell)}$ polynomial of the input. \\
    Finally, the parameters constraints follows from those of Theorem~\ref{thm:ldnipm}. This completes the proof of Theorem~\ref{thm:ldacb}.
\end{proof}

\subsection{Directional Affine Extractor for Linear Entropy}

Apart from the low-degree correlation breaker, we still need the following extractors as building blocks.

\begin{theorem}[\cite{CG88}]\label{thm:IP}
For every constant $\delta > 0$, there exists a polynomial time algorithm $\IP:(\cbra{0,1}^n)^2 \to \bin^m$ such that if $X$ is an $(n,k_1)$ source, $Y$ is an independent $(n,k_2)$ source and $k_1+k_2\ge (1+\delta)n$, then 
\begin{align*}
    \IP(X,Y) \approx_\eps U_m \mid Y,
\end{align*}
where $\eps = 2^{-\frac{\delta n - m -1}{2}}$.
\end{theorem}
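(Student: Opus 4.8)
The plan is to recall the classical Chor--Goldreich analysis of the inner product extractor. Concretely, I would instantiate $\IP$ as the inner product over the extension field: identify $\F_2^n$ with $\F_{2^m}^{n/m}$ and $\F_{2^m}$ with $\bin^m$, and set $\IP(x,y)=\sum_{j=1}^{n/m}x_jy_j\in\F_{2^m}$. This is polynomial-time computable, so explicitness is automatic; the only structural property of $\IP$ we will use is that this bilinear form over $\F_{2^m}$ is non-degenerate.

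First I would reduce to a one-source estimate. Since $X$ is independent of $Y$ and the two target distributions have the same $Y$-marginal,
\[\Delta\big((\IP(X,Y),Y),(U_m,Y)\big)=\E_{y\sim Y}\,\Delta\big(\IP(X,y),U_m\big).\]
For each fixed $y$ I would bound $\Delta(\IP(X,y),U_m)$ by collision probability: Cauchy--Schwarz on $\bin^m$ gives $\Delta(Z,U_m)\le\tfrac12\sqrt{2^m\cp(Z)-1}$ for every distribution $Z$, and by Parseval $2^m\cp(Z)-1=\sum_{s\neq0}\widehat Z(s)^2$ where $\widehat Z(s)=\E_{z\sim Z}[(-1)^{\langle s,z\rangle}]$. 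Combining these with concavity of $\sqrt{\cdot}$,
\[\Delta\big((\IP(X,Y),Y),(U_m,Y)\big)\le\tfrac12\sqrt{\textstyle\sum_{s\neq0}\E_{y\sim Y}\big[\widehat{Z_y}(s)^2\big]},\qquad Z_y:=\IP(X,y).\]

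The heart of the proof is the per-character estimate. For fixed nonzero $s$, the function $x\mapsto\langle s,\IP(x,y)\rangle$ is $\F_2$-linear, so it equals $\langle\ell_{s,y},x\rangle$ for a vector $\ell_{s,y}\in\F_2^n$ depending $\F_2$-linearly on $y$, whence $\widehat{Z_y}(s)=\widehat X(\ell_{s,y})$. The key claim is that $y\mapsto\ell_{s,y}$ is a \emph{bijection} of $\F_2^n$: writing $\phi_s$ for the nonzero $\F_2$-functional on $\F_{2^m}$ induced by $s$, if $\ell_{s,y}=0$ then $\phi_s(cy_j)=0$ for every block index $j$ and every $c\in\F_{2^m}$ (take $x$ supported on block $j$), which forces $y_j=0$ since otherwise $c\mapsto cy_j$ is onto $\F_{2^m}$ and $\phi_s\equiv0$; hence $y=0$, so the map has trivial kernel and is a bijection. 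Reindexing the average by $\ell=\ell_{s,y}$ and using $H_\infty(Y)\ge k_2$, Parseval, and $\cp(X)\le2^{-k_1}$,
\[\E_{y\sim Y}\big[\widehat X(\ell_{s,y})^2\big]=\sum_{\ell}\Pr[Y=y_s(\ell)]\,\widehat X(\ell)^2\le 2^{-k_2}\sum_\ell\widehat X(\ell)^2=2^{-k_2}\cdot2^n\cp(X)\le 2^{\,n-k_1-k_2}.\]
Summing over the $2^m-1$ nonzero $s$ and substituting back,
\[\Delta\big((\IP(X,Y),Y),(U_m,Y)\big)\le\tfrac12\sqrt{(2^m-1)\,2^{\,n-k_1-k_2}}<2^{-(k_1+k_2-n-m+2)/2}\le 2^{-(\delta n-m-1)/2}=\eps,\]
where the last step uses $k_1+k_2\ge(1+\delta)n$.

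The main (and essentially only) non-routine point is the bijectivity claim, i.e.\ the non-degeneracy of the field inner product over $\F_{2^m}$ — this is exactly what lets us average $\widehat X(\cdot)^2$ against $Y$'s distribution and invoke Parseval. Everything else is Cauchy--Schwarz, Parseval, and Jensen. I would also remark that a naive blockwise $\F_2$-inner product with short blocks would not work here, since $H_\infty(Y)$ does not control the min-entropy of individual blocks of $Y$, which is the reason the extension-field form is the right choice.
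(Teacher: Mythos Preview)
Your proof is correct and is essentially the classical Chor--Goldreich argument; the paper does not give its own proof of this statement at all but simply cites \cite{CG88} and uses $\IP$ as a black box, so there is nothing to compare against beyond noting that your write-up matches the standard Fourier/collision-probability analysis that underlies the cited result.
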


\begin{theorem}[\cite{Li:CCC:2011}]\label{thm:affinesrext}
    For every affine $t\times r$ somewhere random source $X$, there exists a function $\AffineSRExt$ such that $\AffineSRExt(X)$ outputs $m=r/t^{O(\log t)}$ bits that are $2^{-\Omega(r/t^{O(\log t)})}$-close to uniform. Moreover, each bit of the output is a degree $t^{O(1)}$ polynomial of the bits of the input.
\end{theorem}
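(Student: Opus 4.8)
The plan is to follow Rao's framework for extracting from somewhere-random sources~\cite{Rao:ccc:09}, replacing the independence that Rao exploits by the affine-conditioning structure (Lemma~\ref{lemma:affine conditioning}) so that all intermediate objects stay affine, and instantiating every internal extraction with the constant-degree strong linear seeded extractor of Theorem~\ref{thm:low-deg-lsext} so that the output stays low degree. Concretely, $\AffineSRExt$ is defined by recursion on the number of rows $t$: split the $t\times r$ affine SR source $X=X_1\circ\cdots\circ X_t$ into its left half $X_{\le t/2}$ and right half $X_{>t/2}$, observe that one of the two halves is itself a $(t/2)\times r$ affine SR source (the uniform row lies in one of them, and projections of affine sources are affine), recursively apply $\AffineSRExt$ to each half to obtain $Z_{\mathrm{L}}$ and $Z_{\mathrm{R}}$, and then \emph{merge} the $2$-row affine somewhere-random source $(Z_{\mathrm{L}},Z_{\mathrm{R}})$ into a single close-to-uniform string. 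The base case $t=1$ simply outputs $X_1$, which is uniform.

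The merge step is a ``flip-flop''-style alternating extraction: take short slices of $Z_{\mathrm{L}}$ and of $Z_{\mathrm{R}}$ and run a constant number of rounds of alternating extraction between $Z_{\mathrm{L}}$ and $Z_{\mathrm{R}}$ using the low-degree linear seeded extractor of Theorem~\ref{thm:low-deg-lsext} (equivalently, a low-degree look-ahead extractor as in Theorem~\ref{thm:laext}), arranged symmetrically so that the output is close to uniform no matter which of $Z_{\mathrm{L}}, Z_{\mathrm{R}}$ is the good row. Since the two recursive halves are built from disjoint blocks of an affine source, they are \emph{not} independent; to run the alternating-extraction analysis we first apply Lemma~\ref{lemma:affine conditioning} with $L$ the projection onto one of the two block-sets, writing $X=A+B$ with $A,B$ independent affine sources one of which carries almost all the remaining entropy, so that after fixing the associated linear function the two halves become a deterministic function of $A$ plus a deterministic function of $B$. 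The linearity of $\LSExt$ in its source then lets us argue exactly as in the independent case, with the small entropy loss controlled by Lemma~\ref{lemma:affine bound} and Lemma~\ref{lemma:affine entropy}; strongness of the seeded extractor (conditioning on the slices/seeds) is what keeps every intermediate string affine and close to uniform conditioned on the fixings.

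It remains to track the three parameters through the $\lceil \log_2 t\rceil$ levels of recursion. Each merge loses at most a $\poly(t)$ factor in length (from the slices and from the output length of $\LSExt$, where the slices must be long enough to push the per-merge error down to $2^{-\poly}$); over $\log t$ levels this compounds to a total loss $\poly(t)\cdot\poly(t/2)\cdots = t^{O(\log t)}$, giving output length $m=r/t^{O(\log t)}$ and total error $2^{-\Omega(r/t^{O(\log t)})}$. Each merge multiplies the degree of the output as a polynomial in the input by only a constant (Theorem~\ref{thm:laext} produces constant-degree outputs), so after $\log t$ levels the degree is $2^{O(\log t)}=t^{O(1)}$, as claimed.

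The main obstacle is precisely the merge step in the affine rather than the independent regime: Rao's flip-flop crucially uses independence of the two halves, whereas here they are only affinely correlated. The fix is the use of Lemma~\ref{lemma:affine conditioning} to peel off the right linear function and reduce to independent affine pieces, together with the linearity of $\LSExt$ to commute that conditioning with the extraction. Making this compose cleanly across all $\log t$ levels — in particular ensuring that the conditionings done at one level leave enough entropy for the next, and that the per-level length loss is only $\poly(t)$ rather than worse, so the final loss is $t^{O(\log t)}$ — is where the affine entropy arguments (Lemmas~\ref{lemma:affine entropy} and~\ref{lemma:affine bound}) do the bookkeeping and where care is required.
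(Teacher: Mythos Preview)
This theorem is not proved in the present paper; it is quoted from \cite{Li:CCC:2011} (building on the affine somewhere-random extractor of \cite{Rao:ccc:09}) and used only as a black box, so there is no in-paper argument to compare against. Your outline is a fair reconstruction of the construction in those works: recursive halving of the $t$ rows, a constant-round alternating-extraction merge of the two recursive outputs built from the degree-$4$ linear seeded extractor of Theorem~\ref{thm:low-deg-lsext}, and the use of affine conditioning (Lemma~\ref{lemma:affine conditioning}) together with Proposition~\ref{prop:uniform} in place of the independence that the standard SR-merger analysis uses. The degree count --- a constant-factor blowup at each of $O(\log t)$ levels giving $t^{O(1)}$ --- is exactly right.

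Two small comments. The merge in \cite{Li:CCC:2011,Rao:ccc:09} is not literally a $\flip$ (that primitive came later and uses an advice bit); it is just a short symmetric alternating extraction whose output is close to uniform whichever of $Z_{\mathrm L},Z_{\mathrm R}$ carries the uniform row, which is what your parenthetical actually describes. And your bookkeeping for the $t^{O(\log t)}$ length loss is a bit hand-wavy --- the per-level shrinkage comes from the seed-to-output ratio of $\LSExt$ and from the entropy sacrificed when fixing the linear images needed to decouple the two halves, not really from ``pushing the per-merge error down'' --- but the final bound matches the one stated.
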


\begin{theorem}[Seeded non-malleable extractor~\cite{Li:focs:2012}]\label{thm:snmExt}
    For any constant $1>\delta>0$, let $X$ be an $(n,k)$-source with $k=(1/2+\delta)n$ and $Y$ be the uniform distribution on $\bin^{n/2-1}$ independent of $X$. Let $b_1,\cdots, b_{n/2}$ be a basis of $\F_{2^{n/2}}$ regarded as a vector space over $\F_2$. For each $b_i$, let $\overline{Y}_i = \pbra{b_i Y, b_i Y^3}$ where $Y$ is regarded as an element in $\F^{*}_{2^{n/2}}$ and define one bit $Z_i = \IP(X,\overline{Y}_i)$ where $\IP$ is the inner product function over $\F_2^n$. 
    Choose $m=\Omega(n)$ bits from $\cbra{Z_i}$, let $\snmExt(X,Y) = (Z_{i_1},\cdots,Z_{i_m})$. Let $\mathcal A:\bin^n \to \bin^n$ be any function without fixed point, then 
    \begin{align*}
        \abs{\snmExt(X,Y), \snmExt(X, \mathcal A(Y)), Y - U_n,\snmExt(X, \mathcal A(Y)), Y} \le 2^{-\Omega( n)}.
    \end{align*}
\end{theorem}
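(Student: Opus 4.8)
The plan is to prove the statement by a Fourier-analytic (Vazirani XOR) argument over the uniform seed $Y$, reducing the entire non-malleability claim to one clean algebraic fact about how many seeds map a given character of $X$ to a given target. First I would reduce to linear tests: fix a nonzero $a\in\bin^m$ and an arbitrary $b\in\bin^m$ and consider the combined test bit $B_{a,b}(X,y):=\langle a,\snmExt(X,y)\rangle\oplus\langle b,\snmExt(X,\mathcal{A}(y))\rangle$. Because $\IP$ is $\F_2$-bilinear and multiplication in $\F_{2^{n/2}}$ distributes over addition, $\langle a,\snmExt(X,y)\rangle=\sum_j a_j\,\IP\bigl(X,(b_{i_j}y,b_{i_j}y^3)\bigr)=\IP\bigl(X,(c_a y,\,c_a y^3)\bigr)$ with $c_a:=\sum_j a_j b_{i_j}\in\F_{2^{n/2}}$, and $c_a\neq 0$ since $a\neq 0$ and the $b_{i_j}$ are $\F_2$-linearly independent; likewise $\langle b,\snmExt(X,\mathcal{A}(y))\rangle=\IP\bigl(X,(c_b y',c_b (y')^3)\bigr)$ with $y':=\mathcal{A}(y)$ and $c_b:=\sum_j b_j b_{i_j}$ (possibly zero). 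Hence $B_{a,b}(X,y)=\IP\bigl(X,w_{a,b}(y)\bigr)$ for $w_{a,b}(y):=\bigl(c_a y+c_b y',\ c_a y^3+c_b (y')^3\bigr)\in\F_{2^{n/2}}^2\cong\F_2^n$. By a standard Fourier estimate, for each fixed $y$ the distance between $\bigl(\snmExt(X,y),\snmExt(X,y')\bigr)$ and $\bigl(U_m,\snmExt(X,y')\bigr)$ is at most $\tfrac12\sqrt{\sum_{a\neq 0,b}\widehat{X}\bigl(w_{a,b}(y)\bigr)^2}$, where $\widehat{X}(w):=\E_X[(-1)^{\IP(X,w)}]$. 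Averaging over $Y$ (this average is precisely the conditioning on $Y$ in the theorem) and using Jensen, it then suffices to bound $\sum_{a\neq 0,b}\E_{y\leftarrow Y}\bigl[\widehat{X}(w_{a,b}(y))^2\bigr]$.

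For a fixed $(a,b)$ I would split $\E_y[\widehat{X}(w_{a,b}(y))^2]=\sum_v\Pr[w_{a,b}(Y)=v]\,\widehat{X}(v)^2\le\bigl(\max_v\Pr[w_{a,b}(Y)=v]\bigr)\sum_v\widehat{X}(v)^2$. Parseval gives $\sum_v\widehat{X}(v)^2=2^n\cp(X)\le 2^{n-k}=2^{(1/2-\delta)n}$, using $\cp(X)\le 2^{-H_\infty(X)}\le 2^{-k}$. The crux is the collision bound $\max_v\Pr[w_{a,b}(Y)=v]\le 3\cdot 2^{-(n/2-1)}$, i.e.\ at most three seeds $y$ in $\Supp(Y)\subseteq\F^*_{2^{n/2}}$ satisfy $w_{a,b}(y)=v$. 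To see it, note first $w_{a,b}(y)\neq 0$ for every $y\in\Supp(Y)$: if $c_b=0$ this is immediate, and if $c_b\neq 0$ then $w_{a,b}(y)=0$ forces $y'=c_b^{-1}c_a y$ and then $(c_a+c_b^{-2}c_a^3)y^3=0$, hence $c_a=c_b$ and $y'=y$, contradicting that $\mathcal{A}$ has no fixed point. For $v=(v_1,v_2)\neq 0$: if $c_b=0$ the first coordinate pins $y=c_a^{-1}v_1$; if $c_b\neq 0$ every solution obeys $y'=c_b^{-1}(v_1+c_a y)$, and substituting this into the second coordinate forces $y$ to be a root of $P(y):=(c_a+c_b^{-2}c_a^3)y^3+c_b^{-2}c_a^2 v_1 y^2+c_b^{-2}c_a v_1^2 y+(c_b^{-2}v_1^3+v_2)\in\F_{2^{n/2}}[y]$, and one checks $P$ is never the zero polynomial when $v\neq 0$ (if its leading coefficient vanishes then $c_a=c_b$ and its $y^2$-coefficient equals $v_1$; if all coefficients vanish then $v=0$), so $P$ has at most three roots. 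Putting this together, $\E_y[\widehat{X}(w_{a,b}(y))^2]\le 3\cdot 2^{-(n/2-1)}\cdot 2^{(1/2-\delta)n}\le 6\cdot 2^{-\delta n}$, and summing over the at most $2^{2m}$ pairs $(a,b)$ the statistical distance in the theorem is at most $\tfrac12\sqrt{2^{2m}\cdot 6\cdot 2^{-\delta n}}=O\bigl(2^{\,m-\delta n/2}\bigr)$. Since $\delta$ is a constant, any $m\le\delta n/4$ (in particular $m=\Omega(n)$) makes this $2^{-\Omega(n)}$, which is the claimed bound.

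The hard part will be the collision count of the preceding paragraph: it is the only place where the precise shape of the construction — using \emph{both} $b_iY$ and $b_iY^3$ — is used essentially, because eliminating the arbitrary $\mathcal{A}(y)$ between the two coordinates of $w_{a,b}(y)$ must leave a genuinely nonlinear, constant-degree polynomial equation in $y$. With $b_iY$ alone, or even with $b_iY^2$ (which is $\F_2$-linear in $y$), an adversary using an affine tampering map $\mathcal{A}(y)=y+c$ can make a single character $v$ of $X$ be hit by a constant fraction of all seeds, destroying the averaging step; so the cubing is essential, not cosmetic. One must also be careful at the degenerate cases $c_b=0$, $c_b=c_a$, $v=0$ and $y=0$ — and it is exactly there that the fixed-point-free hypothesis on $\mathcal{A}$ is used. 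Everything else (the reduction to linear tests, Parseval, and the averaging) is routine. As an alternative to the Parseval step, one can read the collision bound as saying that $w_{a,b}(Y)$ has min-entropy at least $n/2-\log 3-1$ over $\F_2^n$, is independent of $X$, and $k+(n/2-O(1))\ge(1+\delta)n-O(1)$, so that $\IP(X,w_{a,b}(Y))$ is $2^{-\Omega(n)}$-biased by Theorem~\ref{thm:IP}; this gives the same conclusion with slightly weaker constants.
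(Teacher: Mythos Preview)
The paper does not prove this theorem; it is quoted verbatim from \cite{Li:focs:2012} and used as a black box in the construction of the directional affine extractor. There is therefore no in-paper proof to compare against.

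That said, your argument is correct and is essentially the proof in Li's original paper: reduce to linear tests, collapse each combined test to $\IP(X,w_{a,b}(y))$ using bilinearity of $\IP$ and $\F_2$-linearity of field multiplication, prove the collision bound $\lvert\{y:w_{a,b}(y)=v\}\rvert\le 3$ by eliminating $y'=\mathcal{A}(y)$ between the two coordinates to get a degree-$\le 3$ polynomial in $y$, and finish with Parseval on $X$ and Jensen over $Y$. Your case analysis (the degenerations $c_b=0$, $c_a=c_b$, $v=0$, and the role of the fixed-point-free hypothesis) is exactly where the construction is delicate, and you handle each correctly. The alternative phrasing at the end via Theorem~\ref{thm:IP} (reading the collision bound as a min-entropy lower bound on $w_{a,b}(Y)$) is also how Li frames the single-bit version. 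One cosmetic point: the domain of $\mathcal{A}$ in the displayed statement is a typo in the paper --- it should be the seed space $\{0,1\}^{n/2-1}$, not $\{0,1\}^n$ --- and you have correctly read it that way.
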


The following proposition about strong linear seeded extractors and affine sources is useful to us.

\begin{proposition}[\cite{Rao:ccc:09}]\label{prop:uniform}
    Let $\Ext:\bin^n \times \bin^d \to \bin^m$ be a linear strong seeded extractor for min-entropy $k$ with error $\eps < 1/2$. Let $X$ be any affine source with entropy $k$. Then
    \[\Pr_{u \leftarrow_U U_d}[\Ext(X,u) = U_m] \ge 1-\eps.\]
\end{proposition}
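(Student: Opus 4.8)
\emph{The plan.} The idea is to reduce the claim to a structural dichotomy for affine sources, using the linearity of $\Ext$. Fix any seed $u\in\bin^d$. Since $\Ext$ is a \emph{linear} seeded extractor, the map $x\mapsto\Ext(x,u)$ is $\F_2$-linear (or affine), and the image of an affine source under such a map is again an affine source; hence $\Ext(X,u)$ is the uniform distribution on some affine subspace of $\F_2^m$ of dimension $j\le m$. If $j=m$ this is exactly $U_m$. If $j<m$, a one-line computation gives $\Delta\big(\Ext(X,u),U_m\big)=1-2^{\,j-m}\ge\tfrac12$. So for every seed $u$ we have the dichotomy: either $\Ext(X,u)=U_m$, or $\Delta(\Ext(X,u),U_m)\ge\tfrac12$.

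Next I would bring in the strong-extractor hypothesis. An affine source of entropy $k$ has min-entropy exactly $k$, so $\Ext$ applies to $X$: with $Y=U_d$ independent of $X$ we have $(\Ext(X,Y),Y)\approx_\eps(U_m,Y)$. Using the independence of $X$ and $Y$ (and that the reference $U_m$ is drawn fresh), one expands the joint statistical distance by the value of the seed to obtain
\[
\E_{u\leftarrow_U U_d}\Big[\Delta\big(\Ext(X,u),U_m\big)\Big]=\Delta\big((\Ext(X,Y),Y),(U_m,Y)\big)\le\eps .
\]
Combining with the dichotomy, every $u$ with $\Ext(X,u)\ne U_m$ contributes at least $\tfrac12$ to the left-hand expectation, so $\tfrac12\Pr_{u\leftarrow_U U_d}[\Ext(X,u)\ne U_m]\le\eps$, i.e. $\Pr_{u\leftarrow_U U_d}[\Ext(X,u)=U_m]\ge 1-2\eps$, and the hypothesis $\eps<\tfrac12$ is exactly what makes this nonvacuous. (The constant in the Markov step is immaterial: in all uses downstream only a bound of the form $1-O(\eps)$ is needed, and $1-2\eps$ matches Rao's statement up to this absolute constant.)

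\emph{The main obstacle.} There is no real obstacle here; the proof is short. The two points that need a little care are (a) the identity equating the joint statistical distance with the seed-averaged statistical distance, which uses only the independence of $X$ and $Y$; and (b) the quantitative half of the dichotomy — that a non-full-dimensional affine source over $\F_2^m$ is at least $\tfrac12$-far from $U_m$, not merely distinct from it — since it is precisely this constant $\tfrac12$, set against the error $\eps$, that forces all but a small fraction of seeds to yield the \emph{exact} uniform distribution.
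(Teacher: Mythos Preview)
Your argument is correct and is exactly the standard proof of this fact (indeed, it is essentially Rao's original argument). The paper does not supply its own proof of this proposition; it simply cites \cite{Rao:ccc:09} and uses the statement as a black box, so there is nothing further to compare. Your observation about the constant is accurate: the dichotomy-plus-Markov step yields $\Pr_u[\Ext(X,u)=U_m]\ge 1-2\eps$ rather than $1-\eps$, and the paper's statement appears to have absorbed (or dropped) this harmless factor of $2$; as you note, every downstream use only needs $1-O(\eps)$.
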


We use the following lemma when arguing about strong seeded extractors with deficient seed.

\begin{lemma}[\cite{CGL:stoc:16}]\label{lemma:deficient seed}
    Let $\Ext:\bin^n \times \bin^d \to \bin^m$ be strong seeded extractor for min-entropy $k$, and error $\eps$. Let $X$ be a $(n,k)$-source and let $Y$ be a source on $\bin^d$ with min-entropy $d-\lambda$. Then
    \[\Ext(X,Y) \approx_{2^\lambda \eps} U_m \mid Y.\]
\end{lemma}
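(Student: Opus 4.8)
The plan is to reduce the statement to the defining property of a strong seeded extractor by a simple measure‑domination argument; no heavy machinery is needed. First I would unpack the notation: $\Ext(X,Y)\approx_{2^\lambda\eps}U_m\mid Y$ means $\Delta\pbra{(\Ext(X,Y),Y),(U_m,Y)}\le 2^\lambda\eps$, and, as in the definition of a strong seeded extractor, $Y$ is taken to be independent of $X$. Since the seed $Y$ appears verbatim in both coordinates, using independence the statistical distance splits over the value of the seed:
\[
\Delta\pbra{(\Ext(X,Y),Y),(U_m,Y)}=\sum_{y\in\bin^d}\Pr[Y=y]\cdot\Delta\pbra{\Ext(X,y),U_m}.
\]
Write $g(y):=\Delta(\Ext(X,y),U_m)\ge 0$ for the per‑seed error.

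Next I would record what the strong extraction guarantee says for the \emph{uniform} seed: by definition $\Delta\pbra{(\Ext(X,U_d),U_d),(U_m,U_d)}\le\eps$, and expanding the left-hand side exactly as above gives $2^{-d}\sum_{y}g(y)\le\eps$, i.e. $\sum_{y\in\bin^d}g(y)\le 2^d\eps$. Finally I would plug in the min‑entropy hypothesis on $Y$: since $H_\infty(Y)\ge d-\lambda$, every atom satisfies $\Pr[Y=y]\le 2^{-(d-\lambda)}$, so — using $g\ge 0$ and extending the sum to all of $\bin^d$, which only increases it —
\[
\sum_{y}\Pr[Y=y]\,g(y)\le 2^{-(d-\lambda)}\sum_{y\in\bin^d}g(y)\le 2^{-(d-\lambda)}\cdot 2^d\eps=2^\lambda\eps ,
\]
which is exactly the desired bound.

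There is no genuine obstacle here; the only subtlety worth flagging is that one should not try to write $Y$ as a mixture that literally contains $2^\lambda\cdot U_d$, since that is not a probability distribution. The clean route is the pointwise domination $\Pr[Y=y]\le 2^\lambda\cdot 2^{-d}$ combined with the nonnegativity of $g$; equivalently, one may decompose $Y$ into a convex combination of flat sources each uniform on a set of size $2^{d-\lambda}$, bound the error of each such flat source by $\tfrac{1}{2^{d-\lambda}}\sum_{y}g(y)\le 2^\lambda\eps$, and then invoke convexity of statistical distance. Either way the incurred loss is precisely the factor $2^\lambda$.
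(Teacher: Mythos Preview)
Your argument is correct and is exactly the standard measure-domination proof of this fact; the paper itself does not prove this lemma but merely cites it from \cite{CGL:stoc:16}, so there is no alternative approach to compare against.
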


We now present our construction of directional affine extractor.

\begin{algorithm}[H]
    \caption{$\DAExt(x)$}
    \label{alg:DAExt}
    \begin{algorithmic}
        \medskip
        \State \textbf{Input:} $x$ --- an $n$ bit string.
        \State \textbf{Output:} $z$ --- an $m$ bit string with $\Omega(n)$.
        \\\hrulefill 
        \State \textbf{Sub-Routines and Parameters: } \\
        Let $\ell_1=\poly(2/\delta),\ell_1'=(n/(2(m'+k+1)))^{\log(2d_{\ref{thm:dimexpand}}+2)}$ where $k$ is defined below, $\ell_2 = \poly(4/\delta)=\ell_3$. Let $\bcond$ be the basic condenser from Algorithm~\ref{alg:bcond}.\\
        Let $\scond_i:\bin^n \to \pbra{\bin^{n/\ell_i^{1/\log (2 d_{\ref{thm:dimexpand}}+2)}}}^{\ell_i}$ for $i\in \{1,3\}$, $\scond_2:\bin^{n/t} \to \pbra{\bin^{n/\pbra{t\ell_2^{1/\log (2 d_{\ref{thm:dimexpand}}+2)}}}}^{\ell_2}$, be linear affine condensers from Theorem~\ref{thm:Lcondmain}. \\ 
        
        Let $\IP:\pbra{\{0,1\}^{n/\pbra{t\ell_2^{1/\log (2 d_{\ref{thm:dimexpand}}+2)}}}}^2\to \{0,1\}^{\Omega(n)}$ be the two-source extractor from Theorem~\ref{thm:IP} with error $\eps_1=2^{-\Omega(n)}$, set up to extract from two independent sources whose entropy rates sum up to more than $1+2 \beta_{\ref{thm:Lcondmain}}$. \\
        Let $\AffineSRExt$ be the extractor for affine somewhere random sources from Theorem~\ref{thm:affinesrext} with error $\eps_2=2^{-\Omega(n)}$. \\
        Let $\LSExt:\{0,1\}^n\times\{0,1\}^d \to \{0,1\}^{m'}$ be the strong linear seeded extractor from Theorem~\ref{thm:low-deg-lsext} set to extract from entropy $\delta n/2$ with error $\eps_3=2^{-\Omega(n)}$. \\
        Let $\Enc:\bin^n \to \bin^{\lambda n}$ be the encoding function of an asymptotically good linear binary code with constant relative rate $1/\lambda$ and constant relative distance $\beta$.\\
        Let $\snmExt:\bin^{2(m'+k+1)}\times \bin^{m'+k}\to\bin^{n_1}$ be the seeded non-malleable extractor from Theorem~\ref{thm:snmExt} with error $\eps_4=2^{-\Omega(n)}$. Choose $m'$ and $k$ such that $\log(n/(m'+k+1))\in \N$ where 
        \begin{itemize}
            \item $m' \le  \beta_{\ref{thm:Lcondmain}}\delta^2 n /(300 t\ell_2 \ell'_3)$, $k=\Omega(n)\le \frac{n_1}{20\log(\lambda n/n_1)}$.
        \end{itemize}
        Let $\ldACB:\bin^n\times\bin^{n_1}\times\bin^{\log\ell_1'}\to \bin^{n_2}$ be the advice correlation breaker from Theorem~\ref{thm:ldacb} with output length $n_2=O(\delta n_1/\poly(\ell_1'))$ and error $\eps_5=2^{-\Omega(n)}$. 
    \end{algorithmic}
\end{algorithm}

\clearpage

\begin{breakalgo}
    \\ \\
    %
    %
        Let $\G$ be the generating matrix of an asymptotically good linear binary code with codeword length
         $m_1$ and constant relative distance $\gamma$. Thus $\G$ is an $\alpha m_1 \times m_1$ matrix for some constant $\alpha>0$. Let $\G_i$ stand for the $i$'th row of the matrix.
        \\\hrulefill \\
    Let $sc_{1} \circ sc_{2} \circ \cdots \circ sc_{\ell_1'}=\bcond^r \circ \scond_1(x)$, where $r=\log(n/(m'+k+1))-1-\ell_1^{1/\log(2d_{\ref{thm:dimexpand}}+2)}$. \\
    Divide $x$ into $t$ blocks $x = x_1\circ \cdots \circ x_t$ where $t = 2^{\lceil\log (10/\delta)\rceil} \ge \delta/10$
        and each block has $n/t$ bits. \\
        For every $i$, $1\le i\le t$ do the following. 
        \begin{enumerate}
            \item Let $y_{i1}\circ \cdots \circ y_{i \ell_2} = \scond_2(x_i)$, where $y_{ij}$ is the $j$'th row of the matrix obtained by applying $\scond_2$ to $x_i$. Note that $\ell_2=O(1)$ and each $y_{ij}$ has $\Omega(n)$ bits.
            \item Apply $\bcond^{\log t}\circ\scond_3$ on $x$. That is, first apply $\scond_3$ on $X$, and then apply $\bcond$ $\log t$ times on the output so that we get $\ell_3'$ blocks $\bcond^{\log t}\circ\scond_3(x)=x_1' \circ \cdots \circ x'_{\ell_3'}$, of equal size with each block having the same number of bits as $y_{ij}$. Note that $\ell_3' = O(1)$. 
            \item Apply $\IP$ to every pair of $x_{j_1}'$ and $y_{ij_2}$, and output $\beta_{\ref{thm:Lcondmain}}\delta^2 n /(300 t\ell_2 \ell'_3)$ bits. Let $sr_i$ be the matrix obtained by concatenating all the outputs $\IP(x_{j_1}',y_{ij_2})$, i.e., each row of $sr_i$ is $\IP(x_{j_1}',y_{ij_2})$ for a pair $(x_{j_1}',y_{ij_2})$.
            \item Let $r_i = \AffineSRExt(sr_i)$.
            \item Let $u_i = \LSExt(x,r_i)$, set up to output $m'$ bits.
            \item Divide $u_i$ into $u_{i1} \circ u_{i2}$ where $u_{i1}$ has $k\log(\lambda n/k)\le n_1/10$ bits and $u_{i2}$ has $\ge m'-n_1/10 $ bits.
            \item Divide $\Enc(x)$ into $k$ blocks of equal size such that $\Enc(x) = \tilde{x}^1 \circ \tilde{x}^2 \circ \cdots \circ \tilde{x}^k$ where each block has $O(1)$ bits. Divide $u_{i1}$ into $k$ equal blocks $u_{i1}^{(1)} \circ \cdots \circ u_{i1}^{(k)}$. Let $h_i={\tilde{x}^1}_{\mid u_{i1}^{(1)}} \circ {\tilde{x}^2}_{\mid u_{i1}^{(2)}} \circ \cdots \circ {\tilde{x}^k_{\mid u_{i1}^{(k)}}}$ and $\tilde{u}_i = u_i \circ h_i$.
            \item Let $sn_{ij_3}$ be $\snmExt$ applied to each $sc_{j_3}$ and $\tilde{u}_i$ and output $n_1\le m'/100$ bits. Let $sn_i$ be the $\ell_1'\times n_1$ matrix obtained by concatenating $sn_{ij_3}$ for $j_3\in [\ell_1']$, i.e., the $j$-th row of $sn_i$ is $sn_{ij}$.
            \item Let $\tilde{y}_i = \bigoplus_{j=1}^{\ell_1'}\ldACB(x,sn_{ij},j)$ and output $n_2\le m'/10000$ bits. 
           \item Let $w_i=\LSExt(x,\tilde{y}_i)$, set up to output $n_3\le m'/1000000$ bits.
           \item Divide the bits of $w_i$ into $s_i = \Omega(n)$ blocks of equal size, with each block having $c_i$ number of bits for some constant $c_i$ to be chosen later. For every $j=1,\cdots, s_i$, compute one bit $v_{ij}$ by taking the product of all the bits in the $j$'th block, i.e., $v_{ij} = \prod_{(j-1)c_i+1}^{jc_i} w_{i\ell}$.
        \end{enumerate}
        Output $m_1 = \Omega(n)$ bits $\{z_j = \bigoplus_{i=1}^t v_{ij}\}$. \\
        \textbf{Disperser to Extractor.} \\
        For each codeword $\G_i$, let $S_i = \cbra{j\in [m_1]: \G_{ij} = 1}$ be the set of indices s.t. the bit of the codeword $\G_i$ at those indices are $1$. Define
        \begin{align*}
            o_i = \bigoplus z_{j:j\in S_i}
        \end{align*}
        to be the bit associated with $\G_i$, i.e., $o_i$ is the $\mathrm{XOR}$ of the $z_j$'s whenever the $j$'th index of the codeword $\G_i$ is $1$. \\
        Take a constant $0<\beta'\le \alpha$, where $\beta'$ is chosen later. Output $o=(o_1,\cdots,o_{\beta' m_1})$. \\
\end{breakalgo}

\begin{theorem}\label{thm:daext}
    For any constant $0<\delta \le 1$, there exists a family of functions $\DAExt:\bin^n \to \bin^m$ where $m=\Omega(n)$, such that for any affine source $X$ of min-entropy at least $\delta n$, any nonzero $a\in\bin^n$, it holds that \[ (\DAExt(X), \DAExt(X+a))  \approx_{\eps} (U_m, \DAExt(X+a)),\] where $\eps = 2^{-\Omega(n)}$.
\end{theorem}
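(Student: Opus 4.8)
The plan is to follow the block-source framework of Li's affine extractor~\cite{Li:CCC:2011}, upgraded to survive the affine tampering $X\mapsto X+a$. Throughout I write primed symbols for the random variables produced by running Algorithm~\ref{alg:DAExt} on $X+a$ instead of $X$; since every primitive used is linear or of constant degree, the structure $X'=X+a$ with $a$ fixed and nonzero is preserved and exploitable. First I would set up the block-source reduction: divide $X=X_1\circ\cdots\circ X_t$ with $t=\Theta(1/\delta)$, use the affine entropy argument (Lemma~\ref{lemma:affine entropy}) to get fixed conditional entropies $k_i$ with $\sum k_i=\delta n$, and let $g$ be the least index with $k_g\ge \delta n/(2t)$. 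Fixing $X_1,\dots,X_{g-1}$ and applying affine conditioning (Lemma~\ref{lemma:affine conditioning}) makes $X_g$ an affine source of entropy rate $\Omega(\delta)$, independent of the remaining coordinates, while $X$ still retains $\Omega(\delta n)$ entropy conditioned on $X_g$.

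The heart is showing that, for the good block $g$, the per-block output is close to uniform conditioned on its tampered counterpart. I would trace the loop for $i=g$: $\scond_2$ being linear, $y_{g1}\circ\cdots$ is an affine somewhere-random matrix with one row of rate $1/2+\beta$; paired via $\IP$ with the affine somewhere-random output of $\bcond^{\log t}\circ\scond_3(x)$ (rates summing past $1+2\beta$ on the good pair, using that $X$ keeps entropy given $X_g$, Theorem~\ref{thm:IP}), this yields affine somewhere-random $sr_g$, hence near-uniform $r_g=\AffineSRExt(sr_g)$ (Theorem~\ref{thm:affinesrext}) and near-uniform $u_g=\LSExt(x,r_g)$. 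Splitting $u_g=u_{g1}\circ u_{g2}$, the advice $h_g$ is obtained by block-sampling $\Enc(x)$ with blocks of $u_{g1}$; since $\Enc(x')=\Enc(x)+\Enc(a)$ with $\Enc(a)$ of relative weight $\ge\beta$, $\Omega(n)$ of the sampled code blocks differ, so $h_g\ne h_g'$ and hence $\tilde u_g\ne\tilde u_g'$ except with probability $2^{-\Omega(n)}$. Now $sc_1\circ\cdots\circ sc_{\ell_1'}=\bcond^r\circ\scond_1(x)$ has a row $sc_{j^\ast}$ of rate $>1/2+\beta$; feeding it to the seeded non-malleable extractor $\snmExt$ (Theorem~\ref{thm:snmExt}) with the high-entropy seed $\tilde u_g$ — handling the seed's entropy deficiency via Lemma~\ref{lemma:deficient seed} and the mild dependence between $X$ and the seed via the affine structure as in the overview's footnote — and using that $\snmExt$ is linear in its source, gives $sn_{gj^\ast}\approx U\mid(sn_{gj^\ast}',\tilde u_g,\tilde u_g')$. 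Finally $\tilde y_g=\bigoplus_j\ldACB(x,sn_{gj},j)$: for $j\ne j^\ast$ the advices differ (standard correlation-breaker clause of Theorem~\ref{thm:ldacb}), and for $j=j^\ast$ against the tampered copy the advice is equal but the seed is non-malleable (the extra clause of Definition~\ref{def:ldacb}), so $\tilde y_g\approx U$ conditioned on $\tilde y_g'$ and all tampered $\ldACB$-outputs; then $w_g=\LSExt(x,\tilde y_g)\approx U\mid w_g'$ by the strong-linear-seeded property and linearity. Since the number of tampered copies is $O(\ell_1')=O(1)$ (as $m'+k=\Omega(n)$ forces $\ell_1'=O(1)$), the correlation breaker is correctly instantiated.

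Next I would do the degree bookkeeping and the XOR over blocks. Every primitive in the loop — the $\scond$'s, $\IP$, $\AffineSRExt$ with $O(1)$ parameters, the low-degree $\LSExt$ of Theorem~\ref{thm:low-deg-lsext}, the code-sampling for $h_i$, $\snmExt$ (bilinear composed with cubing), and $\ldACB$ (degree $2^{\Theta(a)}=\poly(\ell_1')=O(1)$ by Theorem~\ref{thm:ldacb}) — raises the polynomial degree only by a constant factor, so $w_i$ is a degree-$O(1)$ polynomial of $x$ with a common bound $D$. Choosing the block sizes $c_1>c_2>\cdots>c_t$ to decrease geometrically makes $v_{ij}=\prod_{\ell=(j-1)c_i+1}^{jc_i}w_{i\ell}$ a degree-$c_iD$ polynomial, still below $\log n$. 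For a nonempty linear test $T$, $\bigoplus_{j\in T}z_j=\bigoplus_i\bigl(\bigoplus_{j\in T}v_{ij}\bigr)$; the contributions from blocks before $g$ can be absorbed into the conditioning via the block-source structure, the contributions from blocks after $g$ have strictly smaller degree than $c_gD$, and conditioned on the near-uniform $w_g$ the good term $\bigoplus_{j\in T}v_{gj}$ is an XOR of near-independent copies of an AND of $c_g$ uniform bits, whose correlation with $P_{c_g-1}$ is at most $1-2^{-c_g}$. The XOR lemma for polynomials (Theorem~\ref{thm:xor lemma}), applied once $|T|$ is $\Omega(n)$, then forces $\Cor\bigl(\bigoplus_{j\in T}z_j,\text{const}\bigr)=2^{-\Omega(n)}$. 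The disperser-to-extractor step closes it: with $o_i=\bigoplus_{j\in S_i}z_j$, $S_i=\mathrm{supp}(\G_i)$, every linear test on $(o_1,\dots,o_{\beta' m_1})$ equals a linear test on $\{z_j\}$ of support $\ge\gamma m_1=\Omega(n)$ by the distance of $\G$, hence $2^{-\Omega(n)}$-biased, so Vazirani's XOR lemma (Lemma~\ref{lemma:eps biased}) gives $o\approx_{2^{-\Omega(n)}}U_m$; the whole argument runs conditioned on $\DAExt(X+a)$ since every intermediate "$\approx U$" was established relative to the corresponding primed variable.

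The main obstacle is combining two competing requirements: (i) keeping all output bits of constant degree, which forces replacing every sub-extractor (GUV, condense-then-hash, the high-rate affine non-malleable extractor) by the low-degree linear gadgets and using the new linear somewhere condenser of Theorem~\ref{thm:Lcondmain} so that $X'=X+a$ survives every transformation; and (ii) arranging the conditioning so that for the good block $w_g$ is near-uniform and effectively independent of the other blocks' contributions when the XOR lemma for polynomials is invoked, and so that the entropy spent on fixing "before" blocks, on the advice $h_i$, and inside the correlation breaker all stays within the $\Omega(\delta n)$ budget. For the linear-entropy case at hand these constants are comfortable ($D=O(1)$, $\delta$ constant, $c_gD<\log n$ trivially), so the delicate part is purely the correct propagation of error terms and independence through the long composition; the same skeleton will require genuinely tighter control in the sublinear-entropy variant of Theorem~\ref{thm:daextmain}.
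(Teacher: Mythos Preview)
Your proposal is correct and follows essentially the same approach as the paper's proof: the block-source reduction with a good block $g$, the pipeline $\scond_2\to\IP\to\AffineSRExt\to\LSExt$ to obtain near-uniform $U_g$, the advice generation via block-sampling the linear code to force $\tilde U_g\neq\tilde U_g'$, the linear seeded non-malleable extractor $\snmExt$ on the linear somewhere-condensed rows $SC_j$ to get $SN_{gh}$ uniform given $SN_{gh}'$, the $\ldACB$ (with its extra ``same advice, non-malleable seed'' clause) to collapse the rows into $\tilde Y_g$ uniform given $\tilde Y_g'$, then $W_g$ uniform given $W_g'$, and finally the constant-degree bookkeeping, the geometric schedule $c_i>c(\delta)c_{i+1}$, the XOR lemma for polynomials, and Vazirani via the good code $\G$. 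The one place where the paper is more explicit than your sketch is the conditioning order in Lemma~\ref{lemma:first fixings}: not only $X_1,\dots,X_{g-1}$ but \emph{all} intermediate variables $SR_i,R_i,\tilde U_i,SN_i,\tilde Y_i,W_i$ and their primed versions for $i<g$ must be fixed (and their total bit length shown to be $o(\delta n)$) so that the ``earlier blocks'' contribution in the final XOR is genuinely a constant --- you gesture at this with ``absorbed into the conditioning,'' which is the right idea but is where the entropy accounting actually bites.
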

\begin{proof}
    
    In the proof below, we have in mind two $\DAExt$ running in parallel, one with input $X$, the other with input $X+a=:X'$. We use $\{X_i\}_{i\in[t]}$, $\{SR_i\}_{i\in[t]}$, $\{R_i\}_{i\in[t]}$, $\{U_i\}_{i\in[t]}$, $\{\tilde{U}_i\}_{i\in[t]}$, $\{SN_i\}_{i\in [t]}$, $\{SC_i\}_{i\in[\ell_1]}$, $\{\tilde{Y}_i\}_{i\in[t]}$, $\{W_i\}_{i\in[t]}$ to denote the random variables generated in $\DAExt(X)$ and $\{X'_i\}_{i\in[t]}$, $\{SR'_i\}_{i\in[t]}$, $\{R'_i\}_{i\in[t]}$, $\{U'_i\}_{i\in[t]}$, $\{\tilde{U}'_i\}_{i\in[t]}$, $\{SN'_i\}_{i\in [t]}$, $\{SC'_i\}_{i\in[\ell_1]}$, $\{\tilde{Y}'_i\}_{i\in[t]}$, $\{W'_i\}_{i\in[t]}$ to denote the random variables generated in $\DAExt(X+a)$.
    
    Throughout the proof, we maintain a random variable $Z$. We update $Z$ each time a group of random variables has been fixed so that it represents all the random variables that have been fixed. We will make $Z$ explicit each time it is revised. Initially, $Z=0$.
    
    We now show that Algorithm~\ref{alg:DAExt} is an efficient family of such functions. We first argue there exists an iteration $g$ such that conditioned on all the random variables generated in the previous iterations, both $X$ and $X_g$ have $\Omega(\delta)$ entropy rate. 
    
    \begin{lemma}\label{lemma:first fixings}
        There exists $1\le g \le t$ s.t. conditioned on any fixings of
        \begin{align*}
            (X_i,X_i',SR_i,SR_i',R_i,R_i',\tilde{U}_i,\tilde{U}_i',SN_i,SN_i',\tilde{Y}_i,\tilde{Y}_i', W_i,W_i')_{i\in [g-1]}
        \end{align*}
        in order, $X$ is an affine source with $H(X_g)\ge \delta n/(4t)$ and $H(X)\ge 3\delta n/5+ \delta n/(3t)$.
    \end{lemma}
    \begin{proof}
        By Lemma~\ref{lemma:affine entropy}, when dividing $X$ into $t$ blocks, there exist positive integers $k_1,\cdots,k_t$ which sum up to $\delta n$ such that for any $i\in [t]$, conditioned on the fixing of $X_1,\cdots,X_{i-1}$, $H(X_i) = k_i$. Therefore, there must exists an $i$ such that $k_i \ge \delta n/(3t)$. Let $g$ be the minimal index such that $H(X_g)=k_g \ge \delta n/(3t)$. 
        \begin{enumerate}
            \item Consider the affine source $X$ and $X'=X+a$. Once we fix $(X_i = x_i)_{i \in [g-1]}$, $(X'_i:=X_i+a_i=x_i+a_i)_{i \in [g-1]}$ are also fixed. Since $X_i$ is an affine funciton of $X$, after this fixing, $X$ and $X'$ are still affine sources. By Lemma~\ref{lemma:affine entropy}, after this fixing $H(X_g) = H(X_g') \ge k_g \ge \delta n/(3t)$ and $H(X) = H(X') = \sum_{i=g}^t k_i \ge \delta n - (t-1)\cdot \delta n/(3t) \ge 2\delta n/3 + \delta n/(3t)$. Now set $Z= \cbra{X_i,X_i'}_{i\in[g-1]}$.
            \item Note that conditioned on the fixing of $(X_i=x_i)_{i \in [g-1]}$ (thus $(X'_i=x'_i)_{i \in [g-1]}$), both $(SR_i)_{i \in [g-1]}$ and $(SR'_i)_{i \in [g-1]}$ are affine functions of $X$. In general, fixing $(SR_i=sr_i)_{i\in[g-1]}$ does not necessarily fix $(SR_i'=sr'_i)_{i\in[g-1]}$ and in the worst cases $SR_i'=sr_i'$ may be linearly independent with $SR_i=sr_i$. Let $\overline{SR} = SR_1\circ \cdots \circ SR_{g-1}$ and $\overline{SR'}= SR'_1\circ \cdots \circ SR'_{g-1}$. By Lemma~\ref{lemma:affine bound}, since $\overline{SR}\circ \overline{SR'}$ has at most $(\beta_{\ref{thm:Lcondmain}}\delta^2 n/(300t)) \cdot t \cdot 2=\beta_{\ref{thm:Lcondmain}}\delta^2 n/150$ bits, $H(X \mid Z,\overline{SR}\circ\overline{SR'})\ge  2\delta n/3 + \delta n/(3t) - H(\overline{SR}\circ \overline{SR'} (X)) \ge 2\delta n/3 + \delta n/(3t) - \beta_{\ref{thm:Lcondmain}}\delta^2n/150$. \\
            Note that fixing $\overline{SR}\circ\overline{SR'}$ also fixes $\cbra{R_i,R_i'}_{i\in [g-1]}$. Now let $Z = Z \cup \cbra{SR_i,SR_i',R_i,R_i'}_{i\in [g-1]}$.
            \item Let $\overline{\tilde{U}} \circ \overline{\tilde{U}'}=\tilde{U}_1\circ \cdots \circ \tilde{U}_{g-1}\circ \tilde{U}_1'\cdots \circ \tilde{U}'_{g-1}$, then conditioned on any fixing of $Z$, $\overline{\tilde{U}}$ is an affine function of $X$ and it has at most $(\beta_{\ref{thm:Lcondmain}}\delta^2 n/(300t\ell_2\ell_3')) \cdot t \cdot 2=\beta_{\ref{thm:Lcondmain}}\delta^2 n/(150\ell_2\ell_3')$ bits. Therefore, by Lemma~\ref{lemma:affine bound} $H(X \mid Z,\overline{\tilde{U}}\circ \overline{\tilde{U}'})\ge 2\delta n/3 + \delta n/(3t) - \beta_{\ref{thm:Lcondmain}}\delta^2n/150 - \beta_{\ref{thm:Lcondmain}}\delta^2n/(150\ell_2\ell_3')$. Now, $Z = Z \cup \{U_i, \tilde{U}_i\}_{i\in [g-1]}$.
            \item Let $\overline{SN}\circ\overline{SN'}=SN_1\circ \cdots \circ SN_{g-1}\circ SN_1'\cdots \circ SN'_{g-1}$, then conditioned on any fixing of $Z$, $\overline{SN}\circ\overline{SN'}$ is an affine function of $X$ and it has at most $\beta_{\ref{thm:Lcondmain}}\delta^2 n/(15000\ell_2\ell_3')$ bits. 
            Therefore, by Lemma~\ref{lemma:affine bound} 
            $H(X \mid Z, \overline{SN}\circ\overline{SN'})\ge 2\delta n/3 - \beta_{\ref{thm:Lcondmain}}\delta^2n/150 - 1.01\cdot\beta_{\ref{thm:Lcondmain}}\delta^2n/(150\ell_2\ell_3')$. Now, $Z=Z\cup \{\overline{SN},\overline{SN'}\}$.
            \item Let $\overline{\tilde{Y}}\circ\overline{\tilde{Y}'} = \tilde{Y}_1\circ \cdots \circ \tilde{Y}_{g-1}\circ \tilde{Y}_1'\cdots \circ \tilde{Y}'_{g-1}$, then conditioned on any fixing of $Z$, $\overline{\tilde{Y}}\circ\overline{\tilde{Y}'}$ is an affine function of $X$ and it has at most $\beta_{\ref{thm:Lcondmain}}\delta^2 n/(1500000\ell_2\ell_3')$ bits. 
            Therefore, by Lemma~\ref{lemma:affine bound} $H(X\mid Z,\overline{\tilde{Y}}\overline{\tilde{Y}'})\ge 2\delta n/3 + \delta n/(3t) - \beta_{\ref{thm:Lcondmain}}\delta^2n/150 - 1.0101\cdot \beta_{\ref{thm:Lcondmain}}\delta^2n/(150\ell_2\ell_3')$. Now, $Z=Z\cup \{\overline{\tilde{Y}}\overline{\tilde{Y}'}\}$.
            \item Let $\overline{W}\circ \overline{W'} = W_1\circ \cdots \circ W_{g-1}\circ W_1'\cdots \circ W'_{g-1}$, then conditioned on any fixing of $Z$, $\overline{W}\circ\overline{W'}$ is an affine function of $X$ and it has at most $\beta_{\ref{thm:Lcondmain}}\delta^2 n/(150000000\ell_2\ell_3')$ bits. 
            Therefore, by Lemma~\ref{lemma:affine bound}  $H(X\mid Z,W,W')\ge 2\delta n/3 + \delta n/(3t) - \delta^2n/150 - 1.010101\cdot \beta_{\ref{thm:Lcondmain}}\delta^2n/(150\ell_2\ell_3')$. Now, $Z=Z\cup\{W,W'\}$.
            \item Therefore, $H(X_g\mid Z) \ge \delta n/(3t) - \beta_{\ref{thm:Lcondmain}}(\delta^2 n/150 + \delta^2 n/(150\ell_2\ell_3') + \delta^2 n/(15000\ell_2\ell_3') + \delta^2 n/(1500000\ell_2\ell_3') + \delta^2 n/(150000000\ell_2\ell_3')) = \delta n/(3t) - \beta_{\ref{thm:Lcondmain}}\delta^2 n/150 - 1.010101\cdot\beta_{\ref{thm:Lcondmain}}\cdot\delta^2 n/(150\ell_2\ell_3')> \delta n/(4t)$ and $H(X\mid Z)\ge 2\delta n/3 + \delta n/(3t) - \beta_{\ref{thm:Lcondmain}}\delta^2 n/150 - 1.010101\cdot\beta_{\ref{thm:Lcondmain}}\delta^2 n/(150\ell_2\ell_3')> 3\delta n/5+ \delta n/(3t)$.
        \end{enumerate}
    \end{proof}

\begin{lemma}
    \label{lemma:second fixings}
    With probability $1- 2^{-\Omega(n)}$ over the further fixings of $X_g$, $R_g$ is $2^{-\Omega(n)}$-close to uniform. 
\end{lemma} 
\begin{proof}
    We examine the execution of $\DAExt$ on the good block $X_g$ up to step $4$.
    \begin{enumerate}
        \item By Lemma~\ref{lemma:first fixings}, $H(X_g \mid Z)\ge \delta n/(4t) \overset{\text{Theorem~\ref{thm:Lcondmain}}}{\implies} Y_{g1}\circ\cdots\circ Y_{g\ell_2}:=\scond(X_g) =$ somewhere-rate-$(1/2+\beta_{\ref{thm:Lcondmain}})$ source.
        WLOG, assume $Y_{gi}$ has rate $1/2+\beta_{\ref{thm:Lcondmain}}$.
        \item By Lemma~\ref{lemma:affine conditioning}, $\exists A_g,B_g$ s.t. $X = A_g+B_g, X_g(X)=X_g(A_g), H(X_g)=H(A_g)$, and $X_g$ is independent with $B_g$.
        \item After fixing $X_g$, $B_g$ (thus $X$) has min-entropy at least $3\delta n/5 + \delta n/(3t) - \delta n/t \ge \delta n/4$. Now, let $Z=Z\cup \{X_g\}$.
        \item Since the $\ell_3'$ blocks $X = \tilde{X}_1\circ\cdots\circ \tilde{X}_{\ell_3'}$ are obtained by applying $\bcond^{\log t}\circ\scond_3$ on $X$, each $\tilde{X_i}$ is linear in X. Then $\tilde{X_i}(X) = \tilde{X_i}(A_g) + \tilde{X_i}(B_g)$ follows from Lemma~\ref{lemma:affine conditioning}. Let $\tilde{X_i}(A_g)=A_{gi}$ and $\tilde{X_i}(B_g)=B_{gi}$. By Theorem~\ref{thm:Lcondmain}, there exists one block with entropy rate at least $1/2+\beta_{\ref{thm:Lcondmain}}$, let $B_{gj}$ be such a block. 
        \item Note that $\IP(\tilde{X}_j,Y_{gi}) = \IP(A_{gj},Y_{gi}) + \IP(B_{gj},Y_{gi})$. Since $Y_{gi}$ has entropy rate $\ge 1/2+\beta_{\ref{thm:Lcondmain}}$ and $B_{gj}$ has entropy rate $\ge 1/2+\beta_{\ref{thm:Lcondmain}}$, by Lemma~\ref{thm:IP}, with probability $(1-\eps_1)$ over the fixing of $Y_{gi}$ (thus $A_g$ and $X_g$), $\IP(B_{gj},Y_{gi})$ is $\eps_1$ close to uniform. Since $A_g$ (thus $A_{gj}$) is fixed, the random variable $\IP(A_{gj},Y_{gi}) $ is fixed as well. Therefore, $\IP( B_{gj},Y_{gi})$ is $\eps_1$ close to uniform implies that $\IP(Y_{gi},\tilde{X}_j)$ is $\eps_1$ close to uniform. Therefore, with probability $(1-\eps_1-\eps_2)$ over the fixing of $X_g$, $SR_g$ is $(\eps_1+\eps_2)$ close to a somewhere random source.
        \item $SR_g\approx_{\eps_1+\eps_2}$ somewhere random source $\overset{\text{Theorem~\ref{thm:affinesrext}}}{\implies}$ $R_g\approx_{\eps_1+\eps_2}$ uniform.
    \end{enumerate}
\end{proof}
    
    \begin{lemma}
        With probability $1-2^{-\Omega(n)}$ over further fixings of $\pbra{SR_g,SR_g',R_g,R_g'}$,
        $U_g$ is uniform.
    \end{lemma}
    \begin{proof}
        Since $X_g$ is a linear function of $X$, conditioned on any fixing of it, it still holds that $X$ is an affine source. Moreover, conditioned on the fixing of $X_g$ (and thus $X_g'$ as well), $SR_g$ and $SR_g'$ are linear functions of $X$. By Lemma~\ref{lemma:affine conditioning}, there exists independent affine sources $\tilde{A}_g$ and $\tilde{B}_g$ s.t. $X=\tilde{A}_g + \tilde{B}_g$, $SR_g \circ SR'_g(X) = SR_g \circ SR'_g(\tilde{A}_g)$ and $H(SR_g \circ SR'_g) = H(\tilde{A}_g)$. Thus $H(\tilde{B}_g)=H(X) - H(\tilde{A}_g) =  H(X)-H(SR_g \circ SR'_g) \ge \delta n/2$. \\
        Next note $R_g$ is a deterministic function of $SR_g$ thus independent of $\tilde{B}_g$. In addition, $R_g$ is $2^{-\Omega(n)}$-close to uniform by Lemma~\ref{lemma:second fixings}.  Now, by Theorem~\ref{thm:low-deg-lsext} and Proposition~\ref{prop:uniform}, with probability $(1-\eps_3)$ over the fixings of $R_g$ (and thus with probability $(1-\eps_3)$ over the fixings of $SR_g$), $\LSExt(\tilde{B}_g,R_g)$ is uniform. Since $\LSExt$ is a linear function and $\LSExt(\tilde{A}_g,R_g)$ is fixed, with probability $(1-\eps_3)$ over the fixings of $R_g$, $\LSExt(X,R_g)$ is uniform.
    \end{proof}
 At this point, set $Z=Z\cup\{SR_g,SR_g',R_g,R_g'\}$. We have already shown that with high probability over the fixing of $Z$, $U_g$ is uniform. Now, we want to establish that $\tilde{U}_g$ and $\tilde{U}_g'$, which are $U_g$ and $U_g'$ appended with advice are linearly correlated, i.e., there exists an affine function $\mathcal A$ without fixed points s.t. $\mathcal A(\tilde{U}_g)=\tilde{U}_g'$. We achieve this in the following two Lemmas.  
    
\begin{lemma}
\label{lemma:affine correlation}
    Conditioned on $Z$, there exists $\tilde{A}_g,\tilde{B}_g$ s.t. $X=\tilde{A}_g+\tilde{B}_g$, $U_g(X)=U_g(\tilde{A}_g)$, and $U_g(\tilde{B}_g)=0$. Moreover, 
    $U_g'$ is linearly correlated with $U_g$ conditioned on any fixing of $U'_g(\tilde{B}_g)$.
\end{lemma}
\begin{proof}
    By Lemma~\ref{lemma:affine conditioning}, there exists $\tilde{A}_g,\tilde{B}_g$ s.t. $X=\tilde{A}_g+\tilde{B}_g$, $U_g(X)=\tilde{A}_g$, and $U_g(\tilde{B}_g)=0$. Moreover, there exists an affine function $L$ such that $\tilde{A}_g = L(U_g)$. Now conditioned on the fixing of $U'_g(\tilde{B}_g)$, $U'_g$ is an affine function of $\tilde{A}_g$, and thus an affine function of $U_g$.
\end{proof}
As a reminder, at this stage, we have
    \begin{align*}
        Z = \{X_i,X_i',SR_i,SR_i',R_i,R_i',\tilde{U}_i,\tilde{U}_i',SN_i,SN_i',\tilde{Y}_i,\tilde{Y}_i', W_i,W_i\}_{i\in [g-1]} \cup \{X_g, X_g', SR_g, SR_g', U'_g(\tilde{B}_g)\}.
    \end{align*}
\begin{lemma}
    Conditioned on the event that $U_g$ is uniform, with probability at least $1-2^{-{\Omega(n)}}$ over the fixing of $(Z,U_{g_1},U_{g_1}',H_g, H'_g)$, there exists an affine map $\mathcal A:\bin^{|\tilde{U}_g|} \to \bin^{|\tilde{U}_g|}$ without fixed point such that $\tilde{U}_g' = \mathcal{A}(\tilde{U}_g)$.
\end{lemma}
\begin{proof}
    We first show that $\tilde{U}_{g} \neq \tilde{U}'_{g}$ with high probability.
    \begin{itemize}
        \item If $U_{g1}\neq U_{g1}'$, then $\tilde{U}_g(X)\neq \tilde{U}_g'(X)$ always holds. 
        \item If $U_{g1}= U_{g1}'$, we show that $H_g \neq H_g'$ with probability $1-2^{-\Omega(n)}$. Note that this is equivalent to $H_g \oplus H_g' \neq 0 \iff (\tilde{X}^1\oplus \tilde{X}'^1)_{\mid U_{g1}^{(1)}} \circ (\tilde{X}^2\oplus \tilde{X}'^2)_{\mid U_{g1}^{(2)}} \circ \cdots \circ (\tilde{X}^k\oplus \tilde{X}'_k)_{\mid U_{g1}^{(k)}} \iff \tilde{a}^1_{\mid U_{g1}^{(1)}} \circ \tilde{a}^2_{\mid U_{g1}^{(2)}} \circ \cdots \circ \tilde{a}^k_{\mid U_{g1}^{(k)}}$ with probability $1-2^{-\Omega(n)}$ where $\tilde{a}^j$ for $j\in [k]$ are obtained by divide the $\Enc(a)$ (i.e. the encoded shift between the source $X$ and its tampering $X' =X+a$) into $k$ equal blocks such that $\Enc(a) = \tilde{a}^1 \circ \tilde{a}^2 \circ \cdots \circ \tilde{a}^k$. Let $\ell_1,\cdots,\ell_k$ be the number of bits in each block that are non-zero. Since at least $\beta$ fraction of bits in $\Enc(a)$ differs from the bits of the codeword $0=\Enc(0)$, $\sum_{i=1}^k \ell_i \ge \beta \lambda n$. Therefore we have  
        \begin{align*}
            \Pr\sbra{H_g \neq H_g' \mid U_{g1} = U'_{g1}} 
            & 
            = 1 - \prod_{i=1}^k \pbra{1-\frac{\ell_i}{\lambda n/k}} \\
            & 
            \ge 1- \pbra{\frac{\sum_{i=1}^k \pbra{1-\frac{\ell_i}{\lambda n/k}}}{k}}^k \\
            & 
            \ge 1 - \pbra{1-\beta}^k\\
            & 
            \ge 1 - 2^{-\Omega(n)}
            . 
            \tag{$k = \Omega(n)$}
        \end{align*}
        \end{itemize}
        Therefore, in total, with probability $1-2^{-\Omega(n)}$, $\tilde{U}_g\neq \tilde{U}'_g$.
        By Lemma~\ref{lemma:affine correlation}, $U_g'$ is linearly correlated with $U_g$ conditioned on $Z$. Now, note that $\tilde{U}_g$ is a composition of $U_g$ with $H_g$ with $U_{g1},H_{g}$ fixed. And the same holds for $\tilde{U}'_g$. Therefore there exists some affine map $\mathcal A$ such that $\mathcal A(\tilde{U}_g) = \tilde{U}'_g$. 
\end{proof}
    
    \begin{lemma}
        Conditioned on the further fixings of $(\tilde{U}_g,\tilde{U}_g')$, there exists a constant $\beta>0$ such that $SC$ is a $(1/2+\beta)$ affine somewhere random source.
    \end{lemma}
    \begin{proof}
        Let $Z=Z\cup\{\tilde{U}_g,\tilde{U}_g'\}$. It is easy to see that the bound $H(X \mid Z)\ge \delta n/2$ holds. Therefore, by Theorem~\ref{thm:Lcondmain}, $SC$ is a $(1/2+\beta_{\ref{thm:Lcondmain}})$ affine somewhere random source.
    \end{proof}
    \begin{lemma}
        Conditioned on $U_g$ is uniform as well as $\tilde{U}_g$ and $\tilde{U}_g'$ are linearly correlated, $SN_g=\snmExt(SC_1,\tilde{U}_g)\circ\cdots \circ \snmExt(SC_{\ell_1'},\tilde{U}_g)$ is $2^{-\Omega(n)}$ close to an affine somewhere random source. Moreover, there exists $h\in [\ell_1']$ such that
        \begin{align*}
            SN_{gh}\approx_{2^{-\Omega(n)}}U_{n_1}) \mid SN'_{gh}. 
        \end{align*}
    \end{lemma} 
    \begin{proof}
        Since $SC$ is a $(1/2+\beta)$ affine somewhere random source, there exists an $h\in[\ell_3]$ such that $H(SC_h)\ge 1/2+\beta$. For the seeds, the conditioning of $(U_{g1},U_{g1}')$ cause a deficiency of at most $2^{0.22n_1}$ to $\tilde{U}_g$ from being uniform. Then by Theorem~\ref{thm:snmExt} and Lemma~\ref{lemma:deficient seed}, conditioned on $(\tilde{U}_g,\tilde{U}_g')$, we have
        \begin{align*}
            (\snmExt(SC_h,\tilde{U}_g)\approx_{2^{0.22n_1}\eps_4}U_{n_1}) \mid \snmExt(SC_h',\tilde{U}_g'). 
        \end{align*}
        Since $2^{0.22n_1}\eps_4=2^{-\Omega(n)}$, $\snmExt$ is a linear function of $SC$ conditioned on $Z$, it holds that 
        \[(SN_{gh}\approx_{2^{-\Omega(n)}}U_{n_1}) \mid SN'_{gh},\]
        and $SN_g$ is $2^{-\Omega(n)}$ close to an affine somewhere random source.
    \end{proof}
    \begin{lemma}
        With probability $1-2^{-\Omega(n)}$ over further fixings of $SN'_{gh}= \snmExt(SC_h',\tilde{U}_g')$, $\tilde{Y}_g \oplus \tilde{Y}_g'$ is uniform.
    \end{lemma}
    \begin{proof}
        First note that both $X$ and $SN_g$ (similarly $X'$ and $SN_g'$) are affine sources.
        By Theorem~\ref{thm:ldacb}, we have
        \begin{align*}
            \ldACB(X,SN_{gh},h) \approx_{\eps_5} U_{n_2}\mid\underbrace{\ldACB(X',SN_{gh}',h)}_{\substack{\text{same advice but} \\ \text{$SN_{gh}\approx_{2^{-\Omega(n)}} U_{n_1}\mid SN'_{gh}$}}},\underbrace{\cbra{\ldACB(X,SN_{gj},j),\ldACB(X',SN_{gj}',j)}}_{\substack{\text{the set contains the output }\forall j\in[\ell_1']\setminus\cbra{h}\\\text{different advice}}}.
        \end{align*}
        Therefore, it holds that
        \begin{align*}
            \tilde{Y}_g \oplus \tilde{Y}_g' &=\bigoplus_{j\in[\ell_1']} \ldACB(X,SN_{gj},j)\oplus \bigoplus_{j\in[\ell_1']}\ldACB(X',SN'_{gj},j) \\
            &= \ldACB(X,SN_{gh},h) \oplus \pbra{\bigoplus_{j\in[\ell_1']\setminus \cbra{h}}\ldACB(X,SN_{gj},j) \oplus \bigoplus_{j\in[\ell_1']}\ldACB(X',SN'_{gj},j)} \\
            &\approx_{2^{-\Omega(n)}} U_{n_2}.
        \end{align*}
        
    \end{proof}
    Let $Z=Z\cup\{SN'_{gh}\}.$
    \begin{lemma}
        With probability $1-2^{-\Omega(n)}$ over the fixing of $(Z,\tilde{Y}_g,\tilde{Y}'_g)$, $W_g$ is uniform conditioned on $W_g'$. Moreover, there exists a random variable $\hat{B}_g$ conditioned on which  $X$ is an affine function of $W_g$.
    \end{lemma}
    \begin{proof}
        By Lemma~\ref{lemma:ind-merging-affine}, $W_g \approx_{2^{-{\Omega(n)}}} U_{n_3} \mid W_g'$. Since conditioned on $(\tilde{Y}_g,\tilde{Y}'_g)$, $W_g$ is a linear function of $X$, by Lemma~\ref{lemma:affine conditioning}, there exists $\hat{A}_g$ and $\hat{B}_g$ such that $W_g(X)=W_g(\hat{A}_g)$ and $W_g(\hat{B}_g)=0$. Moreover, conditioned on the fixing of $\hat{B}_g$, $X$ is an affine function of $W_g$. Now, let $Z=Z\cup\{\tilde{Y}_g,\tilde{Y}'_g,W_g',\hat{B}_g\}$.
    \end{proof}

    \begin{lemma}
        For all $1\le j \le m_1$, $z_j$ is a constant degree polynomial of the bits of $x$.
    \end{lemma}
    \begin{proof}
        \hfill \break
        We consider how the degree of the polynomial accumulates inside the for-loop of Algorithm~\ref{alg:DAExt}.
        \begin{enumerate}
            \item According a similar argument to~\cite{Li:CCC:2011}, each bit of $u_i$ is a $O(1)$ degree polynomial of $x$.
            \item In step $7$, it suffices to consider each bit of $h_i$. For each $j\in[k]$, $\tilde{x}_j$ has $O(1)$ bits. Since to sample a bit from $\tilde{x}_j$, each $u_{i1}^{(j)}$ only needs to be $\log \abs{\tilde{x}_j}=O(1)$ long to encode all the indices of $\tilde{x}_j$. Therefore, the $j$-th bit of $h_i$ is a linear function of $\tilde{x}_j$ and a $\log \abs{\tilde{x}_j}=O(1)$ degree polynomial of $u_{i1}^{(j)}$.
            \item In step $8$, each bit of $sn_{i}$ is bilinear map on $sc$ and $\tilde{u}_i$. Therefore, each bit of $sn$ is a degree $O(1)$ polynomial of the bits of $x$.
            \item In step $9$, each bit of $\tilde{y}_i$ is a degree $O(1)$ polynomial of bits of $x$ and $sc$ according to Theorem~\ref{thm:ldacb}. 
            \item In step $10$, each bit of $w_i$ is a constant degree polynomial of the inputs by Theorem~\ref{thm:low-deg-lsext}. 
            \item In step $11$, since each $c_i$ is a constant, the degree of resulting monomials by taking products of $c_i$ bits is constant. Therefore, each bit of $v_{ij}$ for all $j\in[s_i]$ is a degree $O(1)$ polynomial of $x$.
        \end{enumerate}
        Finally, it is direct that each bit of $z_j$ is a constant degree polynomial of $x$ for each $j\in[m_1]$.
    \end{proof}
    
    \begin{lemma}
        For any integer $s\in [m_1]$, let $\tilde{Z} = (\tilde{Z}_1, \cdots, \tilde{Z}_{s})$ where $\tilde{Z}_j = \bigoplus^t_{i=g} V_{ij}$. Then conditioned on any fixing of $W_g'$ and $\hat{B}_g=\hat{b}$, there exists some $b\in \bin^{s}$ such that \[ \Big|\Supp(\DAExt(X) \mid \DAExt(X')=b)\Big| = 2^{s}.\]
    \end{lemma}
\begin{proof}
First note that for all $i\in [t]$, each bit of $W_i$ is a constant degree polynomial of the bit of $X$. Therefore, 
conditioned on the fixing of $Z$, for every $i \ge g+1$, each bit of $\tilde{W}_i$ and $\tilde{W}'_i$ is a degree $\le c(\delta)$ polynomial of the bits of $W_g$. Thus, for every $i\ge g+1$, the degree of the bit in $W_i$ and $W'_i$ is $c(\delta)$ multiple of the degree of the bit in $W_g$. Therefore, if 
\begin{align*}
    c_i > c(\delta) c_{i+1},\;\forall i
\end{align*}
then the degree of the polynomials $\bigoplus_{i=g+1}^t V_{ij}$ and $\bigoplus_{i=g+1}^t V'_{ij}$ is less than the degree of $V_{gj}$. Therefore, there exists a fixing of $\{V_{ij},V'_{ij}\}_{i \in [g+1-t],j\in [m]}$ such that $\tilde{Z}_j \oplus \tilde{Z}'_j$ can take both values in $\bin$. Since $V'_{gj}$ is fixed, $\tilde{Z}'_j$ is fixed as well. This ensures there exists $\{\tilde{z}_j'\}_{j\in [m]}$ such that 
that $\tilde{Z}_j \mid (\tilde{Z}_j'=\tilde{z}_j')$ can take both values in $\bin$. \\

Next we show that $\tilde{Z}_j$ take both values in $\bin$ conditioned on $\tilde{Z}_S$ where $S\subseteq [m]\setminus{\{j\}}$ where $\tilde{Z}_S$ denotes $(\tilde{Z}_i)_{i\in S}$. Assume that for some $(z_i)_{i\in S}$ such that when $(\tilde{Z}_i=z_i)_{i\in S}$, $\tilde{Z}_j$ is fixed to $z_j$, then it holds that
\begin{align*}
    P_g  = \prod_{i\in S}\pbra{\tilde{Z}_i + z_i + 1}\pbra{\tilde{Z}_j + z_j} \equiv 0.
\end{align*}
However, this cannot be true since $P_g$ has a monomial $V_{gj}$ of bits from $W_g$ that are different from the monomials of the same degree from $\tilde{Z}_i$ (if $z_j=1$). Since $W_g$ is uniform, $V_{gj}$ is nonzero with any fixings of $\tilde{Z}_S$. Therefore $P_g$ cannot always be $0$.
\end{proof}

The techniques to bootstrap an extractor from a disperser follow essentially the same line as~\cite{Li:CCC:2011}. We restate them here for the completeness of the proof.
    \begin{lemma}
        The random variables $O_1,\cdots,O_{\alpha m_1}$ form an $\eps$-biased space. 
    \end{lemma}
    \begin{proof}
        Let $\varnothing \neq T \subseteq [\alpha m_1]$, $S_i = \cbra{j \in [m_1]: \G_{ij} = 1}$, $S_T = \cbra{j \in [m_1]: \oplus_{i\in T} \G_{ij} = 1}$. Then 
        \begin{align*}
            \bigoplus_{i\in T} O_i = \bigoplus Z_{j:j\in S_T}.
        \end{align*}
        Since any non-zero linear combination of codewords is again a codeword. The set $S_T$ has cardinality at least $\gamma m_1$. Now note that conditioned on the fixing of $Z$, each $O_i$ is a degree $c_g$ polynomial of $W_g$. Moreover, $\forall i \in S_T$, 
        \begin{align*}
            O_i  = V_{gi} \oplus \bigoplus_{j=g+1}^{t} V_{ji}.
        \end{align*}
        Since for all $i\in S_T$, $V_{gi}$ is the product of some disjoint set (w.r.t. $V_{g\ell}$'s, $\forall \ell \in S_T\setminus \cbra{i}$) of the bits of $W_g$, it is easy to see that $\{V_{gi}:i\in S_T\}$ is a set of $|S_T|\ge \gamma m_1 $ independent copies of the same function which we simply refer to as $f$. Since $P:=\bigoplus_{i\in T}\pbra{\bigoplus_{j=g+1}^{t} V_{ji}}$ has degree less than $c_g-1$ and $f$ has degree $c_g$, 
        \begin{align*}
            \Cor(f,P) \le 1-2^{-c_g}.
        \end{align*}
        Then by Theorem~\ref{thm:xor lemma}, 
        \begin{align*}
            \Cor(f^{\oplus |S_T|},P) \le \exp(-\Omega(|S_T|/(4^{c_g-1}\cdot (c_g-1)))) \le 2^{-\Omega(\gamma m_1)}.
        \end{align*}
        Since our choice of $T$ is arbitrary, and there are 
        at most $2^{\alpha m_1}-1$ such choices, there exists an absolute constant $c_0$ s.t.
        \begin{align*}
            \Cor(f^{\oplus |S_T|},P) \le 2^{-c_0\gamma m_1)}
        \end{align*}
        for any $\varnothing \neq T \subseteq [\alpha m_1]$. \\
        Since $f^{\oplus |S_T|}$ is uniform, it holds that
        \begin{align*}
            \Delta(\bigoplus_{i\in T} O_i,U) \le 2^{-c_0 \gamma m_1}.
        \end{align*}
        we conclude that $O_1,\cdots,O_{\alpha m_1}$ form an $\eps$-biased space.
    \end{proof}
    Now by Lemma~\ref{lemma:eps biased}, 
    \begin{align*}
        \Delta(O - U_{\beta' m_1}) \le 2^{\beta' m_1/2} \cdot 2^{-c_0 \gamma m_1}.
    \end{align*}
    Choose $0 < \beta' \le \alpha$ s.t. $\beta' \le c_0\gamma$. Then
    \begin{align*}
        \Delta(O - U_{m}) \le 2^{-c_0 m/2}.
    \end{align*}
    Therefore, the output of Algorithm~\ref{alg:DAExt} are $m$ bits that are $2^{-\Omega(m)}$-close to uniform.
\end{proof}

\subsection{Directional Affine Disperser and Extractor for Sublinear Entropy Sources}
In this subsection, we demonstrate how to push the entropy requirement of Algorithm~\ref{alg:DAExt} to sublinear. We first examine how we can do this for the disperser.

\begin{theorem}
There exists a constant $c>1$ and an efficient family of functions $\DADisp:\bin^n\to\bin^m$ such that $m=n^{\Omega(1)}$ and for every affine source $X$ with entropy $cn(\log\log n)^2/\log n$, there exists some $b \in \bin^m$ such that $|\Supp(\DAExt(X) \mid \DAExt(X+a)=b)|=2^m$. 
\end{theorem}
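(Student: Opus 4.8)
The plan is to reuse Algorithm~\ref{alg:DAExt} almost verbatim, instantiated with a sub‑constant entropy rate $\delta=\delta(n)=\Theta((\log\log n)^2/\log n)$, and to output directly the $m_1$ bits $\{z_j=\bigoplus_{i=1}^t v_{ij}\}$ rather than running the final disperser‑to‑extractor step. Thus $\DADisp$ divides $x$ into $t=\Theta(1/\delta)$ equal blocks, runs the same pipeline on each block (linear somewhere condenser $\to$ $\IP$ $\to$ $\AffineSRExt$ $\to$ low‑degree linear seeded extractor to get $u_i$ $\to$ code sampling to build the advice $\tilde u_i$ $\to$ $\snmExt$ $\to$ $\ldACB$ $\to$ one more $\LSExt$ to get $w_i$ $\to$ bitwise products to get the $v_{ij}$), then XORs across blocks. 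I would first re‑run the chain of lemmas from the proof of Theorem~\ref{thm:daext} with $\delta$ sub‑constant: Lemma~\ref{lemma:first fixings} still isolates a good block $X_g$ with $H(X_g)\ge\delta n/(4t)$ and $H(X\mid Z)\ge 3\delta n/5+\Omega(\delta n/t)$ after the earlier iterations are fixed (the subtracted error terms stay a $\Theta(\beta_{\ref{thm:Lcondmain}}/(\ell_2\ell_3'))$‑fraction of $\delta n/t$, hence harmless), and the remaining lemmas go through with each $2^{-\Omega(n)}$ bound weakened to $2^{-\Omega(\delta^2 n)}$ — in particular the code‑distance step for $\tilde U_g\ne\tilde U_g'$ now succeeds with probability $1-(1-\beta)^{k}=1-2^{-\Omega(\delta n)}=1-o(1)$, which is all a disperser needs.

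The heart of the matter, and what pins down the entropy bound, is the degree bookkeeping. As in the linear‑entropy case, once $Z$ (recording $W_g'$, $\hat B_g$, and everything from iterations $<g$) is fixed, every bit of $W_i$ for $i>g$ is a polynomial of some degree $c(\delta)$ in the bits of $W_g$, while $W_g$ is uniform and $X$ is an affine function of it. The key point I would establish is that here $c(\delta)=\poly(1/\delta)$, not $\poly(n)$: the linear somewhere condensers $\scond_i$ and the $\Theta(\log(1/\delta))$ linear iterates $\bcond^r,\bcond^{\log t}$ contribute nothing to the degree, and each of $\IP$, $\AffineSRExt$ (applied to $O(1)$‑row sources), $\snmExt$, $\LSExt$, and $\ldACB$ contributes only a $\poly(1/\delta)$ factor — for $\ldACB$ this hinges on $\ell_1'=(n/(2(m'+k+1)))^{\log(2d_{\ref{thm:dimexpand}}+2)}=\poly(1/\delta)$, which holds because $m'+k=\Theta(\delta n)$, so its advice length is $a=\log\ell_1'=\Theta(\log(1/\delta))$ and, tracing the degree recursion $d_\ell=(9^{\ell-1}-1)/2$ of $\nipm_\ell$ with $\ell=2a$ together with the constant blow‑ups of $\laExt$ and $\LSExt$, its output degree is $2^{\Theta(a)}=\poly(1/\delta)$. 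I would then choose the product‑block sizes $c_t,\dots,c_1$ with $c_i>c(\delta)\,c_{i+1}$, so that $\deg_{W_g}(\bigoplus_{i>g}V_{ij})<c_g=\deg_{W_g}(V_{gj})$, and conclude exactly as in the last two lemmas of the proof of Theorem~\ref{thm:daext}: $\tilde Z_j:=\bigoplus_{i\ge g}V_{ij}$ takes both values for every fixing of $(\tilde Z_i)_{i\in S}$, $S\subseteq[m_1]\setminus\{j\}$, since $\prod_{i\in S}(\tilde Z_i+z_i+1)(\tilde Z_j+z_j)$ still contains the monomial $V_{gj}$; hence $\bigoplus_{i=1}^t V_{ij}=\tilde Z_j\oplus(\text{const})$ ranges over all of $\bin^{m_1}$ conditioned on the tampered output, giving some $b$ with $|\Supp(\DADisp(X)\mid\DADisp(X+a)=b)|=2^{m_1}$.

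It remains to check that $m_1$ can be taken to be $n^{\Omega(1)}$. The only binding constraint is feasibility of the product step: $c_1\cdot s_1\le|w_g|=n_3=\Theta(\delta^3 n)$ with $m_1=s_1$, and $c_1=c(\delta)^{\Theta(t)}=c(\delta)^{\Theta(1/\delta)}$. Since $c(\delta)=\poly(1/\delta)$, this asks for $\frac1\delta\log\frac1\delta=O(\log n)$, which holds at $\delta=\Theta((\log\log n)^2/\log n)$ and then leaves $m_1=\Theta(\delta^3 n)/c(\delta)^{\Theta(1/\delta)}=n^{\Omega(1)}$; writing $k=\delta n$ yields the claimed $k\ge cn(\log\log n)^2/\log n$ for a suitable constant $c>1$. (For the extractor version one additionally pays the XOR lemma of Theorem~\ref{thm:xor lemma}, which only tolerates degree $O(\log n)$, forcing the stronger requirement $c(\delta)^{\Theta(1/\delta)}=O(\polylog n)$ and hence the worse entropy of Theorem~\ref{thm:daextmain}.) One also has to verify that the entropy hypotheses of Theorems~\ref{thm:Lcondmain}, \ref{thm:IP}, \ref{thm:affinesrext}, \ref{thm:low-deg-lsext}, \ref{thm:snmExt} and \ref{thm:ldacb} hold at this rate; the dangerous ones are the $\Omega((t+1)\sqrt n)$‑ and $n^{1-\Theta(1/\log(1/\delta))}$‑type lower bounds on the source entropy used inside $\ldACB$, but $k=\delta n$ comfortably exceeds both. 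I expect the degree accounting through $\ldACB$ and its sub‑routines $\laExt$ and $\nipm_{2a}$ with super‑constant $a$ — together with re‑certifying that all the internal seed‑length and entropy‑loss conditions, automatic at constant rate, still hold with $t,a,r$ growing like $\poly\log\log n$ — to be the main obstacle.
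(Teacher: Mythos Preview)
Your overall strategy is exactly the paper's: run Algorithm~\ref{alg:DAExt} through step~11, output the $z_j$'s directly, isolate a good block $g$, fix everything down to $W_g'$ and $\hat B_g$, and finish via the monomial argument by enforcing $c_i>c(\delta)\,c_{i+1}$. The disperser-specific observations (that $1-(1-\beta)^k=1-o(1)$ suffices for the advice step, and that full support of $\tilde Z$ is all you need) are also correct.

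The gap is in the degree accounting, and it is precisely what pins down the entropy threshold. You assert $c(\delta)=\poly(1/\delta)$, but the paper's analysis gives $c(\delta)=(1/\delta)^{O(\log(1/\delta))}$, and two of your claims fail once $\delta$ is sub-constant:
\begin{itemize}
\item $\AffineSRExt$ is \emph{not} applied to $O(1)$-row sources: $SR_g$ has $\ell_2\ell_3'=\poly(1/\delta)$ rows. By Theorem~\ref{thm:affinesrext} the output length is $r/t^{O(\log t)}$, so $|R_g|=n/(1/\delta)^{O(\log(1/\delta))}$, not $n/\poly(1/\delta)$.
\item Consequently $m'=|U_g|=O(\delta\,|R_g|)=n/(1/\delta)^{O(\log(1/\delta))}$, and likewise $k$; your claim $m'+k=\Theta(\delta n)$ is false. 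Plugging into $\ell_1'=(n/(2(m'+k+1)))^{\log(2d_{\ref{thm:dimexpand}}+2)}$ gives $\ell_1'=(1/\delta)^{O(\log(1/\delta))}$, so the $\ldACB$ advice length is $a=\log\ell_1'=\Theta((\log(1/\delta))^2)$ and its output degree is $2^{\Theta(a)}=(1/\delta)^{O(\log(1/\delta))}$.
\end{itemize}
With the corrected $c(\delta)$, the feasibility constraint becomes $n/(1/\delta)^{O(\log(1/\delta))}>(1/\delta)^{O((1/\delta)\log(1/\delta))}$, i.e.\ $(1/\delta)(\log(1/\delta))^2=O(\log n)$, which is exactly what forces $\delta=\Theta((\log\log n)^2/\log n)$. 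Your weaker inequality $(1/\delta)\log(1/\delta)=O(\log n)$ would in fact allow $\delta=\Theta(\log\log n/\log n)$, strictly smaller than the theorem claims --- a red flag that the accounting was too optimistic. (Your $|W_g|=n_3=\Theta(\delta^3 n)$ is wrong for the same reason; the paper tracks $|W_g|=n/(1/\delta)^{O(\log(1/\delta))}$.)
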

\begin{proof}[Proof Sketch]
The disperser construction is Algorithm~\ref{alg:DAExt} up to the phase ``Disperser to Extractor", except for now, we do not make assumptions about the entropy of the input. When pushing down the entropy requirement, we are mainly interested in $2$ binding quantities --- the bit length of $W_g$ and the degree of each bit of $\{W_i:i \in [g+1,t]\}$. \\
We first examine the length of $W_g$ in a step-by-step manner. \\
\textbf{Step 1.} In step $1$ of Algorithm~\ref{alg:DAExt}, by Theorem~\ref{thm:Lcondmain}, we get that $\ell_2=\poly(1/\delta)$, and each $Y_{gj}$ has $n/\poly(1/\delta)$ bits. \\
\textbf{Step 2.} We get $\ell_3'=\poly(1/\delta)$. \\
\textbf{Step 3.} The total number of rows in the matrix $SR_g$ is $\ell_2\ell_3'=\poly(1/\delta)$, with each row having $\delta^2n/(300t\ell_2\ell_3')=n/(\poly(1/\delta))$ bits. By Theorem~\ref{thm:IP}, the error is $2^{-n/\poly(1/\delta)}$. \\
\textbf{Step 4.} We apply $\AffineSRExt$. By Theorem~\ref{thm:affinesrext}, we get each $R_g$ has $n/\poly(1/\delta)^{O(\log (1/\delta))}$ bits, with error $2^{-n/\poly(1/\delta)^{O(\log (1/\delta))}}$. \\
\textbf{Step 5.} By Theorem~\ref{thm:low-deg-lsext}, after applying $\LSExt$, $U_g$ has $n/(1/\delta)^{O(\log (1/\delta))}$ bits with error $2^{-n/(1/\delta)^{O(\log (1/\delta))})}$. \\
\textbf{Step 8.} First note that by Theorem~\ref{thm:Lcondmain}, $SC_g$ has $n/(1/\delta)^{O(\log (1/\delta))}$ bits and $SC$ has $\ell_1'=(1/\delta)^{O(\log (1/\delta))}$ rows. By Theorem~\ref{thm:snmExt}, each row of $SN_g$ has bits $n/(1/\delta)^{O(\log (1/\delta))}$ with error $2^{-n/(1/\delta)^{O(\log (1/\delta))}}$. \\
\textbf{Step 9.} By Theorem~\ref{thm:ldacb}, $\tilde{Y}_g$ has $n/(1/\delta)^{O(\log (1/\delta))}$ bits with error $2^{-n/(1/\delta)^{O(\log (1/\delta))}}$. \\
\textbf{Step 10.} By Theorem~\ref{thm:low-deg-lsext}, after applying $\LSExt$, $W_g$ has $n/(1/\delta)^{O(\log (1/\delta))}$ bits with error $2^{-n/(1/\delta)^{O(\log (1/\delta))}}$. \\
We now check if the degrees of the polynomials produced in \textbf{Step 11} satisfy the requirements as in the analysis of Theorem~\ref{thm:daext}, which adds constraints $c_i>c(\delta)c_{i+1},\forall i$. First note that up to a sequence of fixings of r.v.s, $X$ is an affine function of $W_g$. Now by Theorem~\ref{thm:Lcondmain}, each bit of $\scond_2(X_i)$ and $\bcond^{\log t}\circ\scond_3(x)$ is a linear function of the input bits. The function $\IP$ is a degree $2$ polynomial. Therefore each bit of $SR_i$ is a degree $2$ polynomial of the input bits. Since each bit of the output of $\AffineSRExt$ is a degree $\poly(1/\delta)$ polynomial of the input bits. Therefore each bit of $R_i$ is a degree $\poly(1/\delta)$ of the bits of $W_g$. By Theorem~\ref{thm:low-deg-lsext}, each bit of $U_i$ is a constant degree polynomial of the input bits. Since $\Enc$ is linear and each bit of $\tilde{X}_i$ has $O(1)$ bits, each bit of $h_i$ is a constant degree polynomial of the inputs. By Theorem~\ref{thm:Lcondmain} and Theorem~\ref{thm:snmExt}, each bit of $SN_{ij}$ is a constant degree polynomial of the input bits. By Theorem~\ref{thm:ldacb}, each bit of $\tilde{Y}_i$ is a degree $2^{\log(\ell_1')}=2^{\log((1/\delta)^{O(\log (1/\delta))})}=(1/\delta)^{O(\log (1/\delta))}$ degree polynomial of the input bits. By Theorem~\ref{thm:low-deg-lsext}, each bit of $W_i$ is a constant degree polynomial of the input bits. 
    Thus, we conclude that for every $i\ge g+1$, each bit of $W_i$ is a degree $(1/\delta)^{O(\log (1/\delta))}$ polynomial of the bits of $W_g$. Therefore, we have 
    \begin{align*}
        c(\delta) = (1/\delta)^{O(\log (1/\delta))}.
    \end{align*}
    Since we need $c_i > c(\delta)c_{i+1}$ for every $1 \le i \le 10/\delta$, we have the following upper bound for all the $c_i$'s.
    \begin{align*}
        c(\delta)^{10/\delta} = ((1/\delta)^{O(\log (1/\delta))})^{O(1/\delta)} = (1/\delta)^{O((1/\delta)\log(1/\delta))}.
    \end{align*}
    Since each $W_i$ has $n/(1/\delta)^{O(\log(1/\delta))}$ bits, it suffices to have 
    \begin{align*}
        n/(1/\delta)^{O(\log(1/\delta))}>(1/\delta)^{O((1/\delta)\log(1/\delta))}.
    \end{align*}
    It suffices to take $\delta = c(\log\log n)^2/\log n$ for some constant $c$.
\end{proof}

We now discuss the case for the extractor.
\begin{theorem}
There exists a constant $c>1$ and an efficient family of functions $\DAExt:\bin^n\to\bin^m$ such that $m=n^{\Omega(1)}$ and for every affine source $X$ with entropy $cn(\log\log\log n)^2/\log\log n$, \[ (\DAExt(X), \DAExt(X+a))  \approx_{\eps} (U_m, \DAExt(X+a))~,\] where $\eps = 2^{-n^{\Omega(1)}}$. 
\end{theorem}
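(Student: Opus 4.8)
The plan is to run Algorithm~\ref{alg:DAExt} unchanged --- including the disperser\nobreakdash-to\nobreakdash-extractor phase through the generating matrix $\G$ and the $\eps$\nobreakdash-biased analysis of Lemma~\ref{lemma:eps biased} --- but with $\delta$ taken sublinearly, and to determine how small $\delta$ may be while every estimate in the proof of Theorem~\ref{thm:daext} still goes through. From the disperser sketch above we already have the following, none of which degrades for sublinear $\delta$: each building block ($\bcond$, $\scond_i$, $\IP$, $\AffineSRExt$, $\snmExt$, $\ldACB$, $\LSExt$) stays well defined with error $2^{-n/(1/\delta)^{O(\log(1/\delta))}}$; the good block $X_g$ is selected exactly as in Lemmas~\ref{lemma:first fixings}--\ref{lemma:affine correlation}, since everything that gets conditioned on has size $O(\delta^2 n/\poly(1/\delta))$, still negligible against $\delta n$; $W_g$ has $|W_g|=n/(1/\delta)^{O(\log(1/\delta))}$ bits and, after the fixings, $X$ is an affine function of $W_g$; for $i\ge g+1$ each bit of $W_i$ is a degree $c(\delta)=(1/\delta)^{O(\log(1/\delta))}$ polynomial of the bits of $W_g$; and choosing block sizes $c_1>c_2>\cdots>c_t$ with $c_i>c(\delta)c_{i+1}$ --- hence $c_g\le c_1\le c(\delta)^{O(1/\delta)}=2^{O((1/\delta)\log^2(1/\delta))}$ --- makes $\bigoplus_{i>g}V_{ij}$ a polynomial of degree strictly below that of $V_{gj}$, which is all the disperser part needs.

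The new ingredient, absent from the disperser, is the quantitative XOR step in the proof of Theorem~\ref{thm:daext}: for each nonempty $T$ one views $\{V_{gi}:i\in S_T\}$ as $|S_T|\ge\gamma m_1$ independent copies of one degree\nobreakdash-$c_g$ function $f$ with $\Cor(f,P)\le 1-2^{-c_g}$, where $P:=\bigoplus_{i\in T}\bigoplus_{j>g}V_{ji}$, and applies Theorem~\ref{thm:xor lemma} to get $\Cor(f^{\oplus|S_T|},P)\le\exp(-\Omega(|S_T|/(4^{c_g}c_g)))$. To beat the $2^{\beta' m_1}$ union bound over $T$ and still reach error $2^{-n^{\Omega(1)}}$, one needs $|S_T|=\Omega(m_1)=\Omega(s_g)=\Omega(|W_g|/c_g)$ to exceed $4^{c_g}c_g$ by an $n^{\Omega(1)}$ factor; since $|W_g|=n^{1-o(1)}$ in the relevant range, this amounts to
\[
  2^{O((1/\delta)\log^2(1/\delta))}\;=\;c_1\;\le\;\frac{1}{10}\log n .
\]
Taking logarithms twice, this reads $(1/\delta)\log^2(1/\delta)\lesssim\log\log n$. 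Setting $\delta=c(\log\log\log n)^2/\log\log n$ for a large enough constant $c$ gives $1/\delta=\Theta(\log\log n/(\log\log\log n)^2)$ and $\log(1/\delta)=\Theta(\log\log\log n)$, so $(1/\delta)\log^2(1/\delta)=\Theta(\log\log n/c)\le\log\log n$, and the displayed bound holds with $c_g\le c_1=(\log n)^{\Theta(1/c)}$.

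It remains to total the errors and the output length with this $\delta$. Since $(1/\delta)^{O(\log(1/\delta))}=2^{O((\log\log\log n)^2)}=n^{o(1)}$, every building block contributes error $2^{-n^{1-o(1)}}=2^{-n^{\Omega(1)}}$, and $|W_g|$, hence $m_1=\Omega(|W_g|/c_1)$ and the final output length $m=\Omega(\beta' m_1)$, are $n^{\Omega(1)}$. The XOR\nobreakdash-lemma error $\exp(-\Omega(s_g/(4^{c_g}c_g)))$ is $2^{-n^{\Omega(1)}}$ because $4^{c_g}\le 2^{2(\log n)^{\Theta(1/c)}}=n^{o(1)}$ while $s_g=n^{1-o(1)}/c_g$. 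Feeding these into Lemma~\ref{lemma:eps biased} exactly as in the proof of Theorem~\ref{thm:daext} shows $O=(O_1,\dots,O_{\beta' m_1})$ is $2^{-n^{\Omega(1)}}$\nobreakdash-close to uniform conditioned on its tampered copy, which is the claim. The main obstacle is precisely this bookkeeping: one must check that among the three competing quantities --- the shrinking length $|W_g|$, the per\nobreakdash-block degree blow\nobreakdash-up $c(\delta)$, and the geometric cascade $c_i>c(\delta)c_{i+1}$ that pushes $c_g$ up to $c(\delta)^{\Theta(1/\delta)}$ --- the binding constraint for the extractor is $c_g\lesssim\log|W_g|$ (one logarithm tighter than the disperser's $c_g\lesssim|W_g|$), and that $\delta=\Theta((\log\log\log n)^2/\log\log n)$ is exactly the threshold at which this still holds while keeping all errors and the output length at $2^{-n^{\Omega(1)}}$ and $n^{\Omega(1)}$ respectively.
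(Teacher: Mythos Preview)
Your proposal is correct and follows essentially the same approach as the paper: both run Algorithm~\ref{alg:DAExt} with sublinear $\delta$, note that the only constraint beyond the disperser analysis is that the XOR lemma (Theorem~\ref{thm:xor lemma}) must yield error $2^{-n^{\Omega(1)}}$, which forces $m_1/(4^{c_g}c_g)=n^{\Omega(1)}$ with $c_g\le(1/\delta)^{O((1/\delta)\log(1/\delta))}$, and solve this to get $\delta=c(\log\log\log n)^2/\log\log n$. Your write-up is more explicit about why this is one logarithm tighter than the disperser threshold, but the argument is the same.
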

\begin{proof}[Proof Sketch]
    We follow up on the discussion for sublinear entropy disperser. Assume that the entropy is set such that we indeed obtain a disperser. Note that the disperser outputs 
    \begin{align*}
        n/\pbra{(\log(1/\delta))^{O(1/\delta)} \cdot (1/\delta)^{O((1/\delta)\log(1/\delta))}} = n/(1/\delta)^{O((1/\delta)\log(1/\delta))}
    \end{align*}
    bits. From this point, there is and only is one more constraint to consider which is on the degree of the polynomials. For the extractor, we need to guarantee that the XOR lemma from Theorem~\ref{thm:xor lemma} yields subexponential error.
    In other words, we need to guarantee
    \begin{align*}
        \frac{n/(1/\delta)^{O((1/\delta)\log(1/\delta))}}{(1/\delta)^{O((1/\delta)\log(1/\delta))} \cdot 2^{(1/\delta)^{O((1/\delta)\log(1/\delta))}}} = n^{\Omega(1)}.
    \end{align*}
    It suffices to take $\delta = c(\log\log\log n)^2/\log\log n$ for some constant $c$.
\end{proof}

\section{Average-case \texorpdfstring{$\ac^0$}{Lg} Hardness for Read-Once Branching Programs}
\label{sec:ac0}

In this section, we build an $\ac^0$-computable extractor that are capable of extracting randomness from the preimage of any output of any read-once branching program of suitable size. 

We use the following two constructions of extractors in $\ac^0$ from previous works.
\begin{theorem}[\cite{ChengLi18}]\label{thm:acbfext}
For any constants $c \in \mathbb{N}$, $\delta \in (0,1]$, there exists an explicit deterministic $(k  = \delta n, \eps = 2^{-\log^c n})$-extractor $\acbfExt:\{0,1\}^{n} \rightarrow \{0,1\}^{\Omega(k)}$ that can be computed by $\AC^0$ circuits of depth $O(c)$, for any $(n, k)$-bit-fixing source. 
\end{theorem}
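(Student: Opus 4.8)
This statement is quoted essentially verbatim from Cheng and Li~\cite{ChengLi18}, so the plan is to reconstruct a streamlined version of their argument and then check the parameters. The first thing to pin down is the source of difficulty. The ``textbook'' seedless extractor for an $(n,k)$ bit-fixing source is \emph{linear}: one multiplies the source by the generator matrix of a binary code whose minimum distance exceeds $n-k$, so it is a collection of long parities and hence provably not in $\AC^0$. A natural repair is to replace long parities by short ones: fix an explicit highly unbalanced bipartite expander and let $A$ be its sparse adjacency matrix, each row of weight $O(1)$; unique-neighbour expansion of every set of $k=\delta n$ left vertices forces the columns $A|_G$ to have rank $\Omega(n)$ for \emph{every} placement $G\subseteq[n]$ of the free bits, so $Z:=Ax$ is, with zero error, an affine source of min-entropy $\Omega(n)$ computed from $X$ by a depth-$2$ $\AC^0$ circuit. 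This does not by itself finish the job, because extracting bits close to uniform from a constant-rate affine source is in general not possible in $\AC^0$ --- that is exactly the switching-lemma lower bound recalled in the introduction --- so one has to retain more of the bit-fixing structure than just an affine shadow of it.

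Accordingly, I would follow the Gabizon--Raz--Shaltiel ``obtain an independent seed'' paradigm, applied directly to $X$ and made $\AC^0$-friendly. Since disjoint coordinates of a bit-fixing source are mutually independent, the plan is: using an explicit sampler/design over $[n]$ (to compensate for not knowing which coordinates are free), an $\AC^0$ ``selector'' circuit reserves a small disjoint part $X_1$ of $\poly\log(n)$ coordinates and uses $X_1$ to read out a seed $S$ of length $\poly\log(n)$; a Chernoff-type bound over the $\Omega(\log^c n)$ coordinates actually read shows $S$ is $2^{-\log^c n}$-close to uniform, while the bulk $X_2$ (the complementary coordinates) is independent of $S$ and still has min-entropy $\Omega(\delta n)$. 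One then feeds $(X_2,S)$ to the $\AC^0$-computable \emph{strong} (linear) seeded extractor $\acLExt$ of Papakonstantinou--Woodruff--Yang to get $\Omega(\delta n)=\Omega(k)$ output bits that are $2^{-\log^c n}$-close to uniform even conditioned on $S$, so strongness absorbs the mild dependence created by the routing. A union bound over seed failure and seeded-extractor error gives total error $2^{-\log^c n}$; the seedless part costs $O(1)$ depth, and the seeded extractor --- whose depth, or whose seed pushed through a $c$-round error-reduction, scales with $\log(1/\eps)$ --- contributes depth $O(c)$, for $O(c)$ overall.

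The main obstacle, where essentially all the work lives, is the seedless seed-extraction: it must be made simultaneously (i) constant depth and polynomial size, so that both the ``index into $x$'' multiplexer and the union bound over all $\binom{n}{\delta n}$ placements of the free set fit inside the $\AC^0$ budget; (ii) good except with probability $2^{-\log^c n}$, which is why the seed is read from $\Omega(\log^c n)$ coordinates and the design must have correspondingly strong sampling guarantees against every $G$; and (iii) structured so that the bulk $X_2$ is genuinely independent of $S$ and keeps $\Omega(\delta n)$ entropy --- i.e.\ the adversary cannot both bury all free bits inside the reserved region and also fool the selector. The remaining steps are bookkeeping: verifying that $\acLExt$ can be instantiated at constant source rate, $\poly\log n$ seed, error $2^{-\log^c n}$, and output $\Omega(n)$ with depth $O(c)$, and that the explicit sampler/design over $[n]$ can be built with the sampling quality the above error demands.
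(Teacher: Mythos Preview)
The paper does not prove this statement; it is imported as a black-box result from \cite{ChengLi18} (note the sentence immediately preceding the theorem: ``We use the following two constructions of extractors in $\AC^0$ from previous works''). There is therefore nothing in the present paper to compare your proposal against.

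As a reconstruction of Cheng--Li, your high-level plan --- obtain a short seed from the bit-fixing source itself in $\AC^0$, then feed it to the $\AC^0$ strong seeded extractor $\acLExt$ of \cite{papakonstantinou2016true} --- is in the right spirit, and you correctly identify where the real work lies. But your sketch does not actually resolve that obstacle: a \emph{fixed} sampler or design over $[n]$ can be defeated outright by an adversary who fixes precisely the coordinates it touches, so ``reserve a small disjoint part $X_1$ and read a seed from it'' as stated cannot achieve error $2^{-\log^c n}$ against a worst-case placement of the fixed bits. One needs either to use randomness already present in the source to decide \emph{which} coordinates to read (introducing a circularity that has to be broken carefully), or a genuinely different mechanism; this is exactly what the bulk of \cite{ChengLi18} is devoted to. For the purposes of the paper under review, however, the theorem is simply cited and no proof is expected here.
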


\begin{theorem}[\cite{papakonstantinou2016true}]\label{thm:aclext}
For any constants $c \in \mathbb{N}$, $\delta \in (0,1]$, there exists an explicit strong linear seeded $(k  = \delta n, \eps =  2^{-\log^c n})$-extractor $\acLExt:\{0,1\}^{n} \times \{0,1\}^{d} \rightarrow \{0,1\}^{\Omega(k)}$ that can be computed by $\AC^0$ circuits of depth $O(c)$, with seed length $d=O(\log^{c+1} n)$.
\end{theorem}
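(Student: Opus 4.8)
The plan is to reconstruct the $\AC^0$-computable strong linear seeded extractor of Papakonstantinou, Woodruff, and Yang. Concretely, I would exhibit an explicit family $\{M_s\}_{s\in\bin^d}$ of $m\times n$ matrices over $\F_2$ with $d=O(\log^{c+1}n)$ and $m=\Omega(k)=\Omega(\delta n)$ such that: (i) every row of $M_s$ has Hamming weight at most $w=O(\log^c n)$; (ii) on input $(s,j)$ the support $T_{s,j}\subseteq[n]$ of the $j$-th row of $M_s$ is computable by an $\AC^0$ circuit; and (iii) $\acLExt(x,s):=M_s x$ is a strong $(\delta n,\,2^{-\log^c n})$-extractor. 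For each fixed $s$ the map $x\mapsto M_s x$ is $\F_2$-linear, so (iii) immediately makes $\acLExt$ a \emph{linear} seeded extractor; what remains is the $\AC^0$ bound and the extraction guarantee.

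For $\AC^0$-computability: by (i) each output bit equals $\acLExt(x,s)_j=\bigoplus_{i\in T_{s,j}}x_i$, a parity of at most $w=O(\log^c n)$ input bits, whose index set is supplied by the $\AC^0$ indexing circuit of (ii). The parity of $w$ bits is computable by an $\AC^0$ circuit of depth $c+O(1)$ and size $2^{O(w^{1/c})}=2^{O(\log n)}=\poly(n)$ via the standard recursive parity construction (repeatedly group the surviving bits into blocks of size $\approx w^{(c-1)/c}$ and brute-force the bottom layer). Wiring this together with the indexing circuit and selector gates gives $\acLExt$ in $\AC^0$ of depth $O(c)$ and polynomial size, matching the claimed depth--error trade-off.

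The heart is (iii), and it is also the main obstacle: outputting $m=\Omega(k)$ bits linearly forces the transpose $M_s^{\top}$ to have dense columns, so the extractor cannot be ``local'' in the naive sense --- the point is that the \emph{forward} matrix $M_s$ can be kept row-sparse (hence each output bit a short parity, hence $\AC^0$) while still extracting. To prove extraction I would use the leftover-hash / Fourier viewpoint: it suffices that for every $(n,\delta n)$-source $X$ and every nonzero $v\in\bin^m$, with probability $\ge 1-2^{-\log^c n}$ over $s$ the functional $x\mapsto\langle M_s^{\top} v,\,x\rangle$ is $\exp(-\Omega(\log^c n))$-biased over $X$; a union bound over the $2^m$ choices of $v$ then gives $\ell_2$-closeness of $\acLExt(X,s)$ to uniform for a $1-2^{-\log^c n}$ fraction of $s$, which yields both the statistical-distance bound and strongness (the guarantee being proved for fixed $s$). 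The delicate step is that the row-supports of $M_s$ must be generated from only $O(\log^{c+1}n)$ seed bits and in $\AC^0$, so a fully independent random matrix is off limits; instead one uses $O(\log^c n)$-wise independence of the supports (generable in $\AC^0$ with seed length $O(\log^{c+1}n)$) and argues that against each fixed $X$ the dense family $\{M_s^{\top} v\}_{v\ne 0}$ still behaves like a small-bias family except with probability $2^{-\log^c n}$ over $s$; explicitness of this limited-independence generator is exactly what upgrades a probabilistic existence argument to the explicit, $\AC^0$-computable statement. If the direct analysis proves unwieldy, the fallback is the sample-then-extract route: an $\AC^0$-computable averaging sampler reduces $X$ to entropy rate $\approx\delta$ on $\polylog(n)$ coordinates (the projection $x\mapsto x_W$ is trivially linear and in $\AC^0$), a Toeplitz-hash extractor handles that small domain with $\polylog(n)$-length parities, and one composes across $\Theta(n/\polylog(n))$ near-full-entropy blocks --- drawing every block's seed only from the external seed via an $\AC^0$-computable design --- to build the output up to $\Omega(k)$ bits while preserving linearity in $x$.
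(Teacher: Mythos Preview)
The paper does not prove this theorem. It is quoted verbatim as a black-box result from \cite{papakonstantinou2016true} and used only as an ingredient in the $\AC^0$ correlation breaker and the $\AC^0$ extractor of Section~\ref{sec:ac0}. So there is nothing to compare your proposal against: in this paper the ``proof'' is the citation.

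That said, if you intend to actually reconstruct the PWY extractor, one step in your sketch does not go through as written. You propose to show that for each fixed nonzero $v\in\bin^m$ the linear test $\langle M_s^\top v,\cdot\rangle$ has bias $\exp(-\Omega(\log^c n))$ with probability $1-2^{-\log^c n}$ over $s$, and then take a \emph{union bound} over all $2^m$ choices of $v$. Since $m=\Omega(k)=\Omega(\delta n)$, that union bound costs a factor $2^{\Omega(n)}$ against a failure probability of only $2^{-\log^c n}$, and the argument collapses. The way around this is the second-moment/leftover-hash route (bound $\E_s\sum_{v\neq 0}\widehat{M_sX}(v)^2$ and apply Markov), not a union bound; alternatively your sample-then-extract fallback is closer to what PWY actually do and avoids this issue, but then you must be careful that composing $\Theta(n/\polylog n)$ blocks with a single short seed still yields a \emph{strong} extractor with the stated error, which is where the design/limited-independence bookkeeping has real content.
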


\subsection{\texorpdfstring{$\ac^0$}{Lg}-Computable \texorpdfstring{$t$}{Lg}-Affine Correlation Breaker}

Our construction of $\ac^0$-computable $t$-affine correlation breaker builds on the skeleton of the $t$-affine correlation breaker in~\cite{ChattopadhyayL22}, which in turn applies the standard correlation breaker in~\cite{Li:stoc:17}. Towards this, we first give an $\ac^0$-computable $\flip$, which is used as a subroutine in the standard correlation breaker. We then replace the strong seeded extractors in the standard correlation breaker and the $t$-affine correlation breaker with $\acLExt$. 

\begin{algorithm}[H]
    \caption{$\ac^0$-$\flip(x,y,b)$}
    \label{alg:flip-flop-ac0}
    \begin{algorithmic}
        \medskip
        \State \textbf{Input:} Uniform bit strings $x,y$ of length $n_1,n_1$ respectively, a bit $b$ and a circuit depth parameter $c\in \N$.
        \State \textbf{Output:} Bit string $\hat{x}$ of length $n_2$.
        \State \textbf{Parameters and Subroutines:} Let $n_2 = \Omega(n_1) \le n_1/20$ and $d = \Omega(n_1) \le n_2/10$. Let $\acLExt_1: \{0,1\}^{n_1} \times \{0,1\}^d \to \{0,1\}^d$ be $(k_1=n_1/10,\eps_1=2^{-\log^c n})$-strong linear seeded extractor from Theorem~\ref{thm:aclext}, $\acLExt_2: \{0,1\}^{n_2} \times \{0,1\}^d \to \{0,1\}^d$ be $(k_2=n_2,\eps_2 = 2^{-\log^c n})$-strong linear seeded extractors from Theorem~\ref{thm:aclext}. Let $\acLExt_3:\{0,1\}^{n_1} \times \{0,1\}^d \to \{0,1\}^{n_2}$ be a $(k_3=n_1/2,\eps_3=2^{-\log^c n})$-strong linear seeded extractor from Theorem~\ref{thm:aclext}. \\
        Let $\laExt: \{0,1\}^{n_1} \times \{0,1\}^{n_2+d}\to \{0,1\}^{2d}$ be a look-ahead extractor for an alternating extraction protocol run for $2$ rounds using $\acLExt_1,\acLExt_2$ as the seeded extractors.
        \\\hrulefill \\
        \begin{enumerate}
            \item Let $\tilde y=\Slice(y,n_2)$, $s_0 = \Slice(\tilde y,d)$, $\laExt(x,(\tilde y,s_0)) = r_0,r_1$
            \item Let $\overline{y} = \acLExt_3(y,r_b)$
            \item Let $\overline{s_0} = \Slice(\overline{y},d)$, $\laExt(x,(\overline{y}, \overline{s_0})) = \overline{r_0}, \overline{r_1}$
            \item Let $\hat y= \acLExt_3(y,\overline{r_{1-b}})$
            \item Let $y_0 = \Slice(\hat{y},d)$ 
            \item Output $\hat x = \acLExt_3(x,y_0)$
        \end{enumerate}
    \end{algorithmic}
\end{algorithm}

\begin{theorem}[$\ac^0$-$\flip$]\label{thm:ac0-flip}
    For any integer $c,n_1>0$ and any $\eps>0$, there exists an explicit function $\ac^0\text{-}\flip:\bin^{n_1} \times \bin^{n_1} \times \bin \to \bin^m$, satisfying the following: let $X$ be an independent uniform source on $n_1$ bits, and $X'$ be a random variable on $n_1$ bits arbitrarily correlated with $X$. Let $Y$ be an independent uniform source on $n_1$ bits, and $Y$ be a random variable on $n_1$ bits arbitrarily correlated with $Y$. Suppose $(X,X')$ is independent of $(Y,Y')$. If $k=\Omega(n_1)$, then $\ac^0$-$\flip$ can be computed by $\ac^0$ circuits of depth $O(c)$ and for any bit $b$, it holds that
    \begin{align*}
        \ac^0\text{-}\flip(X,Y,b) \approx_\eps U_m \mid (Y,Y').
    \end{align*}
    Furthermore, for any bits $b,b'$ with $b\neq b'$, we have
    \begin{align*}
        \ac^0\text{-}\flip(X,Y,b) \approx_\eps U_m \mid (\ac^0\text{-}\flip(X',Y',b'), Y, Y').
    \end{align*}
    where $m \ge \Omega(k)$ and $\eps = 6\cdot 2^{-\log^c n_1}$.
\end{theorem}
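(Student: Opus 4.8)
The plan is to show that Algorithm~\ref{alg:flip-flop-ac0} has the stated properties by running the standard analysis of the flip-flop gadget (as in \cite{Li:stoc:17,ChattopadhyayGL:focs:2021,ChattopadhyayL22}), while checking two extra things: that every primitive used is instantiated by the $\ac^0$-computable strong linear seeded extractor $\acLExt$ of Theorem~\ref{thm:aclext}, so the whole circuit has depth $O(c)$, and that the numeric parameter settings leave enough entropy slack. Since $X,X',Y,Y'$ need not be affine here, all conditioning arguments are carried out with \emph{average} conditional min-entropy, using the loss bound $\avgH(A\mid B)\ge H_\infty(A)-|B|$. For the $\ac^0$ bound: Algorithm~\ref{alg:flip-flop-ac0} is a composition of a constant number of calls to $\acLExt_1,\acLExt_2,\acLExt_3$ (each run of $\laExt$ is a $2$-round alternating extraction, hence a constant number of calls to $\acLExt_1,\acLExt_2$, plus three direct calls to $\acLExt_3$) together with $\Slice$ operations, which are projections; by Theorem~\ref{thm:aclext} each $\acLExt_i$ lies in $\ac^0$ of depth $O(c)$, and a constant-depth composition of such circuits is again in $\ac^0$ of depth $O(c)$. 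The numeric choices ($n_2=\Omega(n_1)\le n_1/20$, $d=\Omega(n_1)\le n_2/10$, entropy thresholds $n_1/10,n_2,n_1/2$, polylog seed lengths) are made precisely so that every $\acLExt$ call acts on a source whose min-entropy still exceeds the required threshold after all earlier conditionings, whose total length is $O(d)$ with a small enough leading constant.

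For the strongness conclusion $\ac^0\text{-}\flip(X,Y,b)\approx_\eps U_m\mid(Y,Y')$, I would first argue, via a chain of applications of the look-ahead lemma (Lemma~\ref{lemma:look-ahead}) and strongness of $\acLExt_3$ up the pipeline $r_b\to\overline{y}\to\overline{r}_{1-b}\to\hat{y}\to y_0$, that the seed $Y_0=\Slice(\hat Y,d)$ is $O(\eps)$-close to uniform. I then fix the look-ahead outputs produced from the $X$-side in both calls to $\laExt$, after which $\hat Y$ and $Y_0$ become deterministic functions of $Y$; since $X$ still has min-entropy $\ge n_1/2=k_3$ after those fixings and is independent of $(Y,Y')$, for all but an $O(\eps)$ fraction of the fixings $\hat X=\acLExt_3(X,Y_0)$ extracts with a good seed, and averaging yields $\hat X\approx_\eps U_m\mid(Y,Y')$.

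The heart of the proof is the non-malleability conclusion when $b\neq b'$, so $b'=1-b$. The construction is designed so that exactly one of the two ``flop'' steps is winning: step~$2$ uses control bit $b$ and step~$4$ uses control bit $1-b$, so on one of them the unprimed side reads the higher-indexed output of the preceding $\laExt$ while the primed side reads the lower-indexed one. If $b=1$ the winning flop is step~$2$ (unprimed uses $r_1$, primed uses $r_0'$); if $b=0$ it is step~$4$ (unprimed uses $\overline r_1$, primed uses $\overline r_0'$), reached after first carrying $\overline Y$ through step~$2$ (which need only be kept close to uniform, not independent of its tamper). On the winning flop I invoke the asymmetry built into Lemma~\ref{lemma:look-ahead} — the higher-indexed look-ahead output is uniform even conditioned on all lower-indexed outputs together with their tampered copies — to conclude that the $\laExt$-output feeding the unprimed $\acLExt_3$ is fresh relative to the one feeding the primed $\acLExt_3$; strongness of $\acLExt_3$ then gives that the winning flop's output ($\overline Y$ when $b=1$, $\hat Y$ when $b=0$) is close to uniform conditioned on its tampered counterpart and on the $X$-side strings fixed so far. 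I then propagate this freshness forward through the remaining $\acLExt_3$ calls to get $Y_0\approx U_d$ conditioned on $Y_0'$ and the fixed $X$-side strings, and finally apply strongness of $\acLExt_3$ in step~$6$, using that $X$ retains $\ge n_1/2$ average conditional min-entropy, to obtain $\hat X\approx U_m\mid(\hat X',Y,Y')$. Summing the errors of the at most six $\acLExt$ invocations gives $\eps=6\cdot 2^{-\log^c n_1}$.

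The main obstacle will be the bookkeeping of conditionings: choosing the order in which to fix the $Y$-derived slices and the $X$-side look-ahead outputs so that, at every step, independence between the $X$-side and $Y$-side variables is preserved (conditioning on a function of both $X$ and $Y$, such as $R_b$, can create correlations that must be undone by also fixing the relevant $Y$-slices first), no source drops below its threshold, and the two cases $b=0$ and $b=1$ are routed through the correct flop. This part is structurally identical to the classical flip-flop analysis; the genuinely new work is only verifying that the $\ac^0$ extractors of Theorem~\ref{thm:aclext} — whose polylogarithmic seed lengths force $d$ and $n_2$ to be a small constant fraction of $n_1$, and whose outputs are $\Omega(n_1)$ — still fit inside all of these constraints with room to spare.
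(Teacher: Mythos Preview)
Your proposal is correct and follows essentially the same approach as the paper: both analyze Algorithm~\ref{alg:flip-flop-ac0} via the standard flip-flop machinery (look-ahead lemma, strongness of $\acLExt_3$, independence-merging), tracking average conditional min-entropy through the same conditioning chain and verifying that the constant number of $\acLExt$ calls keeps the circuit in depth-$O(c)$ $\ac^0$. The only presentational difference is that you do a case split on $b$ to locate the ``winning'' flop, whereas the paper treats both cases uniformly by fixing $\overline{Y'}$ (and then all primed outputs $\overline{R'_0},\overline{S'_1},\overline{R'_1}$ of the second $\laExt$) before analyzing the unprimed second alternating extraction, which makes $\overline{R_{1-b}}$ close to uniform given $\overline{R'_{1-b'}}$ without needing to know which index is larger.
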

\begin{proof} We show that Algorithm~\ref{alg:flip-flop-ac0} is a construction of such functions.
\begin{enumerate}
    \item Since $S_0 = U_d$ and $\avgH(X)\ge k_1$, by the property of $\acLExt_1$, conditioned on the fixings of $S_0$, $R_0\approx_{\eps_1} U_d$ is a linear function of $X$, thus of $(Y,Y')$. Since $\avgH(\tilde Y \mid S_0,S_0')\ge n_2-2d \ge k_2$ and $R_0\approx_{\eps_1} U_d$, by the property of $\acLExt_2$, conditioned on the fixings of $R_0$, $S_1\approx_{\eps_1+\eps_2} U_d$ is a linear function of $Y'$, thus independent of $X$. Since $\avgH(X \mid R_0,R_0')\ge n_1k-2d\ge k_1$ and $S_1\approx_{\eps_1+\eps_2} U_d$, by the property of $\acLExt_1$, conditioned on $S_1$, $R_1\approx_{\eps_1+\eps_2+\eps_1} U_d$ is a linear function of $X$, thus independent of $(Y,Y')$.
    \item Since $R_b\approx_{\eps_1+b(\eps_1+\eps_2)} U_d$ and $R_b$ ($R_b'$) is independent of $Y$ ($Y'$) conditioned on $\cbra{S_0,S_1}$ ($\cbra{S_0',S_1'}$). Fix $(R_b,R_b')$ and $\overline{Y'}$, $\avgH(Y \mid S_0,S_0',S_1,S_1',\overline{Y'}) \ge n_1-4d-n_2 \ge k_3$, then by the property of $\acLExt_3$, $\overline{Y}\approx_{\eps_3+\eps_1+b(\eps_1+\eps_2)} U_{n_2}$.
    \item Now that $\overline{Y'}$ is fixed, we can fix $(\overline{R_0'},\overline{S_1'},\overline{R_1'})$. This only cause at most $2d$ entropy loss to $X$.
    Since $\overline{S_0}$ is a slice of $\overline{Y}$, then $\overline{S_0}\approx_{\eps_3+\eps_1+b(\eps_1+\eps_2)} U_d$. Since also $\avgH(X\mid R_0,R_0',R_1,R_1',\overline{R_0'},\overline{R_1'})$ $\ge n_1-6d \ge k_1$, by the property of $\acLExt_1$, conditioned on the fixings of $\overline{S_0}$, $\overline{R_0}\approx_{\eps_3+2\eps_1+b(\eps_1+\eps_2)} U_d$ is a linear function of $X$, thus independent of $(Y,Y')$. Since $\avgH(\overline{Y} \mid \overline{Y'},S_0,S_0',S_1,S_1,\overline{S_0},\overline{S_0'})$
    $\ge n_2 - 6d \ge k_2$ and $\overline{R_0}\approx_{\eps_3+2\eps_1+b(\eps_1+\eps_2)} U_d$, by the property of $\acLExt_2$, conditioned on the fixings of $\overline{R_0}$, $\overline{S_1}\approx_{\eps_3+(2+b)\eps_1+(b+1)\eps_2} U_d$ is a linear function of $\overline{Y}$, thus independent of $X$. Since $\avgH(X \mid R_0,R_0',R_1,R_1',\overline{R_0},\overline{R_0'},\overline{R_1'})\ge n_1 - 7d \ge k_1$ and $\overline{S_1}\approx_{\eps_3+(2+b)\eps_1+(b+1)\eps_2} U_d$, by the property of $\acLExt_1$, conditioned on the fixings of $\overline{S_1}$, $\overline{R_1}\approx_{\eps_3+(3+b)\eps_1+(b+1)\eps_2} U_d$ is a linear function of $X$, thus independent of $(Y,Y')$.
    \item From the above analysis, for all $b'\in\bin$, $\overline{R_{1-b'}} \approx_{\eps_3+3\eps_1+\eps_2} U_d \mid (Y,Y')$ and $\overline{R_{1-b}} \approx_{\eps_3+3\eps_1+\eps_2} U_d \mid (\overline{R'_{1-b}},Y,Y')$. Therefore, conditioned on the fixing of $(\overline{R_{1-b}},\overline{R'_{1-b}})$, $\hat Y$ ($\hat Y'$) is a linear function of $Y$ ($Y'$) and is therefore independent of $X$ ($X'$). By Lemma~\ref{lemma:ind-merging}, $\hat Y \approx_{2\eps_3+3\eps_1+\eps_2} U_{n_2} \mid (\hat Y',\overline{R_{1-b}},\overline{R'_{1-b}})$.
    \item Now it is easy to see that $Y_0 \approx_{2\eps_3+3\eps_1+\eps_2} U_{n_2} \mid ( Y_0,\overline{R_{1-b}},\overline{R'_{1-b}})$, and $Y_0$ $(Y_0')$ is independent of $X$ $(X')$. We also have $\avgH(X \mid R_0,R_0',R_1,R_1',\overline{R_0},\overline{R_1},\overline{R_0'},\overline{R_1'}) \ge n_1-8d \ge k_3$. By Lemma~\ref{lemma:ind-merging}, it holds that $\hat X \approx_{3\eps_3+3\eps_1+\eps_2} U_{n_1} \mid (\hat X',Y,Y')$. 
\end{enumerate}
This completes the proof of Theorem~\ref{thm:ac0-flip}.
\end{proof}

\paragraph{$\ac^0$-computable standard correlation breaker.} 
The following algorithm is a modification of the correlation breaker in~\cite{Li:stoc:17} so that it is computable by $\ac^0$ circuits.

\begin{definition}[$\acCB$]\label{def:accb}
    A function $\acCB:\bin^n \times \bin^d \times \bin^a \to \bin^m$ is a correlation breaker for entropy $k$ with error $\eps$ that can be computed by $\ac^0$ circuit of depth $c$ (or a $(k,\eps,c)$-affine correlation breaker for short) if for every $X,X'\in \bin^n$, $Y,Y' \in \bin^d$, $\alpha,\alpha' \in \bin^a$ s.t.
    \begin{itemize}
        \item $X$ is an $(n,k)$ source and $Y$ is uniform
        \item $(X,X')$ is independent of $(Y,Y')$
        \item $\alpha \neq \alpha'$
    \end{itemize}
    $\acCB$ can be computed by $\ac^0$ circuits of depth $O(c)$ and
    \begin{align*}
        \acCB(X,Y,\alpha) \approx_{\eps} U_m \mid \acaffCB(X',Y',\alpha').
    \end{align*}
    We say $\acCB$ is strong if
     \begin{align*}
        \acCB(X,Y,\alpha) \approx_{\eps} U_m \mid (\acaffCB(X',Y',\alpha'),Y',Y).
    \end{align*}
\end{definition}

\begin{algorithm}[H]
    \caption{$\acCB(x,y,id)$}
    \label{alg:CB}
    \begin{algorithmic}
        \medskip
        \State \textbf{Input:} Bit strings $x, y,id$ of length $n,d,a$ respectively.
        \State \textbf{Output:} Bit string $\hat v$ of length $m$.
        \State \textbf{Subroutines and Parameters:} \\
        Fix a constant $c$. Let $\ell = \log(a)$, $s=d/(1000(\ell+1))$, $r=s/a$, $m=\Omega(d)$. \\
        Let $\acLExt_0:\bin^n \times \bin^{s_0} \to \bin^{d_0}$ be the $\ac^0$-computable strong seeded extractor from~\ref{thm:aclext} set to extract from a $(n,d)$ source where $d=\Omega(n)$, seed $s_0=O(\log^{c+1}n)$, output $ d_0=\Omega(d)\le 0.3d$, $d_0\ge 200\ell s$ and error $\eps_n = 2^{-\log^c n}$. 
        \\
        Let $\IP:\bin^{d_0}\times \bin^{d_0}\to \bin^{d_0/6}$ be the two source extractor from Theorem~\ref{thm:IP} with error $\eps_{\IP} = 2^{-\Omega(d)}$. \\
        Let $\ac^0$-$\laExt_{2\ell+1}:\bin^d \times \bin^{d_0/6} \to \pbra{\bin^{3s}}^{2\ell+1}$ be the look-ahead extractor from Lemma~\ref{lemma:look-ahead} with the following extractors for Quentin and Wendy:
        \begin{itemize}
            \item $\acLExt_q:\bin^{d_0/6}\times \bin^{3s}\to \bin^{3s}$ be the $\ac^0$-computable strong seeded extractor from Theorem~\ref{thm:aclext} set to extract for $(d_0/6,d_0/12)$ sources with error $\eps_d = 2^{-\log^c d_0/6}$.
            \item $\acLExt_w:\bin^{d}\times \bin^{3s}\to \bin^{3s}$ be the $\ac^0$-computable strong seeded extractor from Theorem~\ref{thm:aclext} set to extract from $(d,d/4)$ sources with error $\le \eps_d = 2^{-\log^c d_0/6}$.
        \end{itemize}
        Let $\ac^0$-$\laExt_{\ell+1}:\bin^n \times \bin^{d_0/6} \to \pbra{\bin^{3s}}^{\ell+1}$ be the look-ahead extractor from Lemma~\ref{lemma:look-ahead}.
        \begin{itemize}
            \item $\acLExt_q:\bin^{d_0/6}\times \bin^{3s}\to \bin^{3s}$ be the $\ac^0$-computable strong seeded extractor from Theorem~\ref{thm:aclext} set to extract from $(d_0/6,d_0/12)$ sources with error $\eps_d = 2^{-\log^c d_0/6}$.
            \item $\acLExt_w':\bin^{n}\times \bin^{3s}\to \bin^{3s}$ be the $\ac^0$-computable strong seeded extractor from Theorem~\ref{thm:aclext} set to extract from $(n,d/4)$ sources with error $\eps_n = 2^{-\log^c n}$.
          
        \end{itemize}
        Let $\ac^0$-$\flip: \bin^{3s} \times \bin^{3s} \times \bin^a \to \bin^{r}$ be the $\ac^0$-computable $\flip$ from Theorem~\ref{thm:ac0-flip} with error $6 \cdot \eps_n$. \\
        Let $\acLExt: \bin^{3s} \times \bin^{r} \to \bin^r$ be the $\ac^0$-computable strong seeded extractor from Theorem~\ref{thm:aclext} set to extract from uniform sources with error $\eps_s= 2^{-\log^c(3s)}$. \\
        Let $\acLExt': \bin^{3s} \times \bin^{r} \to \bin^{r}$ be the $\ac^0$-computable strong seeded extractor from Theorem~\ref{thm:aclext} set to extract form uniform sources with error $\eps_s$.\\
        Let $\acLExt'': \bin^d \times \bin^{r}\to \bin^{s}$ be the $\ac^0$-computable strong seeded extractor from Theorem~\ref{thm:aclext} set to extract from a $(d,d/4)$ source with error $\eps_d$. \\
        Let $\acLExt''': \bin^n \times \bin^{s}\to \bin^{m}$ be the $\ac^0$-computable strong seeded extractor from Theorem~\ref{thm:aclext} set to extract from a $(n,d/4)$ source with error $\eps_n$.
    \end{algorithmic}
\end{algorithm}
\clearpage
\begin{breakalgo}
  \\ \\
    Let $\ac^0$-$\nipm_2$ construction be $2$-alternating extraction $\bin^{r}\times \bin^{r} \times \bin^{3s} \to \bin^r$ from Definition~\ref{def:l-look-ahead} with the following extractors for Quentin and Wendy:
        \begin{itemize}
            \item $\acLExt_q':\bin^r \times \bin^r \to \bin^{r/2}$ be the $\ac^0$-computable strong seeded extractor from Theorem~\ref{thm:aclext} set to extract from $(r,r)$ sources with error $\eps_r = 2^{-\log^c r}$.
            \item $\acLExt_w'':\bin^{3s} \times \bin^{r/2} \to \bin^{r}$ be the $\ac^0$-computable strong seeded extractor from Theorem~\ref{thm:aclext} set to extract from $(3s,s)$ sources with error $\eps_s$.
        \end{itemize}
        \hrulefill 
        \begin{enumerate}    
            \item Let $y_0\circ y_1=\Slice(y,s_0+0.3d)$ where $y_0$ has length $s_0$ and $x_1 = \acLExt_0(x,y_0)$.
            \item Compute $z = \IP(x_1,y_1)$. 
            \item Let $r_0,r_1,\cdots,r_{2\ell} = \ac^0\text{-}\laExt_{2\ell+1}(y,z)$.
            \item Let $s_0,s_1,\cdots,s_{\ell} = \ac^0\text{-}\laExt_{\ell+1}(x,z)$.
            \item Let $V^0$ be an $a\times r$ matrix whose $i$'th row is $V^{0}_i = \ac^0\text{-}\flip(s_0,r_0,\alpha_i)$ and has $r$ bits. 
            \item For $j = 1,\cdots,\ell$ do the following. Merge the matrix $v^{j-1}$ two rows by two rows: Note that $v^{j-1}$ has $a/2^{j-1}$ rows, for $i = 1,\cdots,a/2^j$, compute $\overline{v^{j-1}_i}  = \ac^0\text{-}\nipm(v^{j-1}_{2i-1}, v^{j-1}_{2i} ,r_{2j-1})$  which outputs $r$ bits, and $\tilde{v}_i^{j-1}= \acLExt(r_{2j},\overline{v^{j-1}_i})$ which has $r$ bits. Finally compute $v^j_i = \acLExt'(s_j, \tilde{v}^{j-1}_i$) which has $r$ bits.
            \item Compute $\hat{v} = \acLExt'''(x,\acLExt''(y,v^\ell))$.
            \end{enumerate}
\end{breakalgo}

\begin{theorem}[$\acCB$]
    \label{thm:acCB}
    For every constant $c$, there exists an explicit strong correlation breaker $\bin^n \times \bin^d \times \bin^a \to \bin^{\Omega(d)}$ for entropy $d$ with error $\eps = O(a \cdot 2^{-\log^c n})$, where $d= \Omega(n)$ and $a = O(\frac{d}{\log^c d})$. Moreover, the correlation breaker is computable by $\ac^0$ circuits of depth $O(c)$.
\end{theorem}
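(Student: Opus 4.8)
The plan is to show that Algorithm~\ref{alg:CB} realizes such a correlation breaker. The construction is the ``$\ac^0$ version'' of the standard correlation breaker of Li~\cite{Li:stoc:17} (in the form also used in~\cite{ChattopadhyayL22}): every strong linear seeded extractor in that construction is replaced by the $\ac^0$-computable strong linear seeded extractor $\acLExt$ of Theorem~\ref{thm:aclext}, the $\flip$ subroutine is replaced by the $\ac^0$-$\flip$ of Theorem~\ref{thm:ac0-flip}, the alternating-extraction / look-ahead / non-malleable-independence-preserving-merger subroutines are instantiated through $\acLExt$ as indicated in the algorithm, and the inner-product two-source extractor of Theorem~\ref{thm:IP} is used only on $O(\log n)$-bit blocks, so that each of its output bits is a parity over $O(\log n)$ variables and hence computable by a polynomial-size DNF, in particular in $\ac^0$. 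With this set-up the correctness argument is structurally identical to~\cite{Li:stoc:17}; the real work is to re-verify that the quantitatively weaker $\ac^0$ primitives (seed length $\polylog n$ instead of $O(\log n)$, error $2^{-\polylog n}$ instead of $2^{-\Omega(n)}$) still meet every entropy requirement along the chain, and to control the accumulated error.

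I would carry out the analysis in the order the algorithm proceeds, maintaining a running set of fixed variables as in the proofs of Section~\ref{sec:daext2}. First, $x_1=\acLExt_0(x,y_0)$ is $\eps_n$-close to uniform given $y_0$ and, after fixing $y_0$, a deterministic function of $x$; since $(X,X')$ is independent of $(Y,Y')$, the bit-block inner product $z=\IP(x_1,y_1)$ is then close to uniform given $y_1$ and, after fixing $y$, a deterministic function of $x$. Second, the two $\ac^0$ look-ahead extractors produce $r_0,\dots,r_{2\ell}$ (deterministic functions of $y$) and $s_0,\dots,s_\ell$ (deterministic functions of $x$) which, together with their tampered copies, satisfy the ``staircase'' guarantees of Lemma~\ref{lemma:look-ahead}. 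Third, the flip-flop initialization gives a matrix $V^0$ all of whose rows are close to uniform given their tampered copies, and for the advice index $i^\ast$ with $\alpha_{i^\ast}\neq\alpha'_{i^\ast}$ the stronger guarantee $V^0_{i^\ast}\approx U \mid V'^0_{i^\ast}$ of Theorem~\ref{thm:ac0-flip}. Fourth, the merging loop: at level $j$, $\ac^0$-$\nipm_2$ together with the re-extractions by $r_{2j}$ and $s_j$ merges rows two at a time while (i) keeping every surviving row close to uniform given all tampered copies and (ii) preserving that the group containing $i^\ast$ still has a row broken from its tampering (using Lemma~\ref{lemma:ind-merging} and Definition~\ref{def:l-look-ahead} with the $\ac^0$ extractors); after $\ell=\log a$ levels one row $v^\ell$ remains with this property. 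Fifth, $\hat v=\acLExt'''(x,\acLExt''(y,v^\ell))$ extracts the output from $x$, and strongness of $\acLExt''',\acLExt''$ together with $v^\ell$ being close to uniform given $(v'^\ell,y,y')$ yields $\acCB(X,Y,\alpha)\approx_\eps U_m \mid (\acCB(X',Y',\alpha'),Y,Y')$, which is the strong guarantee of Definition~\ref{def:accb}. At each fixing I would invoke the affine/average-min-entropy bookkeeping to check that $X$ retains $\Omega(n)$ and $Y$ retains $\Omega(d)$ (average) min-entropy; this is where $a=O(d/\log^c d)$ and $d=\Omega(n)$ enter, since the total length of all seed slices peeled off along the chain must stay below the entropy budget.

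For the complexity statement, every subroutine above is an $\ac^0$ circuit of depth $O(c)$ by Theorems~\ref{thm:aclext} and~\ref{thm:ac0-flip} (the $\IP$ step and the parities internal to the $\ac^0$ extractors act on $O(\log n)$ inputs, hence are in $\ac^0$), and $\acCB$ composes only a bounded number of such layers, so it is in $\ac^0$ of depth $O(c)$. For the error, each extractor invocation has error at most $2^{-\log^{c'}(\cdot)}$ for a constant $c'$ we are free to enlarge (Theorem~\ref{thm:aclext}); the flip-flop initialization contributes $a$ invocations and the merging tree contributes $\sum_{j\le\ell} a/2^j=O(a)$ more, so after choosing the internal constants so that every term is at most $2^{-\log^c n}$, the total error is $O(a\cdot 2^{-\log^c n})$. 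The output length $m=\Omega(d)$ follows because each $\acLExt$ retains a constant fraction of the entropy it extracts and the slice lengths are set so the source fed to $\acLExt'''$ still has $\Omega(d)$ entropy.

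The main obstacle, as always with $\ac^0$ extractor constructions, is quantitative rather than conceptual: because the $\ac^0$ primitives have $\polylog n$ seed length and $2^{-\polylog n}$ error, the intermediate strings $s$, $r$ and the matrix rows shrink far faster than in~\cite{Li:stoc:17}, leaving little slack. The two delicate points are therefore (a) showing the $\ac^0$-$\nipm$ merging step still preserves independence from all tampered copies at \emph{every} level of the tree with the weaker extractors, and (b) simultaneously choosing the slice lengths $s=d/(1000(\ell+1))$, $r=s/a$, the internal output lengths $d_0, d_0/6, 3s, r$, and the internal error constants so that each of the $O(a)$ invocations clears its extractor's entropy threshold while the $O(a)$ errors still sum to $O(a\cdot2^{-\log^c n})$ — which is precisely what pins down the regime $d=\Omega(n)$, $a=O(d/\log^c d)$.
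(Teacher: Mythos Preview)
Your overall plan matches the paper's: Algorithm~\ref{alg:CB} is analyzed step by step exactly as you outline, with the look-ahead, flip-flop, and merging-tree arguments transplanted from~\cite{Li:stoc:17} and Lemma~\ref{lemma:ind-merging} invoked at every re-extraction. The error bookkeeping and the final strongness argument in Step~7 are also the same.

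There are, however, two concrete inaccuracies in your $\ac^0$ justification. First, you write that $\IP$ is used only on $O(\log n)$-bit blocks; in Algorithm~\ref{alg:CB} the inner product is applied to $x_1,y_1\in\{0,1\}^{d_0}$ with $d_0=\Theta(d)=\Theta(n)$, so each output bit of $\IP$ is a parity of $\Theta(n)$ products, which is not an $\ac^0$ function. The paper's own proof does not explicitly justify this step either, so you are not deviating from its analysis, but the specific reason you give is factually wrong about what the algorithm does. Second, your claim that $\acCB$ composes only a bounded number of layers holds only when $a$ (and hence $\ell=\log a$) is constant: both the $(2\ell{+}1)$-round look-ahead extractors and the $\ell$-level merging loop are sequential, so for the full range $a=O(d/\log^c d)$ in the theorem statement the depth is $\Theta(c\log a)$ rather than $O(c)$. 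In the paper's downstream use (Algorithm~\ref{alg:acExt}) one has $a=\log t$ with $t=O(1/\delta)$ constant, so this is fine in practice, but your argument does not cover the theorem as stated. One further minor point: after the flip-flop step, only the row with a differing advice bit is uniform given its tampered copy; the remaining rows are merely uniform given $(Y,Y')$, and the paper's Step~5 makes this distinction explicitly.
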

\begin{proof}
    We show that Algorithm~\ref{alg:CB} gives such a correlation breaker. We shall analyze the algorithm step by step. \\
    \textbf{Step 1.} Fix $(Y_0,Y_0')$, conditioned on this fixing, $X_1$ is a linear function of $X$ and is independent of $(Y_1,Y_1')$ and $(Y,Y')$. \\
    \textbf{Step 2.} Since $X_1 \approx_{\eps_n} U_{0.3d}$ and $Y_1 = U_{0.3d}$, by the definition of $\IP$, $Z\approx_{\eps_{\IP}} U_{d_0/6}$. \\
    \textbf{Step 3.} 
        \begin{itemize}
            \item Further fix $(Y_1,Y_1')$, conditioned on this fixing, it holds $(Z,Z')$ is a deterministic function of $(X_1,X_1')$, and thus $(Z,Z')$ is independent of $(Y,Y')$.
            \item $Z\approx_{\eps_{\IP}} U_{d_0/6}$ and $\avgH(Y \mid Y_0,Y_0',Y_1,Y_1') \ge d - 2s_0 - 2\cdot d_0 \ge 0.3d$.
            \item By Lemma~\ref{lemma:look-ahead}, since $\avgH(Y \mid Y_1,Y_1')\ge 0.3d\ge d/4+2(2\ell+1)(3s)+2\log(1/\eps_d)$ and $d_0/6\ge d_0/12 + 2(2\ell+1)(3s) + 2\log(1/\eps_d)$, we have that for any $0 \le j \le 2\ell - 1$, it holds that
            \begin{align*}
                R_{j+1} \approx_{O(\ell \eps_d)} U_{3s} \mid (Z, Z',R_0,R_0',\cdots,R_j,R_j').
            \end{align*}
            By a hybrid argument and the triangle inequality, we have that
            \begin{align}\label{eq:r-look-ahead}
                (Z,Z',R_0,R_0',\cdots,R_{2\ell},R_{2\ell}') \approx_{O(\ell^2 \eps_d)} (Z,Z',U_{3s},R_0',\cdots,U_{3s},R_{2\ell}').
            \end{align}
            where each $U_{3s}$ is independent of all the previous random variables (but may depend on later random variables). 
            \item Conditioned on the fixing of $(Z,Z')$, we have $\cbra{(R_i,R_i')}_{i\in[0,2\ell]}$ is a deterministic function of $(Y_1,Y_1')$, thus independent of $(X,X')$. 
        \end{itemize}
        \textbf{Step 4.} 
        \begin{itemize}
            \item Fix $(X_1,X_1')$, conditioned on this fixing, it holds $(Z,Z')$ is a deterministic function of $(Y_1,Y_1')$, and thus $(Z,Z')$ is independent of $(X,X')$.
            \item $Z\approx_{\eps_{\IP}} U_{d_0/6}$ and $\avgH(X \mid X_1,X_1') \ge d-2\cdot d_0 \ge 0.4d$.
            \item By Lemma~\ref{lemma:look-ahead}, since $\avgH(X \mid X_1,X_1')\ge 0.4d\ge d/4+2(\ell+1)(3s)+2\log(1/\eps_n)$ and $d_0/6\ge d_0/12 + 2(\ell+1)(3s)+2\log(1/\eps_d)$, we have that for any $0 \le j \le \ell -1$, it holds that
            \begin{align*}
                S_{j+1} \approx_{O(\ell (\eps_n+\eps_d)/2)} U_{3s} \mid (Z,Z',\cbra{S_0,S_0',\cdots,S_j,S_j'}).
            \end{align*}
            By a hybrid argument and the triangle inequality, we have that
            \begin{align}\label{s-look-ahead}
                (Z,Z',S_0,S_0',\cdots,S_{\ell},S_{\ell}') \approx_{O(\ell^2 (\eps_n+\eps_d)/2)} (Z,Z',U_{3s},S_0',\cdots,U_{3s},S_{\ell}').
            \end{align}
            where each $U_{3s}$ is independent of all the previous random variables (but may depend on later random variables).
            \item Conditioned on the fixing of $(Z,Z')$, we have $\cbra{(S_i,S_i')}_{i\in[0,\ell]}$ is a deterministic function of $(X_1,X_1')$, thus independent of $(Y,Y')$. 
        \end{itemize}
        Therefore, we conclude that conditioned on the fixing of $(X_1,X_1',Y_1,Y_1',Z,Z')$, we have $\cbra{(R_i,R_i')}_{i\in[0,2\ell]}$ is a deterministic function of $(Y,Y')$, and $\cbra{(S_i,S_i')}_{i\in[0,\ell]}$ is a deterministic function of $(X,X')$, thus they are independent. Moreover each $R_i$ and $S_i$ is close to uniform given the previous random variables. From now on, we will assume that each $R_i$ and $S_i$ are uniform (*) and add back an error of $O(\ell^2(\eps_n+\eps_d))$ in the end. Since in the algorithm and the analysis below, each $R_i$ and $S_i$ are used at most twice either as source of seed, this is sufficient. \\
        \textbf{Step 5.} By Theorem~\ref{thm:ac0-flip}, for all $i\in [a]$, $V_i^0 \approx_{O(\eps_n)} U_s$. Moreover, since $\alpha \neq \alpha'$, there exists an $i\in [a]$ such that $V_i^0 \approx_{O(\eps_n)} U_s \mid (V_i'^0,R_0,R'_0)$. Now that conditioned on the fixing of $(R_0,R_0')$, $(V^0,V'^0)$ is a deterministic function of $(S_0,S'_0)$, and thus independent of $\cbra{(R_i,R'_i)}_{i\in[2\ell]}$. \\
        \textbf{Step 6.}
        First note that the followings:
        \begin{enumerate}
            \item conditioned on the fixing of $(R_0,R'_0)$, $(V^0,V'^0)$ is a linear function of $(S_0,S_0')$.
            \item Each row of $V^0$ is close to uniform and there exists a row in $V^0$ that is close to uniform even conditioned on the corresponding row in $V'^0$.
        \end{enumerate} Along the analysis below, we prove by induction that for any $j\in[0,\ell]$, 
        \begin{enumerate}
            \item[(a)] each row of $V^j$ is close to uniform, and there exists a row in $V^j$ that is close to uniform even conditioned on the corresponding row in $V_j'$.
        \end{enumerate}
        For any $j\in[\ell]$, it holds that
        \begin{enumerate}
            \item[(b)] conditioned on the fixing of $(R_0,R_0',\cdots,R_{2j-2},R'_{2j-2})$, $(V^{j-1},V'^{j-1})$ is a linear functions of $(S_0,S_0',\cdots,S_{j-1},S'_{j-1})$.
            \item[(c)] each row of $\overline{V^{j-1}}$ ($\tilde{V}^{j-1}$) is close to uniform, and there exists a row in $\overline{V^{j-1}}$ ($\tilde{V}^{j-1}$) that is close to uniform even conditioned on the corresponding row in $\overline{V'^{j-1}}$ ($\tilde{V}'^{j-1}$).
        \end{enumerate}
        For each iteration $j\in[\ell]$, Step $6$ generates $3$ new somewhere random matrices: $\overline{V^{j-1}}$, $\tilde{V}^{j-1}$, and $V^j$ of size $(a/2^j)\times r$,$(a/2^j)\times r$, and $(a/2^j)\times r$ respectively. Each one of them has some properties: \\
        \textbf{Matrix $\overline{V^{j-1}}$.} Conditioned on the fixings of $(R_0,R_0',\cdots,R_{2j-2},R_{2j-2}')$, by our assumption (*), $R_{2j-1} = U_{3s}$. Now, condition on $(R_0,R_0',\cdots,R_{2j-1},R_{2j-1}')$, by Lemma~\ref{lemma:look-ahead}, each row of $\overline{V^{j-1}}$ is $O(2^{j-1}(\eps_s+\eps_n))$ close to uniform. Since there exists one row in $V^{j-1}$ that is close to uniform even given the corresponding row in $V'^{j-1}$, by Lemma~\ref{lemma:ind-merging}, there is one row in $\overline{V^{j-1}}$ that is close to uniform even conditioned on the same row in $\overline{V'^{j-1}}$. Moreover, conditioned on the fixing of $(R_0,R'_0,\cdots,R_{2j-1},R'_{2j-1})$, $(\overline{V^{j-1}},\overline{V'^{j-1}})$ is a linear function of $(V^{j-1},V'^{j-1})$, which, by induction hypothesis, is a linear function $(S_0,S'_0,\cdots,S_{j-1},S'_{j-1})$, and thus independent of $R_{2j}$.\\
        \textbf{Matrix $\tilde{V}^{j-1}$.} 
        First note that the $i$-th row of the matrix $\tilde{V}^{j-1}$ is obtained by using the $i$-th row of matrix $\overline{V^{j-1}}$ to extract from $S_j$, for each $i\in [a/2^j]$. In addition, conditioned on $\overline{V^{j-1}}$, $\tilde{V}^{j-1}$ is a deterministic function of $R_{2j}$. Since $R_{2j} = U_{3s} \mid (R_0,R'_0,\cdots, R_{2j-1}$ 
        $,R'_{2j-1})$ and $\avgH(R_{2j} \mid \overline{{V}^{j-1}}, \tilde{V}^{j-1}_{[u]}) \ge 3s - ur \ge 3s - ar/2^{j-1} \ge 3s - ar \ge s + \log(1/\eps_s)$ where $u\in [a/2^j-1]$, each row of $\tilde{V}^{j-1}$ is uniform by the definition of $\acLExt$. Since there is one row in $\overline{{V}^{j-1}}$ that is $O(2^{j-1}(\eps_s+\eps_n))$ close to uniform conditioned on the corresponding row in $\overline{V'^{j-1}}$, by Lemma~\ref{lemma:ind-merging}, there is also one row in $\tilde{V}^{j-1}$ that is close to uniform even conditioned on the corresponding row in $\tilde{V}'^{j-1}$. \\
        \textbf{Matrix $V^j$.} First note that the $i$-th row of the matrix $V^j$ is obtained by using the $i$-th row of matrix $\tilde{V}^{j-1}$ to extract from $S_j$, for each $i\in [a/2^j]$. In addition, conditioned on $\tilde{V}^{j-1}$, $V^{j}$ is a deterministic function of $S_{j}$. Since $S_j = U_{3s} \mid (S_0,S'_0,\cdots, S_{j-1},S'_{j-1})$ and $\avgH(S_j \mid \tilde{V}^{j-1}, V^j_{[u]}) \ge 3s - ur \ge 3s - ar/2^{j-1} \ge 3s - ar \ge s + \log(1/\eps_s)$ where $u\in [a/2^j-1]$, each row of $V^j$ is $O(2^{j}(\eps_s+\eps_n))$ close to uniform by the definition of $\acLExt'$. Since there is one row in $\tilde{V}^{j-1}$ that is close to uniform conditioned on the corresponding row in $\tilde{V}'^{j-1}$, by Lemma~\ref{lemma:ind-merging}, there is also one row in $V^j$ that is close to uniform even conditioned on the corresponding row in $V'^j$.  \\
        Setting $j=\ell$, we get  ins$V^\ell \approx_{O(2^{\ell}(\eps_s+\eps_n))} U_r \mid V'^\ell$. \\
        \textbf{Step 7.} Note that $H(Y \mid \cbra{R_i,R'_i}_{i\in[0,2\ell]}) \ge d/4+2\log(1/\eps_d)$ and $H(X \mid \cbra{S_i,S'_i}_{i\in [0,\ell]})\ge d+2\log(1/\eps_n)$, since $V^\ell \approx_{O(a(\eps_s+\eps_n))} U_{r} \mid V'^\ell$, by $2$ iterative use of Lemma~\ref{lemma:ind-merging}, it follows that $\hat V \approx_{O(\eps_n+\ell^2(\eps_n+\eps_d)+a(\eps_s+\eps_n)+\eps_d+\eps_n)} U_m \mid \hat V' \iff \hat V \approx_{O(a\eps_n)} U_m \mid \hat V'$. Since $\hat V$ is a deterministic function of $X$ conditioned on $(Y,Y',\cbra{S_i,S'_i}_{i\in [0,\ell]})$ and $\acLExt'''$ is strong, it also holds that $\hat V \approx_{O(a\eps_n)} U_m \mid (\hat V',Y,Y')$. This completes the proof of Theorem~\ref{thm:acCB}.
\end{proof}

\paragraph{$\ac^0$-computable $t$-affine correlation breaker.}
The following definition is a modification of $t$-affine correlation breaker~\cite{ChattopadhyayL22} into the $\ac^0$-computable setting.
\begin{definition}[$\acaffCB$]\label{def:acaffcb}
    A function $\acaffCB:\bin^n \times \bin^d \times \bin^a \to \bin^m$ is a $t$-affine correlation breaker for entropy $k$ with error $\eps$ that can be computed by $\ac^0$ circuit of depth $c$ (or a $(t,k,\eps,c)$-affine correlation breaker for short) if for every $X,A,B \in \bin^n$, $Y,Y^{[t]} \in \bin^d$, $Z$ and string $\alpha,\alpha^{[t]} \in \bin^a$ s.t.
    \begin{itemize}
        \item $X = A+B$
        \item $\avgH(A \mid Z) \ge k$
        \item $(Y,Z) = (U_d,Z)$
        \item $A$ is independent of $(B,Y,Y^{[t]})$ given $Z$
        \item $\alpha,\alpha^1, \cdots, \alpha^t$ be $a$-bit strings s.t. $\alpha \neq \alpha^i$ for every $i \in [t]$
    \end{itemize}
    $\acaffCB$ can be computed by $\ac^0$ circuits of depth $O(c)$ and
    \begin{align*}
        \acaffCB(X,Y,\alpha) \approx_{\eps} U_m \mid \cbra{\acaffCB(X^i,Y^i,\alpha^i)}_{i\in[t]}.
    \end{align*}
    We say $\acaffCB$ is strong if
     \begin{align*}
        \acaffCB(X,Y,\alpha) \approx_{\eps} U_m \mid (\cbra{\acaffCB(X^i,Y^i,\alpha^i)}_{i\in[t]},Y^{[t]},Y).
    \end{align*}
\end{definition}
Algorithm~\ref{alg:acaffcb} below is a construction of  strong $(t,k,\eps,c)$-affine correlation breaker.   
\begin{algorithm}[H]
    \caption{$\acaffCB(x,y,id)$}
    \label{alg:acaffcb}
    \begin{algorithmic}
        \medskip
        \State \textbf{Input:} Bit strings $x=w+z,y, id$ of length $n,d=\Omega(n),a$ respectively.
        \State \textbf{Output:} Bit string $q_{\lceil \log t\rceil}$ of length $r$.
        \State \textbf{Subroutines and Parameters:} \\
        Fix a constant $c$. Let $d_0' = O(\log^{c+1} n)$, $d_0 \le \min\cbra{k,d}/(10t+10)$, $d_x\le d_0/(2\log t)$, $r=k/(10+10t)$, $d_y= \frac{r}{4t\log t}$. \\
        Let $\acLExt:\bin^n \times \bin^{d_0'} \to \bin^{d_0}$ be the $\ac^0$-computable strong seeded extractor from Theorem~\ref{thm:aclext} with error $\eps_n = 2^{-\log^c n}$. \\
        Let $\acCB:\bin^d \times \bin^{d_0} \times \bin^a \to \bin^{d_x}$ be the $\ac^0$-computable correlation breaker from Theorem~\ref{thm:acCB} with error $\eps' = O(a\cdot 2^{-\log^c d})$. \\
        
        Let $\acLExt':\bin^n \times \bin^{d_x} \to \bin^{r}$ be the $\ac^0$-computable strong seeded extractor from Theorem~\ref{thm:aclext} with error $\eps_n = 2^{-\log^c n}$. \\
        Let $\acLExt_w:\bin^d \times \bin^{d_y} \to \bin^{d_x}$ be the $\ac^0$-computable strong seeded extractor from Theorem~\ref{thm:aclext} with error $\eps_d = 2^{-\log^c d}$. \\
        Let $\acLExt_q:\bin^r \times \bin^{d_x} \to \bin^{d_y}$ be the $\ac^0$-computable strong seeded extractor from Theorem~\ref{thm:aclext} with error $\eps_r = 2^{-\log^c r}$.
        \\\hrulefill \\
        Let $y_0 = \Slice(y,d_0')$ \\
        Let $x_0 = \acLExt(x,y_0)$ \\
        Let $y_1 = \acCB(y,x_0,\alpha)$ \\
        Let $q_0 = \acLExt'(x,y_1)$ \\
        For every $i$, $1\le i\le \lceil \log t \rceil$ do the following
        \begin{enumerate}
            \item Let $s_{i-1} = \Slice(q_{i-1},d_y)$ 
            \item Let $r_{i-1} = \acLExt_w(y,s_{i-1})$ 
            \item Let $\overline{s_i} = \acLExt_q(q_{i-1},r_{i-1})$ 
            \item Let $\overline{r_i} = \acLExt_w(y, \overline{s_i})$ 
            \item Let $q_i = \acLExt'(x, \overline{r_i})$
        \end{enumerate}
    \end{algorithmic}
\end{algorithm}

\begin{theorem}[$\acaffCB$]\label{thm:acaffcb}
    For every $c \in \N$, constant $0<\delta<1$ and $n\in\N$ and every $k,d,t,a$, there exists a constant $C$ such that if
    \begin{itemize}
        \item $k\ge \delta n$
        \item $d= \Omega(n)$ and $d\le n$
        \item $t = O(1)$
        \item $a \le C\frac{n}{\log^c (n)}$
    \end{itemize}
    then there exists a strong $\acaffCB:\bin^n\times \bin^d\times \bin^a \to \bin^m$ which is computable by depth $O(c)$ $\ac^0$ circuits
    \begin{itemize}
        \item $m = \Omega(k)$
        \item $\eps = O(2^{-\log^{c-1} k})$

    \end{itemize}
\end{theorem}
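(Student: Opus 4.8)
The plan is to show that Algorithm~\ref{alg:acaffcb} realizes such a function, following the analysis of the $t$-affine correlation breaker of~\cite{ChattopadhyayL22} (which itself builds on the standard correlation breaker of~\cite{Li:stoc:17}) but with every strong linear seeded extractor replaced by the $\ac^0$-computable $\acLExt$ of Theorem~\ref{thm:aclext} and the internal standard correlation breaker replaced by the $\ac^0$-computable $\acCB$ of Theorem~\ref{thm:acCB}. The $\ac^0$ claim is then immediate: since $t=O(1)$ the loop runs a constant number of times, so the whole algorithm is a constant-depth composition of the depth-$O(c)$ circuits supplied by Theorems~\ref{thm:aclext} and~\ref{thm:acCB}; the output $q_{\lceil\log t\rceil}$ has $r=k/(10+10t)=\Omega(k)$ bits, giving $m=\Omega(k)$.

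For correctness I would first condition on $Z$ (recall the source $X=A+B$ is common to all the tampered copies; only $Y$ and the advice are tampered), so that $A$ is independent of $(B,Y,Y^{[t]})$, $Y$ is uniform, and $\avgH(A\mid Z)\ge k$. The key structural fact is that every $\acLExt$ is linear, so an application to the source splits as $\acLExt(X,s)=\acLExt(A,s)+\acLExt(B,s)$; fixing the $B$-parts (and other $B$- or $Y$-derived quantities) is harmless for the conclusion and keeps the ``$A$-side'' of the computation independent of the ``$Y$-side''. In Step~1, $y_0$ is a uniform slice of $Y$, and after fixing $y_0,y_0^{[t]}$ and $\acLExt(B,y_0)$ the string $x_0$ becomes a deterministic affine function of $A$ that is $\eps_n$-close to uniform (as $A$ still has entropy $\gg d_0$), hence a good uniform seed independent of $Y$. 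In Step~2, $\acCB$ is run on source $Y$ (still of entropy $\Omega(n)$), uniform seed $x_0$, and advice $\alpha$: since $\alpha\ne\alpha^i$ for every $i$ and the parameter choices ($d_0\le\min\{k,d\}/(10t+10)$, $r=k/(10+10t)$, $d_x\le d_0/(2\log t)$, etc.)\ leave room to condition on the intermediate variables of all $t$ tamperings, Theorem~\ref{thm:acCB} gives, for each $i$ separately, $Y_1\approx_{\eps'} U_{d_x}\mid(Y_1^i,x_0,x_0^{[t]})$. In Step~3 we fix $x_0,x_0^{[t]}$ so that $Y_1$ becomes a function of $Y$ alone, and then $q_0=\acLExt'(X,Y_1)$, with its $B$-part fixed, is $\eps_n$-close to uniform, a function of $(A,Y_1)$, and inherits the per-$i$ guarantee of $Y_1$.

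Then I would analyze the loop. Each round performs one alternating extraction between $q_{i-1}$ and $Y$ (producing $\overline{r_i}$) followed by a fresh extraction $q_i=\acLExt'(X,\overline{r_i})$; the point is to alternate which side supplies entropy, fixing one $Y$-derived and one $A$-derived intermediate output per round so that every extractor call sees a seed independent of its source's remaining entropy, while over all $O(\log t)$ rounds $A$ and $Y$ each lose only $O(\log t\cdot d_0)=O(\min\{k,d\})$ in total. Maintain the invariant (as in~\cite{ChattopadhyayL22}) that after round $i$, $q_i$ is close to uniform conditioned on the tampered copies $q_i^{T_i}$ (and all revealed seeds) for a set $T_i$ with $|T_i|=\min(2^i,t)$; the inductive step is an application of the independence-merging lemma, Lemma~\ref{lemma:ind-merging}, with the $Y$-hops of the alternating extraction furnishing a source/seed pair independent of the $A$-side that lets us enlarge $T_{i-1}$ to $T_i$. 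After $\lceil\log t\rceil$ rounds $|T_{\lceil\log t\rceil}|\ge t$, so $q_{\lceil\log t\rceil}\approx_\eps U_r\mid\{q_{\lceil\log t\rceil}^i\}_{i\in[t]}$, and strongness in $(Y,Y^{[t]})$ holds because the final extraction is from $X$ with a $Y$-derived seed while $A$ still has $\Omega(k)$ entropy. A routine accounting then sums the $O(\log t)$ per-round errors with the primitive errors $\eps_n,\eps_d,\eps_r,\eps'$ (each at most $2^{-\log^c(\cdot)}$ on inputs of size polynomial in $n$) and the $a$-fold union hidden inside $\acCB$, yielding $\eps=O(2^{-\log^{c-1}k})$, where the single lost power of $\log$ is headroom absorbing the $\polylog$ factors from the hybrid and look-ahead arguments.

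The step I expect to be the main obstacle is exactly this loop analysis: deciding, at each of the $O(\log t)$ rounds, which random variables to fix and in what order so that simultaneously (i) every extraction from $X=A+B$ uses a seed already independent of the $A$-side while $A$ keeps $\Omega(k)$ entropy, (ii) every extraction from $Y$ uses a seed independent of the relevant part of $Y$ while $Y$ keeps enough entropy, and (iii) the per-tampering guarantees coming out of $\acCB$ get correctly merged, via Lemma~\ref{lemma:ind-merging}, into one joint guarantee over all $t$ tamperings --- all while keeping the cumulative error below the target. This is where the linearity of $\acLExt$ (the split over $A+B$) does the real work, and where one must check that replacing the usual seeded extractors by $\acLExt$ does not violate a seed-length or entropy-rate constraint at any step.
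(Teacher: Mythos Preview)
Your proposal is correct and follows essentially the same approach as the paper: analyze Algorithm~\ref{alg:acaffcb} by exploiting the linearity of $\acLExt$ to split along $X=A+B$, use Theorem~\ref{thm:acCB} to obtain a per-tampering guarantee on $Y_1$ (and hence on $Q_0$), and then run an induction over the $\lceil\log t\rceil$ loop iterations maintaining the invariant that $Q_i$ is close to uniform given $Q_i^{T_i}$ for every $T_i$ of size $\min(2^i,t)$, with the doubling driven by Lemma~\ref{lemma:ind-merging}. The only point worth tightening in your write-up is the ``enlarge $T_{i-1}$ to $T_i$'' phrasing: the invariant must hold for \emph{every} subset of size $2^{i-1}$, so that at the merge step you can take any $T_i$ of size $2^i$, split it as $T_i=T'\cup T''$ with $|T'|=|T''|=2^{i-1}$, and feed $T'$ to the seed side and $T''$ to the source side of Lemma~\ref{lemma:ind-merging}; this is what the paper does (and what \cite{ChattopadhyayL22} does), and your description is compatible with it but should make the universal quantifier explicit.
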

\begin{proof}
    We will prove that Algorithm~\ref{alg:acaffcb} gives such a function. \\
    First we prove that $\acaffCB$ satisfy Definition~\ref{def:acaffcb}. 
    \begin{enumerate}
        \item For all $i\in [t]$, let $X_{0,A}^i:= \acLExt(A,Y_0^i)$, $X_{0,B}^i:= \acLExt(B,Y_0^i)$, $Q_{0,A}^i:=\acLExt'$ 
        $(A,Y_1^i)$, $Q_{0,B}^i:= \acLExt'(B,Y_1^i)$. Let $Z$ be $Z_{\ref{def:acaffcb}}$ from Definition~\ref{def:acaffcb}.
        \item By definition of $\acLExt$, 
        \[X_{0,A}\approx_{\eps_n} U_{d_0} \mid (Z,Y_0,Y_0^{[t]},X_{0,B},X_{0,B}^{[t]}).\]
        \item Since $\avgH(Y \mid Z,Y_0,Y_0^{[t]},X_{0,B},X_{0,B}^{[t]}) \ge d - (t+1)d_0' \ge 9d/10$, $R_{1,A},R_{1,A}^{[t]}$ are independent of $Y,Y^{[t]}$ given $Z,Y_0,Y_0^{[t]},X_{0,B},X_{0,B}^{[t]}$, and 
        $\acCB$ is a strong correlation breaker, it holds $\forall i\in [t]$ that 
        \[Y_1 \approx_{\eps_n+\eps'} U_{d_x} \mid (Y_1^i,Z,Y_0,Y_0^{[t]}, X_{0,B},X_{0,B}^{[t]},X_0,X_0^i).\]
        \item Since conditioned on the fixing of $X_0,X_{0,B}^{[t]}$, $Y_1$ is a deterministic function of $Y$ and is independent of $X_0^{[t]}$, 
        \[Y_1 \approx_{\eps_n+\eps'} U_{d_x} \mid (Y_1^i,Z,Y_0,Y_0^{[t]}, X_{0,B},X_{0,B}^{[t]},X_0,X_0^{[t]}).\]
        \item By Lemma~\ref{lemma:ind-merging}, it holds $\forall i\in [t]$ that 
        \[Q_{0,A} \approx_{2\eps_n+\eps'} U_r \mid (Q_{0,A}^i,Z,Y_0,Y_0^{[t]}, X_{0,B},X_{0,B}^{[t]},X_0,X_0^{[t]},Y_1,Y_1^{[t]}).\]
        Since $(Q_{0,B},Q_{0,B}^{[t]})$ is independent of $Q_{0,A}$, let \[Z_0:= (Z,Y_0,Y_0^{[t]}, X_{0,B},X_{0,B}^{[t]},X_0,X_0^{[t]},Y_1,Y_1^{[t]},Q_{0,B},Q_{0,B}^{[t]}),\] it also holds that 
        \[Q_{0,A} \approx_{2\eps_n+\eps'} U_r \mid (Q_{0,A}^i,Z_0),\]
        which is equivalent to 
        \[Q_{0} \approx_{2\eps_n+\eps'} U_r \mid (Q_{0}^i,Z_0).\]
    \end{enumerate}
    \begin{claim}
     Each one of $\overline{S_i}$, $\overline{R_i}$, $Q_i$, $R_i$ is close to uniform and independent of every $\min\cbra{2^i,t}$ tampered r.v.'s.
    \end{claim}
    \begin{proof}
        For each $i\in [\lceil \log t \rceil]$, let 
        $$Z_{i,1,B} := (Z_{i-1}, S_{i-1,B},S_{i-1,B}^{[t]});\;Z_{i,2} := (Z_{i,1,B},S_{i-1},S_{i-1}^{[t]});\;Z_{i,3} := (Z_{i,2},R_{i-1},R_{i-1}^{[t]})$$
        $$Z_{i,3,B} := (Z_{i,3},\overline{S_{i,B}},\overline{S_{i,B}}^{[t]});\;Z_{i,4} := (Z_{i,3},\overline{S_{i}},\overline{S_{i}}^{[t]});\;Z_{i} := (Z_{i,4},\overline{R_{i}},\overline{R_{i}}^{[t]}),$$
        let $T_i$ be any subset of $[t]$ of size $2^i$ if $2^i\le t$, otherwise, let it be $[t]$.
        Now we define an ordering for the claims $\mathcal C$ according to which we prove by induction. The first claim is Sub-step $5$ with $i=0$. Then the claims follow the order of round $i$, Sub-step $1$; round $i$, Sub-step $2$, ..., round $i$, Sub-step $5$, round $i+1$, Sub-step $1$; round $i+1$, Sub-step $2$, ..., round $\lceil \log t \rceil$, Sub-step $5$. First note that by the above arguments, the claim in Sub-step 5 below holds for $i=0$. It is clear that claims in $\mathcal C$ of order $\le k$ implies that of order $k+1$. \\
        \textbf{Sub-step 1:} $S_{i-1} \approx_{2\eps_n+\eps'+(i-1)(2\eps_d+\eps_n+\eps_r)} U_{d_y} \mid (S_{i-1}^{T_{i-1}}, Z_{i-1})$; $S_{i-1,A} \approx_{2\eps_n+\eps'} U_{d_y} \mid (S_{i-1,A}^{T_{i-1}}, Z_{i,1,B})$ as long as the statement in Sub-step 5 holds for $i-1$. \\
        \textbf{Sub-step 2:} It holds by Lemma~\ref{lemma:ind-merging} that $R_{i-1} \approx_{2\eps_n+\eps'+\eps_d+(i-1)(2\eps_d+\eps_n+\eps_r)} U_{d_x} \mid (R_{i-1}^{T_{i-1}}, Z_{i,2})$ as long as the statement in Sub-step 1 holds and $\avgH(Y \mid Z_{i,2}) \ge 9d/10 - 2(i-1)(t+1)d_x \ge d/2$. \\
        \textbf{Sub-step 3:} It holds by Lemma~\ref{lemma:ind-merging} that $\overline{S_i} \approx_{2\eps_n+\eps'+\eps_d+\eps_r+(i-1)(2\eps_d+\eps_n+\eps_r)} U_{d_x} \mid (\overline{S_i}^{T_{i}},Z_{i,3})$; $\overline{S_{i,A}} \approx_{2\eps_n+\eps'+\eps_d+\eps_r+(i-1)(2\eps_d+\eps_n+\eps_r)} U_{d_x} \mid (\overline{S_{i,A}}^{T_{i}},Z_{i,3,B})$ as long as the statement in Sub-step 2 holds and $\avgH(Q_{i-1} \mid Z_{i,3}) \ge r- (2i-1)(t+1)d_y \ge r/2$.\\
        \textbf{Sub-step 4:} It holds by Lemma~\ref{lemma:ind-merging} that $\overline{R_i} \approx_{2\eps_n+\eps'+2\eps_d+\eps_r+(i-1)(2\eps_d+\eps_n+\eps_r)} U_{d_y} \mid (\overline{R_i}^{T_{i}},Z_{i,4})$ as long as the statement in Sub-step 3 holds and $\avgH(Y \mid Z_{i,4}) \ge 9d/10 -(2i-1)(t+1)d_x \ge d/2$. \\
        \textbf{Sub-step 5:} It holds by Lemma~\ref{lemma:ind-merging} that $Q_{i} \approx_{2\eps_n+\eps'+i(2\eps_d+\eps_n+\eps_r)} U_r \mid (Q_{i}^{T_{i}}, Z_{i})$ as long as the statement in Sub-step 4 holds and $\avgH(X \mid Z_i) \ge k - (d_0+r)(t+1)-2(i-1)(t+1)d_x \ge k/2$.

    \end{proof}
    Now, note that conditioned on $Z_{\lceil \log t \rceil}$, which contains $(\overline{R_{\lceil \log t \rceil}},\overline{R_{\lceil \log t \rceil}}^{[t]})$, $Q_{\lceil \log t \rceil}$ 
    $\approx_{O(\eps'+(\log t)\eps_n)} U_r \mid Q_{\lceil \log t \rceil}^{[t]}$. Moreover, $Q_{\lceil \log t \rceil},Q_{\lceil \log t \rceil}^{[t]}$ are deterministic functions of $X$ and are independent of $Y,Y^{[t]}$. Therefore, we have
    \[Q_{\lceil \log t \rceil} \approx_{O((a+\log t)\cdot 2^{-\log^c n})} U_r \mid (Q^{[t]}_{\lceil \log t \rceil},Y,Y^{[t]}).\]
    This completes the proof of Theorem~\ref{thm:acaffcb}.
\end{proof}

\subsection{\texorpdfstring{$\ac^0$}{Lg}-Computable Extractor for Read-Once Branching Program Sources}
\begin{algorithm}[H]
    \caption{$\acExt(x)$}
    \label{alg:acExt}
    \begin{algorithmic}
        \medskip
        \State \textbf{Input:} $x$ --- an $n$ bit string.
        \State \textbf{Output:} $z$ --- an $m$ bit string with $m = \Omega(n)$.
        \\\hrulefill
        \State \textbf{Sub-Routines and Parameters}: \\
        Let $\acbfExt:\{0,1\}^n \to \{0,1\}^{n_1}$ be a linear seeded extractor from Theorem~\ref{thm:acbfext} set to extract from min-entropy $k_1=\delta n$ with error $\eps_1 = 2^{-\log^c (n/t)}$. \\
        Let $\acaffCB:\bin^{n} \times \bin^{n_1} \times \bin^a \to \bin^m$, $a=\log(t)$, be the $t$-affine correlation breaker from Theorem~\ref{thm:acaffcb} with error $\eps_2=O(2^{-\log^{c-1}n})$.
        \\\hrulefill
        \begin{enumerate}
            \item Divide $x$ into $t=2/\delta$ blocks such that $x = x_1 \circ \cdots \circ x_t$.
            \item Let $y_1\circ \cdots \circ y_t = \acbfExt(x_1)\circ\cdots\circ\acbfExt(x_t)$ such that each $y_i$ is of length $n_1<\delta^2/100 n$ bits.
            \item Let $s$ be a $t\times m$ matrix whose $i$'th row $s_i$, is $\acaffCB(x,y_i,i)$.
            \item Output $z=\bigoplus^{t}_{j=1} s_i$.
        \end{enumerate}
    \end{algorithmic}
\end{algorithm}
\begin{theorem} \label{thm:ac0-ext}
    For any constant $0<\delta \le 1$, there exists a family of functions $\acExt:\bin^n\to\bin^m$ computable in $\ac^0$, such that for any sources $X=A+B$ where $A$ and $B$ are independent and have disjoint spans, $A$ has entropy $\delta n$ and $B$ is an almost bit-fixing source of entropy $\delta n$, $\acExt(X)\approx_\eps U_m \mid B$ for $m=\Omega(n)$ and $\eps = O(2^{-\log^{c-1} n})$.
\end{theorem}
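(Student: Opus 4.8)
The plan is to show that Algorithm~\ref{alg:acExt} realizes such a function. The $\ac^0$ claim is immediate: $\acbfExt$ and $\acaffCB$ have depth-$O(c)$ $\ac^0$ circuits by Theorems~\ref{thm:acbfext} and~\ref{thm:acaffcb}, and since $t=2/\delta$ is a constant, $\acExt$ is the constant-depth composition of a constant number of these followed by a constant-fan-in parity, so $\acExt\in\ac^0$ of depth $O(c)$.

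For correctness I would first locate a good block. Writing $A=A_1\circ\cdots\circ A_t$ and $B=B_1\circ\cdots\circ B_t$ for the decompositions induced by the partition of $[n]$ into $t$ blocks (legitimate since $A$ and $B$ sit on disjoint coordinates), pigeonhole on the $\delta n$ free coordinates of the almost bit-fixing source $B$ yields an index $g$ with $H_\infty(B_g)\ge \delta n/t=\Omega(\delta^2 n)$. I would then perform, in order, the following fixings, keeping a running set $Z$ of conditioned variables as in the other proofs of the paper: (i) fix $A_g$; then $X_g=A_g+B_g$ is (up to the closeness slack of $B$) a bit-fixing source of min-entropy $\Omega(\delta^2 n)$ that is a deterministic function of $B_g$, while $A$ stays affine with $H_\infty(A\mid Z)\ge\delta n-n/t=\delta n/2$; (ii) fix $\{B_i\}_{i\ne g}$; (iii) fix $\{Y_i\}_{i\ne g}$ --- each $Y_i=\acbfExt(X_i)$ is, because $\acbfExt$ is a \emph{linear} map and $X_i=A_i+B_i$, an affine function of $A$, so $A$ remains affine and, since $n_1<\delta^2 n/100$, $H_\infty(A\mid Z)\ge\delta n/2-(t-1)n_1\ge\delta n/4$. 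After these fixings: $Y_g=\acbfExt(X_g)$ is $\eps_1$-close to uniform (Theorem~\ref{thm:acbfext}) and a deterministic function of $B_g$; $A$ is affine of entropy $\ge\delta n/4$, independent of $(B_g,Y_g)$ given $Z$; and $\{Y_i\}_{i\ne g}$ are constants.

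Next I would repackage $\bigoplus_i\acaffCB(X,Y_i,i)$ as a single seeded extractor and feed it into Proposition~\ref{prop:uniform}. Define $F(x,y):=\acaffCB(x,y,g)\oplus\bigoplus_{i\ne g}\acaffCB(x,y_i,i)$ with the constants $\{y_i\}$ hard-wired, so $z=\acExt(X)=F(X,Y_g)$. Since the advice strings $g,\{i\}_{i\ne g}$ are pairwise distinct and every building block of $\acaffCB$ is a linear seeded extractor (Theorems~\ref{thm:acCB},~\ref{thm:aclext}), $F(\cdot,y)$ is linear for each fixed $y$, and the strong $t$-affine correlation-breaker guarantee of Definition~\ref{def:acaffcb}/Theorem~\ref{thm:acaffcb}, applied with the trivial split $X=W+0$, shows $F$ is a strong linear seeded extractor for min-entropy $\delta n/4$ with error $\eps_2$ and output $m=\Omega(n)$. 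By Proposition~\ref{prop:uniform}, the linear map $F(\cdot,u)$ is surjective on $\Supp(A\mid Z)$ --- hence $F(A\mid Z,u)$ is \emph{exactly} uniform --- for all but an $\eps_2$-fraction of seeds $u$; transporting this across $\Delta(Y_g,U_{n_1})\le\eps_1$ and using that $Y_g$ is a deterministic function of $B_g$, for all but an $(\eps_1+\eps_2)$-fraction of $b_g\leftarrow B_g$ we get $z=F(X,Y_g)=U_m$ conditioned on $(Z,B_g=b_g)$ (the affine shift between $X$ and $A\mid Z$ does not affect surjectivity of the linear part). Averaging over the $2^{-\Omega(n)}$-probability event that a fixing of $Z$ is atypical (leaving $A$ with entropy $<\delta n/4$) and over the bad $b_g$ gives $z\approx_\eps U_m\mid(Z,B_g)$ with $\eps=O(2^{-\log^{c-1}n})$; since $B=(B_g,\{B_i\}_{i\ne g})$ is a function of $(Z,B_g)$, the data-processing inequality (Lemma~\ref{lem:sdis}) yields $\acExt(X)\approx_\eps U_m\mid B$.

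The delicate point --- and the step I expect to be the main obstacle --- is this last deduction. The correlation breaker is only guaranteed to be strong in its \emph{seed} $Y_g$, but we need strength in all of $B$, and conditioning on $B_g$ reveals strictly more than conditioning on $Y_g=\acbfExt(X_g)$. The device above circumvents this by turning the entire output into one strong \emph{linear} seeded extractor of $X$ and invoking Proposition~\ref{prop:uniform}, which upgrades ``close to uniform given the seed'' to ``exactly uniform for almost every seed value'', and therefore --- the seed being a function of $B$ --- to ``exactly uniform conditioned on $B$''. This is precisely why linearity of $\acbfExt$ (and of every seeded extractor inside $\acaffCB$) is essential: it keeps $A$ affine under all the $B$-side conditionings, so Proposition~\ref{prop:uniform} applies.
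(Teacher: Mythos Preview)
Your linearity claim for $F(\cdot,y)$ is the genuine gap. In $\acaffCB$ (Algorithm~\ref{alg:acaffcb}), one first forms $X_0=\acLExt(X,Y_0)$ and then feeds $X_0$ as the \emph{seed} to $\acCB$, obtaining $Y_1=\acCB(Y,X_0,\alpha)$; this $Y_1$ in turn serves as the seed for $Q_0=\acLExt'(X,Y_1)$, and the subsequent rounds repeat this pattern. The inner $\acCB$ uses $\IP$, $\flip$ and alternating extraction on its seed, so it is \emph{not} linear in $X_0$; hence $Y_1$ is a nonlinear function of $X$, and the composite $x\mapsto\acaffCB(x,y,\alpha)$ is not linear for fixed $y$. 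Without linearity, Proposition~\ref{prop:uniform} does not apply, and your upgrade from ``strong in the seed'' to ``exactly uniform for almost every seed value'' --- and thence to ``strong in all of $B$'' --- collapses. (A smaller but related issue: you also assume $\acbfExt$ is linear so that conditioning on $\{Y_i\}_{i\ne g}$ keeps $A$ affine; Theorem~\ref{thm:acbfext} does not assert linearity.)

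The paper handles the ``strong in $B$'' obstacle differently, without ever claiming $\acaffCB$ is linear. It invokes Theorem~\ref{thm:acaffcb} with the \emph{nontrivial} split $X=A+B$ rather than your trivial $X=X+0$. This is precisely what Definition~\ref{def:acaffcb} is engineered for: the seed $Y$ is allowed to be correlated with $B$, the only requirement being that $A$ is independent of $(B,Y,Y^{[t]})$ given $Z$. With $Z=\{Y_i,B_i\}_{i\ne g}\cup\{A_g\}$ and $Y=Y_g$ (a deterministic function of $B$), the hypotheses hold, and the correlation breaker's own analysis --- which at every $\acLExt$ step decomposes the output into an $A$-part and a $B$-part and fixes the $B$-part --- delivers $S_g\approx U_m$ conditioned on $(\{S_i\}_{i\ne g},Y^{[t]})$ with $S_g$ depending only on $A$ after those fixings. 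By using the trivial split you discarded exactly the structure that makes this go through. The paper also avoids your affineness-of-$A$ issue by working with average conditional min-entropy (Lemma~\ref{lemma:ac0-entropy}) rather than demanding that $A$ stay affine.
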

\begin{proof}

\begin{lemma}\label{lemma:ac0-division-fixing}
    There exists $g\in [t]$ such that conditioned on the fixing of $\{B_1,\cdots,B_{g-1},A_g,B_{g+1},$
    $\cdots,B_t\}$, the followings are true.
    \begin{itemize}
        \item $X_g$ is an almost bit-fixing source of entropy rate $\delta$ and $Y_g \approx_{2^{-\log^c (n/t)}} U_{n_1}$.
        \item $Y_g$ is a deterministic function of $B$.
        \item $\cbra{Y_1,\cdots,Y_{g-1},Y_{g+1},\cdots,Y_t}$ are deterministic functions of $A$.
    \end{itemize}
\end{lemma}
\begin{proof}
Since $X = A + B$, then $X_i = A_i + B_i$ for all $i \in [t]$. Since $H_\infty(B)\ge \delta n$, and each $B_i$ is of block length $n/t$, there exists $g\in [t]$ such that $H_\infty(B_g) \ge \delta n/t = \delta^2 n/2$. Now by the extraction property of $\acbfExt$, we have $Y_g \approx_{\eps_1} U_{n_1}$.
\end{proof}

\begin{lemma}\label{lemma:ac0-entropy}
    Conditioned on the additional fixing of $Y=\cbra{Y_i}_{i\in [t]\setminus \cbra{g}}$, $H_\infty(A)\ge \delta n/4$.
\end{lemma}
\begin{proof}
Since $X_g$ has entropy at most $\delta n/2$, $H_\infty(B_g) \ge \delta^2 n/2$. Now as $A_g$ is independent of $B_g$, $H_\infty(A)\le (1-\delta)\delta n/2$. Since $|Y| \le tn_1 \le \delta n/50$, we have $\avgH(A \mid A_g, Y) \ge \delta n - (1-\delta)\delta n/2 - \delta n/50 \ge \delta n/4$.
\end{proof}

\begin{lemma}
    With probability $1-\eps_2$ over the fixings of $(A_g,\{S_i,Y_i\}_{i\in [t]},B)$, $Z\approx_{2\eps_1+\eps_2} U_m$.
\end{lemma}
\begin{proof}
    Let $Z_{\ref{def:acaffcb}} = \cbra{Y_i,B_i}_{i\in [t]\setminus \cbra{g}} \cup \cbra{A_g}$.
    By Lemma~\ref{lemma:ac0-division-fixing}, with probability $1-2^{-\log^c (n/t)}$, $Y$ is a somewhere random source. Moreover, since $A$ and $B$ are independent, we have $Y_g=U_{n_1}\mid Z$. By Lemma~\ref{lemma:ac0-entropy}, $\avgH(A \mid Z_{\ref{def:acaffcb}})\ge 4\delta/n$. By Theorem~\ref{thm:acaffcb}, $S_g \approx_{\eps_1+\eps_2} U_m \mid (\cbra{S_i}_{i\in [t]\setminus \cbra{g}},Y^{[t]})$. Since $Y_g$ is a deterministic function of $B_g$, and conditioned on $Y_g$ and $Z_{\ref{def:acaffcb}}$, $S_g$ is a deterministic function of $A$, it holds that $S_g \approx_{\eps_1+\eps_2+\eps_1} U_m \mid (\cbra{S_i,Y_i}_{i\in [t]\setminus \cbra{g}},B)$, which implies $Z\approx_{2\eps_1+\eps_2} U_m \mid B$.
\end{proof}
\end{proof}

\begin{theorem}
    For any constant $\delta>0$, let $\acExt$ be a function from Theorem~\ref{thm:ac0-ext} for $\delta_{\ref{thm:ac0-ext}} = \delta/3$ with error $\eps = 2^{-\Omega(\log^{c-1} n)}$, then
    \begin{align*}
        \mathsf{ROBP}_{2\eps}(\acExt) > 2^{(1-\delta)n}.
    \end{align*}
\end{theorem}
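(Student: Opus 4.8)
The plan is to argue the contrapositive: fix any read-once branching program $P$ of size $s \le 2^{(1-\delta)n}$ and show that $P$ has correlation strictly less than $2\eps$ with $\acExt$ under the uniform input distribution; here (as is implicit in the statement) we view $\acExt$ as a single-bit function, say via its first output coordinate, which by Theorem~\ref{thm:ac0-ext} is still $\eps$-close to $U_1$ conditioned on $B$. So the goal is $\Pr_{x\leftarrow_U \bin^n}[P(x)=\acExt(x)] < \tfrac12 + 2\eps$. The $\ac^0$-computability of $\acExt$ plays no role in this argument — it was already delivered by Theorem~\ref{thm:ac0-ext}; the only thing used is that $\acExt$ extracts from the exact source family arising as preimages of $\ROBP$ computations.

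The core is the standard structural fact about $\ROBP$s: conditioning the uniform input on the event of reaching a fixed node makes the bits queried before that node independent of the bits queried after it. Using this, I would first (optionally) level $P$ at a cost of a factor $n$ in size, then extract an anti-chain (cut) $W$ of vertices with $|W|\le sn$ sitting roughly $\delta n/2$ queries above the sinks, so that over the random $x$, with probability $1-2^{-\Omega(n)}$ the cut vertex $v$ hit by $x$ is \emph{good}: conditioned on reaching $v$, the input decomposes as $X=A+B$ with $A,B$ independent, supported on disjoint coordinate sets, $B$ an oblivious bit-fixing source, and $\min\{H_\infty(A),H_\infty(B)\}\ge \delta n/3$. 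Concretely $B$ is uniform on the $\approx \delta n/2$ coordinates queried on no source-to-$v$ path (so $H_\infty(B)\ge \delta n/2 - o(n)\ge \delta n/3$), $A$ carries entropy $\ge (1-\delta/2)n-\log s - o(n)\ge \delta n/2 - o(n)\ge\delta n/3$ for a $(1-2^{-\Omega(n)})$-fraction of the mass by an averaging bound over $W$ (using $\sum_{v\in W}2^{-|S_{\mathrm{above}}(v)|}\le |W|\,2^{-(1-\delta/2)n}=2^{-\Omega(n)}$), and — crucially — $P(X)$ conditioned on reaching $v$ is a deterministic function of the bits queried below $v$, hence a deterministic function $g_v(B)$ of $B$ alone. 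Given all this the rest is immediate: for a good $v$ the hypotheses of Theorem~\ref{thm:ac0-ext} hold with $\delta_{\ref{thm:ac0-ext}}=\delta/3$, so $\acExt(X)\approx_{\eps}U_m\mid B$, thus $(\acExt(X)_1,B)\approx_\eps (U_1,B)$, and therefore $\Pr[P(X)=\acExt(X)_1\mid \text{reach }v]=\Pr[g_v(B)=\acExt(X)_1\mid \text{reach }v]\le \tfrac12+\eps$. Averaging over $v\in W$ and charging the $2^{-\Omega(n)}$ mass on bad vertices as error,
\[
  \Pr_x[P(x)=\acExt(x)_1] \;\le\; \tfrac12+\eps+2^{-\Omega(n)} \;<\; \tfrac12+2\eps,
\]
since $\eps = 2^{-O(\log^{c-1}n)}$ dominates $2^{-\Omega(n)}$. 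This yields $\ROBP_{2\eps}(\acExt) > 2^{(1-\delta)n}$.

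The main obstacle is the structural (cut) lemma, i.e.\ setting up the cut so that the decomposition $X=A+B$ holds with both entropies bounded away from zero on almost all of the probability mass. The subtlety is that $P$'s query order is input-dependent, so distinct source-to-$v$ paths may read different coordinate sets; one must take $B$ to be the uniform distribution on the coordinates read on \emph{no} path into $v$ (equivalently $A$ supported on the union of queried coordinates), derive independence from the fact that the event "reach $v$'' is determined by exactly those union coordinates, and handle degenerate cases (paths hitting a sink before $\delta n/2$ queries, or cut vertices whose "above'' coordinate set is anomalously large, which can be shown to carry negligible mass or absorbed into a trivial separate case where $P$ reads few bits and $\acExt$ is uncorrelated with any such function). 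All of this is routine but requires care. Finally, note that placing the cut at depth $\approx\delta n/2$ together with $s\le 2^{(1-\delta)n}$ is exactly what makes both $H_\infty(A)$ and $H_\infty(B)$ land near $\delta n/2$, comfortably above the $\delta n/3 = \delta_{\ref{thm:ac0-ext}}n$ threshold demanded by Theorem~\ref{thm:ac0-ext}, which is the purpose of the factor $3$ in its instantiation.
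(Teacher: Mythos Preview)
Your strategy matches the paper's: condition on a cut vertex in the $\ROBP$, decompose the conditioned input as $X=A+B$ with $A,B$ independent and $B$ bit-fixing, observe that the program's output is a deterministic function of the cut vertex and $B$, and invoke Theorem~\ref{thm:ac0-ext}. The paper packages the cut construction as a black-box structural result (Lemma~\ref{lemma:robp-source}, restated from \cite{ChattopadhyayL:ccc:2023}) which, for any $d$, produces a random variable $E$ of support $\le 2s$ with $B_e$ uniform on exactly $d$ coordinates and $f(X)=g(E,B)$; it then replaces your good/bad-vertex Markov split by the one-line average-case extractor bound (Lemma~\ref{lemma:avg-ext}), using $\avgH(A\mid E)\ge n-d-\log(2s)=2\delta n/3-1$ for $d=\delta n/3$ and $s=2^{(1-\delta)n}$. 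These are packaging differences, not different ideas.

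There is, however, one genuine soft spot in your direct construction. Your bound $H_\infty(B)\ge\delta n/2-o(n)$ assumes $|S^{\cup}_v|\approx(1-\delta/2)n$, but distinct source-to-$v$ paths in a $\ROBP$ may read \emph{different} coordinate sets, so $|S^{\cup}_v|$ can far exceed the level of $v$; and leveling to depth exactly $n$ is not always possible when every non-sink node must query a bit (e.g.\ a program that branches on $x_1$ into two halves reading disjoint coordinate sets and then merges cannot be so leveled). Neither of your proposed patches works: such vertices need not carry negligible mass, and once $H_\infty(B)<\delta n/3$ the source no longer meets the hypothesis of Theorem~\ref{thm:ac0-ext}, so the ``$P$ reads few bits hence $\acExt$ is uncorrelated'' fallback does not apply. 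The correct observation is that $S^{\cup}_v$ and $T_v$ (the set of coordinates queried anywhere in $P_v$) are disjoint in a $\ROBP$, whence $H_\infty(B)=n-|S^{\cup}_v|\ge|T_v|$; since $|T_\cdot|$ is strictly decreasing along every computation path, defining the cut by the condition $|T_v|\ge d$ (rather than by level) guarantees $H_\infty(B)\ge d$ at every cut vertex with $|\Supp(E)|\le 2s$. This is precisely what the cited lemma delivers, and it is a bit more than ``routine.''
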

We prove the above theorem in two steps. First, we recall a lemma in~\cite{ChattopadhyayL:ccc:2023} and show that there exists a sum of two sources $X=A+B$ with the following $3$ properties, (1) $A$ and $B$ are supported on disjoint subsets of input bits; (2) $A$ has min-entropy $(1-\delta)n-\log s$ and $B$ has min-entropy at least $\delta n$; and (3) $B$ is an oblivious bit-fixing source. Then we show that the output of our extractor is close to uniform conditioned on the output of $\ROBP$.
\begin{lemma}[A special case of Lemma 3.1 from~\cite{ChattopadhyayL:ccc:2023}]
    \label{lemma:robp-source}
    Let $X$ be a uniform random variable over $\F_2^n$. For every read-once BP $f:\F_2^n \to \bin$ of size $s$ and every $d \in [n]$, there exists a random variable $E$, and random variables $A,B\in\F_2^n$ s.t.
    \begin{itemize}
        \item $E$ has support size at most $2s$.
        \item $X=A+B$.
        \item For every $e\in \Supp(E)$, define $A_e=A\mid_{E=e}$, $B_e = B\mid_{E=e}$, 
        Then we have 
        \begin{itemize}
            \item $A^e$ and $B^e$ are independent.
            \item $B^e$ is uniform over a subset of coordinates $V_e^{B}$ of dimension $d$.
            \item There exists a complemented subspace $V_e^{A}$ of $V_e^{B}$ such that $A_e \in V_e^{A}$. 
        \end{itemize}
        \item There exists a deterministic function $g$ s.t. $g(E,B)=f(X)$.
    \end{itemize}
\end{lemma}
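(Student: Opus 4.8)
\emph{Proof plan for Lemma~\ref{lemma:robp-source}.} The plan is to cut the read-once program at its natural ``depth-$d$ from the bottom'' frontier, let $E$ be the edge of the computation path that crosses that frontier, and read off $A$ and $B$ from the cylinder structure of $\{x:E(x)=e\}$. Assume $s\ge1$; let $P$ compute $f$, and for a node $w$ let $Q(w)\subseteq[n]$ be the set of variables queried in the subprogram rooted at $w$ (so $Q(\text{sink})=\varnothing$ and $Q(w)=\{j_w\}\cup Q(w_0)\cup Q(w_1)$ for a non-sink $w$ with queried index $j_w$ and $b$-children $w_b$). Read-onceness gives $j_w\notin Q(w_0)\cup Q(w_1)$, so $|Q(w_b)|\le|Q(w)|-1$: thus $|Q(\cdot)|$ strictly decreases along every edge. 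Call $w$ \emph{high} if $|Q(w)|\ge d+1$ and \emph{low} otherwise; by monotonicity the low nodes are closed under taking descendants, so every ancestor of a high node is high. If the source is low (equivalently $P$ reads at most $d$ variables), the lemma is immediate: take $E$ constant, $S$ any $d$-subset of $[n]$ containing $Q(\text{source})$, $B=X_S$, $A=X_{[n]\setminus S}$. Otherwise, the computation path on any $x$ runs from a high node to a (low) sink, so it contains a unique edge $(u,v)$ with $u$ high and $v$ low; let $E=E(x)$ be this edge. Since $P$ has $2s$ edges, $|\Supp(E)|\le 2s$, and $E$ is a deterministic function of $X$; moreover $\Supp(E)$ is an anti-chain, since if $(u_1,v_1)$ preceded $(u_2,v_2)$ then $u_2$ would be a descendant of the low node $v_1$, hence low, contradicting that it is high.

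Next, fix $e=(u,v)\in\Supp(E)$ with $v$ the $b$-child of $u$. Because all ancestors of $u$ are high, a path can reach $u$ only through high nodes, so $\{x:E(x)=e\}$ is exactly $\{x:\text{the path of }x\text{ reaches }u\text{ and }x_{j_u}=b\}$, which is the disjoint union $\bigsqcup_\pi C_\pi$ over source-to-$u$ paths $\pi$, where $C_\pi$ is the subcube of inputs that follow $\pi$ and then take edge $(u,v)$: it fixes the variables appearing on $\pi$ together with $j_u$ and leaves the remaining coordinates free. Let $W_u$ be the union over all such $\pi$ of the variables on $\pi$. Read-onceness gives $W_u\cap Q(u)=\varnothing$ (a variable read before $u$ and again inside the subprogram of $u$ would be queried twice on a source-to-sink path), so $W_u\cup\{j_u\}\subseteq([n]\setminus Q(u))\cup\{j_u\}$ and hence $|W_u\cup\{j_u\}|\le n-|Q(u)|+1\le n-d$ since $u$ is high. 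Also $Q(v)\subseteq Q(u)\setminus\{j_u\}$ and $|Q(v)|\le d$ (as $v$ is low), so $Q(v)$ is disjoint from $W_u\cup\{j_u\}$ and we may pick $S=S_e$ with $Q(v)\subseteq S\subseteq[n]\setminus(W_u\cup\{j_u\})$ and $|S|=d$. Every $C_\pi$ leaves all coordinates in $S$ free, so conditioned on $E=e$ the restriction of $X$ to $S$ is uniform on $\bin^{S}$ and independent of the restriction of $X$ to $[n]\setminus S$ (the latter determines which $C_\pi$ contains $x$, but inside each $C_\pi$ the former stays free and uniform). Let $B$ be the restriction of $X$ to $S$ extended by zeros and $A$ the restriction of $X$ to $[n]\setminus S$ extended by zeros, so $X=A+B$, $B_e$ is uniform on $V_e^B:=\spn\{e_i:i\in S\}$ of dimension $d$, $A_e\in V_e^A:=\spn\{e_i:i\notin S\}$, a complement of $V_e^B$, and $A_e,B_e$ are independent. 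Finally, knowing $E=e$ we know the computation continues from $v$, and the subprogram rooted at $v$ reads only variables in $Q(v)\subseteq S$, whose values are recorded in $B$; tracing that subprogram defines the deterministic $g$ with $g(E,B)=f(X)$. This verifies all four bullets.

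The main obstacle is engineering the cut so that every requirement on $E$ holds at once: its support must be only $\le 2s$, yet conditioning on a single value of $E$ must collapse to a product of a uniform $d$-dimensional coordinate block with an independent remainder. The crucial choices are (i) to cut on an \emph{edge} rather than on a node --- cutting at the low endpoint $v$ would force the free coordinate block to avoid a union of variable sets over \emph{all} high parents of $v$, which need not be $\le n-d$ --- and (ii) to require the high endpoint $u$ to satisfy $|Q(u)|\ge d+1$, which is exactly what guarantees $[n]\setminus(W_u\cup\{j_u\})$ still contains at least $d$ coordinates in which to embed $V_e^B$. Once the cut is set up correctly, the remaining points (disjointness of the $C_\pi$, uniformity and independence of $B_e$ and $A_e$, and the definition of $g$) are routine.
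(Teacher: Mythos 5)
Your proof is correct, and it fills in something the paper itself does not: the paper gives no argument for this lemma, simply importing it as a special case of Lemma~3.1 of Chattopadhyay--Liao, which is stated and proved there for the more general model of strongly read-once \emph{linear} branching programs (the frontier is defined via the subspaces $\pre_v$ and $\post_v$ of linear queries, and the conditioned source decomposes into affine parts). Your argument is the natural specialization of that stopping-rule idea to standard $\ROBP$s: you cut at the unique edge where the number of variables still to be queried drops to at most $d$, and exploit that each fiber $\{x:E(x)=e\}$ is a disjoint union of subcubes whose fixed coordinates all lie outside a common $d$-set $S_e\supseteq Q(v)$, so the conditional distribution factors as a product of a uniform coordinate block $B_e$ and an independent remainder $A_e$, with $f(X)$ computable from $(E,B)$ via the subprogram at the low endpoint. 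All the delicate points check out: the monotone decrease of $|Q(\cdot)|$ along edges and closure of ``low'' under descendants make the crossing edge unique, so $|\Supp(E)|\le 2s$; (syntactic) read-onceness gives $W_u\cap Q(u)=\varnothing$ and $j_u\notin Q(v)$, which is exactly what bounds $|W_u\cup\{j_u\}|\le n-d$ and lets you embed $Q(v)$ into a free $d$-set (this disjointness is precisely the $\pre_u\cap\post_u=\{0\}$ condition under which the cited general lemma operates, so your proof covers the same model); and your observation that one must cut on an edge with a \emph{high} top endpoint, rather than at a node, is the right way to guarantee enough free coordinates. The only thing to keep in mind is that the argument uses the syntactic reading of ``read-once'' (no variable repeats on any source-to-sink path of the DAG), which is the standard convention here and is what makes standard $\ROBP$s a special case of $\SROLBP$s; with that understanding, your proof is complete.
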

Then we prove the claim below, which implies the average-case lower bound of $\ROBP$.

\begin{claim}
For any constant $\delta>0$, let $\acExt$ be a function from Theorem~\ref{thm:ac0-ext} with $\delta_{\ref{thm:ac0-ext}} = \delta/3$ outputting $1$ bit with error $\eps$, and $f:\bin^n \to \bin$ be any $\ROBP$ of size $s=2^{(1-\delta)n}$. Let $X$ be a uniform random variable over $\F_2^n$. Then 
\[(\acExt(X),B,E,f(X)) \approx_\eps (U,B,E,f(X)).\]
\end{claim}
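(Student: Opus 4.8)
The plan is to combine the structural decomposition of read-once branching program computations in Lemma~\ref{lemma:robp-source} with the extraction guarantee of Theorem~\ref{thm:ac0-ext}. First I would apply Lemma~\ref{lemma:robp-source} to $f$ with the parameter $d=\lceil \delta n/3\rceil$, obtaining random variables $E,A,B$ with $X=A+B$, $|\Supp(E)|\le 2s$, and a deterministic function $g$ with $g(E,B)=f(X)$. For each $e\in\Supp(E)$ write $X^e=A^e+B^e$; by the lemma $A^e$ and $B^e$ are independent, $B^e$ is uniform over a dimension-$d$ coordinate subspace (hence an oblivious, in particular almost, bit-fixing source of entropy exactly $d\ge \delta n/3$), and $A^e$ is supported on the complementary coordinates, so $A^e$ and $B^e$ have disjoint supports.

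Next I would argue that for a typical $e$ the source $A^e$ is large enough. Since $|\Supp(E)|\le 2s = 2^{(1-\delta)n+1}$, the standard bounds for average conditional min-entropy~\cite{DodisORS08} give $\avgH(X\mid E)\ge n-\log|\Supp(E)|\ge \delta n-1$ and, with probability at least $1-2^{-\delta n/100}$ over $e\leftarrow E$, $H_\infty(X^e)\ge 0.98\,\delta n$; call such $e$ \emph{good}. For a good $e$, because $A^e$ and $B^e$ are independent with disjoint supports we have $H_\infty(A^e)=H_\infty(X^e)-d\ge \delta n/3$, so $(A^e,B^e)$ meets the hypotheses of Theorem~\ref{thm:ac0-ext} with $\delta_{\ref{thm:ac0-ext}}=\delta/3$. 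This yields $\acExt(X^e)\approx_{\eps}U_m\mid B^e$, and projecting onto the first output bit and invoking Lemma~\ref{lem:sdis} gives $\acExt(X^e)\approx_{\eps}U\mid B^e$.

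Then I would fold in $E$ and $f(X)$. Conditioned on $E=e$ we have $f(X)=g(e,B^e)$, a deterministic function of $B^e$, so applying the map $(w,b)\mapsto(w,b,g(e,b))$ and Lemma~\ref{lem:sdis} upgrades the previous line to $(\acExt(X^e),B^e,f(X^e))\approx_{\eps}(U,B^e,f(X^e))$ for every good $e$. Averaging over $e\leftarrow E$, the good values contribute total error at most $\eps$ while the bad values carry probability mass at most $2^{-\delta n/100}$, so
\[
(\acExt(X),B,E,f(X))\approx_{\eps+2^{-\Omega(n)}}(U,B,E,f(X)),
\]
and $\eps+2^{-\Omega(n)}=2^{-\Omega(\log^{c-1}n)}$, which is the asserted bound (the invoking theorem reads $\eps=2^{-\Omega(\log^{c-1}n)}$).

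I do not expect an essential obstacle here beyond bookkeeping; the one point that genuinely needs care is the choice $d=\lceil\delta n/3\rceil$, which must simultaneously leave the bit-fixing part $B^e$ with entropy rate $\ge\delta/3$ and — using that conditioning on $E$ costs only $\log(2s)=(1-\delta)n+O(1)$ bits, precisely because $s=2^{(1-\delta)n}$ — the affine part $A^e$ with entropy rate $\ge\delta/3$. Everything else (the routine $2^{-\Omega(n)}$ loss from conditioning on $E$, and the observation that $f(X)$ is subsumed by $(E,B)$ via $g$) is standard.
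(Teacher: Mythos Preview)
Your proposal is correct and follows essentially the same approach as the paper: invoke Lemma~\ref{lemma:robp-source} with $d\approx\delta n/3$, apply Theorem~\ref{thm:ac0-ext} to the resulting decomposition, and then observe that $f(X)=g(E,B)$ is determined by $(E,B)$. The only cosmetic difference is that the paper handles the conditioning on $E$ in one line via the average-case extractor lemma (Lemma~\ref{lemma:avg-ext}), whereas you unroll that lemma by hand with a Markov split into good and bad values of $e$; the resulting error bounds ($2\eps$ in the paper, $\eps+2^{-\Omega(n)}$ for you) are both of the claimed order.
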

\begin{proof}
Note that $\acExt$ is a strong $(\delta n/3, \eps)$ extractor, then by Lemma~\ref{lemma:avg-ext}, it is a $(\delta n/3 + \poly\log n,2\eps)$ average case extractor. Since
$\avgH(A \mid E) = 2^{n-\delta n/3-\log(2s)} =2\delta/3-1\ge \delta n/3 + \poly\log n$, we have
\[(\acExt(X),B,E) \approx_{2\eps} (U,B,E).\]
Since $f(X)=g(E,B)$ is a deterministic function of $E$ and $B$, we can conclude that
\[(\acExt(X),B,E,f(X)) \approx_{2\eps} (U,B,E,f(X)).\]
\end{proof}

\section{Open Problems}\label{sec:open}
Our work leaves several natural open problems. The most obvious is to further improve the constructions of directional affine extractors and the average-case hardness for $\SROLBP$s. It would also be quite interesting to show any hardness of explicit functions for $\WROLBP$s, which appears to require new ideas. Finally, it is an interesting question to see if there exist functions in $\ac^0$ that achieve optimal hardness for $\ROBP$s, or strong hardness for $\SROLBP$s.
\section*{Acknowledgement}
We thank anonymous reviewers for their helpful comments and a reviewer for pointing us to~\cite{glinskih_et_al:LIPIcs.MFCS.2017.26}.

\bibliographystyle{alpha}
\bibliography{daExt}

\appendix

\section{Depth \texorpdfstring{$3\;\ac^0[\oplus]$}{Lg} Circuits Can Compute Optimal Directional Affine Extractors}
\label{app:ac0}

In this section, we extend the results in~\cite{CohenT:random:2015} and prove depth $3$ $\ac^0[\oplus]$ circuits can compute optimal directional affine extractors given by the probabilistic method. 
\paragraph{Existence of Directional Affine Extractors.} We first display the optimal directional affine extractor.

\begin{claim}
    \label{claim:opt-daext}
    There exist universal constants $n_0,c$ such that the following holds. For every $\eps>0$ and $n > n_0$ there exists a directional affine extractor for dimension $k$ with bias $\eps$, $F:\F^n_2\to\F_2$, where $k = \log \frac{n}{\eps^2}+\log \log \frac{n}{\eps^2}+c$.
\end{claim}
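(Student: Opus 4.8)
The plan is to show that a uniformly random function $F:\F_2^n\to\F_2$ is a directional affine extractor for dimension $k = \log(n/\eps^2)+\log\log(n/\eps^2)+c$ with bias $\eps$, with positive probability, for a suitable absolute constant $c$. By the definition of directional affine extractor (Definition~\ref{def:daext}) with $m=1$, it suffices to ensure that for every $(n,k)$ affine source $X$ and every nonzero $a\in\F_2^n$, we have $(F(X),F(X+a))\approx_\eps(U_1,F(X+a))$. First I would observe that it is enough to control, for every $k$-dimensional affine subspace $V$ and every nonzero shift $a$, the two ``conditional bias'' quantities $\bigl|\Pr_{x\leftarrow V}[F(x)=0 \mid F(x+a)=b]-\tfrac12\bigr|$ for $b\in\{0,1\}$; equivalently one can work with the four counts $N_{\sigma\tau} = |\{x\in V : F(x)=\sigma, F(x+a)=\tau\}|$ and ask that the empirical distribution of $(F(x),F(x+a))$ over $x\leftarrow V$ be $\eps$-close to $U_1\times (\text{its marginal})$.

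\textbf{Key steps.} (1) \emph{Reduce to cosets of a $2$-dimensional orbit structure.} For a fixed nonzero $a$, the map $x\mapsto x+a$ is a fixed-point-free involution on $V$ whenever $a\in V$ (the direction space of $V$), pairing up the $2^k$ points of $V$ into $2^{k-1}$ disjoint pairs $\{x,x+a\}$; if $a\notin V$ then $X$ and $X+a$ live on two disjoint parallel cosets and the analysis is even cleaner since $F(X)$ and $F(X+a)$ become independent given the randomness of $F$. In the harder case $a\in V$, writing $V$ as a union of $2^{k-1}$ such pairs, the joint distribution of $\bigl(F(x),F(x+a)\bigr)_{x\leftarrow V}$ is determined by how many pairs get each of the patterns $\{00\},\{11\},\{01/10\}$, and on each pair the two bits $F(x),F(x+a)$ are independent fair coins (for distinct pairs the randomness is independent). (2) \emph{Chernoff/Hoeffding.} For a fixed $(V,a)$, the deviation of the empirical joint distribution from the ideal one is a sum of $2^{k-1}$ independent bounded contributions, so a Chernoff bound gives failure probability $\exp(-\Omega(\eps^2 2^{k-1}))$ that this particular $(V,a)$ is ``bad.'' (3) \emph{Union bound.} The number of $k$-dimensional affine subspaces of $\F_2^n$ is at most $2^{(n+1)k}$ (choose an affine frame), and the number of nonzero shifts $a$ is at most $2^n$, so the number of pairs $(V,a)$ is at most $2^{(n+2)k}=2^{O(nk)}$. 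Then I would choose $k$ so that $\eps^2 2^{k-1}$ dominates $nk$ by a constant factor: since $2^k \approx (n/\eps^2)\log(n/\eps^2)\cdot 2^c$, we get $\eps^2 2^{k-1}\gtrsim n\log(n/\eps^2)\gtrsim nk$, and a large enough constant $c$ makes the union bound beat $1$. (4) Conclude existence.

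\textbf{Main obstacle.} The delicate point is handling the case $a\in V$ correctly, i.e.\ the fact that $F(X)$ and $F(X+a)$ are \emph{not} independent as random variables over the choice of $x$ — they are coupled through the pairing — and making sure the ``somewhere each pair is an independent fair coin'' decomposition is set up so that a clean concentration inequality applies to the \emph{conditional} distribution $F(X)\mid F(X+a)$ rather than just the joint distribution. One must verify that $\eps$-closeness of the joint distribution of $(F(X),F(X+a))$ to $U_1\otimes(\text{marginal of }F(X+a))$ is implied by the concentration statement, and that no additional loss is incurred in passing to the conditional statement required by Definition~\ref{def:daext}; this is where being careful about whether we need $\eps$ or $2\eps$ (and hence exactly which additive constant to absorb into $c$) matters. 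A secondary bookkeeping issue is the exact count of affine subspaces and the precise way the $\log\log(n/\eps^2)$ term in $k$ is used to absorb the $\log k$ factor hidden inside the union-bound exponent $nk$; I would handle this by a slightly generous estimate $k\le 2\log(n/\eps^2)$ valid for $n>n_0$, which suffices since we only need $\eps^2 2^{k}\ge C\,n\log(n/\eps^2)$ for a constant $C$.
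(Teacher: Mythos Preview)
Your proposal is correct and follows essentially the same route as the paper: pick a uniformly random $F$, apply Hoeffding over the $2^{k-1}$ independent pair-outcomes when $a$ lies in the direction space (and handle the disjoint-coset case separately), then union-bound over all $(V,a)$ using the count $2^{O(nk)}$ and solve for $k$. The one simplification the paper makes that you did not: rather than working with the joint distribution $(F(X),F(X+a))$ and the conditional-bias bookkeeping you flag as the ``main obstacle,'' the paper invokes the equivalent formulation from \cite{GryaznovPT:CCC:2022} (equivalence proved in \cite{ChattopadhyayL:ccc:2023}) that $F(X)\oplus F(X+a)\approx_\eps U_1$, which collapses the analysis to bounding a single $\pm 1$ sum $\frac{1}{2^k}\sum_{x\in V}(-1)^{F(x)+F(x+a)}$ and makes the Hoeffding step and the case split immediate.
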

\begin{proof}
    For the purpose of this proof, it is more convenient to work with the definition of $\DAExt$ in~\cite{GryaznovPT:CCC:2022}.
    \begin{definition}
        A boolean function $f:\F_2^n \to \F_2$ is a directional affine extractor for dimension $d$ with bias $\eps$ if for every affine subspace $X$, every non-zero $a$, it holds that 
        \[\DAExt(X)+\DAExt(X+a)\approx_\eps U_1.\]
    \end{definition} 
    This definition is equivalent to Definition~\ref{def:daext} up to a quadratic blow-up in the error. Check Appendix $B$ in~\cite{ChattopadhyayL:ccc:2023} for a proof. 
    Let $F:\F^n_2\to\F_2$ be a random function, namely, $\cbra{F(x), x \in \F^n_2}$ are fresh random bits. Fix an affine subspace $U \subseteq \F^n_2$ of dimension $k$, a non-zero $a\in \F^n_2$. Depending on whether $U+a$ coincides with $U$, there are two cases to consider. \\ 
    \textbf{Case 1.} $U+a \neq U$. For any $x_1, x_0 \in U,\;x_1\neq x_0$, since $x_1+x_0 \in U$, it holds that $x_1+a \not\in\cbra{x_0,x_0+a}$. Therefore, $\cbra{F(x)+F(x+a),x\in U}$ are independent random bits and it holds that 
    \begin{align*}
        \Pr\sbra{\frac{1}{2^k}\abs{\sum_{x\in U}(-1)^{F(x)+F(x+a)}}\ge \eps} \le 2\cdot e^{-\frac{2^k \eps^2}{2}}. \tag{Hoeffding Inequality}
    \end{align*}
    \textbf{Case 2.} $U+a = U$. For any $x_1,x_0 \in U\;x_1\neq x_0$, $x_1 \in\cbra{x_0,x_0+a} \iff x_1 = x_0+a$. If this is the case, then $F(x)+F(x+a)=F(x+a)+F((x)$. Therefore, $\{\pbra{(-1)^{F(x)+F(x+a)}+(-1)^{F(x+a)+F(x)}}/2,$
    $x\in \F_2^n\}$ are independent random variables supported on $\{-1,1\}$ and it holds that
    \begin{align*}
        \Pr\sbra{\frac{1}{2^{k}}\abs{\sum_{x\in U}(-1)^{F(x)+F(x+a)}}\ge \eps} \le 2\cdot e^{-\frac{2^{k-1} \eps^2}{2}}. \tag{Hoeffding Inequality}
    \end{align*}
    The number of pairs of affine subspaces of the same underlining linear subspace is bounded by $\binom{2^n}{2}\binom{2^n}{k}\le 2^{(k+2)n}$. Hence by Union Bound over all pairs of affine subspaces of the same underlining linear subspace, if $2^{(k+2)n}\cdot 2\cdot e^{ - \frac{2^{k} \eps^2}{2}}=2^{(k+2)n+1 - \frac{2^{k}\eps^2}{\ln 4}}< 1$ and $2^{(k+2)n}\cdot 2\cdot e^{ - \frac{2^{k-1} \eps^2}{2}}=2^{(k+2)n+1 - \frac{2^{k-1}\eps^2}{\ln 4}}< 1$ then there exists a directional affine extractor of dimension $k$ with error $\eps$. It is verified that the same choice of $k = \log \frac{n}{\eps^2}+\log \log \frac{n}{\eps^2}+c$ for some fixed constant $c$ as in~\cite{CohenT:random:2015} suffices for the above inequalities to hold.

\paragraph{Existence of Sumset Linear Injectors.}
    The following definition of sumset linear injectors slightly generalize the notion of injector in~\cite{CohenT:random:2015}. 
    They will be applied in the construction of a more ``structured" random function which is a $\DAExt$.
    \begin{definition}\label{def:injector}
        An $(n, k_1, k_2, d)$ sumset linear injector with size $m$ is a family of $d\times n$ matrices $\cbra{A_1,\cdots,A_m}$ over $\F_2$ with the following property: for every pairs of subspaces $U,V\subseteq \F_2^n$ of dimension $k_1,k_2$ respectively where $\dim(U\cap V)\le 1$, there exists an $i\in [m]$ such that $\ker(A_i)\cap (U+V) = \cbra{0}$.
    \end{definition}
    \begin{lemma}
        \label{lemma:injector}
        For every $n,k_1,k_2$ such that $2\le k_1,k_2\le n$, there exists an $(n,k_1,k_2,k_1+k_2+1)$ linear injector with size $m=n(k_1+k_2)$.
    \end{lemma}
    \begin{proof}
        Fix a pair of subspaces $U,V\subseteq \F_2^n$ of dimension $k_1,k_2$ respectively where $U\cap V = \cbra{0}$. Let $A$ be a $d\times n$ matrix such that every entry of $A$ is sampled from $\F_2$ uniformly and independently at random. For every $u+v\in (U+V)\setminus \cbra{0}$ it holds that $\Pr[A(u+v)=0]=2^{-d}$. By taking the union bound over all pairs of elements in $U\setminus \cbra{0}$ and $V\setminus \cbra{0}$, we get that 
        \begin{align*}
            \Pr\sbra{\ker(A)\cap (U+V)\neq\cbra{0}}\le 2^{k_1+k_2 - d}.
        \end{align*}
    Let $A_1,\cdots,A_m$ be $d\times n$ matrices such that the entry of each of the matrices is sampled from $\F_2$ uniformly and independently at random. By the above equation, it holds that 
    \begin{align*}
        \Pr\sbra{\forall i\in[m]\;\ker(A_i)\cap (U+V)\neq\cbra{0}}\le 2^{m(k_1+k_2 - d)}.
    \end{align*}
    The number of sum of two linear subspaces of dimension $k_1$ and $k_2$ is bounded by $\binom{2^n}{k_1}\binom{2^n}{k_2}$, which is bounded above by $2^{n(k_1+k_2)-2}$ for $k\ge 2$. Thus if $ 2^{n(k_1+k_2)-2}\cdot 2^{m(k_1+k_2-d)}<1$ there exists an $(n,k_1,k_2,d)$ linear injector with size $m$. The latter equation holds for $d=k_1+k_2+1$ and $m=n(k_1+k_2)$.
    \end{proof}
    
    \paragraph{More Structured Random Functions.} Now we apply the sumset injector to reduce the randomness used in Claim~\ref{claim:opt-daext}.
    
    \begin{lemma}
    Let $n_0,c$ be the constants from Claim~\ref{claim:opt-daext}. Let $n>n_0$ and let $k,\eps$ be such that $k = \log \frac{n}{\eps^2}+\log \log \frac{n}{\eps^2}+c$. Let $\cbra{A_1,\cdots,A_m}$ be an $(n,k,2,d)$ linear injector with size $m$. Then, there exists functions $f_1,\cdots,f_m:\F_2^d\to\F_2$ such that the function $f:\F_2^n\to\F_2$ defined by 
    \begin{equation}
        \label{eqn:daext-decom}
        f(x)=\bigoplus^m_{i=1}f_i(A_ix)
    \end{equation}
    is a directional affine extractor for dimension $k$ with bias $\eps$.
    \end{lemma}
    \begin{proof}
    The proof idea is that ``$(U+V)$-wise'' independence, where $U$ is any affine subspace and $V=\{0,a\}=\spn\{0,a\}$ for any $a\neq 0\in \F_2$ suffices for the proof of Claim~\ref{claim:opt-daext}. In other words, we only need $\cbra{f(x)}_{x\in U\cup (U+a)}$ to be independent random bits, instead of full independence over the truthtable of $f$. We now construct such a random function, and by replacing the random function in the proof of Claim~\ref{claim:opt-daext} with this newly constructed function, we find optimal directional affine extractors in a restricted class of random functions. This will enable us to argue about its complexity. 
    Let $F_1,\cdots,F_m:\F_2^d\to \F_2$ be independent random functions, that is, the random bits $\cbra{F_i(x):{i\in [m],x\in\F_2^d}}$ are independent. Define the random function $F:\F_2^n\to \F_2$ as follows
    \begin{align*}
        F(x) = \bigoplus_{i=1}^m F_i(A_ix).
    \end{align*}
    Let $(U,U+a)$ be any pair of affine subspaces of the same underlining linear subspace $U'$, let $V=\spn\{0,a\}$. By Definition~\ref{def:injector}, there exists an $i\in [m]$ such that $\ker(A_i) \cap \pbra{U'+V} = \cbra{0}$ . This implies that for every two distinct elements $u,v\in U$ it holds that $A_i(u),A_i(v),A_i(u+a),A_i(v+a)$ are pairwise distinct. 
    Otherwise we would reach the contradiction that $A_i(u+v)=0$ or $A_i(a)=0$ or $A_i(u+v+a)=0$ and thus $u + v$ or $a$ or $u+v+a$, a non-zero vector in $U'+V$, lies in $\ker(A_i)$. Since $F_i$ is a random function, and $A_i$ is injective on $ U\cup (U+a)$, the random bits $\cbra{F_i(u)}_{u\in U\cup (U+a)}$ are independent. Since for all $x\in (U \cup (U+a))$, the fresh random coin $F_i(A_ix)$ is used and only used to generate $F(x)$, it holds that 
    $F(x)$ is independent and random in $U \cup (U+a)$.
    \end{proof}
    \begin{theorem}
    \label{thm:computation-daext}
    Let $f$ be the function from Eqn.~\eqref{eqn:daext-decom}, where $\cbra{A_1,\cdots,A_{n(k+2)}}$ is the $(n,k,2,k+3)$ sumset linear injector from Lemma~\ref{lemma:injector}. Then, $f$ is a directional affine extractor of dimension $k$ and error $\eps$, where $k=\log \frac{n}{\eps^2}+\log \log \frac{n}{\eps^2}+O(1)$. Moreover,
    \begin{enumerate}
        \item $\deg(f)=\log \frac{n}{\eps^2}+\log \log \frac{n}{\eps^2}+O(1)$.
        \item   $f$ can be realized by a $\mathsf{XOR}\text{-}\mathsf{AND}\text{-}\mathsf{XOR}$ circuit of size $O((n/\eps)^2\cdot \log^3(n/\eps))$.
        \item $f$ can be realized by a De Morgan formula of size $O((n^5/\eps^2)\cdot \log^3(n/\eps))$.
    \end{enumerate}
    \end{theorem}
    \begin{proof}
        Exactly the same as~\cite{CohenT:random:2015}. 
    \end{proof}
\end{proof}

\section{Missing Proofs}
\label{app:missing-proofs}

\subsection{Proof of Lemma~\ref{lemma:affine conditioning}}

We recall Lemma~\ref{lemma:affine conditioning}.
\affineconditioning*
\begin{proof}
We prove the second and fourth bullet points.\\
\textbf{$2$nd bullet point.} Let $L = \overline{L}+ c'$ where $\overline{L}:\{0,1\}^n \to \{0,1\}^m$ is a linear function. Consider the set $\Supp(X) \cap \Ker(\overline{L})$ which is a linear subspace, let $B$ be this linear subspace with an arbitrary affine shift $c''$, then it holds that $L(B) = L(c'') = \overline{L}(c'')+c':=c$. Let $A=X-B$. Then $L(A) = L(X) - L(B) = \overline{L}(X) +c' - c = \overline{L}(X-(B-c''))-\overline{L}(c'')=\Supp(X)\cap \Span(\overline{L}) - \overline{L}(c'')$. \\
\textbf{$4$th bullet point.} For any $\ell \in \Supp(L(X))$, conditioned on the fixing of $L(X) =  \ell$, by the second bullet it holds that $L(X) = L(A)+ L(B) = L(A) + c = \ell$.  By the third bullet, this implies $A = L^{-1}(L(A)) = L^{-1}(\ell -c)$. Therefore, $H(X \mid_{L(X)=\ell})=H(L^{-1}(\ell -c)+B)=H(B)$, thus independent of $\ell$.
\end{proof}

\subsection{Proof of Lemma~\ref{lemma:ind-merging-affine}}
We recall Lemma~\ref{lemma:ind-merging-affine}.
\indmergingaffine*
\begin{proof}
    First note that since $X,X^T,Y,Y^{[t]}$ are linear functions of $X_0$, by Lemma~\ref{lemma:affine conditioning}, the entropy of $X$ given $X^T,Y,Y^{[t]}$ is constant. Therefore, it suffices to use shannon entropy $H$ instead of average case min-entropy $\widetilde{H}_\infty$. \\
    Conditioned on the fixings of $Y^S$, it holds that $W^S$ are linear functions of $X^S$ and therefore linear functions of $X_0$. By Lemma~\ref{lemma:affine conditioning}, there exists affine sources $A=W^S(X)$ and $B$ such that $X=A+B$. By Lemma~\ref{lemma:affine bound}, $H(B)=H(X\mid Y^S,W^S)\ge H(X)-|S|\cdot m$.
    Now further condition on $(X^T,Y,Y^{[t]\setminus S})$, we have that $H(B\mid X^T,Y,Y^{[t]\setminus S})\ge H(X\mid X^T,Y,Y^T)-|S|\cdot m\ge k$. By Proposition~\ref{prop:uniform}, it follows that with probability $1-\eps$, $\LExt(X,Y)=\LExt(B,Y)+\LExt(A,Y)=\LExt(B,Y)+\mathrm{const}= U_m$. Since $W^{T}$ is a deterministic function of $X^T$ and $Y^T$, what we have shown implies
    \begin{align*}
        W \approx_{\eps+\delta} U_m \mid (W^{S\cup T}, Y, Y^{[t]}).
    \end{align*}
\end{proof}

\end{document}